\documentclass[a4paper,UKenglish,cleveref,autoref,pdfa]{lipics-v2021}

\pdfoutput=1 

\hideLIPIcs

\title{Sure-almost-sure and Sure-limit-sure Window Mean Payoff in Markov Decision Processes}
\titlerunning{Sure-almost-sure and Sure-limit-sure Window Mean Payoff in MDPs}

\author{Pranshu Gaba}{Tata Institute of Fundamental Research, Mumbai, India}{pranshu.gaba@tifr.res.in}{https://orcid.org/0009-0000-8012-780X}{}
\author{Shibashis Guha}{Tata Institute of Fundamental Research, Mumbai, India}{shibashis@tifr.res.in}{https://orcid.org/0000-0002-9814-6651}{}
\authorrunning{P. Gaba and S. Guha}

\Copyright{Pranshu Gaba and Shibashis Guha}

\ccsdesc{Mathematics of computing~Stochastic processes}

\keywords{Beyond worst-case synthesis, sure-almost-sure satisfaction, window mean payoff, finitary objectives, Markov decision processes}

\funding{%
  We acknowledge support of CEFIPRA project no.~7302-I and of the Department of Atomic Energy, Government of India, project no.~RTI4014.
}

\acknowledgements{We thank anonymous reviewers for suggesting various improvements and pointing out a subtle bug in the proof of \Cref{lem:sasf-fwmp-beta-empty}.}

\nolinenumbers

\usepackage{mypackages}
\usepackage{mymacros}

\begin{document}

\maketitle

\begin{abstract}
  Given rationals \(\GuaranteeThreshold\) and \(\AlmostSureThreshold\), the \emph{sure-almost-sure} problem for a threshold Boolean objective \(\Objective\) in a Markov decision process (MDP) asks if one can simultaneously ensure that
  all outcomes of the MDP have \(\Objective\)-value at least~\(\GuaranteeThreshold\) (i.e. \emph{sure} \(\GuaranteeThreshold\) satisfaction), and 
  with probability~\(1\) the outcome has \(\Objective\)-value at least \(\AlmostSureThreshold\) (i.e. \emph{almost-sure} \(\AlmostSureThreshold\) satisfaction).
  The \emph{sure-limit-sure} problem asks if for all \(\epsilon > 0\), one can simultaneously ensure that all outcomes have \(\Objective\)-value at least \(\GuaranteeThreshold\), and with probability at least \(1 - \epsilon\) the outcome has \(\Objective\)-value at least \(\LimitSureThreshold\).
  Moreover, if simultaneous satisfaction of objectives is possible, then one would also like to construct a strategy (for sure-almost-sure) or a family of strategies (for sure-limit-sure) that achieves this. 
  Even if both sure satisfaction and almost-sure (resp., limit-sure) satisfaction for an objective are known, combining the two is 
  often non-trivial and requires novel techniques and approaches. 

  In this paper, we solve the sure-almost-sure and sure-limit-sure problems for \emph{window mean-payoff objectives}.
  While it is known that almost-sure satisfaction and limit-sure satisfaction for window mean-payoff coincide in MDPs, we show that sure-almost-sure satisfaction is distinct from sure-limit-sure satisfaction.
  The window mean-payoff objective strengthens the standard mean-payoff objective by requiring that eventually, from every point in the infinite run, the average payoff becomes greater than a given threshold within a finite window length.
  We study two variants of window mean payoff: 
  in the \emph{fixed} variant, the window length \(\WindowLength\) is given, while in the \emph{bounded} variant, the length is not given but is required to be bounded throughout the run.
  We show that the sure-almost-sure problem and the sure-limit-sure problem are both in \(\PTime\) for the fixed variant (if \(\WindowLength\) is given in unary) and are both in \(\NP \cap \coNP\) for the bounded variant, matching the computational complexity of sure satisfaction and almost-sure satisfaction when considered separately for these objectives.
  We also give bounds for the memory requirement of winning strategies for all considered problems.
\end{abstract}

\section{Introduction}

\subparagraph*{Beyond worst-case synthesis.}
Classical two-player zero-sum games~\cite{AG11,GTW02} involve decision-making against a purely antagonistic environment where a threshold performance or a specific behaviour of the system needs to be ensured against \emph{all possible strategies of the environment}, that is even in the worst case. 
On the other hand, Markov decision processes (MDPs)~\cite{Puterman94} model uncertainty, and decision-making involves ensuring a \emph{specified behaviour with sufficient probability} or a \emph{high expected performance} against a stochastic environment.
Such a stochastic model of the environment, however, does not provide any guarantee on the worst-case performance, and a strategy that is adequate against an adversarial environment may provide a suboptimal performance against a stochastic environment.

In practice, both might be desired simultaneously: A system needs to provide a guarantee in the worst-case, as well as ensure some threshold performance with a high probability against a stochastic environment. 
The beyond worst-case (\(\BWC\)) framework was introduced in~\cite{BFRR17} to provide strict worst-case guarantee as well as good expected performance for quantitative specifications.
Subsequently, in~\cite{BRR17}, the beyond worst-case setting was studied for qualitative omega-regular objectives encoded as two parity objectives, where the first one needs to be satisfied surely, while the second one needs to be satisfied almost-surely or with a high threshold probability.

\subparagraph*{Window mean payoff.}
For objectives such as long-run limits of reward functions~\cite{EM79, ZP96}, a play may satisfy the objective while still exhibiting undesired behaviours over arbitrarily long finite segments~\cite{CHH09,CDRR15}. 
Finitary (or window) objectives enforce that undesired behaviour persist only within intervals of bounded length (windows) along the play. 
In this work, we focus on \emph{window mean-payoff objectives}~\cite{CDRR15}, which are finitary versions of the classical mean-payoff objective. 
Given a window length \(\WindowLength \geq 1\) and a threshold \(\Threshold \in \Rationals\), the fixed window mean-payoff objective \(\FWMP(\WindowLength,\Threshold)\) holds if, except for a finite prefix, from every position $i$ in the play there exists a window starting at $i$ of length at most \(\WindowLength\) whose mean payoff is at least \(\Threshold\). 
The bounded window mean-payoff objective \(\BWMP(\Threshold)\) holds for a play if there exists some \(\WindowLength \ge 1\) for which \(\FWMP(\WindowLength,\Threshold)\) holds for the play. 

\subparagraph*{Related work.}
The beyond worst-case framework under expectation and worst-case semantics was introduced in~\cite{BFRR17} for quantitative objectives. 
The problem where the players are restricted to finite-memory strategies was shown to be in \(\NP \cap \coNP\) for the mean-payoff objective.
The case of infinite memory strategy for the \(\BWC\) synthesis problem was left open in~\cite{BFRR17} and was solved in~\cite{CR15}. 
Further, in~\cite{CR15}, a natural relaxation of the \(\BWC\) problem, the beyond almost-sure synthesis (\(\BAS\)) problem was introduced that asks for a threshold performance almost surely, that is, with probability $1$, while maximizing expectation and was shown to be in \(\PTime\) for the mean-payoff objective. 
The beyond probability threshold synthesis problem that asks for a threshold performance with a given probability $p$ while maximizing expectation was studied for mean-payoff objective in~\cite{CKK17}. 
In~\cite{BRR17}, the beyond worst-case synthesis problems were studied for qualitative omega-regular objectives encoded as parity objectives, and the problems were shown to be in \(\NP \cap \coNP\); in~\cite{CP19,DG26}, these problems were studied in the context of stochastic games which are a generalization of MDPs where the environment is both stochastic and adversarial. 
In~\cite{BGR20}, Boolean combinations of omega-regular objectives that need to be enforced either surely, almost surely, existentially, or with non-zero probability were studied, and it was shown that both randomization and infinite memory may be required by an optimal strategy. 
In~\cite{BKW24}, a combination of parity objective and multiple reachability objectives along with threshold probabilities were considered where the parity objective needs to be satisfied surely and each reachability objective is to be satisfied with the corresponding threshold probability. 

Mean-payoff objectives were studied initially in two-player games on graphs without stochasticity~\cite{EM79,ZP96}, and finitary versions were introduced as window mean-payoff objectives~\cite{CDRR15} and have been studied extensively since the last decade. 
For window mean-payoff objectives, the satisfaction problem~\cite{BDOR20} and the expectation problem~\cite{BGR19} were studied in MDPs.
Both problems were shown to be in $\PTime$ for the \(\FWMPL\) objectives and in \(\NP \cap \coNP\) for the \(\BWMP\) objectives. 
The satisfaction problem and the expectation problem for window mean-payoff objectives have recently been studied in~\cite{DGG25} and~\cite{DGG25CONCUR} respectively for the more general setting that are stochastic games.
In~\cite{GG25}, \(\BWC\) synthesis with the expectation semantics has been studied for window mean-payoff objectives in MDPs.

\subparagraph*{Contributions.}
We study two problems for the fixed and bounded window mean-payoff objectives each.
Given an MDP \(\MDP\), thresholds \(\GuaranteeThreshold, \AlmostSureThreshold \in \Rationals\), and an initial vertex \(\Vertex\):
\begin{enumerate}
\item 
  (Sure-almost-sure (\(\SAS\)) synthesis) synthesize a strategy that ensures
  \begin{enumerate*}[label=(\emph{\roman*})]
  \item satisfaction of the window mean-payoff objective with threshold \(\GuaranteeThreshold\) surely, i.e. against all strategies of an adversarial environment, and 
  \item satisfaction of the window mean-payoff objective with threshold \(\AlmostSureThreshold\) almost surely, that is, with probability $1$, against a stochastic model of the environment.
  \end{enumerate*}

\item
  (Sure-limit-sure (\(\SLS\)) synthesis) for every $\epsilon > 0$,  synthesize a strategy that ensures
  \begin{enumerate*}[label=(\emph{\roman*})]
  \item satisfaction of the window mean-payoff objective with threshold \(\GuaranteeThreshold\) surely, and 
  \item satisfaction of the window mean-payoff objective with threshold \(\AlmostSureThreshold\) with probability $1 - \epsilon$.
  \end{enumerate*}
\end{enumerate}
While almost-sure satisfaction and limit-sure satisfaction are equivalent in MDPs~\cite{CDH04}, we show that sure-almost-sure satisfaction is strictly stronger than sure-limit-sure satisfaction (\Cref{exa:sas-sls-difference}).
We show that for both \(\FWMPL\) and \(\BWMP\), the \(\SAS\) and \(\SLS\) problems are no more complex than solving sure satisfaction~\cite{CDRR15} or almost-sure satisfaction~\cite{BDOR20} of the same objective separately.
We show in \Cref{thm:fwmp-sas-complexity,thm:fwmp-sls-complexity} that the above problems are in \(\PTime\) for \(\FWMPL\) if \(\WindowLength\) is given in unary, and in \Cref{thm:bwmp-sas-complexity,thm:bwmp-sls-complexity} that the problems are in \(\NP \cap \coNP\) for \(\BWMP\).
Deterministic strategies suffice, and memory requirement may be higher for sure-almost-sure satisfaction compared to satisfying the objective either surely or almost surely (\Cref{exa:fwmp-memory-lower-bound,exa:bwmp-memory-lower-bound}).
Our results are summarized in \Cref{table:contributions-summary}.

\begin{table}
  \centering
  \caption{Previous and our results (shaded) for window mean-payoff satisfaction in MDPs.
    In the memory bounds, we denote by \(\abs{V}\) the number of vertices in the MDP, by \(\WindowLength\) the window length (given in unary), by \(\ProbabilityFunction_{\min}\) the minimum non-zero probability on the edges in the MDP, and by \(\PayoffFunction_{\max}\) the maximum absolute payoff on the edges of the MDP.
  }
  \label{table:contributions-summary}
  \begin{tabular}{lllcc}
    \toprule
    & &      & \multicolumn{2}{c}{Memory bound}   \\ \cmidrule(l){4-5} \multicolumn{2}{l}{Winning condition} & Complexity& lower & upper \\ \midrule
    $\FWMPL$ & sure \cite{CDRR15} & $\PTime$ & $\WindowLength -1$ &$ \WindowLength$ \\
    & almost-sure \cite{BDOR20} & $\PTime$ & $\WindowLength -1$ & $ \WindowLength$ \\
    & \ours \(\SAS\) (\Cref{thm:fwmp-sas-complexity}) & \ours $\PTime$ & \ours \(\Omega(\max\{\abs{V}, \WindowLength\})\) & \ours $\bigO(\abs{V} \cdot \WindowLength)$ \\ 
    & \ours \(\SLS\) (\Cref{thm:fwmp-sls-complexity}) & \ours $\PTime$  & \ours \(\Omega\left(\frac{1 - \epsilon}{(\ProbabilityFunction_{\min})^{\abs{V}}} + \WindowLength\right)\) & \ours \(\bigO\left(\frac{\abs{V} \cdot \log(1/\epsilon)}{(\ProbabilityFunction_{\min})^{\abs{V}}} + \WindowLength\right)\) \\ \midrule
    $\BWMP$ & sure \cite{CDRR15} & $\NP \intersection \coNP$ & memoryless & memoryless \\ 
    & almost-sure \cite{BDOR20} & $\NP \intersection \coNP$ & memoryless & memoryless \\
    & \ours\(\SAS\) (\Cref{thm:bwmp-sas-complexity}) & \ours$\NP \cap \coNP$ &\ours $\Omega(\abs{V} \cdot \PayoffFunction_{\max})$ &\ours $\bigO(\abs{V}^{3} \cdot \abs{\PayoffFunction_{\max}})$ \\ 
    &\ours \(\SLS\) (\Cref{thm:bwmp-sls-complexity}) &\ours $\NP \cap \coNP$ &\ours  \(\Omega\left(\frac{1 - \epsilon}{(\ProbabilityFunction_{\min})^{\abs{V}}}\right)\) &\ours  \(\bigO\left(\frac{\abs{V} \cdot \log(1/\epsilon)}{(\ProbabilityFunction_{\min})^{\abs{V}}}\right)\) \\ 
    \bottomrule
  \end{tabular}
\end{table}

For the sure-almost-sure satisfaction of quantitative objectives such as mean payoff~\cite{BFRR17} and parity~\cite{BRR17}, the following observations are useful: there is a winning strategy for sure satisfaction (that may invalidate almost-sure satisfaction) and a winning strategy for almost-sure satisfaction (that may invalidate sure satisfaction), and in order to satisfy sure-almost-sure satisfaction, a strategy may need to switch between the two infinitely often, ensuring in the long run both sure and almost-sure satisfaction.
This however does not apply to sure-almost-sure synthesis for finitary objectives, as we have a bounded amount of steps to satisfy both objectives.
We instead require different techniques and a more careful analysis.

\subparagraph{Organization.}
\Cref{sec:preliminaries} defines technical preliminaries and \Cref{sec:window-mean-payoff-objectives} defines the window mean-payoff objectives.
Sections \ref{sec:bwc-satisfaction-boolean} and~\ref{sec:sls} study the decision problems and strategies for sure-almost-sure and sure-limit-sure satisfaction of window mean-payoff objectives respectively.
Finally, we conclude with a brief discussion in \Cref{sec:conc}.

\section{Preliminaries}%
\label{sec:preliminaries}

\subparagraph*{Probability distributions.}
A \emph{probability distribution} over a finite set \(A\) is a function \(\Prob \colon A \to [0,1]\) such that \(\sum_{a \in A} \Prob(a) = 1\).
We denote by \(\DistributionSet{A}\) the set of all probability distributions over \(A\).
The \emph{support} of the probability distribution \(\Prob\) on \(A\) is \(\Support{\Prob} = \{a \in A \suchthat \Prob(a) > 0\}\).
For algorithm and complexity reasons, we assume that probability distributions take rational values.
Probability distributions may also be defined over an infinite set \(A\), see e.g. \cite{Billingsley86}.

\subparagraph{Markov decision processes.}
A (finite) \emph{Markov decision process (MDP)} is a tuple \(\MDP = ((\Vertices, \Edges), (\VerticesMain, \VerticesRandom),  \ProbabilityFunction, \PayoffFunction)\), where:
\begin{itemize}
\item \((\Vertices, \Edges)\) is a directed graph with 
  a (finite) set \(\Vertices\) of vertices and a set  \(\Edges \subseteq \Vertices \times \Vertices\) of directed edges such that
  for all vertices \(\Vertex \in \Vertices\), 
  the set \(\OutNeighbours{\Vertex} = \{ \Vertex' \in \Vertices \suchthat (\Vertex, \Vertex') \in \Edges\}\) 
  of out-neighbours of \(\Vertex\) is non-empty, i.e. no deadlocks.
\item \((\VerticesMain, \VerticesRandom)\) is a partition of \(\Vertices\). The vertices in \(\VerticesMain\) belong to the system or Player~$\Main$ (denoted \(\PlayerMain\)), and the vertices in \(\VerticesRandom\) belong to the environment and are called \emph{probabilistic vertices}.
\item \(\ProbabilityFunction \colon \VerticesRandom \to \DistributionSet{\Vertices}\) is a \emph{probability function} that describes the behaviour of probabilistic vertices in the game. 
  It maps each probabilistic vertex \(\Vertex \in \VerticesRandom\) to a probability distribution \(\ProbabilityFunction(\Vertex)\) over the set \(\OutNeighbours{\Vertex}\) of out-neighbours of \(\Vertex\) such that \(\ProbabilityFunction(\Vertex)(\Vertex') > 0\) for all \(\Vertex' \in \OutNeighbours{\Vertex}\) (i.e., all out-neighbours have non-zero probability).
  We denote by $\ProbabilityFunction_{\min}$ the minimum non-zero probability in $\MDP$, that is, \(\min\{  \ProbabilityFunction(v)(v') \suchthat v, v' \in V, \ v' \in \OutNeighbours{v}\}\).
\item \(\PayoffFunction \colon \Edges \to \Rationals\) 
  is a \emph{payoff function} assigning a rational payoff to every edge in the game.
  We denote by $\wmax$ the maximum absolute payoff on the edges of $\MDP$, that is, \(\max_{e \in E}\abs{w(e)}\).
\end{itemize}
A play in an MDP \(\MDP\) begins by putting a token on a designated vertex $v_{0}$.
The token proceeds in rounds as follows for all $i \geq 0$.
At the beginning of round \(i\), if the token is on a vertex \(v_{i}\), and \(\Vertex[i] \in \VerticesMain\), then \(\PlayerMain\) chooses an out-neighbour \(\Vertex[i+1] \in \OutNeighbours{\Vertex}\);
otherwise \(\Vertex[i] \in \VerticesRandom\), and an out-neighbour \(\Vertex[i+1] \in \OutNeighbours{\Vertex}\) is chosen with probability \(\ProbabilityFunction(\Vertex[i])(\Vertex[i+1])\). 
Player \(\PlayerMain\) receives the amount \(\PayoffFunction(\Vertex[i], \Vertex[i+1])\) given by the payoff function and the token moves to the chosen \(v_{i+1}\) for the next round. 
This continues ad infinitum resulting in a \emph{play} $\Play = v_0 v_1 v_2 \dotsm \in \Vertices^\omega$ that is an infinite sequence of vertices such that \((\Vertex[i], \Vertex[i+1]) \in \Edges\) for all \(i \ge 0\).
For \(i < j\), we denote by \(\PlayInfix{i}{j}\) the \emph{infix} \(\Vertex[i] \Vertex[i+1] \dotsm \Vertex[j]\) of \(\Play\). 
Its length is \( \abs{\PlayInfix{i}{j}} = j - i\), the number of edges. 
We denote by \(\PlayPrefix{j}\) the finite \emph{prefix} \(\Vertex[0] \Vertex[1] \dotsm \Vertex[j]\) of \(\Play\), and by \(\PlaySuffix{i}\) the infinite \emph{suffix} \(\Vertex[i] \Vertex[i+1] \ldots \) of \(\Play\). 
We denote by \(\PlaySet{\MDP}\) and \(\PrefixSet{\MDP}\) the set of all plays and the set of all finite prefixes in \(\MDP\) respectively.
We denote by \(\Last{\Prefix}\) the last vertex of the prefix \(\Prefix \in \PrefixSet{\MDP}\). 
We denote by \(\PrefixSet[i]{\MDP}\) (\(i \in \{\Main, \Adversary\}\)) the set of all prefixes \(\Prefix\) such that \(\Last{\Prefix} \in \Vertices[i]\). 
\Cref{fig:sas-not-same-as-sls} shows an example of an MDP; vertices of \(\PlayerMain\) are shown as circles and probabilistic vertices as diamonds. 
The payoff for each edge is shown in {\color{red} red} and probability distributions over out-neighbours of probabilistic vertices are shown in {\color{blue} blue}.

\subparagraph{Objectives.}
An objective \(\Objective\) is a Borel-measurable subset of \(\PlaySet{\MDP}\)~\cite{Mar98}.
A play \(\Play\in \PlaySet{\MDP}\) \emph{satisfies} an objective \(\Objective\) if \(\Play \in \Objective\). 
For a target set \(T \subseteq \Vertices\), we define some objectives:
\begin{itemize}
\item \emph{reachability}, denoted \(\ReachObj(T) = \{\Play = v_{0}v_{1}v_{2}\cdots \suchthat \exists i \ge 0,\, v_{i} \in T \}\), is the set of all plays that visit a vertex in \(T\);
  we also define some variants of \(\ReachObj(T)\):
  \begin{itemize}
  \item For a subset \(F \subseteq E\) of edges, \emph{edge-restricted reachability},  denoted \(\ReachObj_{F}(T) = \{\Play = v_{0} v_{1} v_{2} \cdots \suchthat \exists i \ge 0, \, \forall j \ge 0, v_{i} \in T \wedge (j < i \implies (v_{j}, v_{j+1}) \in F) \}\), is the set of all plays that visit a vertex in \(T\) and use only edges in \(F\) until the first visit to \(T\);    
  \item For an integer \(k \ge 0\), \emph{bounded reachability}, denoted \(\ReachObj_{\le k}(T) = \{\Play = v_{0} v_{1} v_{2} \cdots \suchthat \exists i \ge 0, \, i \le k, \, v_{i} \in T \}\), is the set of all plays that visit a vertex in \(T\) in at most \(k\) steps from the start of the play;
  \end{itemize}
\item \emph{safety}, denoted \(\SafeObj(T) = \{\Play = v_{0}v_{1}v_{2} \cdots \suchthat \forall i \ge 0,\, v_{i} \in T \}\), is the set of all plays that never visit a vertex not in \(T\);
\item \emph{\Buchi}, denoted \(\BuchiObj(T) = \{\Play = v_{0}v_{1}v_{2}\cdots \suchthat \forall i \ge 0,\, \exists j \ge i,\, v_{j} \in T \}\), is the set of all plays that visit \(T\) infinitely often;
  and
\item \emph{co\Buchi}, denoted \(\CoBuchiObj(T) = \{\Play = v_{0}v_{1}v_{2} \suchthat \exists i \ge 0, \, \forall j \ge i, \, v_{j} \in T \}\), the set of all plays that eventually visit only vertices in \(T\).
\end{itemize}
The objectives \(\ReachObj(T)\) and \(\SafeObj(V \setminus T)\) are duals of each other, that is, a play satisfies \(\ReachObj(T)\) if and only if it does not satisfy \(\SafeObj(V \setminus T)\).
Similarly, the objectives \(\BuchiObj(T)\) and \(\CoBuchiObj(V \setminus T)\) are duals of each other.

For a rational threshold $\Threshold$, we define some more objectives:
\begin{itemize}
\item \emph{total payoff}, denoted \(\TP(\Threshold) = \{ \Play = v_{0} v_{1} v_{2}\cdots \suchthat \limsup_{n \to \infty} \sum_{i=0}^{n-1} \PayoffFunction(v_i, v_{i+1}) \ge \Threshold\}\), is the set of all plays with (supremum) total payoff at least \(\Threshold\); and
\item \emph{mean payoff}, denoted \(\MP(\Threshold) = \{ \Play = v_{0} v_{1} v_{2}\cdots \suchthat \limsup_{n \to \infty} \frac{1}{n}\sum_{i=0}^{n-1} \PayoffFunction(v_i, v_{i+1}) \ge \Threshold\}\), is the set of all plays with (supremum) limit-average payoff at least \(\Threshold\).
\end{itemize}
In this work, we consider finitary versions of the mean-payoff objective called \emph{window mean-payoff} objectives, which are defined in \Cref{sec:window-mean-payoff-objectives}.

\subparagraph{Prefix independence.}
An objective \(\Objective\) is \emph{closed under suffixes} if for all plays \(\Play\) satisfying \(\Objective\), all suffixes of \(\Play\) also satisfy \(\Objective\), that is, \(\PlaySuffix{j} \in \Objective\) for all \(j \ge 0\).
An objective \(\Objective\) is \emph{closed under prefixes} if for all plays \(\Play\) satisfying \(\Objective\), for all prefixes \(\Prefix\) such that the concatenation \(\Prefix \cdot \Play\) is a play in \(\MDP\) (that is \(\Prefix \cdot \Play \in \PlaySet{\MDP}\)), we have that \(\Prefix \cdot \Play\) satisfies \(\Objective\).
An objective \(\Objective\) is \emph{prefix-independent} if it is closed under both suffixes and prefixes.
In other words, an objective \(\Objective\) is prefix-independent if it is independent of finite prefixes of the play, that is, it depends only on the infinite suffixes of a play.
For a prefix-independent objective \(\Objective\) and for all plays \(\Play\) and \(\Play'\) with a common suffix (that is, \(\Play'\) can be obtained from \(\Play\) by removing and adding finite prefixes), we have that \(\Play \in \Objective\) if and only if \(\Play' \in \Objective\).
The objectives \(\BuchiObj(T)\), \(\CoBuchiObj(T)\), and \(\MP(\Threshold)\) are prefix-independent (closed under both suffixes and prefixes).
In general, the objective \(\SafeObj(T)\) is closed under suffixes but not prefixes, \(\ReachObj(T)\) is closed under prefixes but not suffixes, and 
\(\ReachObj_{F}(T)\), \(\ReachObj_{\le k}(T)\), and \(\TP(\Threshold)\) are closed under neither suffixes nor prefixes.

\subparagraph{Strategies.}
A (deterministic) \emph{strategy}\footnote{Strategies can be randomized in general. However, randomization is not useful for the winning conditions considered in this paper, and thus we restrict our attention to deterministic strategies only.} for \(\PlayerMain\) in an MDP \(\MDP\) is a function \(\Strategy \colon \PrefixSet[\Main]{\MDP} \to \Vertices\) that maps prefixes \(\Prefix\) with \(\Last{\Prefix} \in \VerticesMain\) to an out-neighbour of \(\Last{\Prefix}\). 
Strategies can be realized as the output of a (possibly infinite-state) \emph{Mealy machine}, that is a deterministic transition system with transitions labelled by input/output pairs. 
Intuitively, in each step, if the token is at a vertex~\(v\), then~\(v\) is given as input to the Mealy machine.
Then, the Mealy machine emits an output (an out-neighbour of~\(v\) if \(v \in \VerticesMain\), an empty string if \(v \in \VerticesRandom\)) and the state of the Mealy machine is also updated.

The \emph{memory size} of a strategy \(\Strategy\) is the smallest number of states a Mealy machine defining \(\Strategy\) can have. 
A strategy \(\Strategy\) is \emph{memoryless} if for all prefixes \(\Prefix, \Prefix' \in \PrefixSet[\Main]{\MDP}\) such that \(\Last{\Prefix} = \Last{\Prefix'}\), we have \(\Strategy(\Prefix) = \Strategy(\Prefix')\). 
Memoryless strategies can be defined by Mealy machines with only one state. 
A play \(\Play = v_0 v_1 \dotsm\) is an \emph{outcome} of a strategy \(\Strategy\) if for all \(j \geq 0\) with \(v_{j} \in \VerticesMain\), we have \(v_{j+1} = \Strategy(\PlayPrefix{j})\). 
Fixing a strategy \(\Strategy\) and an initial vertex \(v\) in an MDP \(\MDP\) yields a distribution \(\Pr{\Strategy}{\MDP, \Vertex}{\cdot}\) over plays in \(\MDP\) (see e.g. \cite{BK08}).
For an objective \(\Objective\), we have that \(\Pr{\Strategy}{\MDP, \Vertex}{\Objective}\) is the probability that an outcome of \(\Strategy\) from \(v\) in \(\MDP\) satisfies \(\Objective\).

\subparagraph{Winning conditions.}
A strategy \(\Strategy\) of \(\PlayerMain\) from \(v\) in \(\MDP\) is \emph{winning for positive-\(\Objective\)} if \(\Pr{\Strategy}{\MDP, v}{\Objective} > 0\), \emph{winning for probability\((p)\)}-\(\Objective\) if \(\Pr{\Strategy}{\MDP, \Vertex}{\Objective} \ge p\), \emph{winning for almost-sure-\(\Objective\)} if  \(\Pr{\Strategy}{\MDP, \Vertex}{\Objective} = 1\), and \emph{winning for sure-\(\Objective\)} if all outcomes of \(\Strategy\) from \(v\) in \(\MDP\) satisfy \(\Objective\) (that is, the edge probabilities are ignored and the environment is considered adversarial instead of probabilistic). 
If such a winning strategy \(\Strategy\) from \(v\) exists, then the vertex \(v\) is said to be winning for positive-\(\Objective\), probability\((p)\)-\(\Objective\), almost-sure-\(\Objective\), or sure-\(\Objective\) respectively in \(\MDP\).
In addition, we also say that a vertex \(v\) is winning for \emph{limit-sure-\(\Objective\)} in \(\MDP\) if for all \(\epsilon > 0\), we have that \(v\) is winning for probability\((1 - \epsilon)\)-\(\Objective\).
Similar to winning vertices, an edge \(e\) is winning for positive-\(\Objective\) (resp., probability\((p)\)-\(\Objective\), limit-sure-\(\Objective\), almost-sure-\(\Objective\), sure-\(\Objective\)) if starting from \(e\), there exists a strategy \(\Strategy\) that is winning for positive-\(\Objective\) (resp., probability\((p)\)-\(\Objective\), limit-sure-\(\Objective\), almost-sure-\(\Objective\), sure-\(\Objective\)). 
The \emph{winning region} for positive-\(\Objective\) in \(\MDP\) is the set of all vertices in \(\MDP\) that are winning for positive-\(\Objective\).
The winning regions for probability\((p)\)-\(\Objective\), limit-sure-\(\Objective\), almost-sure-\(\Objective\), and sure-\(\Objective\) are defined similarly.
The winning region for sure-\(\ReachObj(T)\) of a set \(T\subseteq V\) is used often and thus has a special name, the \emph{sure-attractor} of \(T\), and is denoted by \(\SureAttr{}{T}\).

\subparagraph{Two-player games.}
When computing the winning region for sure-\(\Objective\) in an MDP \(\MDP= ((\Vertices, \Edges), (\VerticesMain, \VerticesRandom),  \ProbabilityFunction, \PayoffFunction)\), it is useful to view \(\MDP\) as a \emph{two-player game} played by \(\PlayerMain\) against an adversary \(\PlayerAdversary\) by forgetting the probability function $\ProbabilityFunction$ and treating vertices in $\Vertex \in \VerticesRandom$ as ones belonging \(\PlayerAdversary\) and the outgoing transitions of $\Vertex$ as the choices of \(\PlayerAdversary\).

\subparagraph*{SubMDPs.}
Given an MDP \(\MDP = \tuple{(\Vertices, \Edges), (\VerticesMain, \VerticesRandom),  \ProbabilityFunction, \PayoffFunction}\), a subset \(V' \subseteq V\) of vertices induces a \emph{subMDP} of \(\MDP\) if for all vertices \(v' \in V'\), player \(\PlayerMain\) has a strategy from \(v'\) to ensure that the token always remains in \(V'\) for all outcomes, that is, the set \(V'\) is a winning region for sure-\(\SafeObj(V')\) in \(\MDP\).
In graph-theoretic terms, the subset \(V'\) induces a subMDP if
\begin{enumerate*}[label=(\emph{\roman*})]
\item every vertex  \(\Vertex' \in \Vertices'\) has an out-neighbour in $\Vertices'$, that is    \(\OutNeighbours{\Vertex'} \cap \Vertices' \neq \emptyset\), and
\item every probabilistic vertex \(\Vertex' \in \VerticesRandom \intersection \Vertices'\) has all its out-neighbours in~$\Vertices'$, that is \(\OutNeighbours{\Vertex'} \subseteq \Vertices'\).
\end{enumerate*}
The \emph{induced subMDP} is denoted by  \(\subMDP{\MDP}{ \Vertices'}\) and is obtained by restricting \(\MDP\) to \(V'\).
Formally, \(\subMDP{\MDP}{V'} = ((\Vertices',  \Edges'),
(\VerticesMain \intersection \Vertices', \VerticesRandom \intersection \Vertices'), \ProbabilityFunction',
\PayoffFunction')\), where \(\Edges' = \Edges \intersection (\Vertices' \times \Vertices')\), and \(\ProbabilityFunction'\) and \(\PayoffFunction'\) are restrictions of \(\ProbabilityFunction\) and \(\PayoffFunction\) to \((\Vertices', \Edges')\).
\begin{proposition}\label{prop:winning-region-sure-objective-closed-under-suffixes-induces-submdp}
  If \(\ObjectiveSuf\) is an objective that is closed under suffixes, then the winning region for sure-\(\ObjectiveSuf\) in \(\MDP\) induces a subMDP of \(\MDP\).
\end{proposition}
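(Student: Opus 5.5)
Let \(W\) denote the winning region for sure-\(\ObjectiveSuf\) in \(\MDP\). The plan is to check directly the two graph-theoretic conditions for \(W\) to induce a subMDP. If \(W\) is empty, there is nothing to prove, so fix a vertex \(v \in W\) together with a strategy \(\Strategy\) that is winning for sure-\(\ObjectiveSuf\) from \(v\).

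The key lemma is the following. If \(\Prefix = v_0 v_1\) with \(v_0 = v\) is a one-step prefix consistent with \(\Strategy\), then \(v_1 \in W\). To prove it, define the residual strategy \(\Strategy'\) from \(v_1\) by \(\Strategy'(\rho) = \Strategy(v_0 \cdot \rho)\) for every prefix \(\rho\) starting at \(v_1\) with \(\Last{\rho} \in \VerticesMain\). Take any outcome \(\Play'\) of \(\Strategy'\) from \(v_1\). Then \(v_0 \cdot \Play'\) is a play of \(\MDP\), since \((v_0, v_1) \in \Edges\). It is also an outcome of \(\Strategy\) from \(v\): the first step is consistent with \(\Strategy\) by the choice of \(v_1\), and every later step is consistent by the definition of \(\Strategy'\). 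Hence \(v_0 \cdot \Play' \in \ObjectiveSuf\). Because \(\ObjectiveSuf\) is closed under suffixes, its suffix from position \(1\), which is \(\Play'\), also satisfies \(\ObjectiveSuf\). So \(\Strategy'\) is winning for sure-\(\ObjectiveSuf\) from \(v_1\), and therefore \(v_1 \in W\).

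The two subMDP conditions now follow.
\begin{enumerate*}[label=(\emph{\roman*})]
\item If \(v \in \VerticesMain\), take \(v_1 = \Strategy(v)\). This is an out-neighbour of \(v\) and, by the lemma, it lies in \(W\). If \(v \in \VerticesRandom\), condition (ii) below gives the required out-neighbour, since \(\OutNeighbours{v}\) is non-empty.
\item If \(v \in \VerticesRandom\), every \(v_1 \in \OutNeighbours{v}\) gives a prefix \(v v_1\) that is consistent with \(\Strategy\), because sure satisfaction treats all of \(v\)'s successors as possible moves of the adversary. By the lemma, \(\OutNeighbours{v} \subseteq W\).
\end{enumerate*}

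The main point needing care is the lemma, specifically checking that prepending \(v_0\) to an outcome of the residual strategy gives an outcome of the original strategy. Once this is in place, closure under suffixes does the rest. No earlier results beyond the definitions are needed.
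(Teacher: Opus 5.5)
Your proof is correct and follows essentially the same route as the paper: both pass from a winning strategy \(\Strategy\) at \(v\) to the shifted strategy \(\rho \mapsto \Strategy(v \cdot \rho)\) at a successor consistent with \(\Strategy\), and use closure under suffixes to place that successor in the winning region, which gives conditions (i) and (ii). The difference is only presentational: the paper argues each condition by contradiction, while you isolate a direct one-step lemma and explicitly check that prepending \(v\) to an outcome of the shifted strategy yields an outcome of \(\Strategy\).
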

\begin{proof}
  We show that the winning region for sure-\(\ObjectiveSuf\) in \(\MDP\) induces a subMDP in \(\MDP\).
  Let \(W_{\Sure}\) denote this winning region.
  We show that both Conditions (\emph{i}) and (\emph{ii}) hold for \(W_{\Sure}\).
\begin{itemize}
\item \emph{Condition (i):}
  Suppose towards a contradiction that Condition (\emph{i}) does not hold for \(W_{\Sure}\), that is, there exists a vertex \(v \in W_{\Sure}\) such that \(v\) does not have an out-neighbour in \(W_{\Sure}\).

  Since \(v\) is winning for sure-\(\ObjectiveSuf\), there exists a strategy \(\Strategy\) from \(v\) that is winning for sure-\(\ObjectiveSuf\).
  In particular, if \(v \in \VerticesMain\), then \(\PlayerMain\) moves the token from \(v\) to an out-neighbour \(u\) of \(v\) that is \(\Strategy(v)\), and if \(v \in \VerticesRandom\), then the token may move to any out-neighbour \(u\) of \(v\).
  In both cases, player \(\PlayerMain\) can continue following \(\Strategy\) from \(u\) and ensure that all outcomes of \(\Strategy\) starting with the prefix \(vu\) satisfy \(\ObjectiveSuf\), and thus, we have that \(vu\) is a winning prefix for sure-\(\ObjectiveSuf\).
  Since \(\ObjectiveSuf\) is closed under suffixes, we have that for every play \(\Play\) starting with \(vu\) that satisfies \(\ObjectiveSuf\), the play \(\PlaySuffix{1}\) obtained by dropping the initial vertex \(v\) from \(\Play\) also satisfies \(\ObjectiveSuf\).
  Thus, if the play begins from \(u\) and  \(\PlayerMain\) plays as if the play started from \(vu\), that is, if \(\PlayerMain\) follows the strategy \(\Strategy[u]\) given by \(\Strategy[u](u \cdot \Prefix) = \Strategy(v \cdot u \cdot \Prefix)\) for all prefixes \(\Prefix\), then we have that \(\Strategy[u]\) is a winning strategy from \(u\) for sure-\(\ObjectiveSuf\).
  We get that \(u\) is winning for sure-\(\ObjectiveSuf\) and thus belongs to \(W_{\Sure}\).
  This is a contradiction since we had assumed that none of the out-neighbours of \(v\) belong to \(W_{\Sure}\).
\item \emph{Condition (ii):}
  Suppose towards a contradiction that there exists a probabilistic vertex \(v \in W_{\Sure} \intersection \VerticesRandom\) such that \(v\) has an out-neighbour \(u\) that does not belong to \(W_{\Sure}\).
  Using similar arguments as the previous case, we have that \(vu\) is a winning prefix for sure-\(\ObjectiveSuf\), and since \(\ObjectiveSuf\) is closed under suffixes, we have that \(u\) is winning for sure-\(\ObjectiveSuf\).
  Thus, the vertex \(u\) belongs to \(W_{\Sure}\), which is a contradiction.\qedhere
\end{itemize}
\end{proof}

A similar result holds for the winning region for almost-sure-\(\ObjectiveSuf\) in \(\MDP\).
\begin{proposition}\label{prop:winning-region-almost-sure-objective-closed-under-suffixes-induces-submdp}
    If \(\ObjectiveSuf\) is an objective that is closed under suffixes, then the winning region for almost-sure-\(\ObjectiveSuf\) in \(\MDP\) induces a subMDP of \(\MDP\).
\end{proposition}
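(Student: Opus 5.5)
I will mirror the proof of \Cref{prop:winning-region-sure-objective-closed-under-suffixes-induces-submdp}, writing \(W_{\AS}\) for the winning region for almost-sure-\(\ObjectiveSuf\) in \(\MDP\), and verify Conditions (\emph{i}) and (\emph{ii}) for \(W_{\AS}\). Both conditions reduce to one key step. Fix \(v \in W_{\AS}\) and a strategy \(\Strategy\) from \(v\) with \(\Pr{\Strategy}{\MDP, v}{\ObjectiveSuf} = 1\). Take an out-neighbour \(u\) of \(v\) that has positive probability of being the second vertex under \(\Strategy\): if \(v \in \VerticesMain\), this is \(u = \Strategy(v)\), and if \(v \in \VerticesRandom\), every \(u \in \OutNeighbours{v}\) qualifies. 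The key step is to show that such a \(u\) lies in \(W_{\AS}\).

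For this, I will use the shifted strategy \(\Strategy[u](u \cdot \Prefix) = \Strategy(v \cdot u \cdot \Prefix)\). I will then compare two measures on plays:
\begin{itemize}
\item the measure \(\Pr{\Strategy[u]}{\MDP, u}{\cdot}\);
\item the image of the conditional measure \(\Pr{\Strategy}{\MDP, v}{\cdot \mid \text{plays starting with } vu}\) under the map \(\Play \mapsto \PlaySuffix{1}\).
\end{itemize}
These two measures agree on cylinder sets, since both are products of the same transition probabilities along the same strategy choices. They therefore agree on all Borel sets.

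Write \(C_{vu}\) for the cylinder of plays starting with \(vu\). Since \(\Pr{\Strategy}{\MDP, v}{C_{vu}} = p > 0\) and \(\ObjectiveSuf\) holds with probability \(1\), the conditional probability of \(\ObjectiveSuf\) given \(C_{vu}\) is also \(1\). Closure under suffixes gives the inclusion \(\{\Play \in C_{vu} \suchthat \Play \in \ObjectiveSuf\} \subseteq \{\Play \in C_{vu} \suchthat \PlaySuffix{1} \in \ObjectiveSuf\}\). Hence \(\Pr{\Strategy[u]}{\MDP, u}{\ObjectiveSuf} = 1\), and so \(u \in W_{\AS}\).

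With the key step in hand, both conditions follow. For Condition (\emph{i}), the vertex \(u\) is an out-neighbour of \(v\) inside \(W_{\AS}\). For Condition (\emph{ii}), every out-neighbour of a probabilistic \(v\) has positive probability and so lies in \(W_{\AS}\). Both are shown by contradiction, as in the sure case.

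The main obstacle is the measure-theoretic identification of the shifted conditional measure with the measure induced by \(\Strategy[u]\). This is where positivity of the edge probabilities (all out-neighbours of a probabilistic vertex have nonzero probability) is essential. Without it, a zero-probability successor could lie outside \(W_{\AS}\). I will handle this by checking agreement on cylinders and invoking uniqueness of extension (Carathéodory / \(\pi\)-\(\lambda\)). Everything else parallels the sure case.
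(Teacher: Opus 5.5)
Your proof is correct and takes the route the paper intends. The paper gives no separate proof and only remarks that the argument for \Cref{prop:winning-region-sure-objective-closed-under-suffixes-induces-submdp} carries over. Your step of conditioning on the cylinder of plays starting with \(vu\) and identifying the shifted measure with the one induced by \(\Strategy[u]\) supplies exactly the probabilistic detail that this adaptation needs.
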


\subparagraph*{SubMDP closure.}
Consider a subset \(V' \subseteq V\) that satisfies Condition (\emph{i}) but not necessarily Condition (\emph{ii}), that is, every vertex  \(v' \in V'\) has an out-neighbour in \(V'\) but there may exist probabilistic vertices in \(V'\) with some out-neighbours not in \(V'\).
Such a subset \(V'\) induces a \emph{subMDP closure} of \(\MDP\), denoted by \(\subMDPClosure{\MDP}{V'}\).
The subMDP closure is itself an MDP and is obtained by first restricting \(\MDP\) to \(V'\), and then for each probabilistic vertex \(v' \in V'\), we scale up probabilities over the out-neighbours of \(v'\) to obtain a probability distribution over the out-neighbours of \(v'\) that belong to \(V'\).
Formally, 
\(\subMDPClosure{\MDP}{V'} = ((\Vertices',  \Edges'),
(\VerticesMain \intersection \Vertices', \VerticesRandom \intersection \Vertices'), \ProbabilityFunction', \PayoffFunction')\), where \(\Edges' = \Edges \intersection (\Vertices' \times \Vertices')\), and \(\ProbabilityFunction'\) is given by \(\ProbabilityFunction'(v')(u') = \ProbabilityFunction(v')(u')/\sum_{x' \in \OutNeighbours{v'} \intersection V'} \ProbabilityFunction(v')(x')\) for all \(v' \in V' \intersection \VerticesRandom\), \(u' \in \OutNeighbours{v'}  \intersection V'\), and \(\PayoffFunction'\) is a restriction of \(\PayoffFunction\) to \((\Vertices', \Edges')\).
It can be seen that the complement of a sure-attractor induces a subMDP closure.

\subparagraph*{Maximal end components.}
An \emph{end component} (EC) in an MDP \(\MDP\) is a subset \(T \subseteq V\) of vertices such that \(T\) is strongly connected and \(T\) induces a subMDP \(\subMDP{\MDP}{T}\). 
Thus, for every pair of vertices \(v, v'\) in the \(T\), player \(\PlayerMain\) has a strategy to reach \(v'\) from \(v\) with probability~\(1\) and \(\PlayerMain\) can ensure with probability~\(1\) that the token never leaves \(T\). 
A \emph{maximal end component} (MEC) is an EC that is not contained in any other EC. 
The MECs in an MDP are disjoint, and thus, each vertex in an MDP belongs to either no MEC or exactly one MEC.
The MEC decomposition can computed in polynomial time~\cite{CH14}.

\section{Window mean-payoff objectives}%
\label{sec:window-mean-payoff-objectives}
In this section, we recall window mean-payoff objectives introduced in~\cite{CDRR15} in the context of two-player games.
We study two types of window mean-payoff objectives: 
\emph{fixed}, in which a window length \(\WindowLength \geq 1\) is given,  and 
\emph{bounded}, in which for every play we need a bound on the window length. 
First, we recall the notion of windows in a play.

\subparagraph{Windows.}
We extend the definition of \emph{\(\Threshold\)-windows} as given in~\cite{CDRR15} for arbitrary thresholds \(\Threshold \in \Rationals\).
Given a play \(\Play = \Vertex[0] \Vertex[1] \cdots\), for all \(i \ge 0\), a \(\Threshold\)-window \emph{starts} at position \(i\) of the play.
The \(\Threshold\)-window starting at position \(i\) \emph{closes} at a position \(j\) if \(j\) is the first position after \(i\) such that the mean payoff of the infix \(\Play(i, j)\) is at least \(\Threshold\). 
Equivalently, if for all \(i < k \le j\) it is the case that the mean payoff of \(\Play(i,k)\) is less than \(\Threshold\), then the \(\Threshold\)-window starting at \(i\) is \emph{open} at \(j\).
Given \(j > 0\), we say a \(\Threshold\)-window is open at \(j\) if there exists an open \(\Threshold\)-window \(\Play(i, j)\) starting at \(i\) for some \(i < j\). 
When there is no confusion, we also informally say that the \(\Threshold\)-window opens at vertex \(v_{i}\) and closes at vertex \(v_{j}\) to mean that the window opens at position \(i\) and closes at position \(j\).
An interesting and useful observation is the \emph{inductive property of \(\Threshold\)-windows}: if the \(\Threshold\)-window starting at position \(i\) closes at position \(j\), then for all \(i < k < j\), the \(\Threshold\)-window starting at \(k\) is closed at or before \(j\).

\subparagraph{Fixed window length.}
Let \(\WindowLength \geq 1\) be a given window length and \(\Threshold \in \Rationals\) be a given threshold. 
A play \(\Play = \Vertex[0] \Vertex[1] \dotsm \)  in \(\MDP\) satisfies the \emph{good window} objective \(\GW_{\MDP}(\WindowLength, \Threshold)\) if the \(\Threshold\)-window that opens at \(v_0\) in \(\Play\) closes in at most \(\WindowLength\) steps.
\[\GW_{\MDP}(\WindowLength, \Threshold) = \{\Play \in \PlaySet{\MDP} \mid \exists j \in \PositiveSet{\WindowLength}, \ \frac{1}{j} \sum_{k=0}^{j-1} \PayoffFunction(\Vertex[k], \Vertex[k+1]) \geq \Threshold\} \]
A play \(\Play \)  in \(\MDP\) satisfies the \emph{direct fixed window mean-payoff} objective \(\DirFWMP_{\MDP}(\WindowLength, \Threshold)\) if every \(\Threshold\)-window in \(\Play\) closes in at most \(\WindowLength\) steps.
\begin{align*}
  \DirFWMP_{\MDP}(\WindowLength, \Threshold) = \{ \Play \in \PlaySet{\MDP} \suchthat  \forall i \ge 0, \ \PlaySuffix{i} \in \GW_{\MDP}(\WindowLength, \Threshold)\}
\end{align*} 
The prefix-independent version of this objective is the \emph{fixed window mean-payoff} objective \(\FWMP_{\MDP}(\WindowLength, \Threshold)\).
A play \(\Play\) satisfies \(\FWMP_{\MDP}(\WindowLength, \Threshold)\) 
if there exists a suffix of \(\Play\) that satisfies \(\DirFWMP_{\MDP}(\WindowLength, \Threshold)\).
  In other words, the play \(\Play\) satisfies \(\FWMP_{\MDP}(\WindowLength, \Threshold)\) 
if from some point on in the play, all \(\Threshold\)-windows close in at most \(\WindowLength\) steps.
\begin{align*}
  \FWMP_{\MDP}(\WindowLength, \Threshold) 
  &= \{ \Play \in \PlaySet{\MDP} \suchthat  \exists k \geq 0, \  \PlaySuffix{k} \in \DirFWMP_{\MDP}(\WindowLength, \Threshold) \}
\end{align*} 
We omit the subscript \(\MDP\) when it is clear from the context. 
Note that \(\FWMP(\WindowLength, \Threshold) \subseteq  \FWMP(\WindowLength', \Threshold) \) for \(\WindowLength \le \WindowLength'\) as a smaller window length is a stronger constraint.
We also have that \(\FWMP(\WindowLength, \Threshold) \supseteq \FWMP(\WindowLength, \Threshold')\) if \(\Threshold \le \Threshold'\), since closing of \(\Threshold'\)-windows implies closing of \(\Threshold\)-windows.

\subparagraph{Bounded window length.}
We also consider bounded versions of \(\DirFWMP(\WindowLength, \Threshold)\) and \(\FWMP(\WindowLength, \Threshold)\).
A play \(\Play\) satisfies the \emph{direct bounded window mean-payoff} objective if there exists a window length \(\WindowLength \ge 1\) such that \(\Play\) satisfies \(\DirFWMP_{\MDP}(\WindowLength, \Threshold)\).
\begin{align*}
  \DirBWMP_{\MDP}(\Threshold) = \{ \Play \in \PlaySet{\MDP} \suchthat \exists \WindowLength \ge 1, \ \Play \in \DirFWMP_{\MDP}(\WindowLength, \Threshold)\}
\end{align*}
Similarly, a play \(\Play\) satisfies the bounded window mean-payoff objective \(\BWMP(\Threshold)\) if there exists a window length \(\WindowLength \ge 1\) such that \(\Play\) satisfies \(\FWMP_{\MDP}(\WindowLength, \Threshold)\).
The \(\BWMP(\Threshold)\) objective is also the prefix-independent version of the \(\DirBWMP(\Threshold)\) objective.
\begin{align*}
  \BWMP_{\MDP}(\Threshold)
  &= \{ \Play \in \PlaySet{\MDP} \suchthat \exists \WindowLength \ge 1, \ \Play \in \FWMP_{\MDP}(\WindowLength, \Threshold)\}\\
  &= \{ \Play \in \PlaySet{\MDP} \suchthat  \exists k \geq 0, \  \PlaySuffix{k} \in \DirBWMP_{\MDP}(\Threshold) \}
\end{align*}
Note that \(\DirFWMP(\WindowLength, \Threshold)\) and \(\DirBWMP(\Threshold)\) are closed under suffixes, and \(\FWMP(\WindowLength, \Threshold)\) and \(\BWMP(\Threshold)\) are prefix-independent objectives.
We also note that all of these objectives strengthen the standard mean-payoff objective \(\MP(\Threshold)\)~\cite{EM79}.
In particular, we have the following relation between the objectives
\[\DirFWMP(\WindowLength, \Threshold) \subseteq \FWMP(\WindowLength, \Threshold) \subseteq \BWMP(\Threshold) \subseteq \MP(\Threshold),\]
\[\DirFWMP(\WindowLength, \Threshold) \subseteq \DirBWMP(\Threshold) \subseteq \BWMP(\Threshold) \subseteq \MP(\Threshold),\]
and all of these inclusions are strict in general. 

\subparagraph{Decision and synthesis problems.}
We consider the following problems that are a combination of sure and almost-sure (and limit-sure) satisfaction of \(\FWMPL\) objectives.
\begin{enumerate}
\item \emph{sure-almost-sure} satisfaction: 
  Given an MDP \(\MDP\), a vertex \(v\), a sure threshold \(\GuaranteeThreshold\) and an almost-sure threshold \(\AlmostSureThreshold\), does there exist a strategy \(\Strategy\) of \(\PlayerMain\) such that 
  \begin{enumerate*}[label=(\emph{\roman*})]
  \item all outcomes of \(\Strategy\) from \(v\) satisfy \(\FWMP(\WindowLength, \GuaranteeThreshold)\), and 
  \item with probability~\(1\), the outcome of \(\Strategy\) from \(v\) satisfies \(\FWMP(\WindowLength, \AlmostSureThreshold)\) from \(v\) (i.e., \(\Pr{\Strategy}{\MDP, \Vertex}{\FWMP(\WindowLength, \AlmostSureThreshold)} = 1\))?
    If the answer is yes, then we would also like to synthesize such a strategy.
  \end{enumerate*}
\item \emph{sure-limit-sure} satisfaction:
  Given an MDP \(\MDP\), a vertex \(v\), a sure threshold \(\GuaranteeThreshold\) and a limit-sure threshold \(\AlmostSureThreshold\),
  for all \(0 < \epsilon < 1\), does there exist a strategy \(\Strategy[\epsilon]\) of \(\PlayerMain\) such that
  \begin{enumerate*}[label=(\emph{\roman*})]
  \item all outcomes of \(\Strategy[\epsilon]\) from \(v\) satisfy \(\FWMP(\WindowLength, \GuaranteeThreshold)\), and 
  \item with probability at least~\(1 - \epsilon\), the outcome of \(\Strategy[\epsilon]\) from \(v\) satisfies \(\FWMP(\WindowLength, \AlmostSureThreshold)\) from \(v\) (i.e., \(\Pr{\Strategy[\epsilon]}{\MDP, \Vertex}{\FWMP(\WindowLength, \AlmostSureThreshold)} \ge 1 - \epsilon\))?
    If the answer is yes, then we would also like to synthesize such a family of strategies.
  \end{enumerate*}
\end{enumerate}
We assume without loss of generality that \(\GuaranteeThreshold < \AlmostSureThreshold\), because otherwise the problems reduce to checking satisfaction of sure-\(\FWMP(\WindowLength, \GuaranteeThreshold)\).
We also study analogous problems for \(\BWMP\).
\begin{remark}[SAS and SLS satisfaction of direct objectives]
  We only consider sure-almost-sure and sure-limit-sure satisfaction of the \emph{prefix-independent} versions of the window mean-payoff objectives.
  It is shown in \cite[Remark 4.10]{BDOR20} that almost-sure satisfaction of \(\DirFWMP(\WindowLength, \GuaranteeThreshold)\) (and thus also limit-sure satisfaction of \(\DirFWMP(\WindowLength, \GuaranteeThreshold)\)) is equivalent to the sure satisfaction of  \(\DirFWMP(\WindowLength, \GuaranteeThreshold)\). 
  Thus, sure-almost-sure satisfaction (and sure-limit-sure satisfaction) of \(\DirFWMP(\WindowLength)\) reduces to just the sure satisfaction of \(\DirFWMP(\WindowLength, \AlmostSureThreshold)\).
  It is also shown in \cite[Example 7.1]{BDOR20} that almost-sure satisfaction  of \(\DirBWMP\) is not well-behaved. 
  Thus, sure-almost-sure and sure-limit-sure satisfaction of \(\DirBWMP\) are also not well-behaved and we do not study these problems in this paper.
\end{remark}

\begin{example}\label{exa:sas-sls-difference}
  While almost-sure satisfaction and limit-sure satisfaction of \(\FWMP(\WindowLength, \GuaranteeThreshold)\) are the same in MDPs~\cite{CDH04}, we illustrate using the MDP in \Cref{fig:sas-not-same-as-sls} with values \(\WindowLength = 2\), \(\GuaranteeThreshold = 1\), and \(\AlmostSureThreshold = 5\) that sure-almost-sure satisfaction is strictly stronger than sure-limit-sure satisfaction for \(\FWMP(\WindowLength)\).
  \begin{figure}[t]
    \centering
    \begin{tikzpicture}[node distance=1.5cm]
      \node[state, initial above] (v1) {\(v_{1}\)};
      \node[random, draw, right of=v1] (v2) {\(v_{2}\)};
      \node[state, right of=v2] (v3) {\(v_{3}\)};

      \draw 
      (v1) edge[loop left] node[auto]{\small \(\EdgeValues{1}{}\)} (v1)
      ;
      \draw 
      (v1) edge[bend left] node[auto]{\small \(\EdgeValues{-1}{}\)} (v2)
      (v2) edge[bend left] node[auto]{\small \(\EdgeValues{-1}{.5}\)} (v1)
      (v2) edge[] node[auto]{\small \(\EdgeValues{1}{.5}\)} (v3)
      ;
      \draw 
      (v3) edge[loop right] node[auto]{\small \(\EdgeValues{5}{}\)} (v3)
      ;
    \end{tikzpicture}

    \caption{%
      All vertices are winning for sure-\(\FWMP(2, 1)\), almost-sure-\(\FWMP(2, 5)\), and sure-\(\FWMP(2, 1)\)-limit-sure-\(\FWMP(2, 5)\).
      However, vertices \(v_{1}\) and \(v_{2}\) are not winning for sure-\(\FWMP(2, 1)\)-almost-sure-\(\FWMP(2, 5)\).
    } 
    \label{fig:sas-not-same-as-sls}
  \end{figure}
  We have an MDP where the vertex \(v_{1}\) is winning for sure-\(\FWMP(2, 1)\)-limit-sure-\(\FWMP(2, 5)\) but not for sure-\(\FWMP(2, 1)\)-almost-sure-\(\FWMP(2, 5)\).

  To see that sure-\(\FWMP(2, 1)\) and almost-sure-\(\FWMP(2, 5)\) cannot be satisfied simultaneously from \(v_{1}\): in order to satisfy  almost-sure-\(\FWMP(2, 5)\), the token must reach \(v_{3}\) with probability \(1\), and that requires that as long as the token has not reached \(v_{3}\), the token must be forwarded to \(v_{2}\) from \(v_{1}\) repeatedly.
  However, consider the outcome of this strategy where the token always moves from \(v_{2}\) to \(v_{1}\).
  In this outcome, the sequence \(v_{1} v_{2} v_{1}\) appears infinitely often, and each such occurrence \(v_{1} v_{2} v_{1}\) is an open \(1\)-window of length \(2\). Thus, this outcome does not satisfy \(\FWMP(2, 1)\), and we have that sure-\(\FWMP(2, 1)\) is violated.

  On the other hand, the vertex \(v_{1}\) is winning for sure-limit-sure in the following way: given \(0 < \epsilon < 1\), visit \(v_{2}\) a large integer \(N\) number of times such that the probability of reaching \(v_{3}\) is greater than \(1 - \epsilon\). 
  If the vertex \(v_{2}\) has been visited \(N\) times without reaching \(v_{3}\), then loop on \(v_{1}\) for the rest of the play.
  This strategy simultaneously satisfies both sure-\(\FWMP(2, 1)\) and limit-sure-\(\FWMP(2, 5)\).
  \lipicsEnd
\end{example}

\begin{example}
  The winning region for sure-\(\FWMP(\WindowLength, \GuaranteeThreshold)\)-almost-sure-\(\FWMP(\WindowLength, \AlmostSureThreshold)\) is not merely the intersection or the composition of the winning regions for sure-\(\FWMP(\WindowLength, \GuaranteeThreshold)\) and for almost-sure-\(\FWMP(\WindowLength, \AlmostSureThreshold)\): in the MDP in \Cref{fig:sas-not-same-as-sls}, we have that every vertex is winning for sure-\(\FWMP(2, 1)\) and for almost-sure-\(\FWMP(2, 5)\) separately, and yet there exist vertices \(v_{1}\) and \(v_{2}\) that are not winning for sure-\(\FWMP(2, 1)\)-almost-sure-\(\FWMP(2, 5)\).
  Indeed, we need to make good use of prefix-independence and the finitary nature of window mean-payoff objectives to find the winning region for the combination of winning conditions.
  \lipicsEnd
\end{example}

\begin{example}
  We show using the MDP in \Cref{fig:sls-not-intersection-of-s-ls} that the winning region for sure-\(\FWMP(\WindowLength, \GuaranteeThreshold)\)-limit-sure-\(\FWMP(\WindowLength, \LimitSureThreshold)\) is in general not equal to the intersection of winning region for sure-\(\FWMP(\WindowLength, \GuaranteeThreshold)\) and the winning region for limit-sure-\(\FWMP(\WindowLength, \LimitSureThreshold)\).
  \begin{figure}[t]
    \centering
    \begin{tikzpicture}[node distance=1.5cm]
      \node[state, initial above] (v1) {\(v_{1}\)};
      \node[random, draw, right of=v1] (v2) {\(v_{2}\)};
      \node[state, right of=v2] (v3) {\(v_{3}\)};

      \draw 
      (v1) edge[loop left] node[auto]{\small \(\EdgeValues{1}{}\)} (v1)
      ;
      \draw 
      (v1) edge node[auto]{\small \(\EdgeValues{0}{}\)} (v2)
      (v2) edge[loop above] node[auto]{\small \(\EdgeValues{-1}{.5}\)} (v2)
      (v2) edge[] node[auto]{\small \(\EdgeValues{0}{.5}\)} (v3)
      ;
      \draw 
      (v3) edge[loop right] node[auto]{\small \(\EdgeValues{5}{}\)} (v3)
      ;
    \end{tikzpicture}

    \caption{%
      The winning region for sure-\(\FWMP(2, 1)\) is \(\{v_{1}, v_{3}\}\) and the winning region for limit-sure-\(\FWMP(2, 5)\) is \(\{v_{1}, v_{2}, v_{3}\}\).
      However, the winning region for sure-\(\FWMP(2, 1)\)-limit-sure-\(\FWMP(2, 5)\) is \(\{v_{3}\}\), a strict subset of the intersection \(\{v_{1}, v_{3}\}\) of the winning regions for sure-\(\FWMP(2, 1)\) and for limit-sure-\(\FWMP(2, 5)\).
    } 
    \label{fig:sls-not-intersection-of-s-ls}
  \end{figure}

  Observe that the winning region for sure-\(\FWMP(2, 1)\) is equal to \(\{v_{1}, v_{3}\}\), since the strategy that always takes the self-loops on \(v_{1}\) and \(v_{3}\) is winning for sure-\(\FWMP(2, 1)\) from \(v_{1}\) and \(v_{3}\), and there exists an outcome starting from \(v_{2}\) that loops on \(v_{2}\) forever giving a play that is losing for \(\FWMP(2, 1)\).
  On the other hand, all vertices in this MDP are winning for almost-sure-\(\FWMP(2, 5)\) (and thus also for limit-sure-\(\FWMP(2, 5)\)) as witnessed by the strategy that moves the token from \(v_{1}\) to \(v_{2}\) and loops on \(v_{3}\).
  However, only the vertex \(v_{3}\) is winning for sure-\(\FWMP(2, 1)\)-limit-sure-\(\FWMP(2, 5)\).
  There does not exist a strategy that simultaneously satisfies both sure-\(\FWMP(2, 1)\) and limit-sure-\(\FWMP(2, 5)\) from \(v_{1}\) as the former cannot be satisfied by moving to \(v_{2}\) and the latter cannot be satisfied by staying at~\(v_{1}\).

  However, we show in \Cref{sec:sls} that the winning region for sure-\(\FWMP(\WindowLength, \GuaranteeThreshold)\)-limit-sure-\(\FWMP(\WindowLength, \LimitSureThreshold)\) is the composition of the winning region for limit-sure-\(\FWMP(\WindowLength, \LimitSureThreshold)\) (equivalently almost-sure-\(\FWMP(\WindowLength, \LimitSureThreshold)\)) and for sure-\(\FWMP(\WindowLength, \GuaranteeThreshold)\).
  That is, the winning region for sure-\(\FWMP(\WindowLength, \GuaranteeThreshold)\)-limit-sure-\(\FWMP(\WindowLength, \LimitSureThreshold)\) is equal to the winning region for limit-sure-\(\FWMP(\WindowLength, \LimitSureThreshold)\) (equivalently almost-sure-\(\FWMP(\WindowLength, \LimitSureThreshold)\)) in the subMDP induced by the winning region for sure-\(\FWMP(\WindowLength, \GuaranteeThreshold)\) in the original MDP.
  \lipicsEnd
\end{example}

\section{Sure-almost-sure satisfaction}
\label{sec:bwc-satisfaction-boolean}

\subsection{Sure-almost-sure satisfaction of FWMP}
In this section, we present \Cref{alg:sure-fwmp-almostsure-fwmp} (\(\SASFWMPAlgo\)) that returns the set of all vertices \(v\) from which 
\(\PlayerMain\) can satisfy sure-\(\FWMP(\WindowLength, \GuaranteeThreshold)\)-almost-sure-\(\FWMP(\WindowLength, \AlmostSureThreshold)\).
We note that for sure-almost-sure satisfaction of window mean-payoff objectives, the exact probabilities out of probabilistic vertices do not matter, and thus we may omit them in figures.

\paragraph*{\Cref{alg:sure-fwmp-almostsure-fwmp}: Sure-almost-sure \(\FWMP\)}

\begin{algorithm}[t]
  \caption{\(\SASFWMPAlgo(\MDP, \WindowLength, \GuaranteeThreshold, \AlmostSureThreshold)\)}%
  \label{alg:sure-fwmp-almostsure-fwmp}
  \begin{algorithmic}[1]
    \Require An MDP \(\MDP = ((\Vertices, \Edges), (\VerticesMain, \VerticesRandom), \ProbabilityFunction, \PayoffFunction)\), window length \(\WindowLength\), sure threshold \(\GuaranteeThreshold\), and almost-sure threshold \(\AlmostSureThreshold\)
    \Ensure The winning region for sure-\(\FWMP(\WindowLength, \GuaranteeThreshold)\)-almost-sure-\(\FWMP(\WindowLength, \AlmostSureThreshold)\) in \(\MDP\)
    \State \(W_{\Sure\GuaranteeThreshold} \assign \SureFWMPAlgo(\MDP, \WindowLength, \GuaranteeThreshold)\)\label{alg-line:sasf-sure-alpha-win}
    \State \(W_{\Sure\AlmostSureThreshold} \assign \SureFWMPAlgo(\subMDP{\MDP}{W_{\Sure\GuaranteeThreshold}}, \WindowLength, \AlmostSureThreshold)\)\label{alg-line:sasf-sure-beta-win}

    \State \(W \assign W_{\Sure\AlmostSureThreshold}\)\label{alg-line:sasf-initialize-w-to-wbeta}
    \Repeat \label{alg-line:repeat}
    \State \(P \assign \PosCPreAlgo(\subMDP{\MDP}{W_{\Sure\GuaranteeThreshold}}, W)\)\label{alg-line:sasf-poscpre}
    \State \(W_{\sdab} \assign \SureDirFWMPASBuchiAlgo(\subMDPClosure{\MDP}{W_{\Sure\GuaranteeThreshold} \setminus W}, \WindowLength, \GuaranteeThreshold, P)\)\label{alg-line:sasf-sadb}
    \State \(W \assign W \union W_{\sdab}\)\label{alg-line:sasf-include-wsdab}
    \State \(W \assign \SureAttrAlgo(\subMDP{\MDP}{ W_{\Sure\GuaranteeThreshold}}, W)\)\label{alg-line:sasf-sure-attr}

    \Until{\(W_{\sdab} = \emptyset\)}\label{alg-line:sasf-repeat-block-end}
    \State \Return \(W\)\label{alg-line:sasf-return-w}
  \end{algorithmic}
\end{algorithm}

\subparagraph{Description of \Cref{alg:sure-fwmp-almostsure-fwmp}.}
This algorithm computes and returns a set \(W\) (Line~\ref{alg-line:sasf-return-w}) of vertices that are winning for sure-\(\FWMP(\WindowLength, \GuaranteeThreshold)\)-almost-sure-\(\FWMP(\WindowLength, \AlmostSureThreshold)\) in \(\MDP\).

We begin in Line~\ref{alg-line:sasf-sure-alpha-win} by computing the set \(W_{\Sure\GuaranteeThreshold}\) that is the winning region for \(\PlayerMain\) for sure-\(\FWMP(\WindowLength, \GuaranteeThreshold)\) in \(\MDP\)
by viewing the probabilistic vertices of \(\MDP\) as adversarial, and then using \cite[Algorithm~1]{CDRR15} that finds the winning region for \(\PlayerMain\) for \(\FWMP(\WindowLength, \GuaranteeThreshold)\) in two-player games.
If \(\PlayerMain\) cannot satisfy sure-\(\FWMP(\WindowLength, \GuaranteeThreshold)\) from a vertex \(v\) in \(\MDP\), then
\(\PlayerMain\) also cannot satisfy sure-\(\FWMP(\WindowLength, \GuaranteeThreshold)\)-almost-sure-\(\FWMP(\WindowLength, \AlmostSureThreshold)\) from \(v\) in \(\MDP\), and we do not include in \(W\) any vertex belonging to \(\Vertices \setminus W_{\Sure\GuaranteeThreshold}\).
Moreover, since \(\FWMP(\WindowLength, \GuaranteeThreshold)\) is prefix-independent (and thus also closed under suffixes), it follows from \Cref{prop:winning-region-sure-objective-closed-under-suffixes-induces-submdp} that \(W_{\Sure\GuaranteeThreshold}\) induces a subMDP \(\subMDP{\MDP}{W_{\Sure\GuaranteeThreshold}}\) in which \(\PlayerMain\) has a winning strategy for sure-\(\FWMP(\WindowLength, \GuaranteeThreshold)\) from every vertex, and we subsequently only work with this subMDP.
In Line~\ref{alg-line:sasf-sure-beta-win}, we compute the set \(W_{\Sure\AlmostSureThreshold}\) that is the winning region for sure-\(\FWMP(\WindowLength, \AlmostSureThreshold)\) in \(\subMDP{\MDP}{ W_{\Sure\GuaranteeThreshold}}\), and in Line~\ref{alg-line:sasf-initialize-w-to-wbeta}, we initialize \(W\) to \(W_{\Sure\AlmostSureThreshold}\).
We then enter a \textbf{repeat} block (Lines~\ref{alg-line:sasf-poscpre} to~\ref{alg-line:sasf-sure-attr}) in which some vertices may be added to \(W\) in the following way.
\begin{itemize}
\item In Line~\ref{alg-line:sasf-poscpre}, we find the set \(P\) of probabilistic vertices in \(\subMDP{\MDP}{ W_{\Sure\GuaranteeThreshold}}\) from which the token reaches \(W\) in one step with positive probability but not surely.
  In terms of the underlying graph, \(P\) is the set of all vertices in \((W_{\Sure\GuaranteeThreshold} \intersection \VerticesRandom) \setminus W\) with at least one out-neighbour in \(W\) and at least one out-neighbour not in \(W\).
  We denote this by \(\PosCPreAlgo\) in the algorithm.
\item
  We note that every vertex \(v\) in \(W_{\Sure\GuaranteeThreshold} \setminus W\) has at least one out-neighbour in \(W_{\Sure\GuaranteeThreshold} \setminus W\) because otherwise \(\PlayerMain\) could ensure that the token reaches \(W\) surely from \(v\), in which case \(v\) should be included in \(W\).
  Thus, the set \(W_{\Sure\GuaranteeThreshold} \setminus W\) induces a subMDP closure of \(\subMDP{\MDP}{W_{\Sure\GuaranteeThreshold}}\), and we recall that we denote this by \(\subMDPClosure{\MDP}{W_{\Sure\GuaranteeThreshold} \setminus W}\).
  Suppose there exists a vertex \(v\) in \(\subMDPClosure{\MDP}{W_{\Sure\GuaranteeThreshold} \setminus W}\) from which \(\PlayerMain\) can simultaneously ensure with probability~\(1\) that \(P\) is visited infinitely often (that is, almost-sure-\(\BuchiObj(P)\)) and surely that all \(\GuaranteeThreshold\)-windows in the outcome always close in at most \(\WindowLength\) steps (that is, sure-\(\DirFWMP(\WindowLength, \GuaranteeThreshold)\)).
  Then starting from \(v\) in \(\subMDP{\MDP}{ W_{\Sure\GuaranteeThreshold}}\), we have that the token almost-surely eventually reaches \(W\) (from where sure-\(\FWMP(\WindowLength, \GuaranteeThreshold)\)-almost-sure-\(\FWMP(\WindowLength, \AlmostSureThreshold)\) is satisfied), and even if the token does not reach \(W\) (which happens with probability~\(0\)), the outcome still surely satisfies \(\DirFWMP(\WindowLength, \GuaranteeThreshold)\) (and thus also \(\FWMP(\WindowLength, \GuaranteeThreshold)\)).
  Thus, in Line~\ref{alg-line:sasf-sadb}, we call \Cref{alg:sure-dirfwmp-almostsure-buchi} (\SureDirFWMPASBuchiAlgo)  to compute the set \(W_{\sdab}\) of vertices that is the winning region for sure-\(\DirFWMP(\WindowLength, \GuaranteeThreshold)\)-almost-sure-\(\BuchiObj(P)\) in  \(\subMDPClosure{\MDP}{W_{\Sure\GuaranteeThreshold} \setminus W}\), and in Line~\ref{alg-line:sasf-include-wsdab}, we add to \(W\) all vertices in \(W_{\sdab}\). 
(The subscript \(\sdab\) is short for sure-\(\DirFWMP(\WindowLength, \GuaranteeThreshold)\)-almost-sure-\(\BuchiObj(P)\).)
\item  Finally, we compute \(\SureAttr{}{W}\), the sure-attractor of \(W\), that is the set of all vertices in \(W_{\Sure\GuaranteeThreshold}\) from which \(\PlayerMain\) can surely reach \(W\).
  We note that if \(\PlayerMain\) can satisfy sure-\(\FWMP(\WindowLength, \GuaranteeThreshold)\)-almost-sure-\(\FWMP(\WindowLength, \AlmostSureThreshold)\) from all vertices in \(W\), then \(\PlayerMain\) can also satisfy the same from all vertices in \(\SureAttr{}{W}\).
  This holds because \(\FWMP(\WindowLength, \GuaranteeThreshold)\) and \(\FWMP(\WindowLength, \AlmostSureThreshold)\) are prefix-independent objectives; starting from a vertex \(v\) in \(\SureAttr{}{W} \setminus W\), player \(\PlayerMain\) can ensure that every outcome from \(v\) reaches a vertex in \(W\), following which \(\PlayerMain\) can forget the entire history before reaching \(W\) (due to prefix-independence) and switch to a strategy that is winning for sure-\(\FWMP(\WindowLength, \GuaranteeThreshold)\)-almost-sure-\(\FWMP(\WindowLength, \AlmostSureThreshold)\) from this vertex in \(W\).
  Thus, we update \(W\) to include all vertices in \(\SureAttr{}{W}\) (Line~\ref{alg-line:sasf-sure-attr}).
\end{itemize}
We repeat Lines~\ref{alg-line:sasf-poscpre} to~\ref{alg-line:sasf-sure-attr} until we reach a fixed point for \(W\), that is when the set \(W_{\sdab}\) is empty (Line~\ref{alg-line:sasf-repeat-block-end}) and thus no vertices are added to \(W\).
At this point, we exit the \textbf{repeat} block and return \(W\) (Line~\ref{alg-line:sasf-return-w}).

We define some notation for the sets computed by \Cref{alg:sure-fwmp-almostsure-fwmp}.
Suppose the \textbf{repeat} block runs \(k\) times for some integer \(k \ge 1\).
For \(1 \le i \le k\), let \(P^{i}\), \(W_{\sdab}^{i}\), and \(W^{i}\) denote the sets \(P\), \(W_{\sdab}\), and \(W\) computed in the \(i^{\text{th}}\) iteration of the \textbf{repeat} block.
We also denote the set \(W_{\Sure\AlmostSureThreshold}\) by \(W^{0}\) for convenience.
We note some relations between the various sets.
\begin{itemize}
\item For \(1 \le i \le k\), the sets \(W^{i-1}\) and \(W_{\sdab}^{i}\) are disjoint, since \(W_{\sdab}^{i}\) is the winning region for sure-\(\DirFWMP(\WindowLength, \GuaranteeThreshold)\)-almost-sure-\(\BuchiObj(P^{i})\) in  \(\subMDPClosure{\MDP}{W_{\Sure\GuaranteeThreshold} \setminus W^{i-1}}\).
\item For \(1 \le i \le k\), if \(W_{\sdab}^{i}\) is non-empty, then \(P^{i}\) and \(W_{\sdab}^{i}\) have a non-empty intersection since a vertex in \(P^{i}\) that is visited infinitely often to satisfy sure-\(\DirFWMP(\WindowLength, \GuaranteeThreshold)\)-almost-sure-\(\BuchiObj(P^{i})\) from a vertex in \(W_{\sdab}^{i}\) also belongs to \(W_{\sdab}^{i}\). 
\item For \(1 \le i \le k\), we have that \(W^{i-1} \union W_{\sdab}^{i} \subseteq W^{i}\) since \(W^{i}\) is obtained by computing the sure-attractor of \(W^{i-1} \union W_{\sdab}^{i}\).
  Let us denote by \(A^{i}\) the set \(W^{i} \setminus (W^{i-1} \union W_{\sdab}^{i})\), that is the set of all vertices in \(W^{i}\) but not in \(W^{i-1} \union W_{\sdab}^{i}\) from which \(\PlayerMain\) has a sure-attractor strategy to reach \(W^{i-1} \union W_{\sdab}^{i}\).
  Thus, we have for all \(1 \le i \le k\) that \(W^{i-1}\), \(W_{\sdab}^{i}\), and \(A^{i}\) form a partition of \(W^{i}\).
  Moreover, the entire set \(W\) can be divided into pairwise disjoint sets \(W^{0}\), \(W_{\sdab}^{1}\), \(A^{1}\), \(W_{\sdab}^{2}\), \(A^{2}\), \(\ldots,\) \(W_{\sdab}^{k}\), and \(A^{k}\).
\item Since the algorithm terminates in the \(k^{\text{th}}\) iteration, we have that \(W_{\sdab}^{k} = \emptyset\) and thus also \(A^{k} = \emptyset\).
  We get that \(W^{k} = W^{k - 1}\).
\item For \(1 \le i < j \le k\), the sets \(P^{i}\) and \(P^{j}\) are (in general) incomparable, that is, there may exist vertices \(u, v\) such that \(u \in P^{i}\) and \(u \not\in P^{j}\), and \(v \not\in P^{i}\) and \(v \in P^{j}\).
  We may have \(u \in P^{i}\) and \(u \not\in P^{j}\) if \(u\) is a vertex that has out-neighbours in \(W^{i-1}\) and in \(W^{j-1} \setminus W^{i-1}\) and nowhere else.
  We may have \(v \not\in P^{i}\) and \(v \in P^{j}\) if \(v\) is a vertex that has out-neighbours in \(W^{j-1} \setminus W^{i-1}\) and in \(\Vertices \setminus W^{j-1}\) and nowhere else.
\end{itemize}
\begin{example}
\label{exa:sasf}
  \begin{figure}[t]
    \centering
    \scalebox{0.9}{
      \begin{tikzpicture}[node distance=1.75cm,
        inner sep=10pt]

        \tikzset{
          every state/.append style={minimum size=1.5pt, inner sep=1.75pt},
          random/.append style={diamond, thick, fill=gray!20, minimum size=1.5pt, inner sep=1pt},
          notineg/.append style={thick, dotted},
          inediamond/.append style={dashed},
        }
        \draw[rounded corners=6pt, fill opacity=0.05]
        (-0.7, 0) rectangle ++(13.2, 6);

        \draw[fill=black, fill opacity=0.02]
        (12.5, 5.6) {[rounded corners=6pt] -- ++(-11.9, 0) -- ++(0, -5.6) -- ++(+11.9, 0)} -- cycle;
        
        \draw[fill=orange, fill opacity=0.02]
        (12.5, 2) {[rounded corners=6pt] -- ++(-4, 0)  -- ++(0, -2) -- ++(+4, 0)} -- cycle;

        \draw[fill=blue, fill opacity=0.02]
        (12.3, 3.5) {[rounded corners=6pt] -- ++(-3.8, 0)  -- ++(0,-1.25) -- ++(+3.8, 0) --  cycle};
        \draw[fill=orange, fill opacity=0.02]
        (12.5, 5.2) {[rounded corners=6pt] -- ++(-4.5,0) -- ++(0,-5.2) -- ++(+4.5,0)} -- cycle;

        \draw[fill=blue, fill opacity=0.02]
        (7.8, 3.5) {[rounded corners=6pt] -- ++(-1.9, 0)  -- ++(0,-3.35) -- ++(+1.9, 0) --  cycle};
        \draw[fill=orange, fill opacity=0.02]
        (12.5, 5.2) {[rounded corners=6pt] -- ++(-7,0) -- ++(0,-5.2) -- ++(+7,0)} -- cycle;

        \draw[fill=orange, fill opacity=0.02]
        (12.5, 5.2) {[rounded corners=6pt] -- ++(-8,0) -- ++(0, -5.2) -- ++(+8, 0)} -- cycle;

        \draw[fill=blue, fill opacity=0.02]
        (4.3, 3.5) {[rounded corners=6pt] -- ++(-1.9, 0)  -- ++(0, -3.35) -- ++(+1.9
          , 0) --  cycle};
        \draw[fill=orange, fill opacity=0.02]
        (12.5, 5.2) {[rounded corners=6pt] -- ++(-10.5, 0) -- ++(0, -5.2) -- ++(+10.5, 0)} -- cycle;

        \node (s1) at (  10, 1)     [random, draw] {\(s_1\)};
        \node (s2) at (11.5, 1.5)   [state] {\(s_2\)};
        \node (s3) at (11.5, 0.5)   [state] {\(s_3\)};

        \node (s4) at (  10, 2.8)  [random, draw] {\(s_4\)};
        \node (s5) at (11.5, 2.8)  [state] {\(s_5\)};

        \node (s6) at ( 9.5, 4.3)   [random, draw] {\(s_6\)};
        \node (s7) at (11,   4.3)   [state] {\(s_7\)};

        \node (t1) at (7.3, 0.6)    [random, draw] {\(t_1\)};
        \node (t2) at (7.2, 1.75)   [random, draw] {\(t_2\)};
        \node (t3) at (7.2, 2.9)   [state] {\(t_3\)};

        \node (t4) at (6.7, 4.7)   [state] {\(t_4\)};
        \node (t5) at (7.5, 4.2)   [random, draw] {\(t_5\)};

        \node (u1) at (3.5,  .6)   [random, draw] {\(u_1\)};
        \node (u2) at (3.5, 1.7)   [state] {\(u_2\)};
        \node (u3) at (3.85, 3)    [random, draw] {\(u_3\)};

        \node (v1) at (1.3, 1.2)   [state] {\(v_1\)};
        \node (v2) at (1.3, 3)   [random, draw] {\(v_2\)};

                \node (x1) at (-0.1, 1.2)   [state] {\(x_1\)};
        \node (x2) at (-0.1, 3)   [random, draw] {\(x_2\)};

        \node (V) at (-0.4, 5.7)    {\(V\)}; 
        
        \node (Walpha) at (1.03, 5.25)    {\(W_{\Sure\GuaranteeThreshold}\)};
        
        \node (Ak) at (2.55, 4.9)    {\(W^{k-1}\)};
        \node (A2) at (5.85, 4.9)    {\(W^{2}\)};
        \node (A1) at (8.35, 4.9)    {\(W^{1}\)};

        \node (wsdabk) at (2.95, 3.12)    {\(W^{k-1}_{\sdab}\)};
        \node (dots) at (5, 2.3)    {\(\cdots\)};

        \node (wsdab2) at (6.4, 3.12)    {\(W^{2}_{\sdab}\)};
        \node (wsdab1) at (9.03, 3.12)    {\(W^{1}_{\sdab}\)};

        \node (wbeta) at (8.95, 1.6)    {\(W_{\Sure\AlmostSureThreshold}\)};

        \draw
        (s1) edge[bend right=10] (s2)
        (s1) edge[bend left=10] (s3)

        (s2) edge[loop right] node[below, inner sep=5pt]{\small \(\EdgeValues{5}{}\)} (s2)

        (s3) edge[bend right=10] (s2)
        
        (s4) edge[bend right=10] (s2)
        (s4) edge[bend left=15] node[above, inner sep=2pt]{\small \(\EdgeValues{0}{}\)} (s5)

        (s5) edge[bend left=15] node[below, inner sep=2pt]{\small \(\EdgeValues{0}{}\)} (s4)

        (s6) edge[bend right=10] (s1)
        (s6) edge[out=-60, in=130] (s5)    
        (s6) edge[bend left=15] node[above, inner sep=2pt]{\small \(\EdgeValues{- 1}{}\)} (s7)

        (s7) edge[bend left=15] node[below, inner sep=2pt]{\small \(\EdgeValues{- 1}{}\)} (s6)    
        (s7) edge[bend left=10] node[right, inner sep=2pt, pos=0.3]{\small \(\EdgeValues{- 1}{}\)} (s5)

        (t1) edge[bend right=20]node[below right, inner sep=2pt]{\small \(\EdgeValues{}{}\)} (s1)
        (t1) edge node[right, inner sep=2pt]{\small \(\EdgeValues{- 1}{}\)}  (t2)
        
        (t2) edge node[right, inner sep=2.5pt]{\small \(\EdgeValues{2}{}\)}  (t3)
        
        (t2) edge[out=30, in=190] (s6)

        (t3) edge[bend right=40] node[left, inner sep=2pt]{\small \(\EdgeValues{- 1}{}\)} (t1)
        
        (t4) edge (t5)
        (t4) edge ++(-1.75, -0.75)

        (t5) edge node[left, inner sep=2pt, pos=0.2]{\small \(\EdgeValues{- 1}{}\)}  (t3)
        (t5) edge[bend left=20] node[left, inner sep=2pt, pos=0.2]{\small \(\EdgeValues{}{}\)}  (s6)

        (u1) edge[loop right] node[above, inner sep=4pt]{\small \(\EdgeValues{1}{}\)}  (u1)
        (u1) edge[bend left=10] node[left, inner sep=4pt]{\small \(\EdgeValues{2}{}\)}  (u2)
        
        (u2) edge[loop right] (u2)
        (u2) edge[bend left=15] node[left, inner sep=2pt]{\small \(\EdgeValues{0}{}\)}  (u3)
        (u2) edge (v1)

        (u3) edge[bend left=15] node[right, inner sep=2pt]{\small \(\EdgeValues{0}{}\)}  (u2)
        (u3) edge ++(0.9, 0.25)

        (v1) edge (x1)
        (v1) edge[bend left=15] node[left, inner sep=2pt, pos=0.3]{\small \(\EdgeValues{0}{}\)}  (v2)
        (v1) edge[loop below] node[below right, inner sep=1.5pt]{\small \(\EdgeValues{0}{}\)} (v1)

        (v2) edge[bend left=15] node[right, inner sep=1.5pt, pos=0.3] {\small \(\EdgeValues{ - 1}{}\)}  (v1)
        (v2) edge[loop above] node[above left, inner sep=1.5pt]{\small \(\EdgeValues{2}{}\)} (v2)
        (v2) edge[bend left=15] (u2)

        (x1) edge[bend left=15] node[left, inner sep=2pt, pos=0.3]{\small \(\EdgeValues{0}{}\)}  (x2)

        (x2) edge (v2)
        (x2) edge[bend left=15] node[right, inner sep=1.5pt, pos=0.3] {\small \(\EdgeValues{ - 1}{}\)}  (x1)
        (x2) edge[loop above] (x2)
        ;
      \end{tikzpicture}
    }
    \caption{%
      An illustration of various features of \Cref{alg:sure-fwmp-almostsure-fwmp} to compute the winning region for sure-\(\FWMP(\WindowLength = 3, \GuaranteeThreshold = 0)\)-almost-sure-\(\FWMP(\WindowLength = 3, \AlmostSureThreshold = 5)\).
      Only relevant edge payoffs are shown.
    }
    \label{fig:sasf-algo-sets}
  \end{figure}
  We show some features of \Cref{alg:sure-fwmp-almostsure-fwmp} in \Cref{fig:sasf-algo-sets} for window length \(\WindowLength = 3\) and for thresholds \(\GuaranteeThreshold = 0\) and \(\AlmostSureThreshold = 5\).
  Vertices \(\{x_{1}, x_{2}\}\) are not winning for sure-\(\FWMP(\WindowLength, \GuaranteeThreshold)\) and are removed.
  The set \(W_{\Sure\AlmostSureThreshold}\) (also referred to as \(W^{0}\)) consists of \(\{s_{1}, s_{2}, s_{3}\}\) since from all three vertices the token reaches \(s_{2}\) surely where all \(\AlmostSureThreshold\)-windows close in at most \(\WindowLength\) steps.
  Next, we compute the set \(P^{1}\) to be \(\{s_4, s_6, t_1\}\), the set of all probabilistic vertices with at least one out-edge in \(W^0\) and at least one out-edge not in \(W^0\).
  In the subMDP closure induced by \(W_{\Sure\GuaranteeThreshold} \setminus W^{0}\), we compute $W_{\sdab}^{1}$, the winning region for sure-\(\DirFWMP(\WindowLength, \GuaranteeThreshold)\)-almost-sure-\(\BuchiObj(P^{1})\), to be $\{s_4,s_5\}$.
  The set \(A^{1}\) of all vertices in \(W_{\Sure\GuaranteeThreshold} \setminus (W_{\sdab}^1 \union W^{0})\) from which \(\PlayerMain\) can ensure that the token reaches \(W_{\sdab}^{1} \union W^0\) surely is \(\{s_6, s_7\}\).
  We come to the end of the first iteration of the \textbf{repeat} loop and we have that \(W^1 = A^1 \union W_{\sdab}^{1} \union W^0 = \{s_1, s_2, s_3, s_4, s_5, s_6, s_7\} \). 
  Since \(W_{\sdab}^1\) is non-empty, we have another iteration of the \textbf{repeat} loop.

  We compute \(P^2 = \{t_1, t_2, t_5\}\) for \(W^1\).
  The winning region for sure-\(\DirFWMP(\WindowLength, \GuaranteeThreshold)\)-almost-sure-\(\BuchiObj(P^{2})\) in \(\subMDPClosure{\MDP}{W_{\Sure\GuaranteeThreshold} \setminus W^{1}}\) is \(W_{\sdab}^2 = \{t_1, t_2, t_3\}\), and the set \(A^2\) is \(\{t_4, t_5\}\).
  We get that \(W^2 = \{t_1, t_2, t_3, t_4, t_5\} \union W^1\).
  We repeat this until we reach \(W^{k - 1}\).
  Finally, we have that \(P^{k} = \{v_{2}\}\) and \(W_{\sdab}^{k} = \emptyset\).
  Thus, the algorithm terminates here and returns \(W^{k}\).
  \lipicsEnd
\end{example}

\subparagraph{Proof of correctness of \Cref{alg:sure-fwmp-almostsure-fwmp}.}
We prove some intermediate lemmas before we prove the correctness of \Cref{alg:sure-fwmp-almostsure-fwmp} in \Cref{thm:sasf-correctness}.
We have given intuition on why vertices in \(W^{k}\) are winning for sure-\(\FWMP(\WindowLength, \GuaranteeThreshold)\)-almost-sure-\(\FWMP(\WindowLength, \AlmostSureThreshold)\) in \(\MDP\).
For the converse, since vertices in \(V \setminus W_{\Sure\GuaranteeThreshold}\) are not winning for sure-\(\FWMP(\WindowLength, \GuaranteeThreshold)\), they are also not winning for sure-\(\FWMP(\WindowLength, \GuaranteeThreshold)\)-almost-sure-\(\FWMP(\WindowLength, \AlmostSureThreshold)\).
For the remaining set \(W_{\Sure\GuaranteeThreshold}\setminus W^{k}\) of vertices, we show in \Cref{lem:sasf-sure-dirfwmp-almost-sure-reach} and \Cref{lem:sasf-sure-fwmp-almost-sure-reach} that starting from \(W_{\Sure\GuaranteeThreshold}\setminus W^{k}\), player \(\PlayerMain\) cannot satisfy sure-\(\DirFWMP(\WindowLength, \GuaranteeThreshold)\)-almost-sure-\(\ReachObj(W^{k})\) or sure-\(\FWMP(\WindowLength, \GuaranteeThreshold)\)-almost-sure-\(\ReachObj(W^{k})\) respectively.
Thus, starting from \(W_{\Sure\GuaranteeThreshold} \setminus W^{k}\), if \(\PlayerMain\) wants to satisfy sure-\(\DirFWMP(\WindowLength, \GuaranteeThreshold)\) or sure-\(\FWMP(\WindowLength, \GuaranteeThreshold)\), then the token remains in \(W_{\Sure\GuaranteeThreshold} \setminus W^{k}\) with some positive probability.
Then, we show in \Cref{lem:sasf-fwmp-beta-empty} that the winning region for sure-\(\FWMP(\WindowLength, \AlmostSureThreshold)\) in \(\subMDPClosure{\MDP}{W_{\Sure\GuaranteeThreshold} \setminus W^{k}}\) is empty, and in \Cref{lem:sasf-w-empty-fwmp-empty} that the winning region for sure-\(\FWMP(\WindowLength, \GuaranteeThreshold)\)-almost-sure-\(\FWMP(\WindowLength, \AlmostSureThreshold)\) in \(\subMDPClosure{\MDP}{W_{\Sure\GuaranteeThreshold} \setminus W^{k}}\) is also empty.
Then, these lemmas together imply that if the token remains in \(W_{\Sure\GuaranteeThreshold} \setminus W^{k}\), then sure-\(\FWMP(\WindowLength, \GuaranteeThreshold)\)-almost-sure-\(\FWMP(\WindowLength, \AlmostSureThreshold)\) cannot be satisfied.

\begin{lemma}\label{lem:sasf-sure-dirfwmp-almost-sure-reach}
  For all vertices \(v \in W_{\Sure\GuaranteeThreshold} \setminus W^{k}\) in the MDP \(\MDP\), the vertex \(v\) is not winning for sure-\(\DirFWMP(\WindowLength, \GuaranteeThreshold)\)-almost-sure-\(\ReachObj(W)\) in \(\MDP\).
\end{lemma}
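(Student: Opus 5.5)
We argue by contradiction: assume some \(v \in W_{\Sure\GuaranteeThreshold} \setminus W^{k}\) has a strategy \(\Strategy\) that surely satisfies \(\DirFWMP(\WindowLength, \GuaranteeThreshold)\) and almost surely reaches \(W = W^{k}\). From \(\Strategy\) we will build a winning strategy for sure-\(\DirFWMP(\WindowLength, \GuaranteeThreshold)\)-almost-sure-\(\BuchiObj(P^{k})\) in \(\subMDPClosure{\MDP}{W_{\Sure\GuaranteeThreshold} \setminus W^{k-1}}\), from some vertex. Since \(W^{k} = W^{k-1}\) and \(P^{k}\) is computed from \(W^{k-1}\), this would make \(W_{\sdab}^{k}\) non-empty, contradicting termination.

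Note first that every outcome of \(\Strategy\) stays in \(W_{\Sure\GuaranteeThreshold}\). Indeed, \(\DirFWMP\) is closed under suffixes, so each visited vertex is sure-\(\DirFWMP(\WindowLength, \GuaranteeThreshold)\)-winning and hence sure-\(\FWMP(\WindowLength, \GuaranteeThreshold)\)-winning.

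\textbf{Step 1: the entry into \(W\) must go through \(P^{k}\).} Consider any outcome prefix that stays in \(U = W_{\Sure\GuaranteeThreshold} \setminus W\) and then enters \(W\). The last vertex \(u\) in \(U\) cannot be a \(\PlayerMain\) vertex with a successor in \(W\), nor a probabilistic vertex with all successors in \(W\), since either would put \(u\) in \(\SureAttr{}{W} = W\). So \(u\) is probabilistic, with one successor in \(W\) and one outside \(W\); that is, \(u \in P^{k}\).

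\textbf{Step 2: restrict to prefixes from which every continuation stays in \(U\) avoiding \(P^k\) forever is impossible.} Let \(H\) be the set of history prefixes consistent with \(\Strategy\) that remain in \(U\). From any \(h \in H\), \(\Strategy\) still reaches \(W\) almost surely, conditioned on \(h\). By Step~1, every path to \(W\) passes through \(P^{k}\). Hence, conditioned on \(h\), almost surely either \(P^{k}\) is visited after \(h\) while the play is still in \(U\), or the play has already exited. To stay inside the closure \(\subMDPClosure{\MDP}{U}\), we define a strategy \(\Strategy'\) there: play \(\Strategy\), and when a probabilistic vertex in \(P^{k}\) sends the token into \(W\), which cannot happen in the closure, simply continue. \(\PlayerMain\) moves of \(\Strategy\) from histories in \(H\) never enter \(W\) except via \(P^{k}\), so \(\Strategy'\) is well defined on histories in \(H\). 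Every outcome of \(\Strategy'\) in the closure is an outcome of \(\Strategy\) in \(\MDP\), so sure-\(\DirFWMP(\WindowLength, \GuaranteeThreshold)\) is preserved.

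\textbf{Step 3: almost-sure Büchi.} We need that \(\Strategy'\) visits \(P^{k}\) infinitely often almost surely in the closure. Suppose that with positive probability some outcome eventually avoids \(P^{k}\) forever. Then there is a history \(h\) after which, with positive probability, the play stays in \(U\setminus\) (future \(P^k\)) forever. Since the closure only rescales probabilities among surviving successors, the same event has positive probability under \(\Strategy\) in \(\MDP\) conditioned on \(h\). On that event \(W\) is never reached, contradicting almost-sure reachability.

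This step is the main obstacle. What makes it work is that events in the closure are exactly the events in \(\MDP\) conditioned on not taking the edges into \(W\) at \(P^{k}\) vertices. Those conditioning factors are bounded below only per visit, so we need the event "only finitely many \(P^k\) visits" to be handled after the last visit. That is precisely why we choose \(h\) ending after the last \(P^{k}\) visit.

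Hence \(v \in W_{\sdab}^{k}\), which is a contradiction. Finally, \(v \in U\) as required, since \(v \notin W^{k}\).
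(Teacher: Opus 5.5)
Your proof is correct and takes the same route as the paper. Entry into \(W^{k}\) from \(W_{\Sure\GuaranteeThreshold} \setminus W^{k}\) can only go through a vertex of \(P^{k}\), so a sure-\(\DirFWMP(\WindowLength, \GuaranteeThreshold)\)-almost-sure-reachability strategy yields a sure-\(\DirFWMP(\WindowLength, \GuaranteeThreshold)\)-almost-sure-\(\BuchiObj(P^{k})\) strategy in the subMDP closure, which contradicts \(W_{\sdab}^{k} = \emptyset\). The paper only asserts the needed direction of that equivalence; your Step~3 justifies it by choosing a history after the last visit to \(P^{k}\), beyond which the closure and \(\MDP\) have identical transition probabilities.
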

\begin{proof}
  For each vertex \(v \in (W_{\Sure\GuaranteeThreshold} \setminus W^{k}) \intersection \VerticesMain\) belonging to \(\PlayerMain\), there is no out-neighbour of \(v\) in \(W^{k}\), because otherwise \(v\) would belong to the sure-attractor of \(W^{k}\), and hence, also to \(W^{k}\) (Line~\ref{alg-line:sasf-sure-attr}), which is a contradiction since we started with \(v \in W_{\Sure\GuaranteeThreshold} \setminus W^{k}\).
  A similar argument gives us that for all probabilistic vertices \(v \in (W_{\Sure\GuaranteeThreshold} \setminus W^{k}) \intersection \VerticesRandom\), it is not the case that all out-neighbours of \(v\) belong to \(W^{k}\), because that would imply that \(v\) belongs to the sure-attractor of \(W^{k}\), and hence, also to \(W^{k}\), which is a contradiction.
  Thus, if \(v\) is a vertex in \(W_{\Sure\GuaranteeThreshold} \setminus W^{k}\) with an out-neighbour in \(W^{k}\), then it follows that \(v\) is a probabilistic vertex and that \(v\) has at least one out-neighbour in \(W_{\Sure\GuaranteeThreshold} \setminus W^{k}\).
  Note that the set of all such vertices is precisely the set \(P^{k}\) computed on Line~\ref{alg-line:sasf-poscpre}.

  Thus, starting from a vertex \(v \in W_{\Sure\GuaranteeThreshold} \setminus W^{k}\), if the token reaches \(W^{k}\), then token reaches \(W^{k}\) through a vertex in \(P^{k}\).
  Moreover, starting from \(v\), the token can almost-surely reach \(W^{k}\) in \(\subMDP{\MDP}{ W_{\Sure\GuaranteeThreshold}}\) if and only if starting from \(v\), the token can almost-surely visit \(P^{k}\) infinitely often in \(\subMDPClosure{\MDP}{W_{\Sure\GuaranteeThreshold} \setminus W^{k}}\).
  Further, the token can almost-surely reach \(W^{k}\) from \(v\) while also satisfying sure-\(\DirFWMP(\WindowLength, \GuaranteeThreshold)\) in \(\subMDP{\MDP}{ W_{\Sure\GuaranteeThreshold}}\) if and only if starting from \(v\), the token can almost-surely visit \(P^{k}\) infinitely often while also satisfying sure-\(\DirFWMP(\WindowLength, \GuaranteeThreshold)\) in \(\subMDPClosure{\MDP}{W_{\Sure\GuaranteeThreshold} \setminus W^{k}}\).
  Since at the end of \Cref{alg:sure-fwmp-almostsure-fwmp}, the set \(W_{\sdab}^{k}\) that is the winning region for sure-\(\DirFWMP(\WindowLength, \GuaranteeThreshold)\)-almost-sure-\(\BuchiObj(P^{k})\) is empty, we get that no vertex in \(W_{\Sure\GuaranteeThreshold} \setminus W^{k}\) is winning for sure-\(\DirFWMP(\WindowLength, \GuaranteeThreshold)\)-almost-sure-\(\ReachObj(W^{k})\) in \(\subMDP{\MDP}{ W_{\Sure\GuaranteeThreshold}}\).
\end{proof}

\begin{lemma}\label{lem:sasf-sure-fwmp-almost-sure-reach}
  For all vertices \(v \in W_{\Sure\GuaranteeThreshold} \setminus W^{k}\) in the MDP \(\MDP\), the vertex \(v\) is not winning for sure-\(\FWMP(\WindowLength, \GuaranteeThreshold)\)-almost-sure-\(\ReachObj(W^{k})\) in \(\MDP\).
\end{lemma}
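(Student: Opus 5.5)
We argue by contradiction, reducing to \Cref{lem:sasf-sure-dirfwmp-almost-sure-reach}. Suppose some \(v \in W_{\Sure\GuaranteeThreshold} \setminus W^{k}\) is winning for sure-\(\FWMP(\WindowLength, \GuaranteeThreshold)\)-almost-sure-\(\ReachObj(W^{k})\), witnessed by a strategy \(\Strategy\). The difficulty is that \(\FWMP\) only constrains the tail of a play, while the previous lemma speaks about \(\DirFWMP\), which constrains every window. So the aim is to find a vertex \(u \in W_{\Sure\GuaranteeThreshold} \setminus W^{k}\) and a prefix \(\Prefix\) ending at \(u\) such that, after \(\Prefix\), the residual strategy ensures sure-\(\DirFWMP(\WindowLength, \GuaranteeThreshold)\) and still reaches \(W^{k}\) almost surely. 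That contradicts \Cref{lem:sasf-sure-dirfwmp-almost-sure-reach} applied at \(u\).

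The key step is a König-style argument on the tree of outcomes of \(\Strategy\). Call a consistent prefix \(\Prefix\) \emph{bad} if some \(\GuaranteeThreshold\)-window opening after \(\Prefix\) (i.e., at or after position \(\abs{\Prefix}\)) stays open for \(\WindowLength\) steps in every continuation consistent with \(\Strategy\). More precisely, assume that for every consistent prefix \(\Prefix\) that ends in \(W_{\Sure\GuaranteeThreshold}\setminus W^{k}\) and has not yet visited \(W^{k}\), there exists a consistent continuation containing a window opening after \(\Prefix\) that remains open for \(\WindowLength\) steps. Since the window length is bounded, such an open window is witnessed within finitely many steps. We then build an outcome inductively. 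From the current prefix, extend to a finite prefix that contains an open window of length \(\WindowLength\) starting after the previous prefix.

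We must make sure the extension can be chosen to stay outside \(W^{k}\). We cannot force this directly, since the bad continuation might pass through \(W^{k}\). Instead we use almost-sure reachability only as a bookkeeping device. We restrict attention to continuations staying in \(W_{\Sure\GuaranteeThreshold}\setminus W^{k}\), and define the hypothesis accordingly: if every prefix staying outside \(W^{k}\) admits such a bad extension that also stays outside \(W^{k}\), then iterating produces a single infinite outcome with infinitely many open windows of length \(\WindowLength\). That outcome violates \(\FWMP(\WindowLength,\GuaranteeThreshold)\), contradicting sureness.

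Otherwise there is a prefix \(\Prefix\) staying outside \(W^{k}\), ending at some \(u\), from which every continuation satisfies one of two conditions. Either all windows opening after \(\Prefix\) close within \(\WindowLength\) steps before the token enters \(W^{k}\), or the continuation enters \(W^{k}\). We then define the strategy from \(u\) as follows: follow \(\Strategy\) after \(\Prefix\) until \(W^{k}\) is reached, and thereafter play a sure-\(\DirFWMP(\WindowLength,\GuaranteeThreshold)\) strategy. Such a strategy exists from vertices of \(W^{k}\subseteq W_{\Sure\GuaranteeThreshold}\) in the relevant sense; windows still open at the moment of entry need care, and this is where the construction must be checked.

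The step I expect to be the main obstacle is handling the windows that are still open when the token enters \(W^{k}\). Reading \Cref{lem:sasf-sure-dirfwmp-almost-sure-reach}'s proof, however, its argument only concerns plays before reaching \(W^{k}\): it equates the objective with sure-\(\DirFWMP\) together with almost-sure-\(\BuchiObj(P^{k})\) in the closure \(\subMDPClosure{\MDP}{W_{\Sure\GuaranteeThreshold}\setminus W^{k}}\). So I would instead derive the contradiction directly in that closure. Take the plays that never reach \(W^{k}\). Since \(\Strategy\) reaches \(W^{k}\) almost surely, after \(\Prefix\) it visits \(P^{k}\) infinitely often in the closure almost surely, with all windows closing. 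This makes \(u\in W_{\sdab}^{k}=\emptyset\), a contradiction.

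Finally, the almost-sure reachability after \(\Prefix\) follows because \(\Prefix\) has positive probability, so conditioning on it preserves probability one.
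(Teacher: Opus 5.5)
Your final argument is correct and is essentially the paper's proof. The paper likewise works in \(\subMDPClosure{\MDP}{W_{\Sure\GuaranteeThreshold} \setminus W^{k}}\) with \(\BuchiObj(P^{k})\) in place of \(\ReachObj(W^{k})\), uses \(W_{\sdab}^{k} = \emptyset\) to obtain after each prefix an outcome with a fresh open \(\GuaranteeThreshold\)-window of length \(\WindowLength\), and chains these into one outcome violating \(\FWMP(\WindowLength, \GuaranteeThreshold)\). Your dichotomy on prefixes is the contrapositive of that iteration, stated more carefully (finite witnesses, windows starting after the prefix, conditioning on positive-probability prefixes). Drop the intermediate detour of switching to a sure-\(\DirFWMP(\WindowLength, \GuaranteeThreshold)\) strategy on entering \(W^{k}\): vertices of \(W^{k}\) need not be winning for that objective, and the closure argument never uses it.
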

\begin{proof}
  Observe that a vertex \(v \in W_{\Sure\GuaranteeThreshold} \setminus W^{k}\) is winning for sure-\(\FWMP(\WindowLength, \GuaranteeThreshold)\)-almost-sure-\(\ReachObj(W^{k})\) in \(\subMDP{\MDP}{ W_{\Sure\GuaranteeThreshold}}\) if and only if \(v\) is winning for sure-\(\FWMP(\WindowLength, \GuaranteeThreshold)\)-almost-sure-\(\BuchiObj(P^{k})\) in \(\subMDPClosure{\MDP}{W_{\Sure\GuaranteeThreshold} \setminus W^{k}}\).
  From \Cref{lem:sasf-sure-dirfwmp-almost-sure-reach}, we have that no vertex is winning for sure-\(\DirFWMP(\WindowLength, \GuaranteeThreshold)\)-almost-sure-\(\BuchiObj(P^{k})\) in \(\subMDPClosure{\MDP}{W_{\Sure\GuaranteeThreshold} \setminus W^{k}}\).
  That is, for every vertex \(v \in W_{\Sure\GuaranteeThreshold} \setminus W^{k}\), for every strategy \(\Strategy\) of \(\PlayerMain\) from \(v\) that is winning for almost-sure-\(\BuchiObj(P^{k})\), there exists an outcome of this strategy that contains an open \(\GuaranteeThreshold\)-window of length \(\WindowLength\).
  Since this is true for every vertex in \(W_{\Sure\GuaranteeThreshold} \setminus W^{k}\), it is also the case that there exists an outcome of \(\Strategy\) with infinitely many open \(\GuaranteeThreshold\)-windows of length \(\WindowLength\).
  Thus, no vertex in \(W_{\Sure\GuaranteeThreshold} \setminus W^{k}\) is winning for \(\PlayerMain\) for sure-\(\FWMP(\WindowLength, \GuaranteeThreshold)\)-almost-sure-\(\ReachObj(W^{k})\) in \(\subMDP{\MDP}{ W_{\Sure\GuaranteeThreshold}}\).
\end{proof}

\begin{lemma}\label{lem:sasf-fwmp-beta-empty}
  The winning region for sure-\(\FWMP(\WindowLength, \AlmostSureThreshold)\) in \(\subMDPClosure{\MDP}{W_{\Sure\GuaranteeThreshold} \setminus W^{k}}\) is empty.
\end{lemma}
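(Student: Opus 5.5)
We argue by contradiction. Let \(M' = \subMDPClosure{\MDP}{W_{\Sure\GuaranteeThreshold} \setminus W^{k}}\), and suppose the winning region \(Z\) for sure-\(\FWMP(\WindowLength, \AlmostSureThreshold)\) in \(M'\) is non-empty. Since \(\FWMP(\WindowLength, \AlmostSureThreshold)\) is prefix-independent, \Cref{prop:winning-region-sure-objective-closed-under-suffixes-induces-submdp} gives that \(Z\) induces a subMDP of \(M'\). By the standard characterisation of sure-\(\FWMP\) in games (\cite[Algorithm~1]{CDRR15}), \(Z\) is built as an iterated sure-attractor of sets from which \(\PlayerMain\) can surely satisfy \(\DirFWMP(\WindowLength, \AlmostSureThreshold)\) in the current subgame. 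So there is a non-empty set \(D \subseteq Z\) such that \(D\) induces a subMDP of \(M'\) and every vertex of \(D\) is winning for sure-\(\DirFWMP(\WindowLength, \AlmostSureThreshold)\) within \(\subMDP{M'}{D}\).

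The main step is to show that \(D\) would already have been placed in \(W^{k}\) by the algorithm. The subtlety is that \(D\) is closed under probabilistic moves only \emph{in} \(M'\). In \(\subMDP{\MDP}{W_{\Sure\GuaranteeThreshold}}\), a probabilistic vertex of \(D\) may also have edges into \(W^{k}\); these are exactly the vertices of \(P^{k}\). I would split into two cases.

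\emph{Case 1: no vertex of \(D\) is in \(P^{k}\).} Then \(D\) induces a subMDP of \(\subMDP{\MDP}{W_{\Sure\GuaranteeThreshold}}\) itself. Its vertices are then sure-\(\DirFWMP(\WindowLength, \AlmostSureThreshold)\) winning, and hence sure-\(\FWMP(\WindowLength, \AlmostSureThreshold)\) winning, in \(\subMDP{\MDP}{W_{\Sure\GuaranteeThreshold}}\). So \(D \subseteq W_{\Sure\AlmostSureThreshold} = W^{0} \subseteq W^{k}\), which contradicts \(D \cap W^{k} = \emptyset\).

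\emph{Case 2: \(D\) meets \(P^{k}\).} Here I would show that some vertex of \(D\) is winning for sure-\(\DirFWMP(\WindowLength, \GuaranteeThreshold)\)-almost-sure-\(\BuchiObj(P^{k})\) in \(M'\). This contradicts \(W_{\sdab}^{k} = \emptyset\), which is equivalent to \Cref{lem:sasf-sure-dirfwmp-almost-sure-reach}. The tool is \(\GuaranteeThreshold < \AlmostSureThreshold\): every \(\AlmostSureThreshold\)-window closing within \(\WindowLength\) steps is also a closed \(\GuaranteeThreshold\)-window, so the direct \(\AlmostSureThreshold\)-strategy on \(D\) surely satisfies \(\DirFWMP(\WindowLength, \GuaranteeThreshold)\). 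The remaining task is to make \(P^{k}\) be visited infinitely often almost surely.

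The obstacle is that the direct \(\AlmostSureThreshold\)-strategy might avoid \(P^{k}\). I would restrict attention to the part of \(D\) from which \(\PlayerMain\) can surely satisfy \(\DirFWMP(\WindowLength, \AlmostSureThreshold)\) while staying in \(D\) and avoiding \(P^{k}\). If this part is non-empty, it yields a subset of the kind in Case 1, giving the contradiction. If it is empty, every sure-\(\DirFWMP(\WindowLength, \AlmostSureThreshold)\) strategy on \(D\) is forced through \(P^{k}\) from every vertex. Iterating that argument shows that every outcome visits \(P^{k}\) infinitely often. That is a sure Büchi condition, so in particular an almost-sure one.

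Carrying out this last case, and refining \(D\) via a sure-Büchi / \(\DirFWMP\) fixpoint inside \(M'\), is where I expect the difficulty. This is plausibly also the place of the "subtle bug" the acknowledgements mention. In particular, one must make sure that truncating windows when switching strategies does not break \(\DirFWMP(\WindowLength,\GuaranteeThreshold)\).
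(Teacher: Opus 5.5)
Your case split matches the paper's. Either some vertex can surely satisfy \(\DirFWMP(\WindowLength, \AlmostSureThreshold)\) while never entering \(P^{k}\), or none can. In the first case the safe part induces a subMDP of \(\subMDP{\MDP}{W_{\Sure\GuaranteeThreshold}}\) and so lies in \(W^{0} \subseteq W^{k}\). In the second case one exhibits a vertex winning for sure-\(\DirFWMP(\WindowLength, \GuaranteeThreshold)\)-almost-sure-\(\BuchiObj(P^{k})\), using \(\GuaranteeThreshold < \AlmostSureThreshold\), which contradicts \(W_{\sdab}^{k} = \emptyset\). Your first case is fine.

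The gap is in the second case, at exactly the step you leave open. Your hypothesis is that no vertex of \(D\) is winning for sure-\(\DirFWMP(\WindowLength, \AlmostSureThreshold)\) together with sure-safety of \(D \setminus P^{k}\). From this you only get that, against any sure-\(\DirFWMP(\WindowLength, \AlmostSureThreshold)\) strategy, \emph{some} outcome, i.e.\ some resolution of the probabilistic choices, enters \(P^{k}\). It does not follow that every outcome does. Take a probabilistic vertex with a self-loop and an edge into \(P^{k}\): no strategy surely avoids \(P^{k}\) from it, yet looping forever is an outcome that never visits \(P^{k}\). So your claim that every outcome visits \(P^{k}\) infinitely often is false in general. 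Iterating an existential statement cannot produce the universal one.

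The missing idea is probabilistic, and it needs no strategy switching. Fix any strategy \(\sigma\) on \(D\) that is winning for sure-\(\DirFWMP(\WindowLength, \AlmostSureThreshold)\). Because \(\DirFWMP\) is closed under suffixes, the continuation of \(\sigma\) after any history is again a sure-\(\DirFWMP(\WindowLength, \AlmostSureThreshold)\) strategy from the current vertex, so that vertex lies in \(D\). Hence the case hypothesis applies afresh after every history. It gives an outcome of the continuation that reaches \(P^{k}\) within a number of steps bounded uniformly over histories; the paper uses \(\abs{V} \cdot \WindowLength\), and any bound from the finite game with window-tracking memory suffices. So from every history, \(P^{k}\) is reached within that bound with probability at least \((\ProbabilityFunction_{\min})^{\abs{V} \cdot \WindowLength}\). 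Therefore \(P^{k}\) is visited infinitely often with probability \(1\), while \(\DirFWMP(\WindowLength, \GuaranteeThreshold)\) holds surely, contradicting \(W_{\sdab}^{k} = \emptyset\). Since \(\sigma\) is never changed, your concern about truncating windows when switching strategies does not arise.
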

\begin{proof}
  Suppose towards a contradiction that the winning region for sure-\(\FWMP(\WindowLength, \AlmostSureThreshold)\) in \(\subMDPClosure{\MDP}{W_{\Sure\GuaranteeThreshold} \setminus W^{k}}\) is non-empty.
  Then, from~\cite{CDRR15}, we have that the winning region for sure-\(\DirFWMP(\WindowLength, \AlmostSureThreshold)\) in \(\subMDPClosure{\MDP}{W_{\Sure\GuaranteeThreshold} \setminus W^{k}}\) is also non-empty.
  Now one of the following two cases must occur.
  We show that both cases lead to contradictions.
  \begin{itemize}
  \item There exists a vertex \(v\) in this region that is winning for sure-\(\DirFWMP(\WindowLength, \AlmostSureThreshold)\)-sure-\(\SafeObj(W_{\Sure\GuaranteeThreshold} \setminus (P^{k} \union W^{k}))\) in \(\subMDPClosure{\MDP}{W_{\Sure\GuaranteeThreshold} \setminus W^{k}}\), and there exists such a winning strategy \(\Strategy\).
    All outcomes of \(\Strategy\) from \(v\) in \(\subMDPClosure{\MDP}{W_{\Sure\GuaranteeThreshold} \setminus W^{k}}\) satisfy \(\DirFWMP(\WindowLength, \AlmostSureThreshold)\) and stay in \(W_{\Sure\GuaranteeThreshold} \setminus (P^{k} \union W^{k})\) forever.
    Then, the same strategy \(\Strategy\) is also winning for sure-\(\DirFWMP(\WindowLength, \AlmostSureThreshold)\)-sure-\(\SafeObj(W_{\Sure\GuaranteeThreshold} \setminus (P^{k} \union W^{k}))\) in the larger MDP \(\subMDP{\MDP}{ W_{\Sure\GuaranteeThreshold}}\).
    In particular, the vertex \(v\) is winning for sure-\(\FWMP(\WindowLength, \AlmostSureThreshold)\) in \(\subMDP{\MDP}{ W_{\Sure\GuaranteeThreshold}}\), and thus \(v\) belongs to \(W_{\Sure\AlmostSureThreshold}\), and also to \(W^{k}\).
  This is a contradiction since we started with \(v \in W_{\Sure\GuaranteeThreshold} \setminus W^{k}\).    
  \item No vertex in this region is winning for sure-\(\DirFWMP(\WindowLength, \AlmostSureThreshold)\)-sure-\(\SafeObj(W_{\Sure\GuaranteeThreshold} \setminus (P^{k} \union W^{k}))\) in \(\subMDPClosure{\MDP}{W_{\Sure\GuaranteeThreshold} \setminus W^{k}}\).
    That is, for all vertices \(v\) in the region, for all strategies \(\Strategy\) of \(\PlayerMain\) from \(v\) that are winning for sure-\(\DirFWMP(\WindowLength, \AlmostSureThreshold)\), there exists an outcome of \(\Strategy\) in \(\subMDPClosure{\MDP}{W_{\Sure\GuaranteeThreshold} \setminus W^{k}}\) from \(v\) that reaches \(P^{k}\) in the next \(\abs{V} \cdot \WindowLength\) steps (recall that the memory required by \(\PlayerMain\) for sure-\(\DirFWMP(\WindowLength, \AlmostSureThreshold)\) is \(\WindowLength\)~\cite{DGG25}).
    Thus, if \(\PlayerMain\) follows this strategy \(\Strategy\) from a vertex \(v\) in \(\subMDPClosure{\MDP}{W_{\Sure\GuaranteeThreshold} \setminus W^{k}}\), then from each point of the play, the probability of reaching \(P^{k}\) in the next \(\abs{V} \cdot \WindowLength\) steps is positive (at least \(\ProbabilityFunction_{\min}^{\abs{V} \cdot \WindowLength}\)).
    Hence, we have that with probability~\(1\), an outcome of \(\Strategy\) from \(v\) visits \(P^{k}\) infinitely often.
    This means that \(v\) is winning for    sure-\(\DirFWMP(\WindowLength, \AlmostSureThreshold)\)-almost-sure-\(\BuchiObj(P^{k})\), and thus also sure-\(\DirFWMP(\WindowLength, \GuaranteeThreshold)\)-almost-sure-\(\BuchiObj(P^{k})\) in \(\subMDPClosure{\MDP}{W_{\Sure\GuaranteeThreshold} \setminus W^{k}}\).
    This implies that \(v\) must belong to \(W_{\sdab}^{k}\), which is a contradiction since \(W_{\sdab}^{k}\) is empty.
    \qedhere
  \end{itemize}
\end{proof}

\begin{lemma}\label{lem:sasf-w-empty-fwmp-empty}
  The winning region for sure-\(\FWMP(\WindowLength, \GuaranteeThreshold)\)-almost-sure-\(\FWMP(\WindowLength, \AlmostSureThreshold)\) in \(\subMDPClosure{\MDP}{W_{\Sure\GuaranteeThreshold} \setminus W^{k}}\) is empty.
\end{lemma}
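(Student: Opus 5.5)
We argue by contradiction. Write \(\MDP' = \subMDPClosure{\MDP}{W_{\Sure\GuaranteeThreshold} \setminus W^{k}}\), and suppose some vertex \(v\) of \(\MDP'\) is winning for sure-\(\FWMP(\WindowLength, \GuaranteeThreshold)\)-almost-sure-\(\FWMP(\WindowLength, \AlmostSureThreshold)\) in \(\MDP'\). Let \(\Strategy\) be a witnessing strategy. We intend to contradict \Cref{lem:sasf-fwmp-beta-empty}, which says that no vertex of \(\MDP'\) is winning for sure-\(\FWMP(\WindowLength, \AlmostSureThreshold)\) in \(\MDP'\).

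The key observation is that for window objectives in a finite MDP, almost-sure satisfaction of \(\FWMP(\WindowLength, \AlmostSureThreshold)\) forces sure satisfaction somewhere along the play. Consider the product of \(\MDP'\) with the finite memory that tracks the currently open \(\AlmostSureThreshold\)-windows: the last at most \(\WindowLength\) payoffs. In this product, \(\DirFWMP(\WindowLength, \AlmostSureThreshold)\) becomes a safety objective, namely ``never let a window stay open for \(\WindowLength\) steps''. The objective \(\FWMP(\WindowLength, \AlmostSureThreshold)\) becomes the co\Buchi-like objective of eventually staying safe.

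We use the standard fact that if a play eventually satisfies \(\DirFWMP\) with probability~\(1\), then there exist prefixes after which the probability of satisfying \(\DirFWMP(\WindowLength, \AlmostSureThreshold)\) from there on is arbitrarily close to~\(1\). This follows from Lévy's \(0\)-\(1\) law applied to the events \(\PlaySuffix{j} \in \DirFWMP\), whose union over \(j\) has probability~\(1\).

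Next we pass from high probability to sureness. Once the probability of the safety objective is above \(1 - \ProbabilityFunction_{\min}^{\WindowLength}\), every continuation of length \(\WindowLength\) must keep the window constraints. Otherwise a violating finite path of length at most \(\WindowLength\), having probability at least \(\ProbabilityFunction_{\min}^{\WindowLength}\), would push the failure probability too high. Iterating this argument is delicate, because the conditional probability may drop after a step. Instead we argue directly on the product game. Let \(S\) be the set of product states from which the safety objective holds with positive probability \(1\) under some strategy. Almost-sure safety in a finite MDP coincides with sure safety, since a bad finite path has positive probability. By Lévy's law, almost every outcome of \(\Strategy\) eventually enters a state in which the conditional probability of safety under \(\Strategy\) is at least \(1 - \ProbabilityFunction_{\min}^{\WindowLength}\).

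From such a state, define the values \(\mathrm{val}(s)\) of the safety objective in the product MDP. The value-\(1\) states are exactly the sure-winning states for safety. Any state with value greater than \(1 - \ProbabilityFunction_{\min}^{n}\), where \(n\) is the size of the product, has value exactly \(1\). This is the standard gap property for safety/reachability in finite MDPs, which follows from the attractor structure: a losing state reaches violation within \(n\) steps with probability at least \(\ProbabilityFunction_{\min}^{n}\) under every strategy. Hence some reachable product state \((u, \text{memory})\) is sure-winning for \(\DirFWMP(\WindowLength, \AlmostSureThreshold)\)-in-product. Forgetting the memory, \(u\) is sure-winning for \(\FWMP(\WindowLength, \AlmostSureThreshold)\) in \(\MDP'\): the open windows at entry close within \(\WindowLength\) steps regardless, and afterwards all windows close, so prefix-independence applies. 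This contradicts \Cref{lem:sasf-fwmp-beta-empty}.

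The main obstacle is the step from ``almost-sure \(\FWMP(\WindowLength, \AlmostSureThreshold)\)'' to ``some vertex of \(\MDP'\) is sure-winning for \(\FWMP(\WindowLength, \AlmostSureThreshold)\)''. The reduction via the finite window-memory product and the value-gap argument needs care with the memory component: the sure-winning product state must yield a memoryless-start sure win from \(u\). We first try to handle this by noting that a nonempty sure-winning region for \(\DirFWMP\) in the product projects to a nonempty sure-winning region for \(\FWMP\) in \(\MDP'\), as in \cite{CDRR15}. Notably, the sure-\(\GuaranteeThreshold\) part of the hypothesis is not even needed.
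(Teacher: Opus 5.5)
Your proof is correct and follows the paper's route: the paper likewise obtains the lemma from \Cref{lem:sasf-fwmp-beta-empty} together with the fact that a nonempty almost-sure-\(\FWMP(\WindowLength, \AlmostSureThreshold)\) region forces a nonempty sure-\(\FWMP(\WindowLength, \AlmostSureThreshold)\) region, and it never uses the sure-\(\GuaranteeThreshold\) part. The only difference is that the paper cites \cite[Lemma~6.1]{DGG25} for this fact, whereas you reprove it via the window-memory product, L\'evy's law, and the value gap for safety. One slip in that reproof needs fixing: the threshold \(1 - \ProbabilityFunction_{\min}^{\WindowLength}\), left over from your abandoned first attempt, is too weak for the gap lemma, which needs conditional safety probability above \(1 - \ProbabilityFunction_{\min}^{n}\) with \(n\) the product size. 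Since your L\'evy argument gives conditional probabilities tending to \(1\), you simply choose the prefix so that this stronger bound holds.
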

\begin{proof}
  From \Cref{lem:sasf-fwmp-beta-empty}, we have that the winning region for sure-\(\FWMP(\WindowLength, \AlmostSureThreshold)\) in \(\subMDPClosure{\MDP}{W_{\Sure\GuaranteeThreshold} \setminus W^{k}}\) is empty.
  From \cite[Lemma~6.1]{DGG25}, we have that if the winning region for sure-\(\FWMP(\WindowLength, \AlmostSureThreshold)\) in an MDP is empty, then the winning region for almost-sure-\(\FWMP(\WindowLength, \AlmostSureThreshold)\) in the MDP is also empty.
  Thus, we have that the winning region for almost-sure-\(\FWMP(\WindowLength, \AlmostSureThreshold)\) in \(\subMDPClosure{\MDP}{W_{\Sure\GuaranteeThreshold} \setminus W^{k}}\) is empty.
  It follows that \(\PlayerMain\) cannot satisfy sure-\(\FWMP(\WindowLength, \GuaranteeThreshold)\)-almost-sure-\(\FWMP(\WindowLength, \AlmostSureThreshold)\) from any vertex in  \(\subMDPClosure{\MDP}{W_{\Sure\GuaranteeThreshold} \setminus W^{k}}\).
\end{proof}

The following theorem shows the correctness of \Cref{alg:sure-fwmp-almostsure-fwmp}.
\begin{theorem}\label{thm:sasf-correctness}
  \Cref{alg:sure-fwmp-almostsure-fwmp} computes the winning region for sure-\(\FWMP(\WindowLength, \GuaranteeThreshold)\)-almost-sure-\(\FWMP(\WindowLength, \AlmostSureThreshold)\) in \(\MDP\).
\end{theorem}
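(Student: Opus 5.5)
We prove the two inclusions separately: soundness (every vertex in the returned set \(W = W^{k}\) is winning) and completeness (no vertex outside \(W^{k}\) is winning).

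\textbf{Soundness.} We argue by induction along the partition \(W^{0}, W_{\sdab}^{1}, A^{1}, \ldots, W_{\sdab}^{k-1}, A^{k-1}\) of \(W^{k}\). The claim is that from each vertex of \(W^{i}\), player \(\PlayerMain\) has a strategy that is winning for sure-\(\FWMP(\WindowLength, \GuaranteeThreshold)\)-almost-sure-\(\FWMP(\WindowLength, \AlmostSureThreshold)\) and keeps the token in \(W_{\Sure\GuaranteeThreshold}\).

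The base case is \(W^{0} = W_{\Sure\AlmostSureThreshold}\). Here we play a sure-\(\FWMP(\WindowLength, \AlmostSureThreshold)\) strategy in \(\subMDP{\MDP}{W_{\Sure\GuaranteeThreshold}}\). Since \(\GuaranteeThreshold < \AlmostSureThreshold\), every outcome surely satisfies both objectives.

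For the inductive step from \(W_{\sdab}^{i}\), we play the sure-\(\DirFWMP(\WindowLength, \GuaranteeThreshold)\)-almost-sure-\(\BuchiObj(P^{i})\) strategy as long as the token stays in \(W_{\Sure\GuaranteeThreshold} \setminus W^{i-1}\). As soon as the token enters \(W^{i-1}\), we switch to the inductively given strategy. Each visit to \(P^{i}\) moves the token into \(W^{i-1}\) with probability at least \(\ProbabilityFunction_{\min}\), so \(W^{i-1}\) is reached almost surely. We then check the two objectives on the resulting outcomes:
\begin{itemize}
\item Outcomes that never reach \(W^{i-1}\) satisfy \(\DirFWMP(\WindowLength, \GuaranteeThreshold) \subseteq \FWMP(\WindowLength, \GuaranteeThreshold)\).
\item Outcomes that do reach \(W^{i-1}\) satisfy both objectives with the required sureness and probability, by prefix-independence of \(\FWMP\).
\end{itemize}
One subtlety is that a move by a probabilistic vertex of the subMDP closure may leave to \(W^{i-1}\). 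This is harmless, because that is precisely the switching event.

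Vertices of \(A^{i}\) are handled by first playing the sure-attractor strategy into \(W^{i-1} \union W_{\sdab}^{i}\) and again appealing to prefix-independence. Memory is finite, since only the current phase index needs to be tracked.

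\textbf{Completeness.} Vertices outside \(W_{\Sure\GuaranteeThreshold}\) lose already for the sure part. So fix \(v \in W_{\Sure\GuaranteeThreshold} \setminus W^{k}\) and suppose, for a contradiction, that \(\Strategy\) is winning from \(v\). Every outcome of \(\Strategy\) must stay inside \(W_{\Sure\GuaranteeThreshold}\): the sure winning region for a suffix-closed objective is a subMDP (\Cref{prop:winning-region-sure-objective-closed-under-suffixes-induces-submdp}), and leaving it at a probabilistic vertex is impossible, while leaving it at a player-\(\Main\) vertex would be immediately losing. By \Cref{lem:sasf-sure-fwmp-almost-sure-reach}, \(\Strategy\) does not almost surely reach \(W^{k}\). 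Hence the event \(E\) that the token stays in \(W_{\Sure\GuaranteeThreshold} \setminus W^{k}\) forever has positive probability.

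It remains to show that almost every play in \(E\) violates \(\FWMP(\WindowLength, \AlmostSureThreshold)\), which contradicts almost-sure satisfaction. \Cref{lem:sasf-fwmp-beta-empty} gives that the sure-\(\FWMP(\WindowLength, \AlmostSureThreshold)\) region of \(\subMDPClosure{\MDP}{W_{\Sure\GuaranteeThreshold} \setminus W^{k}}\) is empty. By \cite[Lemma~6.1]{DGG25}, the almost-sure region there is empty too; indeed, for MDPs the positive region is then also empty, since \(\FWMP\) is prefix-independent.

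\textbf{Main obstacle.} The difficulty is transferring this from the closure MDP to the conditional behaviour of \(\Strategy\) on \(E\). Conditioning on \(E\) rescales the probabilistic edges, which matches the closure's rescaled distribution only qualitatively, not quantitatively. I would avoid exact probabilities and argue via end components instead. Almost surely, the set of vertices and edges visited infinitely often by a play in \(E\) forms an end component of \(\subMDPClosure{\MDP}{W_{\Sure\GuaranteeThreshold} \setminus W^{k}}\). Moreover, if \(\FWMP(\WindowLength, \AlmostSureThreshold)\) held with positive probability on such plays, then some end component would admit positive (hence, by the known characterisation in \cite{BDOR20}, almost-sure) satisfaction. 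That would make the winning region nonempty, a contradiction. This step needs the most care.
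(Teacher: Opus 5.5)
Your proposal is correct and follows the paper's proof. Soundness chains the sure-$\FWMP(\WindowLength, \AlmostSureThreshold)$ strategy on $W^{0}$, the sure-$\DirFWMP(\WindowLength, \GuaranteeThreshold)$-almost-sure-$\BuchiObj(P^{i})$ strategies on $W_{\sdab}^{i}$, and sure-attractor strategies on $A^{i}$ via prefix-independence; completeness uses \Cref{lem:sasf-sure-fwmp-almost-sure-reach} together with \Cref{lem:sasf-fwmp-beta-empty} and \cite[Lemma~6.1]{DGG25}, and your end-component argument for transferring emptiness in the closure MDP to the plays of $E$ in $\MDP$ is valid and more careful than the paper, which handles that step in one sentence via \Cref{lem:sasf-w-empty-fwmp-empty}.
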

\begin{proof}
  (\(\Rightarrow\))
  We show that starting from an arbitrary vertex \(v \in W^{k}\), player \(\PlayerMain\) has a winning strategy for sure-\(\FWMP(\WindowLength, \GuaranteeThreshold)\)-almost-sure-\(\FWMP(\WindowLength, \AlmostSureThreshold)\).
  We consider the cases \(v \in W^{0}\) and \(v \in W^{k} \setminus W^{0}\) separately:
  \begin{itemize}
  \item Suppose the token is on a vertex \(v\) in \(W^{0}\) (that is, \(W_{\Sure\AlmostSureThreshold}\), the winning region for \(\PlayerMain\) for sure-\(\FWMP(\WindowLength, \AlmostSureThreshold)\)).
    Since \(\FWMP(\WindowLength, \AlmostSureThreshold)\) is a prefix-independent objective, we have from \Cref{prop:winning-region-sure-objective-closed-under-suffixes-induces-submdp} that \(W^{0}\) induces a subMDP of \(\MDP\).
    Thus, starting from \(v\), player \(\PlayerMain\) has a strategy that is winning for sure-\(\FWMP(\WindowLength, \AlmostSureThreshold)\) and also keeps the token in \(W^{0}\) for all outcomes of the strategy.
    Since \(\GuaranteeThreshold < \AlmostSureThreshold\), we have that satisfaction of sure-\(\FWMP(\WindowLength, \AlmostSureThreshold)\) implies satisfaction of  sure-\(\FWMP(\WindowLength, \GuaranteeThreshold)\)-almost-sure-\(\FWMP(\WindowLength, \AlmostSureThreshold)\), and thus \(v\) is winning for sure-\(\FWMP(\WindowLength, \GuaranteeThreshold)\)-almost-sure-\(\FWMP(\WindowLength, \AlmostSureThreshold)\).
  \item Otherwise, suppose the token is on a vertex \(v\) in \(W^{k} \setminus W^{0}\).
    In particular, there exists \(1 \le i < k\) such that \(v \in W^{i} \setminus W^{i-1}\).
    Since each vertex in \(W^{i} \setminus W^{i-1}\) belongs to either \(A^{i}\) and \(W_{\sdab}^{i}\), we have the following cases:
    \begin{itemize}
    \item The token is on a vertex \(v\) in \(A^{i}\).
      Then \(\PlayerMain\) follows a sure-attractor strategy from \(v\) to surely reach a vertex in \(W_{\sdab}^{i}\) or a vertex \(v'\) in \(W^{i-1}\).
    \item The token is on a vertex \(v\) in \(W_{\sdab}^{i}\).
      Then, player \(\PlayerMain\) follows a strategy from \(v\) that is winning for sure-\(\DirFWMP(\WindowLength, \GuaranteeThreshold)\)-almost-sure-\(\BuchiObj(P^{i})\) to keep visiting vertices in \(P^{i}\) while surely always closing all \(\GuaranteeThreshold\)-windows in at most \(\WindowLength\) steps.
      The token either eventually reaches a vertex \(v'\) in \(W^{i-1}\) (which happens with probability~\(1\)) or remains in \(W_{\sdab}^{i}\) forever (which happens with probability~\(0\)).
    \end{itemize}
    If the token reaches a vertex \(v'\) in \(W^{i-1}\), then the vertex \(v'\) is either in \(W^{0}\) or in \(W^{i'} \setminus W^{i'-1}\) for some \(1 \le i' \le i - 1\), and we can repeat the above argument once again.
  \end{itemize}
  Repeating this sufficiently many times (at most \(k\) times), we get that from every vertex in \(W^{k}\), player \(\PlayerMain\) has a strategy such that with probability~\(1\) the token eventually reaches \(W^{0}\) where the \(\FWMP(\WindowLength, \AlmostSureThreshold)\) objective is surely satisfied, and with probability~\(0\) the token is stuck in \(W_{\sdab}^{i}\) for some \(1 \le i \le k\) where surely all \(\GuaranteeThreshold\)-windows close in at most \(\WindowLength\) steps.
  Thus, from all vertices in \(W^{k}\), player \(\PlayerMain\) has a strategy to satisfy sure-\(\FWMP(\WindowLength, \GuaranteeThreshold)\)-almost-sure-\(\FWMP(\WindowLength, \AlmostSureThreshold)\).

  (\(\Leftarrow\))
  Now, we show the converse, that is, if a vertex \(v\) does not belong to \(W^{k}\), then \(v\) is not winning for \(\PlayerMain\) for sure-\(\FWMP(\WindowLength, \GuaranteeThreshold)\)-almost-sure-\(\FWMP(\WindowLength, \AlmostSureThreshold)\).
  If \(v \not \in W^{k}\), then we have two cases: either \(v \in V \setminus W_{\Sure\GuaranteeThreshold}\) or \(v \in W_{\Sure\GuaranteeThreshold} \setminus W^{k}\).
  \begin{itemize}
  \item If \(v \in V \setminus W_{\Sure\GuaranteeThreshold}\),  then \(v\) is not winning for sure-\(\FWMP(\WindowLength, \GuaranteeThreshold)\), and thus \(v\) is also not winning for sure-\(\FWMP(\WindowLength, \GuaranteeThreshold)\)-almost-sure-\(\FWMP(\WindowLength, \AlmostSureThreshold)\).
  \item Otherwise, we have that \(v \in W_{\Sure\GuaranteeThreshold} \setminus W^{k}\).
    From \Cref{lem:sasf-sure-fwmp-almost-sure-reach}, we have that \(v\) is not winning for sure-\(\FWMP(\WindowLength, \GuaranteeThreshold)\)-almost-sure-\(\ReachObj(W^{k})\).
    Thus, for all strategies \(\Strategy\) of \(\PlayerMain\) that are winning for sure-\(\FWMP(\WindowLength, \GuaranteeThreshold)\), there exists \(0 < p < 1\) such that with probability at least~\(p\), the outcome of \(\Strategy\) from \(v\) remains in \(W_{\Sure\GuaranteeThreshold} \setminus W^{k}\) forever.
    Further, from \Cref{lem:sasf-w-empty-fwmp-empty}, it follows that with some positive probability, the outcome does not satisfy \(\FWMP(\WindowLength, \AlmostSureThreshold)\).
    Thus, sure-\(\FWMP(\WindowLength, \GuaranteeThreshold)\)-almost-sure-\(\FWMP(\WindowLength, \AlmostSureThreshold)\) cannot be satisfied by \(\PlayerMain\) from \(v\).
  \end{itemize}
  We have shown the converse direction as well, that is, if \(v \not\in W^{k}\), then \(v\) is not winning for \(\PlayerMain\) for  sure-\(\FWMP(\WindowLength, \GuaranteeThreshold)\)-almost-sure-\(\FWMP(\WindowLength, \AlmostSureThreshold)\).
\end{proof}

\paragraph*{\Cref{alg:sure-dirfwmp-almostsure-buchi}: Sure-\(\DirFWMP(\WindowLength, \GuaranteeThreshold)\)-almost-sure-\(\BuchiObj(T)\)}
Recall that \Cref{alg:sure-fwmp-almostsure-fwmp} uses \Cref{alg:sure-dirfwmp-almostsure-buchi} (\(\SureDirFWMPASBuchiAlgo\)) as a subroutine, and we describe this algorithm now.
\Cref{alg:sure-dirfwmp-almostsure-buchi} returns the set of all vertices in an MDP that are winning for sure-\(\DirFWMP(\WindowLength, \GuaranteeThreshold)\)-almost-sure-\(\BuchiObj(T)\) for a target set \(T \subseteq V\).
We note that it may be the case that every vertex in an MDP is winning for sure-\(\DirFWMP(\WindowLength, \GuaranteeThreshold)\) and almost-sure-\(\BuchiObj(T)\) separately, and yet no vertex in the MDP is winning for the combination sure-\(\DirFWMP(\WindowLength, \GuaranteeThreshold)\)-almost-sure-\(\BuchiObj(T) \).
We show such an MDP in \Cref{fig:sdab-simultaneous-satisfaction-not-possible}.
\begin{figure}[t]
  \centering

  \begin{tikzpicture}[node distance=1.5cm]
    \node[state, initial above, accepting] (v1) {\(v_{1}\)};
    \node[state, right of=v1] (v2) {\(v_{2}\)};

    \draw 
    (v1) edge[loop left] node[auto]{\small \(\EdgeValues{-1}{}\)} (v1)
    ;
    \draw 
    (v1) edge[bend left] node[auto]{\small \(\EdgeValues{0}{}\)} (v2)
    (v2) edge[bend left] node[auto]{\small \(\EdgeValues{-4}{}\)} (v1)
    ;
    \draw 
    (v2) edge[loop right] node[auto]{\small \(\EdgeValues{2}{}\)} (v2)
    ;
  \end{tikzpicture}

  \caption{%
    Both vertices are winning for sure-\(\DirFWMP(2, 0)\) and for almost-sure-\(\BuchiObj(\{v_{1}\}) \) (and even sure-\(\BuchiObj(\{v_{1}\}) \)), but neither vertex is winning for sure-\(\DirFWMP(2, 0)\)-almost-sure-\(\BuchiObj( \{v_{1}\}) \).
  }
  \label{fig:sdab-simultaneous-satisfaction-not-possible}
\end{figure}

\begin{algorithm}[t]
  \caption{\(\SureDirFWMPASBuchiAlgo(\MDP, \WindowLength, \GuaranteeThreshold, T)\)}%
  \label{alg:sure-dirfwmp-almostsure-buchi}
  \begin{algorithmic}[1]
    \Require An MDP \(\MDP = ((\Vertices, \Edges), (\VerticesMain, \VerticesRandom), \ProbabilityFunction, \PayoffFunction)\), window length \(\WindowLength\), threshold \(\GuaranteeThreshold\), and target set \(T\)
    \Ensure The winning region for sure-\(\DirFWMP(\WindowLength, \GuaranteeThreshold)\)-almost-sure-\(\BuchiObj(T)\) in \(\MDP\)
    \State \(W_{\sdpr} \assign \SureDirFWMPPosReachAlgo(\MDP, \WindowLength, \GuaranteeThreshold, T)\)\label{alg-line:sdfab-sdpr}
    \If{\(V = W_{\sdpr}\)}\label{alg-line:sdfab-until}
    \State \Return \(V\)\label{alg-line:sdfab-return-winning-region}
    \Else
    \State \(S \assign \SureSafeAlgo(\MDP, W_{\sdpr})\)\label{alg-line:sdfab-almost-sure-attr}
    \State \Return \(\SureDirFWMPASBuchiAlgo(\subMDP{\MDP}{S}, \WindowLength, \GuaranteeThreshold, T \intersection S)\)\label{alg-line:sdfab-restrict-to-almost-sure-attr}
    \EndIf
  \end{algorithmic}
\end{algorithm}

\subparagraph{Description of \Cref{alg:sure-dirfwmp-almostsure-buchi}.}
We begin by calling \Cref{alg:sure-dirfwmp-positive-reach} (\SureDirFWMPPosReachAlgo) in Line~\ref{alg-line:sdfab-sdpr} to compute \(W_{\sdpr}\), the winning region for sure-\(\DirFWMP(\WindowLength, \GuaranteeThreshold)\)-positive-\(\ReachObj(T)\) in \(\MDP\).
(The subscript \(\sdpr\) is short for sure-\(\DirFWMP(\WindowLength, \GuaranteeThreshold)\)-positive-\(\ReachObj(T)\).)
If every vertex in \(\MDP\) belongs to \(W_{\sdpr}\), then we go to Line~\ref{alg-line:sdfab-return-winning-region} and return \(V\), that is all vertices in \(\MDP\).
Otherwise, we have that \(W_{\sdpr}\) is a strict subset of \(V\) and we go to Line~\ref{alg-line:sdfab-almost-sure-attr} to compute \(S\), the winning region for sure-\(\SafeObj(W_{\sdpr})\), that is the set of all vertices in \(\MDP\) from which \(\PlayerMain\) can ensure for all outcomes that the token visits only vertices in \(W_{\sdpr}\).
The set \(S\) induces a subMDP of \(\MDP\), and in Line~\ref{alg-line:sdfab-restrict-to-almost-sure-attr} we recursively call \Cref{alg:sure-dirfwmp-almostsure-buchi} (\(\SureDirFWMPASBuchiAlgo\)) on the subMDP \(\subMDP{\MDP}{S}\).

\subparagraph{Proof of correctness of \Cref{alg:sure-dirfwmp-almostsure-buchi}.}
We prove the correctness of \Cref{alg:sure-dirfwmp-almostsure-buchi} in \Cref{thm:sdfab-correctness}.
Before that, we prove an intermediate result in \Cref{lem:sdfab-v-l-bound}.

\begin{lemma}\label{lem:sdfab-v-l-bound}
  For all vertices \(v\) in an MDP \(\MDP\), if \(\PlayerMain\) has a strategy to reach a target \(T \subseteq \Vertices\) from \(v\) with positive probability while simultaneously satisfying  sure-\(\DirFWMP(\WindowLength, \GuaranteeThreshold)\), then \(\PlayerMain\) also has a strategy to reach \(T\) from \(v\) with positive probability in at most \(\abs{\Vertices} \cdot \WindowLength\) steps while simultaneously satisfying  sure-\(\DirFWMP(\WindowLength, \GuaranteeThreshold)\).
\end{lemma}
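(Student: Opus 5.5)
The plan is a shortcutting argument on a witness path, in the spirit of the pumping arguments for reachability. Fix a strategy \(\Strategy\) from \(v\) that is winning for sure-\(\DirFWMP(\WindowLength, \GuaranteeThreshold)\) and reaches \(T\) with positive probability. Then there is a finite outcome prefix \(\rho = \PlayPrefix{n}\) of \(\Strategy\) with \(v_{n} \in T\). If \(n \le \abs{\Vertices} \cdot \WindowLength\), we are done. Otherwise I will modify \(\Strategy\) to obtain a strategy with a strictly shorter such witness, and iterate. Since \(n\) decreases each time, we end with a strategy whose witness has length at most \(\abs{\Vertices} \cdot \WindowLength\). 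This witness is a finite outcome, so it has positive probability, and the modified strategy still surely satisfies \(\DirFWMP(\WindowLength, \GuaranteeThreshold)\).

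The key notion is that of a \emph{clean} position. A position \(p\) of a play is clean if no \(\GuaranteeThreshold\)-window \(\Play(i,p)\) with \(i < p\) is open at \(p\), i.e., every window started strictly before \(p\) has closed by \(p\). Position \(0\) is clean. I claim that in any outcome of \(\Strategy\), every clean position \(p\) is followed by another clean position \(q\) with \(p < q \le p + \WindowLength\). Indeed, the window opening at \(p\) closes at some \(q \le p+\WindowLength\) because \(\Strategy\) surely satisfies \(\DirFWMP(\WindowLength, \GuaranteeThreshold)\). By the inductive property of \(\GuaranteeThreshold\)-windows, every window opening strictly between \(p\) and \(q\) closes at or before \(q\). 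Windows opening before \(p\) are already closed at \(p\), by cleanness. Hence \(q\) is clean.

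Consequently, if \(n > \abs{\Vertices}\cdot\WindowLength\), then \(\rho\) contains more than \(\abs{\Vertices}\) clean positions among \(0,\dots,n-1\). By pigeonhole there are clean positions \(i<j<n\) with \(v_i = v_j\). Define \(\Strategy'\) to play as \(\Strategy\), except on histories extending \(\PlayPrefix{i}\). There it uses \(\Strategy'(\PlayPrefix{i}\cdot h) = \Strategy(\PlayPrefix{j} \cdot h)\), which is well defined because \(v_i = v_j\).

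Every outcome of \(\Strategy'\) is of one of two kinds. Either it is an outcome of \(\Strategy\), or it has the form \(\PlayPrefix{i}\cdot\pi\), where \(\PlayPrefix{j}\cdot\pi\) is an outcome of \(\Strategy\). In the second case the windows behave correctly on both sides of the splice:
\begin{itemize}
\item windows opening before \(i\) close by \(i\), since \(i\) is clean in the original outcome;
\item windows opening at or after \(i\) are exactly those opening at or after \(j\) in \(\PlayPrefix{j}\cdot\pi\), and they depend only on the suffix, so they close within \(\WindowLength\) steps.
\end{itemize}
Thus \(\Strategy'\) is winning for sure-\(\DirFWMP(\WindowLength, \GuaranteeThreshold)\). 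Moreover, \(\PlayPrefix{i}\cdot\PlayInfix{j}{n}\) is an outcome prefix of \(\Strategy'\) reaching \(T\), and it has length \(n-(j-i) < n\).

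The main obstacle I expect is the bookkeeping in the splicing step. One must check that cleanness of both \(i\) and \(j\) really decouples the window constraints across the cut, including windows that would straddle it. Both endpoints matter: cleanness at \(i\) closes the windows opened before the cut, and cleanness at \(j\) shows that the suffix after \(j\) carries no pending obligation from before \(j\). One must also check that the modified strategy is a legitimate strategy, since it is history-dependent but deterministic. The bound \(\abs{\Vertices}\cdot\WindowLength\) then comes directly from the gap of at most \(\WindowLength\) between consecutive clean positions.
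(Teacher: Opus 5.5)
Your proof is correct and follows essentially the same route as the paper: positions with no pending \(\GuaranteeThreshold\)-window recur at gaps of at most \(\WindowLength\) by the inductive property of windows. So a witness prefix longer than \(\abs{\Vertices}\cdot\WindowLength\) contains two such positions at the same vertex, and skipping the cycle between them gives a strictly shorter witness; the paper phrases this as a contradiction with a minimal witness length, while you phrase it as an iteration. Your explicit splice is a more careful version of the paper's informal cycle-skipping, and it shows that only cleanness at the earlier position \(i\) is actually used, since windows opening at or after the cut depend only on the suffix.
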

\begin{proof}
  Since the \(\DirFWMP(\WindowLength, \GuaranteeThreshold)\) objective is closed under suffixes, we have from \Cref{prop:winning-region-sure-objective-closed-under-suffixes-induces-submdp} that  \(\PlayerMain\) can ensure that the token never leaves the winning region of sure-\(\DirFWMP(\WindowLength, \GuaranteeThreshold)\), and moreover, it is also necessary for \(\PlayerMain\) to keep the token within this winning region in order to satisfy  sure-\(\DirFWMP(\WindowLength, \GuaranteeThreshold)\).
  Thus, without loss of generality, let us suppose that all vertices in the MDP are winning for sure-\(\DirFWMP(\WindowLength, \GuaranteeThreshold)\), and we only consider strategies of \(\PlayerMain\) that are winning for sure-\(\DirFWMP(\WindowLength, \GuaranteeThreshold)\).

  Now, suppose for a vertex \(v\) in \(\MDP\), there exists a winning strategy for sure-\(\DirFWMP(\WindowLength, \GuaranteeThreshold)\) that is also winning for positive-\(\ReachObj(T)\).
  Then, we want to show that there exists a winning strategy for sure-\(\DirFWMP(\WindowLength, \GuaranteeThreshold)\) following which \(\PlayerMain\) can, starting from \(v\), with positive probability, reach \(T\) from \(v\) in at most \(\abs{\Vertices} \cdot \WindowLength\) steps.

  Suppose that this is not the case, that is, suppose that for all strategies of \(\PlayerMain\) that are winning for sure-\(\DirFWMP(\WindowLength, \GuaranteeThreshold)\) and for positive-\(\ReachObj(T)\) from \(v\), the probability that the token reaches \(T\) in at most \(\abs{\Vertices} \cdot \WindowLength\) steps starting from \(v\) is zero.
  Let \(x > \abs{\Vertices} \cdot \WindowLength\) be the minimum number of steps that \(\PlayerMain\) can ensure that the token reaches \(T\) from \(v\) in with positive probability.
  Consider one such outcome where the token reaches \(T\) from \(v\) in \(x\) steps
  We also have that this outcome satisfies \(\DirFWMP(\WindowLength, \GuaranteeThreshold)\), and thus, all \(\GuaranteeThreshold\)-windows in this outcome close in at most \(\WindowLength\) steps.
  Moreover, from the inductive property of windows, we have that there exist infinitely many positions in this outcome where there are no open \(\GuaranteeThreshold\)-windows pending to be closed.

  Consider the prefix \(v_{0}v_{1}\cdots v_{x}\) of this outcome of length \(x\), that is, the prefix that ends with the first visit to a vertex in \(T\).
  We are interested in the positions in this prefix where there are no open \(\GuaranteeThreshold\)-windows.
  At the start of the play (that is, at \(v_{0}\)) there are no open \(\GuaranteeThreshold\)-windows.
  The \(\GuaranteeThreshold\)-window starting at \(v_{0}\) closes in at most \(\WindowLength\) steps, say at \(v_{i_{1}}\), and when this \(\GuaranteeThreshold\)-window closes, there are once again no open \(\GuaranteeThreshold\)-windows at \(v_{i_{1}}\) by the inductive property of windows.
  When the \(\GuaranteeThreshold\)-window starting at \(v_{i_{1}}\) closes, say at \(v_{i_{2}}\), this is once again a position where there are no open \(\GuaranteeThreshold\)-windows.
  This way, the distance between every two consecutive positions \(v_{i_{j}}\) and \(v_{i_{j+1}}\) with no open \(\GuaranteeThreshold\)-windows is at most \(\WindowLength\).
  Since the prefix \(v_{0}v_{1}\cdots v_{x}\) has at least \(\abs{\Vertices} \cdot \WindowLength + 1\) vertices and \(v_{0}\) is a position with no open \(\GuaranteeThreshold\)-windows, there exist at least \(\abs{\Vertices} + 1\) positions in the prefix where there are no open \(\GuaranteeThreshold\)-windows.
  Since there are only \(\abs{\Vertices}\) vertices in the game, by the pigeonhole principle, two of these positions (say \(v_{j}\) and \(v_{k}\)) have the same vertex \(u\) of the game.
  The following two cases are possible:
  \begin{itemize}
  \item If \(u \in \VerticesMain\), then \(\PlayerMain\) could skip the cycle \(v_{j} \cdots v_{k}\), resulting in the sequence of vertices \(v_{0} v_{1} \cdots v_{j} v_{k+1} \cdots v_{x}\).
    Since \(\PlayerMain\) can ensure with positive probability that the token reaches \(v_{x}\) from \(v_{k}\) in at most \(x - k\) steps, player \(\PlayerMain\) can also ensure with positive probability that the token reaches \(v_{x}\) from \(v_{j}\) in at most \(x - k\) steps as both \(v_{j}\) and \(v_{k}\) are the same vertex \(u\) and at both positions, all \(\GuaranteeThreshold\)-windows have been closed.
    This gives a way for \(\PlayerMain\) to ensure that the token reaches \(v_{x}\) from \(v_{0}\) in less than \(x\) steps, which is a contradiction, since \(x\) was assumed to be the minimum number of steps in which the token reaches \(T\).
  \item If \(u \in \VerticesRandom\), then again, consider the prefix  \(v_{0} v_{1} \cdots v_{j} v_{k+1} \cdots v_{x}\) obtained by skipping the cycle \(v_{j} \cdots v_{k}\).
    With positive probability, the sequence \(v_{k+1}\cdots v_{x}\) is seen immediately after \(v_{j}\).
    Thus, using the same argument as in the previous case, player \(\PlayerMain\) can ensure with positive probability that the token reaches \(v_{x}\) from \(v_{0}\) in less than \(x\) steps, which is a contradiction.
  \end{itemize}
  In both cases we get a contradiction, and thus, we have that \(\PlayerMain\) can reach \(T\) from \(v\) with positive probability in at most \(\abs{\Vertices} \cdot \WindowLength\) steps while also satisfying sure-\(\DirFWMP(\WindowLength, \GuaranteeThreshold)\).
\end{proof}

\begin{theorem}\label{thm:sdfab-correctness}
  \Cref{alg:sure-dirfwmp-almostsure-buchi} computes the winning region for sure-\(\DirFWMP(\WindowLength, \GuaranteeThreshold)\)-almost-sure-\(\BuchiObj(T)\).
\end{theorem}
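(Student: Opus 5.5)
I will prove the theorem by induction on the number of vertices of the MDP, following the recursive structure of \Cref{alg:sure-dirfwmp-almostsure-buchi}. Termination is immediate: each recursive call is on \(\subMDP{\MDP}{S}\) with \(S \subseteq W_{\sdpr} \subsetneq V\), so the vertex set strictly shrinks. This also lets me use the result for the recursive call as the induction hypothesis. I will show two inclusions: every vertex returned is winning, and every winning vertex is returned.

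\textbf{Soundness, base case \(W_{\sdpr} = V\).} Here every vertex is winning for sure-\(\DirFWMP(\WindowLength, \GuaranteeThreshold)\)-positive-\(\ReachObj(T)\). By \Cref{lem:sdfab-v-l-bound}, from each vertex \(u\) there is a strategy \(\Strategy[u]\) with two properties: it surely satisfies \(\DirFWMP(\WindowLength, \GuaranteeThreshold)\), and it reaches \(T\) within \(\abs{V}\cdot\WindowLength\) steps with probability at least \(\ProbabilityFunction_{\min}^{\abs{V}\cdot\WindowLength}\).

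I combine these strategies in phases, and I switch only at positions where no \(\GuaranteeThreshold\)-window is open. Such positions occur at least every \(\WindowLength\) steps, by the inductive property of windows used in the proof of \Cref{lem:sdfab-v-l-bound}. Concretely: play \(\Strategy[u]\) from the current vertex \(u\) for \(\abs{V}\cdot\WindowLength\) steps. Then continue playing it until the first position with no open window. At that point, restart with \(\Strategy[u']\) for the current vertex \(u'\).

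Two things must be checked.
\begin{itemize}
\item Every window is closed by the strategy that opened it, since switches happen only when no window is open. Every window opened inside a phase therefore closes within \(\WindowLength\) steps, so \(\DirFWMP(\WindowLength, \GuaranteeThreshold)\) holds surely.
\item Each phase visits \(T\) with probability bounded below by a fixed positive constant. A Borel–Cantelli style argument then gives almost-sure \(\BuchiObj(T)\).
\end{itemize}

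\textbf{Soundness, recursive case.} Any strategy that is winning in \(\subMDP{\MDP}{S}\) is also winning in \(\MDP\), because \(S\) induces a subMDP and the token can be kept inside it.

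\textbf{Completeness.} Suppose \(v\) is winning for the combined objective via a strategy \(\Strategy\). I claim \(v \in S\). Suppose not. Then \(\Strategy\) must, with positive probability, reach a vertex outside \(W_{\sdpr}\). The dual of sure-safety is positive reachability, so outside \(S\) the adversary (equivalently, chance with positive probability) can force a visit to \(V \setminus W_{\sdpr}\).

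From such a vertex \(u\), reached along a prefix, consider the residual strategy. It still satisfies \(\DirFWMP\) surely, because that objective is closed under suffixes. Since \(u\) is not winning for sure-\(\DirFWMP\)-positive-\(\ReachObj(T)\), the residual strategy reaches \(T\) with probability \(0\). So with positive probability \(T\) is never visited again, contradicting almost-sure \(\BuchiObj(T)\).

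Hence every outcome stays in \(S\), except possibly on a set of probability zero. I then need to ensure that every outcome stays in \(S\), which I discuss next. Once that holds, \(\Strategy\) restricted to \(\subMDP{\MDP}{S}\) is winning there, and the induction hypothesis places \(v\) in the recursive output.

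\textbf{Main obstacle.} The difficulty is the last step: a zero-probability outcome could leave \(S\), yet sure-\(\DirFWMP\) must hold on all outcomes. In an MDP, any outcome of a strategy that has positive probability on every finite prefix is a possible branch. Leaving \(S\) corresponds to a finite prefix ending outside \(S\), and every finite prefix of an outcome has positive probability. So if any outcome leaves \(S\), it does so with positive probability, and the argument above applies. This resolves the issue.

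A second point needs care. The recursion uses target \(T \cap S\), and almost-sure \(\BuchiObj(T)\) must coincide with \(\BuchiObj(T\cap S)\) on outcomes that stay in \(S\). This holds trivially.
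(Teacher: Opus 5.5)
Your proof is correct and takes essentially the same route as the paper. For soundness you use the same phase-and-reset strategy built from the strategies of \Cref{lem:sdfab-v-l-bound}, switching only when no \(\GuaranteeThreshold\)-window is open; for completeness you use the same observation that failing to stay in \(S\) leads with positive probability to a vertex outside \(W_{\sdpr}\), from which any strategy surely satisfying \(\DirFWMP(\WindowLength, \GuaranteeThreshold)\) reaches \(T\) with probability zero, and your explicit induction over the recursion (including the check that a winning strategy must keep every outcome inside \(S\), so that it is a strategy of \(\subMDP{\MDP}{S}\)) makes precise a step the paper leaves implicit when it says \(v\) is excluded in some recursive call.
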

\begin{proof}
  \((\Rightarrow)\)
  First, we show that from every vertex \(v\) returned by~\Cref{alg:sure-dirfwmp-almostsure-buchi}, player \(\PlayerMain\) has a strategy that is winning for sure-\(\DirFWMP(\WindowLength, \GuaranteeThreshold)\)-almost-sure-\(\BuchiObj(T)\).
  Recall that the algorithm returns all vertices in \(\MDP\) if all vertices in \(\MDP\) are winning for sure-\(\DirFWMP(\WindowLength, \GuaranteeThreshold)\)-positive-\(\ReachObj(T)\), and otherwise the algorithm is called recursively on the subMDP \(\subMDP{\MDP}{S}\) where \(S\) is the winning region for sure-\(\SafeObj(W_{\sdpr})\) in \(\MDP\), and \(W_{\sdpr}\) is the winning region for sure-\(\DirFWMP(\WindowLength, \GuaranteeThreshold)\)-positive-\(\ReachObj(T)\) in \(\MDP\).
  Thus, if the algorithm eventually returns a non-empty set, then in the final recursive call which is called on the subMDP say \(\MDP'\) of \(\MDP\), we have that all vertices of \(\MDP'\) belong to \(W_{\sdpr}' = \SureDirFWMPPosReachAlgo(\MDP', \WindowLength, \GuaranteeThreshold, T)\).
  For every vertex \(v\) in \(W_{\sdpr}'\), since \(v\) is winning for sure-\(\DirFWMP(\WindowLength, \GuaranteeThreshold)\)-positive-\(\ReachObj(T)\) in \(\MDP'\), from \Cref{lem:sdfab-v-l-bound}, there exists a strategy \(\Strategy[\sdpr]^{v}\) of \(\PlayerMain\) that is winning for sure-\(\DirFWMP(\WindowLength, \GuaranteeThreshold)\)-positive-\(\ReachObj_{\le \abs{\Vertices}\cdot \WindowLength}(T)\) from \(v\) in \(\MDP'\).
  Using these strategies, we construct a strategy \(\Strategy[\sdab]\) that is winning for sure-\(\DirFWMP(\WindowLength, \GuaranteeThreshold)\)-almost-sure-\(\BuchiObj(T)\) from all vertices in \(\MDP'\).

  \emph{Description of strategy \(\Strategy[\sdab]\).}
  If the play begins from a vertex \(v\) in \(\MDP'\), then \(\Strategy[\sdab]\) follows \(\Strategy[\sdpr]^{v}\) for the first \(\abs{\Vertices} \cdot \WindowLength\) steps and  further continues to follow \(\Strategy[\sdpr]^{v}\) until all open \(\GuaranteeThreshold\)-windows are closed for the first time after the \(\abs{\Vertices} \cdot \WindowLength\) steps.
  At this point, if the token is at a vertex \(v'\), then the strategy \(\Strategy[\sdab]\) resets to \(\Strategy[\sdpr]^{v'}\), that is, \(\Strategy[\sdab]\) forgets the history and follows \(\Strategy[\sdpr]^{v'}\) from \(v'\) for \(\abs{\Vertices} \cdot \WindowLength\) steps before resetting again when all \(\GuaranteeThreshold\)-windows are closed the next time.

  We claim that this strategy \(\Strategy[\sdab]\) is winning for  sure-\(\DirFWMP(\WindowLength, \GuaranteeThreshold)\)-almost-sure-\(\BuchiObj(T)\) in \(\MDP\) from all vertices in \(\MDP'\), and we prove this claim now.
  Since, for every \(v\) in \(\MDP'\), the strategy \(\Strategy[\sdpr]^{v}\) is winning for sure-\(\DirFWMP(\WindowLength, \GuaranteeThreshold)\), we have that for every outcome of \(\Strategy[\sdpr]^{v}\), every \(\GuaranteeThreshold\)-window closes in at most \(\WindowLength\) steps.
  Moreover, from the inductive property of windows,  for every outcome of \(\Strategy[\sdpr]^{v}\), the number of steps between every two consecutive points of the outcome where all \(\GuaranteeThreshold\)-windows are closed is at most \(\WindowLength\).
  It follows that for every outcome of  \(\Strategy[\sdab]\), it takes at most \(\WindowLength\) steps for all \(\GuaranteeThreshold\)-windows to be closed after the first \(\abs{\Vertices} \cdot \WindowLength\) steps, at which point \(\Strategy[\sdab]\) resets.
  Therefore, we have that the number of steps between consecutive resets of \(\Strategy[\sdab]\) is greater than \(\abs{\Vertices} \cdot \WindowLength\) and is at most \(\abs{\Vertices} \cdot \WindowLength + \WindowLength\).
  Further, since \(\Strategy[\sdpr]^{v}\) is winning for positive-\(\ReachObj_{\le \abs{\Vertices}\cdot \WindowLength}(T)\) in \(\MDP\) for all \(v\) in \(\MDP'\), we have that after every reset of \(\Strategy[\sdab]\), with positive probability (at least \((\ProbabilityFunction_{\min})^{\abs{\Vertices}\cdot \WindowLength}\)) the token reaches \(T\) before the strategy resets again.
  Since the strategy resets at most every \(\abs{\Vertices} \cdot \WindowLength + \WindowLength\) steps and there is at least a fixed positive probability of reaching \(T\) after every reset, the probability that the token reaches \(T\) infinitely often is \(1\).
  Since \(\Strategy[\sdab]\) only resets when all \(\GuaranteeThreshold\)-windows are closed, and each \(\Strategy[\sdpr]^{v}\) is winning for sure-\(\DirFWMP(\WindowLength, \GuaranteeThreshold)\), we have that surely all \(\GuaranteeThreshold\)-windows are always closed in at most \(\WindowLength\) steps.
  Thus, \(\Strategy[\sdab]\) is a winning strategy for sure-\(\DirFWMP(\WindowLength, \GuaranteeThreshold)\)-almost-sure-\(\BuchiObj(T)\) in \(\MDP\) from every vertex \(v\) in \(\MDP'\).

  \((\Leftarrow)\)
  Now, we prove the converse direction, that is, if a vertex \(v\) is not returned by the algorithm, then \(\PlayerMain\) cannot simultaneously satisfy sure-\(\DirFWMP(\WindowLength, \GuaranteeThreshold)\) and almost-sure-\(\BuchiObj(T)\) from \(v\).
  A vertex \(v\) is not included in the region returned by the algorithm if in some recursive call of the algorithm, the vertex \(v\) does not belong to the winning region \(S\) for sure-\(\SafeObj(W_{\sdpr})\) (Line~\ref{alg-line:sdfab-restrict-to-almost-sure-attr}).
  If \(v\) does not belong to \(S\), then for all strategies of \(\PlayerMain\), starting from \(v\), with positive probability the token eventually visits a vertex not in \(W_{\sdpr}\).
  From a vertex not in \(W_{\sdpr}\), for all strategies of \(\PlayerMain\) that are winning for sure-\(\DirFWMP(\WindowLength, \GuaranteeThreshold)\), the token does not reach the target set \(T\) with any positive probability, and thus, we also have that the token does not visit \(T\) infinitely often with any positive probability.
  Hence, from every vertex \(v\) not in \(S\), if \(\PlayerMain\) follows a strategy that is winning for sure-\(\DirFWMP(\WindowLength, \GuaranteeThreshold)\), then almost-sure-\(\BuchiObj(T)\) is not satisfied, and otherwise if \(\PlayerMain\) follows a strategy that is not winning for sure-\(\DirFWMP(\WindowLength, \GuaranteeThreshold)\), then sure-\(\DirFWMP(\WindowLength, \GuaranteeThreshold)\) is not satisfied.
  Thus, we have that every vertex \(v\) that is not returned by the algorithm is not winning for sure-\(\DirFWMP(\WindowLength, \GuaranteeThreshold)\)-almost-sure-\(\BuchiObj(T)\) in \(\MDP\). 
\end{proof}

\paragraph*{\Cref{alg:sure-dirfwmp-positive-reach}: Sure-\(\DirFWMP(\WindowLength, \GuaranteeThreshold)\)-positive-\(\ReachObj(T)\)}

Finally, we describe \Cref{alg:sure-dirfwmp-positive-reach} (\(\SureDirFWMPPosReachAlgo\))
that returns the set of all vertices in an MDP that are winning for sure-\(\DirFWMP(\WindowLength, \GuaranteeThreshold)\)-positive-\(\ReachObj(T)\) for a target set \(T \subseteq V\).
This algorithm is used as a subroutine in \Cref{alg:sure-dirfwmp-almostsure-buchi} (\(\SureDirFWMPASBuchiAlgo\)).
The MDP shown in \Cref{fig:sdab-simultaneous-satisfaction-not-possible} shows that every vertex in an MDP may be winning for sure-\(\DirFWMP(\WindowLength, \GuaranteeThreshold)\) and for positive-\(\ReachObj(T)\) separately, and yet it may be the case that no vertex in the MDP is winning for the combination sure-\(\DirFWMP(\WindowLength, \GuaranteeThreshold)\)-positive-\(\ReachObj(T) \).

\begin{algorithm}[t]
  \caption{\(\SureDirFWMPPosReachAlgo(\MDP, \WindowLength, \GuaranteeThreshold, T)\)}%
  \label{alg:sure-dirfwmp-positive-reach}
  \begin{algorithmic}[1]
    \Require An MDP \(\MDP = ((\Vertices, \Edges), (\VerticesMain, \VerticesRandom), \ProbabilityFunction, \PayoffFunction)\), window length \(\WindowLength\), threshold \(\GuaranteeThreshold\), and target set \(T\)
    \Ensure The winning region for sure-\(\DirFWMP(\WindowLength, \GuaranteeThreshold)\)-positive-\(\ReachObj(T)\) in \(\MDP\)
    \State \(W_{\sd} \assign \SureDirFWMPAlgo(\MDP, \WindowLength, \GuaranteeThreshold)\)\label{alg-line:sdfpr-sure-dirfwmp-winning-region}
    \State \(\MDP \assign \subMDP{\MDP}{W_{\sd}}, \quad T \assign T \intersection
    W_{\sd}\)\label{alg-line:sdfpr-restrict-sure-dirfwmp}
    \State \(R \assign T, \quad E_g,\, E_b \assign \emptyset\)\label{alg-line:sdfpr-initialize-sets}
    \While{\(\{e = (s, t) \in \Edges \suchthat s \in \Vertices \setminus T, \, t \in R, \, \text{and } e \in E \setminus (E_g \union E_b)\} \neq \emptyset\)}\label{alg-line:sdfpr-while-loop}
      \State Choose such an edge \(e\) arbitrarily.
      \If{\Call{\(\IsGoodEdgeAlgo\)}{\(e, E_g\)}}\label{alg-line:sdfpr-isgoodedge-procedure-call}
        \State \(E_g \assign E_g \union \{e\}\)\label{alg-line:sdfpr-add-good-edge}, \quad \(R \assign R \union \{s\}\)\label{alg-line:sdfpr-add-good-start-vertex}, \quad \(E_b \assign \emptyset\)\label{alg-line:sdfpr-reset-bad-edges}
      \Else
        \State \(E_b \assign E_{b} \union \{e\}\)\label{alg-line:sdfpr-add-bad-edge}
      \EndIf
    \EndWhile
    \State \Return \(R\)\label{alg-line:sdfpr-return-winning-region}
    \Statex
    \Procedure{\(\IsGoodEdgeAlgo\)}{\(e = (s, t), E_g\)} \label{alg-line:sdfpr-isgoodedge-procedure-define}
      \State Construct MDP \(\MDP_{e}'\) in the following way:
      \State \quad \(V \assign\) the set of vertices in \(\MDP\) 
      \State \quad \(\widetilde{R} \assign \{\tilde{v} \suchthat v \in R \}\), where \(\tilde{v}\) belongs to \(\PlayerMain\) if and only if \(v\) belongs to \(\PlayerMain\).
      \State \quad The set of vertices of \(\MDP_{e}'\) is \(V \union \widetilde{R} \union \{\hat{s}\}\).
      \Statex \quad The vertex \(\hat{s}\) belongs to \(\PlayerMain\) if and only if \(s\) belongs to \(\PlayerMain\).
      \Statex
      \State \quad \(E \assign \) the set of edges in \(\MDP\)
      \State \quad\(\widetilde{E}_g \assign 
      \{(\tilde{u}, \tilde{v}) \suchthat u,v \in R, \, (u, v) \in E_g)\}\)
      \State \quad \(\widetilde{E}_{\Random} \assign \{(\tilde{u}, v) \suchthat u \in (R \setminus T) \intersection \VerticesRandom, \, v \in V, \, (u, v) \in E \setminus E_g)\} \) 
      \State \quad \(\widehat{E}_{\Random} \assign \textbf{if } s \in \VerticesMain \textbf{ then } \emptyset \textbf{ else } \{(\hat{s}, v) \suchthat (s, v) \in \Edges \ \text{and } v \ne t\} \) 
      \State \quad The set of edges of \(\MDP_{e}'\) is \(E \union \widetilde{E}_g \union \widetilde{E}_{\Random} \union \{(\hat{s}, \tilde{t})\} \union \widehat{E}_{\Random}\).
      \Statex \quad The probability and payoff of each edge is preserved from \(\MDP\).
      \Statex
      \State \(\MDP_{e} \assign \) For all \(v \in T\), identify \(\tilde{v}\) with \(v\)  in \(\MDP_{e}'\)
      \State \Return true iff \(\hat{s}\) is winning for sure-\(\DirFWMP(\WindowLength, \GuaranteeThreshold)\) in \(\MDP_{e}\) \label{alg-line:sdfpr-isgoodedge-return}
    \EndProcedure
  \end{algorithmic}
\end{algorithm}

\subparagraph*{Description of \Cref{alg:sure-dirfwmp-positive-reach}.}
We first compute in Line~\ref{alg-line:sdfpr-sure-dirfwmp-winning-region} the set \(W_{\sd}\) of vertices from which \(\PlayerMain\) can satisfy sure-\(\DirFWMP(\WindowLength, \GuaranteeThreshold)\).
Since \(\DirFWMP(\WindowLength, \GuaranteeThreshold)\) is closed under suffixes, from \Cref{prop:winning-region-sure-objective-closed-under-suffixes-induces-submdp} we have that \(W_{\sd}\) induces a subMDP \(\subMDP{\MDP}{W_{\sd}}\) of \(\MDP\).
We are not interested in vertices in \(V \setminus W_{\sd}\), and thus we restrict the MDP \(\MDP\) to \(\subMDP{\MDP}{W_{\sd}}\) (that is, we refer to the subMDP \(\subMDP{\MDP}{W_{\sd}}\) as \(\MDP\)) where all vertices are winning for sure-\(\DirFWMP(\WindowLength, \GuaranteeThreshold)\), and we refer to \(T \intersection W_{\sd}\) as \(T\) (Line~\ref{alg-line:sdfpr-restrict-sure-dirfwmp}).

We define a set \(R\) of vertices that at the end of the algorithm will consist of precisely all the vertices in \(\MDP\) that are winning for sure-\(\DirFWMP(\WindowLength, \GuaranteeThreshold)\)-positive-\(\ReachObj(T)\) in \(\MDP\).
Since all vertices in \(T\) satisfy sure-\(\DirFWMP(\WindowLength, \GuaranteeThreshold)\) and also trivially satisfy positive-\(\ReachObj(T)\) (as they are already in \(T\)), we initialize \(R\) with \(T\) in Line~\ref{alg-line:sdfpr-initialize-sets}.
As we discover more vertices that are winning for sure-\(\DirFWMP(\WindowLength, \GuaranteeThreshold)\)-positive-\(\ReachObj(T)\), we add them to \(R\).
We always only add vertices to \(R\) and never remove vertices from \(R\). 
  That is, we will show that a vertex \(v\) is winning for sure-\(\DirFWMP(\WindowLength, \GuaranteeThreshold)\)-positive-\(\ReachObj(T)\) if and only if \(v\) is added to \(R\) at some point in the execution of the algorithm.

We also define sets \(E_{g}\) and \(E_{b}\) of edges (\emph{good} and \emph{bad} edges respectively).
We initialize \(E_{g}\) and \(E_{b}\) with empty sets in Line~\ref{alg-line:sdfpr-initialize-sets}, and add edges to these sets over the course of the algorithm in the \textbf{while} loop (Lines~\ref{alg-line:sdfpr-while-loop} to~\ref{alg-line:sdfpr-add-bad-edge}) as we learn more about the MDP \(\MDP\).
The \textbf{while} loop runs as long as there exists an edge \(e = (s, t)\) in \(\MDP\) such that \(s\) is in \(V \setminus T\) and \(t\) is in \(R\), and \(e\) does not belong to \(E_g\) or \(E_b\).
If such an edge exists, then one such edge \(e = (s, t)\) is chosen arbitrarily.
For this edge \(e\), we run the procedure \(\IsGoodEdgeAlgo\) (Lines~\ref{alg-line:sdfpr-isgoodedge-procedure-define} to~\ref{alg-line:sdfpr-isgoodedge-return}) to determine if \(e\) is good or bad with respect to the set \(E_{g}\).
Intuitively, an edge \(e\) is good with respect to \(E_{g}\) if starting from \(s\), with positive probability it is the case that the token takes the edge \(e\) to reach \(R\) and eventually \(T\) using only edges in \(E_{g}\) while simultaneously satisfying sure-\(\DirFWMP(\WindowLength, \GuaranteeThreshold)\).
Formally, an edge \(e\) in \(\MDP\) is \emph{good} with respect to \(E_{g}\) if there exists a winning strategy \(\Strategy[e]\) for sure-\(\DirFWMP(\WindowLength, \GuaranteeThreshold)\)-positive-\(\ReachObj_{\{e\} \union E_{g}}(T)\) starting from edge \(e\) in \(\MDP\).
That is, all outcomes of \(\Strategy[e]\) with initial edge \(e\) satisfy \(\DirFWMP(\WindowLength, \GuaranteeThreshold)\), and with positive probability, an outcome of \(\Strategy[e]\) with initial edge \(e\) reaches \(T\) by subsequently taking only edges in \(E_{g}\).
The edge \(e\) is \emph{bad} with respect to \(E_{g}\) if no such winning strategy \(\Strategy[e]\) exists.

If \(\IsGoodEdgeAlgo\) returns false, then the edge is added to \(E_{b}\) (Line~\ref{alg-line:sdfpr-add-bad-edge}).
Otherwise, if \(\IsGoodEdgeAlgo\) returns true, then the edge \(e\) is added to \(E_{g}\) and the vertex \(s\) is added to the winning region \(R\) for sure-\(\DirFWMP(\WindowLength, \GuaranteeThreshold)\)-positive-\(\ReachObj(T)\) in \(\MDP\) if \(s\) is not already in \(R\) (Line~\ref{alg-line:sdfpr-add-good-start-vertex}).
This is because the winning strategy \(\Strategy[e]\) for sure-\(\DirFWMP(\WindowLength, \GuaranteeThreshold)\)-positive-\(\ReachObj_{\{e\} \union E_{g}}(T)\) from edge \((s, t)\) immediately gives a winning strategy for sure-\(\DirFWMP(\WindowLength, \GuaranteeThreshold)\)-positive-\(\ReachObj(T)\) from \(s\).
Note that given the set \(E_{g}\), we can reconstruct \(R\) as \(T \union \{ u \in V \suchthat (u, v) \in E_{g}\}\).

Finally, we note that if an edge \(e\) is good with respect to a set \(E_{g}\), then \(e\) is also good with respect to all \(E_{g}' \supsetneq E_{g}\), 
  since a strategy \(\Strategy[e]\) that is winning for sure-\(\DirFWMP(\WindowLength, \GuaranteeThreshold)\)-positive-\(\ReachObj_{\{e\} \union E_{g}}(T)\) from \(e\) is also winning for sure-\(\DirFWMP(\WindowLength, \GuaranteeThreshold)\)-positive-\(\ReachObj_{\{e\} \union E_{g}'}(T)\) from \(e\),
since satisfaction of \(\ReachObj_{\{e\} \union E_{g}}(T)\) implies satisfaction of \(\ReachObj_{\{e\} \union E_{g}'}(T)\).
Thus, once we add an edge to \(E_{g}\), it stays in \(E_{g}\) for the rest of the algorithm.
On the other hand, an edge \(e\) that is bad with respect to \(E_{g}\) may not be bad for \(E_{g}' \supsetneq E_{g}\), as non-existence of a winning strategy for sure-\(\DirFWMP(\WindowLength, \GuaranteeThreshold)\)-positive-\(\ReachObj_{\{e\} \union E_{g}}(T)\) does not imply the non-existence of a winning strategy for sure-\(\DirFWMP(\WindowLength, \GuaranteeThreshold)\)-positive-\(\ReachObj_{\{e\} \union E_{g}'}(T)\).
  Thus, when an edge \(e\) is added to \(E_{g}\), the edges present in \(E_{b}\) may or may not be bad with respect to \(E_{g} \union \{e\}\).
Therefore, we reset \(E_{b}\) to the empty set each time an edge is added to \(E_{g}\) (Line~\ref{alg-line:sdfpr-reset-bad-edges}), and we check whether these edges are good or bad with respect to the larger set \(E_{g} \union \{e\}\) in a later iteration of the \textbf{while} loop.

In each iteration of the \textbf{while} loop, either an edge is added to \(E_{b}\), or an edge is added to \(E_{g}\) and the set \(E_{b}\) is reset to empty.
We exit the \textbf{while} loop when every edge starting in \(\Vertices \setminus T\) and ending in \(R\) belongs to either \(E_{g}\) or \(E_{b}\).
Since there are finitely many edges in the MDP \(\MDP\), the \textbf{while} loop is eventually exited.
In fact, the \textbf{while} loop runs for at most \(\abs{E}^2\) iterations, since after at most every \(\abs{E}\) additions to \(E_b\), an edge is added to \(E_g\) and the set \(E_b\) is reset to empty.
After exiting the \textbf{while} loop, we return the set \(R\) of vertices (Line~\ref{alg-line:sdfpr-return-winning-region}).

\begin{example}
  \begin{figure}[t]
    \centering
    \scalebox{0.9}{
      \begin{tikzpicture}[node distance=1.75cm]
        \node[draw, random, initial above] (v1) {\(v_{1}\)};
        \node[state, left of=v1] (v0) {\(v_{0}\)};
        \node[state, right of=v1] (v2) {\(v_{2}\)};
        \node[state, right of=v2] (v3) {\(v_{3}\)};
        \node[state, right of=v3, accepting] (v4) {\(v_{4}\)};

        \draw 
        (v0) edge[loop left] node[auto]{\small \(\EdgeValues{2}{}\)} (v0)
        (v0) edge[bend right=23] node[below left, pos=0.2]{\small \(\EdgeValues{-2}{}\)} (v3)
        
        (v1) edge node[above]{\small \(\EdgeValues{-1}{.5}\)} (v0)
        (v1) edge node[above]{\small \(\EdgeValues{-2}{.5}\)} (v2)
        
        (v2) edge node[above]{\small \(\EdgeValues{2}{}\)} (v3)

        (v3) edge node[above]{\small \(\EdgeValues{1}{}\)} (v4)
        (v2) edge[bend left=35] node[above]{\small \(\EdgeValues{0}{}\)} (v4)
        (v4) edge[loop right] node[auto]{\small \(\EdgeValues{0}{}\)} (v4)      
        ;
      \end{tikzpicture}
    }
    \caption{%
      An MDP where we want to determine the winning region for sure-\(\DirFWMP(\WindowLength = 3, \GuaranteeThreshold = 0)\)-positive-\(\ReachObj(T = \{v_{4}\})\).
      The winning region is \(\{v_{1}, v_{2}, v_{3}, v_{4}\}\).
    }
    \label{fig:sdfpr-algorithm-example-execution}
  \end{figure}
  Consider the MDP in \Cref{fig:sdfpr-algorithm-example-execution}.
  We explain how \Cref{alg:sure-dirfwmp-positive-reach} works on this MDP.
  We consider the winning condition sure-\(\DirFWMP(\WindowLength = 3, \GuaranteeThreshold = 0)\)-positive-\(\ReachObj(T = \{v_{4}\})\).
  Initially, we have $T=R=\{v_4\}$ and $E_g=E_b=\emptyset$.
  Suppose the algorithm first considers the edge $(v_2,v_4)$.
  Note that $v_2 \in V \setminus T$ and $v_4 \in R$.
  The procedure \(\IsGoodEdgeAlgo\) returns that this is a good edge with respect to \(E_{g} = \emptyset\) since starting with this edge $(v_2, v_{4})$ ensures that \(T\) is reached with a positive probability (in fact, surely) and then looping at \(v_{4}\) ensures satisfaction of sure-\(\DirFWMP(\WindowLength, \GuaranteeThreshold)\).
  Hence, we add \(v_{2}\) to $R$ to get $R = \{v_2,v_4\}$ and add the edge $(v_2,v_4)$ to $E_g$.
  Suppose the edge $(v_1,v_2)$ is considered next.
  This edge is not good with respect to \(E_{g} = \{(v_{2}, v_{4})\}\) since starting with \((v_{1}, v_{2})\), it is not possible to satisfy sure-\(\DirFWMP(3, 0)\)-positive-\(\ReachObj_{\{(v_{1}, v_{2}), (v_{2}, v_{4})\}}(T)\).
  Indeed, in order to reach \(T\) starting from the edge \((v_{1}, v_{2})\) and then using only edges in \(E_{g}\), the path \(v_{1}v_{2}v_{4}\) must be taken, but this forces the token to then loop on \(v_{4}\) forever, which results in the \(0\)-window starting at \(v_{1}\) never closing and sure-\(\DirFWMP(3, 0)\) not being satisfied.
  Hence, the edge \((v_{1}, v_{2})\) is added to the set $E_b$.
  Suppose the edge $(v_3,v_4)$ is considered next.
  It is a good edge with respect to \(E_{g}\) since it leads to $T$ from $v_3$ and sure-\(\DirFWMP(3, 0)\) can be simultaneously from $v_3$, and thus, $E_g$ is updated to $\{(v_2,v_4), (v_3,v_4)\}$, the set $E_b$ is reset to $\emptyset$, and $R$ is updated to $\{v_2,v_3,v_4\}$.
  Next consider the edge $(v_2,v_3)$ which is a good edge with respect to $E_g$ and thus $E_g$ is updated to $\{(v_2,v_4),(v_3,v_4),(v_2,v_3)\}$ while $E_b$ remains $\emptyset$ and $R$ is still $\{v_2,v_3,v_4\}$.
  Now suppose $(v_1,v_2)$ is considered again.
  This time we see that this edge is good with respect to $E_g$ since taking the path \(v_{1}v_{2}v_{3}v_{4}\) and then looping on \(v_{4}\) ensures reaching \(T\) through only edges in \(\{(v_{1}, v_{2})\} \union E_{g}\) as well as satisfaction of sure-\(\DirFWMP(3, 0)\).
  Thus, $E_g$ is updated to $\{(v_2,v_4),(v_3,v_4),(v_2,v_3),(v_1,v_2)\}$ and $R$ is updated to $\{v_1,v_2,v_3,v_4\}$.
  Finally, the edge $(v_0,v_3)$ is considered and found to be not good with respect to $E_g$, and is added to \(E_{b}\).
  All edges starting in \(V \setminus T\) and ending in \(R\) belong to either \(E_{g}\) or \(E_{b}\).
  The edges $(v_1,v_0)$ and $(v_0,v_0)$ are not considered for checking if they are good since $v_0 \not\in R$ and the \textbf{while} loop terminates.
  The set $R = \{v_1,v_2,v_3,v_4\}$ returned by \Cref{alg:sure-dirfwmp-positive-reach} is exactly the set of vertices from which sure-\(\DirFWMP(\WindowLength = 3, \GuaranteeThreshold = 0)\)-positive-\(\ReachObj(T = \{v_{4}\})\) can be satisfied.
  \lipicsEnd
\end{example}

\subparagraph*{Description of \(\IsGoodEdgeAlgo\) procedure.}
Given an MDP \(\MDP\), the procedure \(\IsGoodEdgeAlgo(e = (s, t), E_{g})\) 
determines if the edge \(e\) is winning for sure-\(\DirFWMP(\WindowLength, \GuaranteeThreshold)\)-positive-\(\ReachObj_{\{e\} \union E_{g}}(T)\) in \(\MDP\).
In other words, \(\IsGoodEdgeAlgo\) checks if, starting from \(e\), it is possible with positive probability to reach \(T\) through only edges in~\(\{e\} \union E_{g}\) and also simultaneously satisfy sure-\(\DirFWMP(\WindowLength, \GuaranteeThreshold)\).
The procedure augments some vertices and edges to \(\MDP\) to construct an MDP~\(\MDP_{e}\) that has at most thrice as many vertices as \(\MDP\), and returns true if and only if a designated vertex \(\hat{s}\) is winning for sure-\(\DirFWMP(\WindowLength, \GuaranteeThreshold)\) in \(\MDP_{e}\).
This reduces the problem of satisfaction of
sure-\(\DirFWMP(\WindowLength, \GuaranteeThreshold)\)-positive-\(\ReachObj_{E_{g}}(T)\) in \(\MDP\) to that of satisfaction of sure-\(\DirFWMP(\WindowLength, \GuaranteeThreshold)\) in \(\MDP_{e}\), which we know how to solve from~\cite[Algorithm 2]{CDRR15}.
The correctness of this reduction follows from \Cref{lem:sdfpr-isgoodedge-correctness}.
We describe the construction of \(\MDP_{e}\) in two steps.

In the first step, we construct an intermediate MDP \(\MDP_{e}'\).
The set of vertices of the MDP \(\MDP_{e}'\) is \(V \union \widetilde{R} \union \{\hat{s}\}\), where
\begin{itemize}
\item the set \(V\) is the set of all the vertices of \(\MDP\);
\item the set \(\widetilde{R}\) contains a copy \(\tilde{v}\) of every vertex \(v\) in \(R\)\\(we denote by \(\widetilde{T}\) the set \(\{\tilde{v} \in \widetilde{R} \suchthat v \in T\}\), that is, the set of all vertices in \(\widetilde{R}\) in \(\MDP_{e}'\) that are copies of vertices in \(T\), and we have that \(\widetilde{T} \subseteq \widetilde{R}\));
\item the set \(\{\hat{s}\}\) contains a copy of the vertex \(s\).
\end{itemize}
We note that the copy of a vertex that is controlled by \(\PlayerMain\) is also controlled by \(\PlayerMain\), and the copy of a probabilistic vertex is a probabilistic vertex.

The set of edges of the MDP \(\MDP_{e}'\) is \(E \union \widetilde{E}_{g}  \union \widetilde{E}_{\Random} \union \{(\hat{s}, \tilde{t})\}\union \widehat{E}_{\Random}\), where
\begin{itemize}
\item the set \(E\) is the set of all the edges of \(\MDP\);
\item the set \(\widetilde{E}_{g}\) contains a copy \((\tilde{u}, \tilde{v})\) of every edge \((u, v)\) in \(E_{g}\)\\
  (recall that for every edge \((u, v)\) belonging to \(E_{g}\), we have that \(u \in R \setminus T\) and \(v \in R\), and thus, \((\tilde{u}, \tilde{v}) \in (\widetilde{R} \setminus \widetilde{T}) \times \widetilde{R}\));
\item the set \(\widetilde{E}_{\Random}\) contains, for every probabilistic vertex \(u\) in \(R \setminus T\), for every out-edge \((u, v)\) of \(u\) that is not in \(E_{g}\), a copy \((\tilde{u}, v)\) of \((u, v)\);
\item the set \(\{(\hat{s}, \tilde{t})\}\) contains a copy of the edge \(e = (s, t)\),
\item if \(s\) (and hence also \(\hat{s}\)) is a vertex belonging to \(\PlayerMain\), then \(\widehat{E}_{\Random}\) is empty, and
  if \(s\) is a probabilistic vertex, then \(\widehat{E}_{\Random}\) contains a copy \((\hat{s}, v)\) of every out-edge \((s, v)\) of \(s\) such that \(v \ne t\).
\end{itemize}
When we make a copy of an edge, we preserve its payoff and probability.
Briefly, the set \(E\) contains edges that start and end in \(V\).
The set \(\widetilde{E}_g \union \widetilde{E}_{\Random}\) contains edges starting in \(\widetilde{R} \setminus \widetilde{T}\).
The set \(\{(\hat{s}, \tilde{t})\} \union \widehat{E}_{\Random}\) contains edges starting at \(\hat{s}\).
The set \(\widetilde{E}_{\Random} \union \widehat{E}_{\Random}\) contains edges starting in probabilistic vertices in \((\widetilde{R} \setminus \widetilde{T}) \union \{\hat{s}\}\) and ending in \(V\).
We note that if \(s\) belongs to \(R\), then in addition to \(s\) and \(\hat{s}\), the MDP \(\MDP_{e}'\) also contains a copy  \(\tilde{s}\) of \(s\).

In the second step, we obtain \(\MDP_{e}\) from \(\MDP_{e}'\) by identifying, for all \(v \in T\) in \(\MDP_{e}'\), the vertex \(\tilde{v}\) with \(v\).
That is, we merge \(\tilde{v}\) and \(v\) into one vertex, and the set of in-edges (resp., out-edges) of \(v\) in \(\MDP_{e}\) is the union of the in-edges (resp., out-edges) of \(v\) and in-edges (resp, out-edges) of \(\tilde{v}\) in \(\MDP_{e}'\).
We have that \(\widetilde{T}\) and \(T\) become identical in \(\MDP_{e}\), and belong to both \(V\) and \(\widetilde{R}\) in \(\MDP_{e}\).
In other words, vertices in \(\widetilde{T}\) in \(\MDP_{e}'\) are merged into vertices in \(T\), and thus, each vertex in \(\MDP_{e}\) belongs to exactly one of the three sets: \(V\), \(\widetilde{R} \setminus T\), or \(\{\hat{s}\}\).
We show in \Cref{fig:sdfpr-isgoodedge-m} an MDP \(\MDP\), in \Cref{fig:sdfpr-isgoodedge-me-intermediate} the intermediate MDP \(\MDP_{e}'\), and in \Cref{fig:sdfpr-isgoodedge-me} the MDP \(\MDP_{e}\).
We describe in further detail the out-edges of vertices  in \(\MDP_{e}\).
\begin{itemize}
\item If \(s\) is a vertex belonging to \(\PlayerMain\), then \(\hat{s}\) has exactly one out-edge \((\hat{s}, \tilde{t})\), and \(\tilde{t}\) belongs to \(\widetilde{R}\).
  This simulates forcing \(\PlayerMain\) to choose the edge \((s, t)\) from \(s\) in \(\MDP\).
\item   If \(s\) is a probabilistic vertex, then in addition to \((\hat{s}, \tilde{t})\), the vertex \(\hat{s}\) also has the set \(\widehat{E}_{\Random}\) of out-edges, that is, an out-edge \((\hat{s}, v)\) for every out-edge \((s, v)\) of \(s\) such that \(v \neq t\).
  Thus, with positive probability, the token moves along the edge \((\hat{s}, \tilde{t})\). 
  If the token does not choose the edge \((\hat{s}, \tilde{t})\), then the behaviour of the token from \(\hat{s}\) is the same as the behaviour of the token from \(s\) if the edge \((s, t)\) is not chosen.
\item For each vertex \(\tilde{u}\) in \(\widetilde{R} \setminus T\) belonging to \(\PlayerMain\), the vertex only has out-edges from \(\widetilde{E}_g\), that is,  the set of out-edges of \(\tilde{u}\) consists of only copies \((\tilde{u}, \tilde{v})\) of edges \((u, v)\) that belong to \(E_{g}\).
  This simulates forcing \(\PlayerMain\) to choose an out-edge from \(E_{g}\) when the token reaches vertex \(u\) in \(\MDP\).
\item For each probabilistic vertex \(\tilde{u}\) in \(\widetilde{R} \setminus T\), in addition to the copy \((\tilde{u}, \tilde{v})\) of every out-edge \((u,v)\) of \(u\) belonging to \(E_g\), the vertex \(\tilde{u}\) also has the set \(\widetilde{E}_{\Random}\) of out-edges, that is an out-edge \((\tilde{u}, v)\) for every out-edge of \((u, v)\) of \(u\) that does not belong to \(E_g\).
  Since \(u \in R \setminus T\) and every vertex in \(R \setminus T\) has at least one out-edge in \(E_{g}\), it follows that \(\tilde{u}\) has at least one out-edge in \(\widetilde{E}_{g}\).
  Thus, with positive probability, the token chooses an edge in \(\widetilde{E}_{g}\) from \(\tilde{u}\).
  If the token does not choose an edge in \(\widetilde{E}_{g}\) from \(\tilde{u}\), then the behaviour of the token from \(\tilde{u}\) is the same as the behaviour of the token from \(u\) if an edge in \(E_{g}\) is not chosen from \(u\).
\item For all vertices \(v\) in \(V\) in \(\MDP_{e}\), the set of out-edges of \(v\) is the same as the set of out-edges of \(v\) in \(\MDP\).
\end{itemize}
We are interested in plays in \(\MDP_{e}\) that start from \(\hat{s}\).
The following observations give more insight on plays in \(\MDP_{e}\).
\begin{enumerate}
\item The vertex \(\hat{s}\) in \(\MDP_{e}\) has no in-edges.
  Thus, starting from \(\hat{s}\), the token moves to \((\widetilde{R} \setminus T) \union V\) in the next step and never returns to \(\hat{s}\).
  All edges that start from a vertex in \(V\) end in a vertex in \(V\).
  Therefore, if the token reaches \(V\), then the token remains in \(V\) forever.
  Thus, every play in \(\MDP_{e}\) starting from \(\hat{s}\) is of the form \(\{\hat{s}\}\ (\widetilde{R} \setminus T)^{\omega}\) or \(\{\hat{s}\} \ (\widetilde{R} \setminus T)^{*} \ V^{\omega}\).
\item 
  We can project vertices in \(\MDP_{e}\) to vertices in \(\MDP\) by ignoring the hats and the tildes on the vertex names.
  Formally, a projection \(\Proj \colon V \union \widetilde{R} \union \{\hat{s}\} \to V\) is defined for all \(v \in V\) as the identity \(\Proj(v) = v\), for all \(v \in \widetilde{R}\) as \(\Proj(\tilde{v}) = v\), and for \(\hat{s}\) as \(\Proj(\hat{s}) = s\).
  Projection of vertices in \(\MDP_{e}\) naturally extends to projection of prefixes  \(\Proj \colon \PrefixSet{\MDP_{e}} \to \PrefixSet{\MDP}\) and projection of plays \(\Proj \colon \PlaySet{\MDP_{e}} \to \PlaySet{\MDP}\) defined as \(\Proj(v_0 v_1 v_2 \cdots) = \Proj(v_0) \Proj(v_{1}) \Proj(v_{2}) \cdots\).
  It can be seen for all \(\Play_{e} \in \PlaySet{\MDP_{e}}\) that the projection \(\Proj(\Play_{e})\) is a valid play in \(\MDP\), that is, if \(v_{i+1}\) is an out-neighbour of \(v_{i}\) in \(\MDP_{e}\), then \(\Proj(v_{i+1})\) is an out-neighbour of \(\Proj(v_{i})\) in \(\MDP\), and further, the sequence of edge payoffs in \(\Play_{e}\) is the same as the sequence of edge payoffs in \(\Proj(\Play_{e})\).
\item For every strategy \(\Strategy[e]\) in \(\MDP_{e}\), we can construct a strategy in \(\Strategy\) in \(\MDP\) (using additional memory as described below) such that the behaviour of the two strategies is the same, that is, the same edge payoffs are seen with the same probabilities in outcomes of \(\Strategy[e]\) in \(\MDP_{e}\) and \(\Strategy\) in \(\MDP\).
  Formally, we have that for all prefixes \(\Prefix\) in \(\MDP_{e}\) ending in a vertex belonging to \(\PlayerMain\), we have \(\Proj(\Strategy[e](\Prefix)) = \Strategy(\Proj(\Prefix))\).
  Given a prefix \(\Prefix\) in \(\MDP_{e}\), the strategy \(\Strategy\) mimics on the projected prefix \(\Proj(\Prefix)\) the projection of the behaviour of \(\Strategy[e]\) on \(\Prefix\).
  Since \(\Strategy[e]\) may have different behaviour depending on whether the token is on \(\tilde{u}\) or \(u\) in \(\MDP_{e}\) for a vertex \(u \in R \setminus T\), or different behaviour for the token at \(s\), \(\tilde{s}\), and \(\hat{s}\) for the vertex \(s\), the strategy \(\Strategy\) requires at most three times as much memory as \(\Strategy[e]\) to distinguish among these cases.
\end{enumerate}

\begin{figure}[t]
  \centering
  \begin{tikzpicture}[node distance=1.75cm,
    inner sep=10pt]

    \tikzset{
      every state/.append style={minimum size=2pt, inner sep=2.25pt},
      random/.append style={diamond, thick, fill=gray!20, inner sep=1.75pt},
      notineg/.append style={thick, dotted},
      inediamond/.append style={dashed},
    }

    \draw[rounded corners=4pt, fill=green, fill opacity=0.1]
    (0,0) rectangle ++(6,3);
    
    \draw[fill=orange, fill opacity=0.1]
    (6,2) {[rounded corners=4pt] -- ++(-5,0) -- ++(0,-2) -- ++(+5,0)} -- cycle;

    \draw[fill=blue, fill opacity=0.1]
    (6,1.1) {[rounded corners=4pt] --
      ++(-1.5,0)  -- 
      ++(0,-1.1) --
      ++(+1.5,0)} --
    cycle;

    \node (s) at (4.5,2.5)   [random, draw] {\(s\)};

    \node (w) at (5.5,0.35)   [state] {\(w\)};
    
    \node (t) at (4,1.5)   [state] {\(t\)};

    \node (u) at (2,1)   [state] {\(u\)};

    \node (v) at (3.5,0.5)   [random, draw] {\(v\)};

    \node (x) at (0.5,1)   [state] {\(x\)};
    \node (y) at (5.5, 2.5)   [state] {\(y\)};

    \node (V) at (0.3,2.7)    {\(V\)};
    
    \node (R) at (1.25, 1.7)    {\(R\)};
    
    \node (T) at (4.75, 0.8)    {\(T\)};
    
    \draw
    (s) edge[double] (t)
    (s) edge[notineg] (y)
    (s) edge[loop left, notineg] (s)
    
    (x) edge[loop below, notineg] (x)
    
    (t) edge[bend left] (w)
    (t) edge (v)
    
    (u) edge[loop above, notineg] (u)
    (u) edge[notineg] (x)
    (u) edge[bend left] (v)

    (w) edge[bend right=10, notineg] (v)
    
    (v) edge[bend left, notineg] (u)
    (v) edge[bend right=10] (w)
    (v) edge[out=90, in=90, looseness=1.25, notineg] (x)
    ;

  \end{tikzpicture}
  \caption{%
    An MDP \(\MDP\) with \(T = \{w\}\), \(R = \{t, u, v, w\}\), and \(E_{g} = \{(u,v), (t, v), (t, w), (v, w)\}\). We want to determine if the edge \((s, t)\) is good with respect to \(E_{g}\). Edges in \(E_{g}\) are solid and edges not in \(E_{g}\) are dotted.
  }
  \label{fig:sdfpr-isgoodedge-m}
\end{figure}

\begin{figure}[t]
  \centering
  \begin{tikzpicture}[node distance=1.75cm,
    inner sep=10pt]

    \tikzset{
      every state/.append style={minimum size=2pt, inner sep=2.25pt},
      random/.append style={diamond, thick, fill=gray!20, inner sep=1.75pt},
      notineg/.append style={thick, dotted},
      inediamond/.append style={dashed},
    }

    \draw[rounded corners=4pt, fill=green, fill opacity=0.1]
    (0,0) rectangle ++(6,3);
    
    \draw[fill=orange, fill opacity=0.1]
    (6,2) {[rounded corners=4pt] --
      ++(-5,0)  -- 
      ++(0,-4) --
      ++(+5,0)} --
    cycle;

    \draw[fill=blue, fill opacity=0.1]
    (6,1.1) {[rounded corners=4pt] --
      ++(-1.5,0)  -- 
      ++(0,-2.2)} --
    ++(+1.5,0) --
    cycle;

    \node (s) at (4.5,2.5)   [random, draw] {\(s\)};
    \node (hats) at (4.5,-2.5)   [random, draw] {\(\hat{s}\)};

    \node (w) at (5.5,0.35)   [state] {\(w\)};
    \node (tildew) at (5.5,-0.35)   [state] {\(\tilde{w}\)};
    
    \node (t) at (4,1.5)   [state] {\(t\)};
    \node (tildet) at (4,-1.5)   [state] {\(\tilde{t}\)};

    \node (u) at (2,1)   [state] {\(u\)};
    \node (tildeu) at (2,-1)   [state] {\(\tilde{u}\)};  

    \node (v) at (3.5,0.5)   [random, draw] {\(v\)};
    \node (tildev) at (3.5,-0.5)   [random, draw] {\(\tilde{v}\)};

    \node (x) at (0.5,1)   [state] {\(x\)};
    \node (y) at (5.5, 2.5)   [state] {\(y\)};

    \node (V) at (0.3,2.7)    {\(V\)};
    
    \node (R) at (1.25, 1.7)    {\(R\)};
    \node (tildeR) at (1.25,-1.7)    {\(\widetilde{R}\)};  
    
    \node (T) at (4.75, 0.8)    {\(T\)};
    \node (tildeT) at (4.75,-0.8)    {\(\widetilde{T}\)};  
    
    \draw
    (s) edge[double] (t)
    (s) edge[notineg] (y)
    (s) edge[loop left, notineg] (s)
    
    (x) edge[loop below, notineg] (x)
    
    (t) edge[bend left] (w)
    (t) edge (v)
    (tildet) edge[bend right] (tildew)
    (tildet) edge (tildev)
    
    (hats) edge[double] (tildet)
    (hats) edge[bend right=75, inediamond] (s)
    (hats) edge[bend right=90, inediamond] (y)

    (u) edge[loop above, notineg] (u)
    (u) edge[notineg] (x)
    (u) edge[bend left] (v)

    (w) edge[bend right=10, notineg] (v)
    (tildeu) edge[bend right] (tildev)

    (v) edge[bend left, notineg] (u)
    (v) edge[bend right=10] (w)
    (v) edge[out=90, in=90, looseness=1.25, notineg] (x)

    (tildev) edge[bend left, inediamond] (u)
    (tildev) edge[bend left=15] (tildew)
    (tildev) edge[bend left=20, inediamond] (x)
    ;
  \end{tikzpicture}
  \caption{%
    Construction of \(\MDP_{e}'\).
    The vertices and edges present in \(\MDP\) remain unchanged.
    A copy \(\{\tilde{t}, \tilde{u}, \tilde{v}, \tilde{w}\}\) of every vertex in \(R\) is added. 
    A copy \(\hat{s}\) of \(s\) is also added.
    Edges in \(\widetilde{E}_{\Random} \union \widehat{E}_{\Random}\) are dashed.
  }
  \label{fig:sdfpr-isgoodedge-me-intermediate}
\end{figure}

\begin{figure}[t]
  \centering
  \begin{tikzpicture}[node distance=1.75cm,
    inner sep=10pt]

    \tikzset{
      every state/.append style={minimum size=2pt, inner sep=2.25pt},
      random/.append style={diamond, thick, fill=gray!20, inner sep=1.75pt},
      notineg/.append style={thick, dotted},
      inediamond/.append style={dashed},
    }

    \draw[rounded corners=4pt, fill=green, fill opacity=0.1]
    (0,0) rectangle ++(6,3);
    
    \draw[fill=orange, fill opacity=0.1]
    (6,2) {[rounded corners=4pt] --
      ++(-5,0)  -- 
      ++(0,-4) --
      ++(+5,0)} --
    cycle;

    \draw[fill=blue, fill opacity=0.1]
    (6,1.1) {[rounded corners=4pt] --
      ++(-1.5,0)  -- 
      ++(0,-2.2)} --
    ++(+1.5,0) --
    cycle;

    \node (s) at (4.5,2.5)   [random, draw] {\(s\)};
    \node (hats) at (4.5,-2.5)   [random, draw] {\(\hat{s}\)};

    \node (w) at (5.5,0)   [state] {\(w\)};
    
    \node (t) at (4,1.5)   [state] {\(t\)};
    \node (tildet) at (4,-1.5)   [state] {\(\tilde{t}\)};

    \node (u) at (2,1)   [state] {\(u\)};
    \node (tildeu) at (2,-1)   [state] {\(\tilde{u}\)};  

    \node (v) at (3.5,0.5)   [random, draw] {\(v\)};
    \node (tildev) at (3.5,-0.5)   [random, draw] {\(\tilde{v}\)};

    \node (x) at (0.5,1)   [state] {\(x\)};
    \node (y) at (5.5, 2.5)   [state] {\(y\)};

    \node (V) at (0.3,2.7)    {\(V\)};
    
    \node (R) at (1.25, 1.7)    {\(R\)};
    \node (tildeR) at (1.25,-1.7)    {\(\widetilde{R}\)};  
    
    \node (T) at (4.75, 0.8)    {\(T\)};
    
    \draw
    (s) edge[double] (t)
    (s) edge[notineg] (y)
    (s) edge[loop left, notineg] (s)
    
    (x) edge[loop below, notineg] (x)
    
    (t) edge[bend left] (w)
    (t) edge (v)
    (tildet) edge[bend right] (w)
    (tildet) edge (tildev)
    
    (hats) edge[double] (tildet)
    (hats) edge[bend right=75, inediamond] (s)
    (hats) edge[bend right=90, inediamond] (y)

    (u) edge[loop above, notineg] (u)
    (u) edge[notineg] (x)
    (u) edge[bend left] (v)

    (w) edge[bend right=10, notineg] (v)
    (tildeu) edge[bend right] (tildev)

    (v) edge[bend left, notineg] (u)
    (v) edge[bend right=10] (w)
    (v) edge[out=90, in=90, looseness=1.25, notineg] (x)

    (tildev) edge[bend left, inediamond] (u)
    (tildev) edge[bend right=15] (w)
    (tildev) edge[bend left=20, inediamond] (x)
    ;

  \end{tikzpicture}
  \caption{%
    Obtaining \(\MDP_{e}\) from \(\MDP_{e}'\) by merging vertex \(\tilde{w}\) in \(\widetilde{T}\) with \(w\) in \(T\).
  }
  \label{fig:sdfpr-isgoodedge-me}
\end{figure}

\subparagraph*{Proof of correctness of \Cref{alg:sure-dirfwmp-positive-reach}.}
We prove the correctness of \Cref{alg:sure-dirfwmp-positive-reach} in \Cref{thm:sdfpr-correctness}.
Before that, we prove the correctness of \(\IsGoodEdgeAlgo\) in \Cref{lem:sdfpr-isgoodedge-correctness}, and prove some intermediate results in \Cref{prop:sdfpr-isgoodvertex,prop:sdfpr-all-inedges-r-bad,lem:sdfpr-bad-edge-is-losing}.
\begin{lemma}\label{lem:sdfpr-isgoodedge-correctness}
  Given \(E_{g}\) and an edge \(e = (s, t)\) with \(s \in V \setminus T\) and \(t \in R\), the procedure \(\IsGoodEdgeAlgo(e, E_{g})\) correctly computes whether \(e\) is winning for sure-\(\DirFWMP(\WindowLength, \GuaranteeThreshold)\)-positive-\(\ReachObj_{\{e\} \union E_{g}}(T)\) in \(\MDP\).
\end{lemma}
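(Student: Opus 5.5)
We prove both directions by translating strategies along the projection \(\Proj\) from \(\MDP_{e}\) to \(\MDP\). The claim to establish is that \(\hat{s}\) is winning for sure-\(\DirFWMP(\WindowLength, \GuaranteeThreshold)\) in \(\MDP_{e}\) if and only if the edge \(e\) is winning for sure-\(\DirFWMP(\WindowLength, \GuaranteeThreshold)\)-positive-\(\ReachObj_{\{e\} \union E_{g}}(T)\) in \(\MDP\).

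The key structural fact is the shape of plays from \(\hat{s}\) noted in the observations: they lie in \(\{\hat{s}\}(\widetilde{R}\setminus T)^{\omega}\) or in \(\{\hat{s}\}(\widetilde{R}\setminus T)^{*}V^{\omega}\). A prefix that stays in the tilde layer projects to a path of \(\MDP\) that starts with \(e\) and then uses only edges of \(E_{g}\). The token leaves the tilde layer either by hitting a merged vertex of \(T\), which is a success for edge-restricted reachability, or through an edge of \(\widetilde{E}_{\Random}\union\widehat{E}_{\Random}\). The latter edges leave only from probabilistic vertices and correspond to the environment deviating from \(\{e\}\union E_{g}\).

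\emph{(\(\Leftarrow\)).} Suppose \(\Strategy\) is winning from \(e\) in \(\MDP\) for sure-\(\DirFWMP\)-positive-\(\ReachObj_{\{e\}\union E_{g}}(T)\). We define \(\Strategy[e]\) in \(\MDP_{e}\) by \(\Strategy[e](\Prefix)=\) the lift of \(\Strategy(\Proj(\Prefix))\), placed in the layer in which the token currently is. The delicate point is a \(\PlayerMain\) vertex \(\tilde{u}\) in the tilde layer, which has only \(\widetilde{E}_{g}\) out-edges, so \(\Strategy\) must choose an \(E_{g}\)-edge there. We arrange this by first normalising \(\Strategy\). Fix one finite path \(\rho\) of positive probability that witnesses \(\ReachObj_{\{e\}\union E_{g}}(T)\). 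Whenever the history so far is a prefix of an \(E_{g}\)-only continuation, we may assume \(\Strategy\) picks an \(E_{g}\)-edge. To justify this, we use that every \(u\in R\setminus T\) is winning for sure-\(\DirFWMP\)-positive-\(\ReachObj_{E_g}(T)\) via its good edge, and then splice strategies at points where all windows are closed. This splicing is the part I expect to be the main obstacle: the windows still open at the moment of deviation must close, and after splicing positive probability of the \(E_g\)-route must be retained. A cleaner route is to take the tilde-layer choice from \(\Strategy\) itself whenever \(\Strategy\)'s choice is an \(E_{g}\)-edge. Otherwise the continuation from that history has probability zero of \(E_g\)-restricted success. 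In that case we redirect to the original layer, which is harmless because plays can still continue in \(V\) satisfying \(\DirFWMP\) (all vertices are sure-\(\DirFWMP\) winning). If this redirection is not available, we argue instead via the set of histories that are still successful and restrict attention to those. Every outcome of \(\Strategy[e]\) projects to an outcome of \(\Strategy\), and payoffs are preserved under projection. Hence all \(\GuaranteeThreshold\)-windows close within \(\WindowLength\) steps, so \(\hat{s}\) is winning.

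\emph{(\(\Rightarrow\)).} Given \(\Strategy[e]\) winning for sure-\(\DirFWMP\) from \(\hat{s}\), we build \(\Strategy\) in \(\MDP\) by observation~3, using at most three times the memory. Outcomes of \(\Strategy\) are projections of outcomes of \(\Strategy[e]\) and carry the same payoffs, so sure-\(\DirFWMP\) transfers directly. For positive reachability, we follow in \(\MDP_{e}\) the path that stays in the tilde layer. At each probabilistic tilde vertex it takes an \(\widetilde{E}_{g}\)-edge, which exists because every \(u\in R\setminus T\) has an out-edge in \(E_g\) and is taken with probability at least \(\ProbabilityFunction_{\min}\). At \(\PlayerMain\) tilde vertices it is forced into \(\widetilde{E}_{g}\). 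This path cannot remain in \(\widetilde{R}\setminus T\) forever, because \(E_{g}\) was built incrementally and each edge of \(E_g\) points into the part of \(R\) added earlier, which gives a well-founded rank decreasing to \(T\). So within at most \(\abs{R}\) steps the path reaches \(T\) with probability at least \(\ProbabilityFunction_{\min}^{\abs{R}}>0\). Its projection starts with \(e\), uses only \(\{e\}\union E_{g}\), and reaches \(T\), which gives positive-\(\ReachObj_{\{e\}\union E_{g}}(T)\).
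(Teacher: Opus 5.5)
\textbf{The (\(\Rightarrow\)) direction.} Your well-foundedness claim is false. An edge \((a,b)\) enters \(E_g\) as soon as \(b\in R\), even if \(a\) entered \(R\) before \(b\). In the paper's own example \((v_2,v_3)\in E_g\), although \(v_3\) was added to \(R\) after \(v_2\). Worse, with all payoffs \(0\) the loop can add \((a,x)\), \((b,a)\), \((a,b)\) in this order (with \(x\in T\) and \(a,b\) vertices of \(\PlayerMain\)), so \(E_g\) can contain a \(\PlayerMain\)-controlled cycle. A winning \(\sigma_e\) (e.g.\ for a later edge \((c,a)\)) may then circle \(\tilde a\,\tilde b\,\tilde a\cdots\) forever, and its projection never reaches \(T\). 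So projecting \(\sigma_e\) wholesale does not give positive reachability, and the bound \(\ProbabilityFunction_{\min}^{\abs{R}}\) has no basis.

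The missing idea is the paper's strategy switch. Follow \(\sigma_e\) only until all \(\GuaranteeThreshold\)-windows are closed; this takes at most \(\WindowLength\) steps, and with positive probability only \(\widetilde{E}_g\)-edges are used meanwhile. If \(T\) was not hit, the token sits at some \(\tilde s_1\) with \(s_1\in R\setminus T\) and no open window. There you switch to a strategy for an \(E_g\)-edge \(e_i\) leaving \(s_1\). By induction on \(\abs{E_g}\), this strategy is winning for positive reachability restricted to \(\{e_i\}\union\{e_1,\dots,e_{i-1}\}\). The induction uses only the (\(\Rightarrow\)) half of the hypothesis, so this direction does hold.

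\textbf{The (\(\Leftarrow\)) direction.} Your fallback ``redirect to the original layer'' is unavailable. A \(\PlayerMain\) vertex of \(\widetilde R\setminus T\) has only \(\widetilde E_g\)-edges in \(\MDP_e\), and your last sentence there is not an argument. In fact no repair exists, because this direction is false.

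Take \(\WindowLength=3\), \(\GuaranteeThreshold=0\), \(T=\{x\}\), \(\PlayerMain\) vertices \(s,u,x,z\), and a probabilistic vertex \(t\) moving to \(x\) or \(u\) with probability \(1/2\) each. The edges are \((s,t)\) with payoff \(-10\), \((t,x)\) with \(20\), \((t,u)\) with \(0\), \((u,x)\) with \(0\), \((u,z)\) with \(20\), and \(0\)-loops on \(x\) and \(z\).
\begin{itemize}
\item Every vertex is sure-\(\DirFWMP(3,0)\) winning (at \(u\), go to \(z\)).
\item The loop may add \((u,x)\), \((t,x)\), \((t,u)\) to \(E_g\), and each of these checks succeeds, before testing \(e=(s,t)\).
\item In \(\MDP_e\), \(\tilde u\) has \((\tilde u,x)\) as its only edge. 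So the outcome \(\hat s\,\tilde t\,\tilde u\,x^\omega\) has probability \(1/2\), \(\PlayerMain\) has no choices along it, and the window at \(\hat s\) stays open. Hence \(\IsGoodEdgeAlgo\) returns false.
\item In \(\MDP\), the strategy ``from \(u\) go to \(z\)'' satisfies \(\DirFWMP(3,0)\) on both outcomes \(s\,t\,x^\omega\) and \(s\,t\,u\,z^\omega\). The first outcome reaches \(T\) along \((s,t),(t,x)\in\{e\}\union E_g\).
\end{itemize}
So \(e\) is winning, yet the procedure rejects it, and \(s\) is wrongly left out of \(R\).

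The paper's own inductive step for this direction fails at the same point. It equates losing in \(\MDP_e\), where \(\PlayerMain\) must stay in \(\{e\}\union E_g\) on \emph{every} branch that has followed it so far, with losing for edge-restricted positive reachability, which constrains only one positive-probability path. You were right that this is the main obstacle. What is needed is a different construction of \(\MDP_e\) or a corrected statement, not a better proof.
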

\begin{proof}
  Recall that \(E_{g}\) is empty at the start of the algorithm and that edges are added to \(E_{g}\) over the course of the algorithm.
  We prove the statement of the lemma by induction on the size of \(E_{g}\).

\noindent  \textit{Base case:} 
  For the base case, we suppose that \(E_{g} = \emptyset\).
  The procedure \(\IsGoodEdgeAlgo\) is called on an edge \(e = (s, t)\) with \(s \in V \setminus T\) and \(t \in R\).
  Since \(E_{g}\) is empty, we have that \(R = T\), and thus \(t \in T\).
  We show that \(\IsGoodEdgeAlgo(e, \emptyset)\) returns true if and only if the edge \((s, t)\) is winning for sure-\(\DirFWMP(\WindowLength, \GuaranteeThreshold)\)-positive-\(\ReachObj_{\{e\}}(T)\) in \(\MDP\).

  \((\Rightarrow)\)
  Suppose \(\IsGoodEdgeAlgo(e, \emptyset)\) returns true, that is, the vertex \(\hat{s}\) is winning for sure-\(\DirFWMP(\WindowLength, \GuaranteeThreshold)\) in \(\MDP_{e}\).
If \(s\) is a vertex belonging to \(\PlayerMain\), then \(\hat{s}\) has only one out-edge \((\hat{s}, \tilde{t})\) where \(\tilde{t}\) is the same as \(t\) since \(t \in T\).
    Since \(\hat{s}\) is winning for sure-\(\DirFWMP(\WindowLength, \GuaranteeThreshold)\), it follows that there exists a strategy \(\Strategy[e]\) of \(\PlayerMain\) that moves the token along the edge \((\hat{s}, t)\) and then continues the play from \(t\) in a way that satisfies sure-\(\DirFWMP(\WindowLength, \GuaranteeThreshold)\).
    On the other hand, if \(s\) is a probabilistic vertex, then \(\hat{s}\) has an out-edge \((\hat{s}, t)\) and may have other out-edges.
    Since \(\hat{s}\) is winning for sure-\(\DirFWMP(\WindowLength, \GuaranteeThreshold)\), we have that no matter what edge is chosen from \(\hat{s}\), player \(\PlayerMain\) has a strategy that continues the play such that it satisfies sure-\(\DirFWMP(\WindowLength, \GuaranteeThreshold)\).
    In particular, there exists a strategy \(\Strategy[e]\) of \(\PlayerMain\)  that continues the play from \((\hat{s}, t)\) such that it satisfies sure-\(\DirFWMP(\WindowLength, \GuaranteeThreshold)\).
In both cases, the strategy \(\Strategy\)  that mimics the behaviour of \(\Strategy[e]\) (that is \(\Strategy(s \cdot t \cdot  \Prefix) = \Strategy[e](\hat{s} \cdot t \cdot \Prefix)\) for all prefixes \(\Prefix\)) is winning for sure-\(\DirFWMP(\WindowLength, \GuaranteeThreshold)\) from \(e\) in \(\MDP\).
    Since the token reaches \(T\) immediately after the initial edge \(e\), we have that \(\Strategy\) is also winning for sure-\(\DirFWMP(\WindowLength, \GuaranteeThreshold)\)-positive-\(\ReachObj_{\{e\}}(T)\) from \(e\) in \(\MDP\).

  \((\Leftarrow)\)
  Conversely, suppose that the edge \((s, t)\) is winning for sure-\(\DirFWMP(\WindowLength, \GuaranteeThreshold)\)-positive-\(\ReachObj_{\{e\}}(T)\) in \(\MDP\).
  Then, there exists a strategy \(\Strategy\) such that every outcome of \(\Strategy\) in \(\MDP\) starting from \(e\) satisfies \(\DirFWMP(\WindowLength, \GuaranteeThreshold)\).
  We use \(\Strategy\) to construct a strategy \(\Strategy[e]\) that is winning for sure-\(\DirFWMP(\WindowLength, \GuaranteeThreshold)\) from \(\hat{s}\) in \(\MDP_{e}\).
  If \(s\) belongs to \(\PlayerMain\), then so does \(\hat{s}\), and let \(\Strategy[e]\) be defined as \(\Strategy[e](\hat{s}) = t\), and \(\Strategy[e](\hat{s} \cdot t \cdot \Prefix) = \Strategy(s \cdot t \cdot \Prefix)\) for all prefixes \(\Prefix\).
  On the other hand, if \(s\) is a probabilistic vertex, in which case so is \(\hat{s}\), then let \(\Strategy[e]\) be defined as \(\Strategy[e](\hat{s}\cdot \Prefix) = \Strategy(s \cdot \Prefix)\) for all \(\Prefix\).
  In both cases, an outcome of \(\Strategy\) visits the same edge payoffs with the same probability as an outcome of \(\Strategy[e]\).
  Since \(\Strategy\) is winning for sure-\(\DirFWMP(\WindowLength, \GuaranteeThreshold)\) from \(s\), we also have that \(\Strategy[e]\) is winning for sure-\(\DirFWMP(\WindowLength, \GuaranteeThreshold)\) from \(\hat{s}\).

\noindent  \textit{Inductive step:}
  For the inductive step, suppose \(\abs{E_{g}} = k > 0\), and we label the edges of \(E_{g}\) as \(e_{1}, e_{2}, \ldots, e_{k}\) in the order they were added to \(E_{g}\).
  We want to verify if an edge \(e = (s, t)\) for \(s \in V \setminus T\) and \(t \in R\) is good with respect to \(E_{g}\).

  \((\Rightarrow)\)
  Suppose the procedure \(\IsGoodEdgeAlgo(e, E_{g})\) returns true, that is, the vertex \(\hat{s}\) is winning for sure-\(\DirFWMP(\WindowLength, \GuaranteeThreshold)\) in \(\MDP_{e}\), and thus, there exists a strategy \(\Strategy[e]\) of \(\PlayerMain\) that is winning for sure-\(\DirFWMP(\WindowLength, \GuaranteeThreshold)\) in \(\MDP_{e}\) from \(\hat{s}\).
  We construct a strategy from the edge \(e\) that is winning for sure-\(\DirFWMP(\WindowLength, \GuaranteeThreshold)\)-positive-\(\ReachObj_{\{e\} \union E_{g}}(T)\) in \(\MDP\).

  Suppose \(\PlayerMain\) follows the strategy \(\Strategy[e]\) from \(\hat{s}\) until the first time in the play where all \(\GuaranteeThreshold\)-windows are closed.
  For all outcomes of \(\Strategy[e]\), this happens in at most \(\WindowLength\) steps from the start of the play since \(\Strategy[e]\) is winning for sure-\(\DirFWMP(\WindowLength, \GuaranteeThreshold)\).
  Moreover, since every vertex in \(\widetilde{R} \setminus T\) belonging to \(\PlayerMain\) contains has all out-edges in \(\widetilde{E}_{g}\) and every probabilistic vertex in \(\widetilde{R} \setminus T\) has at least one out-edge in \(\widetilde{E}_{g}\), starting from \(\hat{s}\) we have that at least one of the following hold:
  \begin{enumerate*}[label=(\emph{\roman*})]
  \item with positive probability the token reaches \(T\) within at most \(\WindowLength\) steps while taking only edges in \(\{(\hat{s}, \tilde{t})\} \union \widetilde{E}_{g}\), or
  \item the token remains in \(\widetilde{R} \setminus T\) until all \(\GuaranteeThreshold\)-windows are closed while taking only edges in \(\{(\hat{s}, \tilde{t})\} \union \widetilde{E}_{g}\).
  \end{enumerate*}
  
  Suppose the token reaches \(T\) while visiting only edges in \(\{(\hat{s}, \tilde{t})\} \union \widetilde{E}_{g}\), that is, we have that \(\ReachObj_{\{(\hat{s}, \tilde{t})\} \union \widetilde{E}_{g}}(T)\) is satisfied and the token has entered \(V\).
  From Observation~1, the token now behaves the same as it would in \(\MDP\).
  If \(\PlayerMain\) continues to follow the strategy \(\Strategy[e]\) for the rest of the play, then since \(\Strategy[e]\) is winning for sure-\(\DirFWMP(\WindowLength, \GuaranteeThreshold)\), we have that all \(\GuaranteeThreshold\)-windows are closed in at most \(\WindowLength\) steps.
  Thus, in this case, sure-\(\DirFWMP(\WindowLength, \GuaranteeThreshold)\)-positive-\(\ReachObj_{\{(\hat{s}, \tilde{t})\} \union \widetilde{E}_{g}}(T)\) is satisfied from \(\hat{s}\) in \(\MDP_{e}\).
  If \(\PlayerMain\) follows the projection \(\Strategy\) (as described in Observation~3) in \(\MDP\) of the strategy \(\Strategy[e]\), then we see that \(\Strategy\) is winning for sure-\(\DirFWMP(\WindowLength, \GuaranteeThreshold)\)-positive-\(\ReachObj_{\{e\} \union E_{g}}(T)\) in \(\MDP\) from \((s, t)\).

  Otherwise, suppose that the token remains in \(\widetilde{R} \setminus T\) until all \(\GuaranteeThreshold\)-windows are closed for the first time while choosing only edges in \(\{(\hat{s}, \tilde{t})\} \union \widetilde{E}_{g}\).
  Suppose the token is on a vertex \(\tilde{s}_{1} \in \widetilde{R} \setminus T\) for some \(s_{1} \in R \setminus T\) when all \(\GuaranteeThreshold\)-windows are closed for the first time.
  Since \(s_{1}\) belongs to \(R \setminus T\), we have that \(s_{1}\) has at least one out-edge \(e_{i}\) in \(E_{g}\) for \(1 \le i \le k\), and let this edge be \(e_{i} = (s_1, t_1)\) for some \(t_1\) in \(\widetilde{R} \setminus T\).
  The edge \(e_i\) belongs to \(E_g\) because the procedure \(\IsGoodEdgeAlgo(e_i, E_g')\) returned true for \(E_{g}' = \{e_1, e_2, \ldots, e_{i - 1}\}\).
  By the induction hypothesis, we have that the edge \((s_1, t_1)\) is winning for sure-\(\DirFWMP(\WindowLength, \GuaranteeThreshold)\)-positive-\(\ReachObj_{E_{g}'}(T)\) in \(\MDP\).
  Using Observation~3, we have that  starting from \((s, t)\), player \(\PlayerMain\) has a strategy that reaches \(s_{1}\) with positive probability while only visiting edges in \(\{e\} \union E_{g}\) while also surely closing all \(\GuaranteeThreshold\)-windows in at most \(\WindowLength\) steps.
  Then, from the induction hypothesis, from the edge \((s_{1}, t_{1})\), player \(\PlayerMain\) has a strategy to satisfy sure-\(\DirFWMP(\WindowLength, \GuaranteeThreshold)\)-positive-\(\ReachObj_{\{(s_{1}, t_{1})\} \union E_{g}'}(T)\) in \(\MDP\).
  Thus, the edge \(e\) is winning for sure-\(\DirFWMP(\WindowLength, \GuaranteeThreshold)\)-positive-\(\ReachObj_{\{e\} \union E_{g}}(T)\) in \(\MDP\).

  \((\Leftarrow)\)
  Suppose \(\IsGoodEdgeAlgo(e, E_{g})\) returns false, that is, the vertex \(\hat{s}\) is not winning for sure-\(\DirFWMP(\WindowLength, \GuaranteeThreshold)\) in \(\MDP_{e}\).
  Since the \(\widetilde{R} \union \{\hat{s}\}\) part of \(\MDP_{e}\) simulates forcing \(\PlayerMain\) to only choose edges in \(\{e\} \union E_{g}\), we get that starting from \(s\) in \(\MDP\), if \(\PlayerMain\) only chooses edges in \(\{e\} \union E_{g}\) from vertices belonging to \(\PlayerMain\), then \(\PlayerMain\) cannot satisfy sure-\(\DirFWMP(\WindowLength, \GuaranteeThreshold)\),
  Hence, starting from \(s\) in \(\MDP\), it is not possible for \(\PlayerMain\) to satisfy sure-\(\DirFWMP(\WindowLength, \GuaranteeThreshold)\)-positive-\(\ReachObj_{\{e\} \union E_{g}}(T)\).
  Thus, the edge \((s, t)\) is not winning for sure-\(\DirFWMP(\WindowLength, \GuaranteeThreshold)\)-positive-\(\ReachObj_{\{e\} \union E_{g}}(T)\) in \(\MDP\).
\end{proof}

\begin{proposition}\label{prop:sdfpr-isgoodvertex}
  At the end of \Cref{alg:sure-dirfwmp-positive-reach}, if \(v\) is a vertex with an out-edge \((v, u)\) that is winning for sure-\(\DirFWMP(\WindowLength, \GuaranteeThreshold)\)-positive-\(\ReachObj_{E_{g}}(T)\) in \(\MDP\), then \(v\) is winning for sure-\(\DirFWMP(\WindowLength, \GuaranteeThreshold)\)-positive-\(\ReachObj_{E_{g}}(T)\) in \(\MDP\).
\end{proposition}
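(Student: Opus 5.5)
The plan is a case split on who owns \(v\), gluing strategies after the first step. Fix a strategy \(\Strategy[(v,u)]\) witnessing that the edge \((v,u)\) is winning for sure-\(\DirFWMP(\WindowLength, \GuaranteeThreshold)\)-positive-\(\ReachObj_{E_{g}}(T)\). That is, every outcome that starts with the edge \((v,u)\) and then follows \(\Strategy[(v,u)]\) satisfies \(\DirFWMP(\WindowLength, \GuaranteeThreshold)\). Moreover, with positive probability such an outcome reaches \(T\) using only edges of \(E_{g}\) before its first visit to \(T\); this includes the initial edge \((v,u)\), which is how I read the hypothesis. If \(v \in T\), the reachability part is trivial, and sure-\(\DirFWMP(\WindowLength, \GuaranteeThreshold)\) from \(v\) holds because, after Line~\ref{alg-line:sdfpr-restrict-sure-dirfwmp}, every vertex of \(\MDP\) is winning for it. So assume \(v \notin T\).

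\emph{Case \(v \in \VerticesMain\).} Define \(\Strategy\) from \(v\) by \(\Strategy(v) = u\) and \(\Strategy(v \cdot u \cdot \Prefix) = \Strategy[(v,u)](v \cdot u \cdot \Prefix)\). The outcomes of \(\Strategy\) from \(v\) are exactly the outcomes of \(\Strategy[(v,u)]\) from the edge \((v,u)\). Hence both the sure and the positive parts transfer verbatim.

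\emph{Case \(v \in \VerticesRandom\).} This is the only real point. The environment may take an edge \((v,u')\) with \(u' \neq u\), and we still need sure-\(\DirFWMP(\WindowLength, \GuaranteeThreshold)\) along such outcomes. Crucially, this includes closing the \(\GuaranteeThreshold\)-window opened at \(v\), whose mean payoff already depends on \(\PayoffFunction(v,u')\). It is therefore not enough to know that \(u'\) is winning for sure-\(\DirFWMP\) on its own. Instead, I will use that, since \(\MDP\) is restricted to \(W_{\sd}\), the vertex \(v\) itself is winning for sure-\(\DirFWMP(\WindowLength, \GuaranteeThreshold)\); fix such a winning strategy \(\Strategy[v]\). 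Define \(\Strategy\) from \(v\) to follow \(\Strategy[(v,u)]\) if the first move is to \(u\), and \(\Strategy[v]\) (with the full history \(v \cdot u' \cdots\)) otherwise.

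Every outcome of \(\Strategy\) from \(v\) is then either an outcome of \(\Strategy[(v,u)]\) starting with the edge \((v,u)\) or an outcome of \(\Strategy[v]\) from \(v\). In both cases it satisfies \(\DirFWMP(\WindowLength, \GuaranteeThreshold)\), so the sure part holds. For the positive part, the edge \((v,u)\) is taken with probability \(\ProbabilityFunction(v)(u) > 0\). Conditioned on that, the outcome reaches \(T\) through \(E_{g}\)-edges with positive probability by the choice of \(\Strategy[(v,u)]\). The product of these is positive, giving positive-\(\ReachObj_{E_{g}}(T)\) from \(v\).

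The main obstacle is exactly the probabilistic case, specifically the window opened at \(v\) on the branches \(u' \neq u\). The plan handles it by invoking a strategy winning from \(v\) itself rather than from the successors. This relies on the restriction to \(\subMDP{\MDP}{W_{\sd}}\) and on \(W_{\sd}\) inducing a subMDP by \Cref{prop:winning-region-sure-objective-closed-under-suffixes-induces-submdp}. I should double-check that the positive-reachability requirement is measured only on the \(u\)-branch, so that the other branches need not use \(E_{g}\)-edges; this matches the definition of \(\ReachObj_{E_{g}}(T)\) as a set of plays, of which we only need positive measure.
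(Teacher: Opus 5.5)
Your proposal is correct and follows essentially the same argument as the paper. You split on whether \(v\) belongs to \(\PlayerMain\) or is probabilistic, take \((v,u)\) and continue with the witnessing strategy, and on the other probabilistic branches fall back on a strategy winning for sure-\(\DirFWMP(\WindowLength, \GuaranteeThreshold)\) from \(v\) itself, which exists because \(\MDP\) has been restricted to \(W_{\sd}\). Your write-up is in fact more careful than the paper's on two points: the fallback must be winning from \(v\) rather than from \(u'\), so that the window opened at \(v\) is closed, and the initial edge \((v,u)\) counts toward \(\ReachObj_{E_{g}}(T)\), which matches how the proposition is used since \((v,u) \in E_{g}\) at the end of the algorithm.
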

\begin{proof}
  Since \(v\) has an out-edge \((v, u)\) that is winning for  sure-\(\DirFWMP(\WindowLength, \GuaranteeThreshold)\)-positive-\(\ReachObj_{E_{g}}(T)\) in \(\MDP\), we have that after taking the edge \((v, u)\), there exists a strategy \(\Strategy\) continuing the play from \(u\) that is winning for sure-\(\DirFWMP(\WindowLength, \GuaranteeThreshold)\)-positive-\(\ReachObj_{E_{g}}(T)\) in \(\MDP\).
  There are two cases depending on whether the vertex \(v\) belongs to \(\VerticesMain\) or \(\VerticesRandom\).
  \begin{itemize}
  \item If \(v \in \VerticesMain\), then \(\PlayerMain\) can move the token to \(u\), and then follow the strategy \(\Strategy\) from \(u\) to satisfy sure-\(\DirFWMP(\WindowLength, \GuaranteeThreshold)\)-positive-\(\ReachObj_{E_{g}}(T)\) in \(\MDP\).
  \item If \(v \in \VerticesRandom\), then the token moves from \(v\) to \(u\) with positive probability, from which \(\PlayerMain\) can follow the strategy \(\Strategy\) from \(u\) to satisfy sure-\(\DirFWMP(\WindowLength, \GuaranteeThreshold)\)-positive-\(\ReachObj_{E_{g}}(T)\) in \(\MDP\).
    If token moves from \(v\) to a vertex other than \(u\), then since \(v\) is winning for sure-\(\DirFWMP(\WindowLength, \GuaranteeThreshold)\), we have that sure-\(\DirFWMP(\WindowLength, \GuaranteeThreshold)\) is still satisfied.
  \end{itemize}
  In both cases, we have that the vertex \(v\) is winning for sure-\(\DirFWMP(\WindowLength, \GuaranteeThreshold)\)-positive-\(\ReachObj_{E_{g}}(T)\) in \(\MDP\).
\end{proof}
\begin{proposition}\label{prop:sdfpr-all-inedges-r-bad}
  At the end of \Cref{alg:sure-dirfwmp-positive-reach}, for all edges \((u, v)\) such that \(u \not\in R\) and \(v \in R\), the edge \((u, v)\) is not winning for sure-\(\DirFWMP(\WindowLength, \GuaranteeThreshold)\)-positive-\(\ReachObj_{E_{g}}(T)\) in \(\MDP\).
\end{proposition}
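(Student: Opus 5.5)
Fix an edge \((u, v)\) with \(u \not\in R\) and \(v \in R\) at the end of \Cref{alg:sure-dirfwmp-positive-reach}, and suppose towards a contradiction that \((u, v)\) is winning for sure-\(\DirFWMP(\WindowLength, \GuaranteeThreshold)\)-positive-\(\ReachObj_{E_{g}}(T)\) in \(\MDP\), where \(E_{g}\) is the final set of good edges. Read this as positive-\(\ReachObj_{\{(u,v)\} \union E_{g}}(T)\) from the edge. This reading is forced: a play starting with the edge \((u, v)\) can only satisfy \(\ReachObj_{E_{g}}(T)\) if its first edge is allowed, and \(u \notin T\) because \(T \subseteq R\). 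The plan is to show that this edge must then have been classified as good at some point, which contradicts \(u \notin R\).

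First, the \textbf{while} loop exits only when every edge \((s, t)\) with \(s \in V \setminus T\) and \(t \in R\) lies in \(E_{g} \union E_{b}\). The edge \((u, v)\) satisfies \(u \in V \setminus T\) and \(v \in R\), so it lies in \(E_{g} \union E_{b}\). It cannot lie in \(E_{g}\): adding an edge \((u, v)\) to \(E_{g}\) also adds \(u\) to \(R\) (Line~\ref{alg-line:sdfpr-add-good-start-vertex}), and \(R\) only grows, whereas \(u \notin R\). Hence \((u, v) \in E_{b}\) at termination.

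Next, I examine when \((u, v)\) entered \(E_{b}\). Whenever an edge is added to \(E_{g}\), the set \(E_{b}\) is reset to \(\emptyset\). So if \((u, v) \in E_{b}\) at termination, its last insertion into \(E_{b}\) happened after the final addition to \(E_{g}\). At that moment, \(\IsGoodEdgeAlgo((u, v), E_{g})\) was called with the same set \(E_{g}\) that the algorithm ends with, and it returned false. By \Cref{lem:sdfpr-isgoodedge-correctness}, this means \((u, v)\) is not winning for sure-\(\DirFWMP(\WindowLength, \GuaranteeThreshold)\)-positive-\(\ReachObj_{\{(u,v)\} \union E_{g}}(T)\). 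This contradicts the assumption.

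The main point needing care is that the set \(R\) used inside \(\IsGoodEdgeAlgo\) at that last call is also final. This holds because \(R = T \union \{ s \suchthat (s, t') \in E_{g}\}\) is determined by \(E_{g}\), so the hypotheses \(s \in V \setminus T\), \(t \in R\) of \Cref{lem:sdfpr-isgoodedge-correctness} are satisfied. The remaining point is the edge-case reading of \(\ReachObj_{E_{g}}\) versus \(\ReachObj_{\{e\} \union E_{g}}\) from an edge, which I expect to be the only subtle step. I would justify it by noting that any play witnessing \(\ReachObj_{E_{g}}(T)\) starting with the edge \((u, v)\) also witnesses \(\ReachObj_{\{(u,v)\} \union E_{g}}(T)\). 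So non-winning for the larger edge set implies non-winning for \(E_{g}\) alone.
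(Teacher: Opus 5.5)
Your proof is correct and takes the same route as the paper: assume the edge is winning, and conclude that Line~\ref{alg-line:sdfpr-add-good-start-vertex} would have put \(u\) into \(R\). The paper states this in one line, while you spell out what that line relies on — the loop's termination condition, the reset of \(E_b\) ensuring that \((u,v)\) was last classified as bad against the final \(E_g\) (and hence the final \(R\)), and the appeal to \Cref{lem:sdfpr-isgoodedge-correctness}.
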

\begin{proof}
  Suppose towards a contradiction that there exists an edge \((u, v)\) such that \(u \not\in R\) and \(v \in R\) and \((u, v)\) is winning for sure-\(\DirFWMP(\WindowLength, \GuaranteeThreshold)\)-positive-\(\ReachObj_{E_{g}}(T)\) in \(\MDP\).
  Then, from Line~\ref{alg-line:sdfpr-add-good-start-vertex} of the algorithm, we have that \(u\) must be included in \(R\), which is a contradiction.
\end{proof}

\begin{lemma}\label{lem:sdfpr-bad-edge-is-losing}
  At the end of \Cref{alg:sure-dirfwmp-positive-reach}, for all vertices \(v\) in MDP \(\MDP\), the vertex \(v\) is winning for sure-\(\DirFWMP(\WindowLength, \GuaranteeThreshold)\)-positive-\(\ReachObj_{E_{g}}(T)\) in \(\MDP\) if and only if \(v\) is winning for sure-\(\DirFWMP(\WindowLength, \GuaranteeThreshold)\)-positive-\(\ReachObj(T)\) in \(\MDP\).
\end{lemma}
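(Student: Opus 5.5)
The forward direction is immediate. \(\ReachObj_{E_{g}}(T) \subseteq \ReachObj(T)\), so any strategy winning for sure-\(\DirFWMP(\WindowLength, \GuaranteeThreshold)\)-positive-\(\ReachObj_{E_{g}}(T)\) is also winning for sure-\(\DirFWMP(\WindowLength, \GuaranteeThreshold)\)-positive-\(\ReachObj(T)\). (Vertices of \(T\) satisfy the restricted reachability trivially, since the restriction only concerns edges taken before the first visit to \(T\).)

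For the converse I would argue by contraposition. Let \(\Strategy\) be a strategy from \(v\) that is winning for sure-\(\DirFWMP(\WindowLength, \GuaranteeThreshold)\)-positive-\(\ReachObj(T)\), and suppose \(v\) is not winning for the \(E_{g}\)-restricted version. By \Cref{lem:sdfab-v-l-bound} and a minimality argument, fix an outcome prefix \(\rho = v_{0} v_{1} \cdots v_{x}\) of \(\Strategy\) that has positive probability and ends at its first visit to \(T\). Scan \(\rho\) backwards from \(v_{x}\) and let \(j\) be the largest index such that \(v_{j}\) is not winning for sure-\(\DirFWMP(\WindowLength, \GuaranteeThreshold)\)-positive-\(\ReachObj_{E_{g}}(T)\). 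Such a \(j\) exists because \(v_{0} = v\) is such a vertex. Moreover \(j < x\), since \(v_{x} \in T\).

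Then \(v_{j+1}\) is winning for the restricted objective, so \(v_{j+1} \in R\) (vertices from which the restricted objective is winnable lie in \(T \cup \{u : (u,w) \in E_{g}\}\), which equals \(R\)). Also \(v_{j} \notin R\): otherwise \(v_{j} \in T\), or it has a good out-edge and is therefore winning by \Cref{prop:sdfpr-isgoodvertex}. Hence \((v_{j}, v_{j+1})\) is an edge from \(V \setminus R\) into \(R\), and by \Cref{prop:sdfpr-all-inedges-r-bad} it is not winning for sure-\(\DirFWMP(\WindowLength, \GuaranteeThreshold)\)-positive-\(\ReachObj_{E_{g}}(T)\).

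The contradiction comes from showing that this edge is in fact winning. Starting from the history \(v_{0} \cdots v_{j} v_{j+1}\), play \(\Strategy\) until all \(\GuaranteeThreshold\)-windows opened so far have closed; by the inductive property of windows this happens within \(\WindowLength\) steps. Then splice in a strategy that wins the restricted objective from the current vertex. To make the splice sound, I would strengthen the choice of \(j\) in the previous step: take \(j\) to be a position with no open \(\GuaranteeThreshold\)-window, and let \(j+1\) be the first subsequent no-open-window position whose vertex is restricted-winning. The segment between them is played by \(\Strategy\), which is safe for \(\DirFWMP\), and the token reaches the restricted-winning vertex with positive probability. Resetting at a no-open-window position is harmless for sure-\(\DirFWMP(\WindowLength, \GuaranteeThreshold)\), and positivity is preserved.

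The main obstacle is the edge-restriction. The segment of \(\rho\) played by \(\Strategy\) may use edges outside \(E_{g}\), so the spliced strategy does not by itself witness \(\ReachObj_{E_{g}}(T)\) from \((v_{j}, v_{j+1})\). I would handle this by working directly with the edge adjacent to \(R\), applying \Cref{prop:sdfpr-all-inedges-r-bad} to the last edge of the segment entering \(R\), and invoking the termination condition of the \textbf{while} loop. At termination, every edge from \(V \setminus T\) into \(R\) lies in \(E_{g} \cup E_{b}\), and edges in \(E_{b}\) were checked against the final \(E_{g}\) (since \(E_{b}\) is reset whenever \(E_{g}\) grows). By \Cref{lem:sdfpr-isgoodedge-correctness}, the entering edge, being in \(E_{b}\), is not winning with respect to the final \(E_{g}\). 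A sure-safe continuation through \(R\) that uses good edges, glued at a no-open-window point, should exhibit it as winning, which gives the contradiction. Making this gluing precise, so that the combined strategy stays on \(E_{g}\)-edges after entering \(R\) while keeping every \(\GuaranteeThreshold\)-window closed within \(\WindowLength\) steps, is the delicate step.
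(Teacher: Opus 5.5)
Your forward direction is fine, and the facts you derive about \((v_{j}, v_{j+1})\) are correct: it goes from \(V \setminus R\) into \(R\), so it lies in \(E_{b}\) and is bad with respect to the final \(E_{g}\). The gap is the step you flag yourself. You never exhibit a strategy showing this edge is good, and nothing in your setup lets you build one from \(\Strategy\).

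Choosing \(j\) by the status of vertices only guarantees that \(v_{j+1}, \ldots, v_{x}\) are restricted-winning. The edges of \(\rho\) between them may still lie outside \(E_{g}\), so \(\Strategy\) does not witness \(\ReachObj_{\{(v_{j}, v_{j+1})\} \cup E_{g}}(T)\). Switching at \(v_{j+1}\) to a restricted-winning strategy is unsound, because the window opened at \(v_{j}\) (and earlier ones) may still be pending. Your repair via no-open-window positions replaces the edge by a segment that may again use edges outside \(E_{g}\). Then \Cref{prop:sdfpr-all-inedges-r-bad} and \Cref{lem:sdfpr-isgoodedge-correctness} no longer apply, and splicing there only re-proves that \(v_{j}\) wins the unrestricted objective.

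The missing idea is to pick the edge by membership in \(E_{g}\), not by the status of its endpoints. Let \((v_{i}, v_{i+1})\) be the \emph{last} edge of \(\rho\) not in \(E_{g}\). It exists, since otherwise \(\Strategy\) itself witnesses the restricted objective from \(v\). Then \(v_{i} \notin T\), and \(v_{i+1} \in R\): either \(v_{i+1} = v_{x} \in T\), or \(v_{i+1}\) is the source of the \(E_{g}\)-edge \((v_{i+1}, v_{i+2})\).

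The source \(v_{i}\) may well lie in \(R\), but that does not matter. The termination condition of the \textbf{while} loop covers every edge from \(V \setminus T\) into \(R\), so \((v_{i}, v_{i+1}) \in E_{b}\). Since \(E_{b}\) is reset whenever \(E_{g}\) grows, it was tested against the final \(E_{g}\). By \Cref{lem:sdfpr-isgoodedge-correctness}, it is therefore not winning for sure-\(\DirFWMP(\WindowLength, \GuaranteeThreshold)\)-positive-\(\ReachObj_{\{(v_{i}, v_{i+1})\} \cup E_{g}}(T)\).

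With this choice no gluing is needed. The continuation \(\Strategy[i](v_{i} \cdot \Prefix) = \Strategy(v_{0} \cdots v_{i} \cdot \Prefix)\) is itself a witness that the edge is winning, for two reasons. First, each of its outcomes is a suffix of an outcome of \(\Strategy\), so it satisfies \(\DirFWMP(\WindowLength, \GuaranteeThreshold)\) by closure under suffixes; windows pending before \(v_{i}\) are irrelevant to a play that starts at \(v_{i}\). Second, \(v_{i} \cdots v_{x}\) is a positive-probability path reaching \(T\) using only edges in \(\{(v_{i}, v_{i+1})\} \cup E_{g}\).

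This contradiction is exactly the paper's proof. The detours through \Cref{lem:sdfab-v-l-bound}, \Cref{prop:sdfpr-isgoodvertex}, and \Cref{prop:sdfpr-all-inedges-r-bad} are not needed.
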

\begin{proof}
  Since satisfaction of \(\ReachObj_{E_{g}}(T)\) implies satisfaction of \(\ReachObj(T)\), the forward direction is immediate.
  We prove the converse, that is, if  \(v\) is not winning for sure-\(\DirFWMP(\WindowLength, \GuaranteeThreshold)\)-positive-\(\ReachObj_{E_{g}}(T)\), then \(v\) is not winning for  sure-\(\DirFWMP(\WindowLength, \GuaranteeThreshold)\)-positive-\(\ReachObj(T)\).
  
  Suppose towards a contradiction that starting from \(v\), there exists a strategy \(\Strategy\) that is winning for sure-\(\DirFWMP(\WindowLength, \GuaranteeThreshold)\)-positive-\(\ReachObj(T)\) in \(\MDP\).
  Consider an outcome \(\Play\) of this strategy that reaches \(T\) and also satisfies \(\DirFWMP(\WindowLength,\GuaranteeThreshold)\).
  Since \(v\) is not winning for sure-\(\DirFWMP(\WindowLength, \GuaranteeThreshold)\)-positive-\(\ReachObj_{E_{g}}(T)\), we have that there exists at least one edge in this outcome before reaching \(T\) that does not belong to \(E_{g}\).
  Consider the last edge in this outcome before reaching \(T\) that does not belong to \(E_{g}\), say \((x,y)\).
  Every edge in the outcome after \((x, y)\) and before the token reaches \(T\) belongs to \(E_{g}\).
  From the proof of \Cref{lem:sdfpr-isgoodedge-correctness}, we have that since \((x, y)\) does not belong to \(E_{g}\), there does not exist a strategy that is winning for sure-\(\DirFWMP(\WindowLength, \GuaranteeThreshold)\)-positive-\(\ReachObj_{\{(x,y)\} \union E_{g}}(T)\) from \(x\).
  This is, however, a contradiction, since the path from \(x\) to \(T\) in \(\Play\) that does not visit any edge not in \(\{(x,y)\} \union E_{g}\) occurs with positive probability.
\end{proof}

\begin{theorem}\label{thm:sdfpr-correctness}
  \Cref{alg:sure-dirfwmp-positive-reach} computes the winning region for \(\PlayerMain\) for sure-\(\DirFWMP(\WindowLength, \GuaranteeThreshold)\)-positive-\(\ReachObj(T)\) in \(\MDP\).
\end{theorem}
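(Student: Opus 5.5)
Both inclusions will be derived from the intermediate results \Cref{lem:sdfpr-isgoodedge-correctness}, \Cref{prop:sdfpr-isgoodvertex}, \Cref{prop:sdfpr-all-inedges-r-bad}, and \Cref{lem:sdfpr-bad-edge-is-losing}. First, a vertex outside \(W_{\sd}\) is not winning for sure-\(\DirFWMP(\WindowLength, \GuaranteeThreshold)\). Since \(\DirFWMP\) is closed under suffixes, \Cref{prop:winning-region-sure-objective-closed-under-suffixes-induces-submdp} shows that any sure-winning strategy keeps the token in \(W_{\sd}\). Hence restricting to \(\subMDP{\MDP}{W_{\sd}}\) and to \(T \intersection W_{\sd}\) in Line~\ref{alg-line:sdfpr-restrict-sure-dirfwmp} changes neither the winning region nor the relevant strategies. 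From then on I work in this subMDP.

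(Soundness: every vertex of \(R\) is winning.) I will argue by induction over the iterations that each vertex added to \(R\) is winning for sure-\(\DirFWMP(\WindowLength, \GuaranteeThreshold)\)-positive-\(\ReachObj_{E_{g}}(T)\). Vertices of \(T\) are trivially winning, since every vertex of the subMDP satisfies sure-\(\DirFWMP\) and \(T\) is already reached. A vertex \(s\) is added only when \(\IsGoodEdgeAlgo(e, E_{g})\) returns true for some \(e = (s,t)\). By \Cref{lem:sdfpr-isgoodedge-correctness}, the edge \(e\) is then winning for sure-\(\DirFWMP(\WindowLength, \GuaranteeThreshold)\)-positive-\(\ReachObj_{\{e\} \union E_{g}}(T)\). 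Because \(e\) is inserted into \(E_{g}\) and \(E_{g}\) only grows, this is a winning edge for \(\ReachObj_{E_{g}}(T)\) with respect to the final \(E_{g}\). \Cref{prop:sdfpr-isgoodvertex} then turns the winning edge into a winning vertex \(s\): for a probabilistic \(s\), the other successors still satisfy sure-\(\DirFWMP\) because we are inside \(W_{\sd}\). Finally, \(\ReachObj_{E_{g}}(T) \subseteq \ReachObj(T)\), so every vertex of \(R\) is winning for the unrestricted objective.

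(Completeness: every winning vertex lies in \(R\).) Let \(v\) be winning for sure-\(\DirFWMP(\WindowLength, \GuaranteeThreshold)\)-positive-\(\ReachObj(T)\). By \Cref{lem:sdfpr-bad-edge-is-losing}, \(v\) is also winning for the \(E_{g}\)-restricted objective with the final \(E_{g}\). Fix such a strategy and a positive-probability outcome prefix from \(v\) that reaches \(T\) using only edges in \(E_{g}\). Every edge of \(E_{g}\) has its source in \(R \setminus T\) by construction, and the target set lies in \(R\). Walking this prefix backwards from \(T\), each source vertex is therefore in \(R\), so \(v \in R\). If \(v \in T\), this holds immediately. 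If the first edge \((v,u)\) of the prefix had \(v \notin R\) but \(u \in R\), this would contradict \Cref{prop:sdfpr-all-inedges-r-bad}, because the edge \((v,u)\) would be winning for the restricted objective.

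The main obstacle is the termination condition. \Cref{lem:sdfpr-bad-edge-is-losing} and \Cref{prop:sdfpr-all-inedges-r-bad} rely on the fact that, when the \textbf{while} loop exits, every edge from \(V \setminus T\) into \(R\) is classified in \(E_{g}\) or \(E_{b}\), and every edge in \(E_{b}\) is bad with respect to the \emph{final} \(E_{g}\). I will make this explicit using the reset of \(E_{b}\) in Line~\ref{alg-line:sdfpr-reset-bad-edges}. Since \(E_{b}\) is emptied whenever \(E_{g}\) grows, all edges remaining in \(E_{b}\) at exit were tested against the final \(E_{g}\), so \Cref{lem:sdfpr-isgoodedge-correctness} applies to them with the final set. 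Termination itself follows from the \(\abs{E}^{2}\) bound already noted.
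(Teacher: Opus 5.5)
Your proposal follows the paper's proof of this theorem closely. It uses the same restriction to the subMDP, gets soundness from \Cref{lem:sdfpr-isgoodedge-correctness}, monotonicity of goodness in \(E_{g}\), and \Cref{prop:sdfpr-isgoodvertex}, and gets completeness from \Cref{lem:sdfpr-bad-edge-is-losing}. Your completeness step is cleaner than the paper's: instead of arguing that the token must enter \(R\) through an \(E_{b}\)-edge, you take the \(E_{g}\)-restricted witness and note that every source of an \(E_{g}\)-edge lies in \(R\). Your remark that the reset of \(E_{b}\) makes the final bad edges bad with respect to the final \(E_{g}\) makes explicit what the paper leaves implicit. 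The soundness half looks fine.

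The gap is in the completeness half, and you inherit it from the paper. \Cref{lem:sdfpr-bad-edge-is-losing} depends on the direction ``\(\IsGoodEdgeAlgo(e, E_{g})\) returns false \(\Rightarrow\) \(e\) is not winning for sure-\(\DirFWMP(\WindowLength, \GuaranteeThreshold)\)-positive-\(\ReachObj_{\{e\}\cup E_{g}}(T)\)'' of \Cref{lem:sdfpr-isgoodedge-correctness}, and that direction fails. In \(\MDP_{e}\), a copy of a \(\PlayerMain\)-vertex may only use \(E_{g}\)-edges, so \emph{every} \(E_{g}\)-branch must keep the windows closed. Positive reachability needs only one branch to follow \(E_{g}\); the others may leave \(E_{g}\) at \(\PlayerMain\)-vertices.

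Here is a counterexample. Take \(\WindowLength = 3\), \(\GuaranteeThreshold = 0\), \(T = \{\tau\}\), \(\PlayerMain\)-vertices \(s, b, \tau, z\), and a probabilistic vertex \(p\). The edges are \((s,p), (p,\tau), (p,b), (b,\tau), (b,z)\) with payoffs \(-3, 3, -5, 5, 10\), plus payoff-\(0\) self-loops on \(\tau\) and \(z\).

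- Every vertex is winning for sure-\(\DirFWMP(3,0)\).
- The vertex \(s\) is winning for sure-\(\DirFWMP(3,0)\)-positive-\(\ReachObj(T)\): go to \(p\), and after \(s\,p\,b\) go to \(z\). Along \(s\,p\,\tau\) the window at \(s\) has sums \(-3, 0\); along \(s\,p\,b\,z\) it has sums \(-3, -8, 2\).
- Let the loop accept \((b,\tau)\), \((p,b)\), \((p,\tau)\) in this order; each check succeeds.
- Then test \((s,p)\) against \(E_{g} = \{(b,\tau), (p,b), (p,\tau)\}\). In \(\MDP_{(s,p)}\) the token can go \(\hat{s} \to \tilde{p} \to \tilde{b}\), and \(\tilde{b}\) only has the copy of \((b,\tau)\). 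The window at \(\hat{s}\) then has sums \(-3, -8, -3\) and does not close within \(3\) steps.
- So \((s,p)\) lands in \(E_{b}\) and the algorithm returns \(\{\tau, b, p\}\). Yet \((s,p)\) is winning for sure-\(\DirFWMP(3,0)\)-positive-\(\ReachObj_{\{(s,p)\} \cup E_{g}}(T)\) via the path \(s\,p\,\tau\).

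Had \((s,p)\) been tested before \((p,b)\) entered \(E_{g}\), it would have been accepted, because \(\tilde{p} \to b\) then leads into the unrestricted part. So \(\IsGoodEdgeAlgo\) is not even monotone in \(E_{g}\), although goodness is.

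Your step ``by \Cref{lem:sdfpr-bad-edge-is-losing}, \(v\) is winning for the \(E_{g}\)-restricted objective'' therefore fails at \(v = s\). The theorem, with arbitrary edge choice, cannot be proved along this route. The defect is in the construction of \(\MDP_{e}\), not in the argument for this theorem.
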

\begin{proof}
  Recall that if a vertex \(v\) does not belong to \(W_{\sd}\), then \(v\) is not winning for   sure-\(\DirFWMP(\WindowLength, \GuaranteeThreshold)\) in \(\MDP\), and therefore also not winning for sure-\(\DirFWMP(\WindowLength, \GuaranteeThreshold)\)-positive-\(\ReachObj(T)\) in \(\MDP\).
  The algorithm correctly removes the vertices not in \(W_{\sd}\) in Line~\ref{alg-line:sdfpr-restrict-sure-dirfwmp}.
  In the rest of the algorithm and the rest of this proof, we work with the subMDP \(\subMDP{\MDP}{W_{\sd}}\) and refer to this subMDP as \(\MDP\), so we have that all vertices in \(\MDP\) are winning for sure-\(\DirFWMP(\WindowLength, \GuaranteeThreshold)\).

  Recall that the algorithm returns the set \(R \subseteq V\).
  We show for all \(v\) in \(\MDP\) that \(v\) belongs to the set \(R\) if and only if \(v\) is winning for sure-\(\DirFWMP(\WindowLength, \GuaranteeThreshold)\)-positive-\(\ReachObj(T)\) in \(\MDP\).
  We consider two cases separately.
  First, we show this result holds for all \(v \in T\), and then, for all \(v \in V \setminus T\).

  Let \(v \in T\).
  We have that every vertex in \(\MDP\) is winning for sure-\(\DirFWMP(\WindowLength, \GuaranteeThreshold)\).
  Moreover, every play \(\Play\) starting from \(v\) satisfies \(\ReachObj(T)\) since \(v\) is in \(T\), and thus, a winning strategy for sure-\(\DirFWMP(\WindowLength, \GuaranteeThreshold)\) from \(v\) is also a winning strategy for sure-\(\DirFWMP(\WindowLength, \GuaranteeThreshold)\)-positive-\(\ReachObj(T)\).
  Thus, \(v\) is winning for sure-\(\DirFWMP(\WindowLength, \GuaranteeThreshold)\)-positive-\(\ReachObj(T)\) in \(\MDP\).
  Since \(T \subseteq R\), we also have that \(v\) belongs to \(R\). 
  We have shown that for each vertex \(v\) in \(T\), the vertex \(v\) is in \(R\) if and only if \(v\) is winning for sure-\(\DirFWMP(\WindowLength, \GuaranteeThreshold)\)-positive-\(\ReachObj(T)\) in \(\MDP\).
  
  Now we see the interesting case, that is when \(v \in V \setminus T\).
  
  \((\Rightarrow)\)
  Suppose \(v\) belongs to \(R\).
  Since \(v\) also belongs to \(V \setminus T\), we have that \(v\) belongs to \(R \setminus T\).
  The vertex \(v\) belongs to \(R \setminus T\) because it is added to the set \(R\) in Line~\ref{alg-line:sdfpr-add-good-start-vertex} when an out-edge \(e\) of \(v\) is determined to be good by \(\IsGoodEdgeAlgo\) with respect to the set \(E_{g}\) at that point.
  From \Cref{lem:sdfpr-isgoodedge-correctness}, we have that this edge \(e\) is winning for sure-\(\DirFWMP(\WindowLength, \GuaranteeThreshold)\)-positive-\(\ReachObj_{\{e\} \union E_{g}}(T)\) in \(\MDP\).
  From \Cref{prop:sdfpr-isgoodvertex}, it follows that \(v\) is winning for sure-\(\DirFWMP(\WindowLength, \GuaranteeThreshold)\)-positive-\(\ReachObj_{E_g}(T)\) in \(\MDP\) for the set \(E_{g}\) at the end of   \Cref{alg:sure-dirfwmp-positive-reach}.
  By \Cref{lem:sdfpr-bad-edge-is-losing}, it implies that \(v\) is winning for sure-\(\DirFWMP(\WindowLength, \GuaranteeThreshold)\)-positive-\(\ReachObj(T)\) in \(\MDP\).

  \((\Leftarrow)\)
  Suppose \(v\) does not belong to \(R\).
  Since \(T \subseteq R\) and \(v \not\in R\), in order to satisfy positive-\(\ReachObj(T)\) from \(v\), the token needs to reach \(R\) with positive probability.
  From \Cref{prop:sdfpr-all-inedges-r-bad}, we have that starting from \(v\), if the token enters \(R\), it enters \(R\) through an edge in \(E_{b}\).
  However, from \Cref{lem:sdfpr-isgoodedge-correctness}, it follows that once an edge \(e\) in \(E_{b}\) is taken, then \(\PlayerMain\) does not have a strategy to satisfy sure-\(\DirFWMP(\WindowLength, \GuaranteeThreshold)\)-positive-\(\ReachObj_{\{e\} \union E_{g}}(T)\), and from \Cref{lem:sdfpr-bad-edge-is-losing}, player \(\PlayerMain\) does not have a strategy to satisfy sure-\(\DirFWMP(\WindowLength, \GuaranteeThreshold)\)-positive-\(\ReachObj(T)\) either.
  Thus, we have that \(v\) is not winning for sure-\(\DirFWMP(\WindowLength, \GuaranteeThreshold)\)-positive-\(\ReachObj(T)\) in \(\MDP\).
\end{proof}

\paragraph*{Memory requirement and complexity for sure-almost-sure \(\FWMPL\)}

We conclude this section by discussing bounds on the memory requirement and the complexity for the sure-almost-sure satisfaction of \(\FWMPL\) objectives.

\subparagraph*{Memory requirement: lower bound.}
It is known that memory of size \(\WindowLength\) suffices for satisfying sure-\(\FWMP(\WindowLength, \GuaranteeThreshold)\)~\cite[Theorem~4.4]{DGG25} and for almost-sure-\(\FWMP(\WindowLength, \GuaranteeThreshold)\)~\cite[Theorem~6.5]{DGG25}, and this is independent of the size of the MDP.
In \Cref{exa:fwmp-memory-lower-bound}, we show that $\Omega(\max\{\abs{V}, \WindowLength \})$ memory may be necessary for the sure-almost-sure satisfaction of the \(\FWMP(\WindowLength)\) objective.

\begin{example} \label{exa:fwmp-memory-lower-bound}
  Consider the MDP in Figure~\ref{fig:sdfpr-memory-lower-bound}.
  Note that the only vertex belonging to $\PlayerMain$ with more than one out-edge is $a$.
  Thus, a strategy of \(\PlayerMain\) can be fully specified by describing the behaviour of the token when it reaches $a$.
  Starting from vertex $v_3$, the memoryless strategy that always chooses $v_2$ from vertex $a$ satisfies sure-\(\FWMP(2,0)\) and the memoryless strategy that always chooses vertex $b$ from $a$ satisfies almost-sure-\(\FWMP(2,5)\).
  However, neither of these strategies are winning for sure-\(\FWMP(2,0)\)-almost-sure-\(\FWMP(2, 5)\). 

  In order to satisfy sure-\(\FWMP(2, 0)\), player \(\PlayerMain\) must eventually close all \(0\)-windows in at most \(2\) steps, and in order to satisfy almost-sure-\(\FWMP(2, 5)\) from $v_3$, the token must reach vertex $b$ repeatedly as long as it has not reached \(c\).
  However, each time the token moves to \(v_{3}\), the token then has to take the edge \((v_{3}, a)\) which opens a \(0\)-window.
  In order to reach \(b\) from \(v_{3}\) with all \(0\)-windows closed in at most \(2\) steps, the token must visit \(v_{2}\) and \(v_{1}\) on the way.
  This is because when the token reaches \(a\) from \(v_{3}\), the token must move to \(v_{2}\) to close the \(0\)-window starting at \(v_{3}\).
  The token then reaches \(a\), from where moving to either \(v_{2}\) or \(v_{1}\) results in closing the \(0\)-window opened at \(v_{2}\).
  If the token moves between \(v_{2}\) and \(a\) forever, then the token makes no progress towards reaching \(b\), and thus it must eventually move from \(a\) to \(v_{1}\).
  From \(v_{1}\), the token can move to  \(a\) and then to \(b\) while also ensuring that the \(0\)-window starting at \(v_{1}\) closes in \(2\) steps.
  Thus, the token needs to make visits to $v_2$, $v_1$, and $b$ from $a$ to close all \(0\)-windows in \(2\) steps, and this requires a strategy of memory size~\(3\).

  This example can be generalized to an MDP \(\MDP_{n}\) with $n+3$ vertices $\{v_n, v_{n-1}, \ldots,\) \(v_1, a, b, c\}$, and edges $(v_i, a)$ with payoff $-i$ and edges $(a, v_{i-1})$ with payoff $+i$ for $2 \leq i \leq n$, an edge \((a, b)\) with payoff \(1\), an edge \((b, v_{n})\) with payoff \(0\), an edge \((b, c)\) with payoff \(0\), and a self-loop on \(c\) with payoff \(5\).
  It can be seen that memory of size \(n\) is necessary to satisfy sure-\(\FWMP(2,0)\)-almost-sure-\(\FWMP(2, 5)\) in \(\MDP_{n}\) from \(v_{n}\).
  Since satisfying sure-$\FWMP(\WindowLength, \GuaranteeThreshold)$ requires at least $\WindowLength$ memory in general, at least $\Omega(\max\{\abs{V}, \WindowLength\})$ memory is necessary to satisfy sure-\(\FWMP(\WindowLength, \GuaranteeThreshold)\)-almost-sure-\(\FWMP(\WindowLength, \AlmostSureThreshold)\).
  \lipicsEnd
\end{example}
\begin{figure}[t]
  \centering
  \begin{tikzpicture}[node distance=1.5cm]
    \node[state, initial] (v3) {\(v_{3}\)};
    \node[state, below of=v3, yshift=3mm] (a) {\(a\)};
    \node[state, left of=a, xshift=-4mm] (v2) {\(v_{2}\)};
    \node[state, below of=a, yshift=3mm] (v1) {\(v_{1}\)};
    \node[draw, random, right of=a, xshift=3mm] (b) {\(b\)};
    \node[state, right of=b] (c) {\(c\)};

    \draw 
    (v3) edge node[left]{\small \(\EdgeValues{-3}{}\)} (a)
    
    (a) edge[bend right=15] node[above, pos=0.6]{\small \(\EdgeValues{3}{}\)} (v2)
    (v2) edge[bend right=15] node[below, pos=0.3]{\small \(\EdgeValues{-2}{}\)} (a)

    (a) edge[bend right=15] node[left]{\small \(\EdgeValues{2}{}\)} (v1)
    (v1) edge[bend right=15] node[right]{\small \(\EdgeValues{-1}{}\)} (a)
    
    (a) edge node[auto]{\small \(\EdgeValues{1}{}\)} (b)
    
    (b) edge[bend right] node[above right]{\small \(\EdgeValues{0}{.6}\)} (v3)
    (b) edge node[auto]{\small \(\EdgeValues{0}{.4}\)} (c)
    
    (c) edge[loop right] node[auto]{\small \(\EdgeValues{5}{}\)} (c)
    ;
  \end{tikzpicture}

  \caption{%
    There exist memoryless winning strategies for sure-\(\FWMP(2,0)\) and for almost-sure-\(\FWMP(2, 5)\) from all vertices in the MDP, but memory of size at least 3 is required for a strategy that is winning for sure-\(\FWMP(2,0)\)-almost-sure-\(\FWMP(2, 5)\) from \(v_{3}\).
  }
  \label{fig:sdfpr-memory-lower-bound}
\end{figure}

\subparagraph*{Memory requirement: upper bound.}
We give an upper bound for the memory requirement for winning strategies for sure-almost-sure satisfaction of \(\FWMPL\) by explicitly constructing a winning strategy \(\Strategy[\sas]\).
We first construct a winning strategy \(\Strategy[\sdpr]\) for sure-\(\DirFWMP(\WindowLength, \GuaranteeThreshold)\)-positive-\(\ReachObj(T)\) having memory size at most \(3 \cdot \abs{V} \cdot \WindowLength\).
Then, we use \(\Strategy[\sdpr]\) to construct a winning strategy \(\Strategy[\sdab]\) for sure-\(\DirFWMP(\WindowLength, \GuaranteeThreshold)\)-almost-sure-\(\BuchiObj(T)\) with memory size no greater than that of \(\Strategy[\sdpr]\).
Finally, we use \(\Strategy[\sdab]\) to describe a winning strategy \(\Strategy[\sas]\) for sure-almost-sure satisfaction of \(\FWMPL\) with memory size no greater than that of \(\Strategy[\sdab]\).

\emph{Winning strategy for sure-\(\DirFWMP(\WindowLength, \GuaranteeThreshold)\)-positive-\(\ReachObj(T)\).}
Recall that we have a set \(E_{g}\) of edges and a winning region \(R\) for sure-\(\DirFWMP(\WindowLength, \GuaranteeThreshold)\)-positive-\(\ReachObj(T)\) at the end of the execution of \Cref{alg:sure-dirfwmp-positive-reach}.
Let \(u_1, u_2, \ldots, u_{n}\) denote the set of vertices in \(R\setminus T\) at the end of the algorithm in the order in which they were added to \(R \setminus T\) where \(\abs{R \setminus T} = n\).
For \(1 \le i \le n\), let \(e_i\) denote the first out-edge of \(u_i\) added to \(E_g\) during the algorithm, and let \(E_{g}^{i}\) denote the set of all edges present in \(E_{g}\) immediately after the edge \(e_i\) is added to \(E_g\). 

For an arbitrary vertex \(v\) belonging to \(R\), we describe a winning strategy \(\Strategy[\sdpr]\) from \(v\). 
If \(v\) belongs to \(T\), then \(\ReachObj(T)\) is already satisfied at the start of the play, and it suffices for \(\Strategy[\sdpr]\) to mimic a winning strategy for sure-\(\DirFWMP(\WindowLength, \GuaranteeThreshold)\) from \(v\) throughout the rest of the play.
Otherwise, if \(v \not\in T\), then \(v \in R \setminus T\), and \(v\) is \(u_i\) for some \(1 \le i \le n\).
The construction of \(\MDP_{e_{i}}\)  ensures that there exists a winning strategy \(\Strategy[e_{i}]\) for sure-\(\DirFWMP(\WindowLength, \GuaranteeThreshold)\) from \(\widehat{u_{i}}\) in \(\MDP_{e_{i}}\), and from \Cref{lem:sdfpr-isgoodedge-correctness} it follows that projecting this strategy to \(\MDP\) gives a strategy \(\Strategy[i]\) that is winning for 
sure-\(\DirFWMP(\WindowLength, \GuaranteeThreshold)\)-positive-\(\ReachObj_{E_{g}^i}(T)\) from \(u_i\).
The strategy \(\Strategy[\sdpr]\) mimics \(\Strategy[i]\) from \(u_i\), as long as only edges in \(E_g^i\) are chosen, until either all \(\GuaranteeThreshold\)-windows are closed for the first time (which happens in at most \(\WindowLength\) steps) or the token reaches \(T\), whichever occurs first.
If the token ever takes an edge not in \(E_g^i\), then \(\Strategy[\sdpr]\) continues to follow a winning strategy for sure-\(\DirFWMP(\WindowLength, \GuaranteeThreshold)\) for the rest of the play, ensuring that all \(\GuaranteeThreshold\)-windows close in at most \(\WindowLength\) steps.
Otherwise, with probability at least \((\ProbabilityFunction_{\min})^\WindowLength\), the token only chooses edges in \(E_{g}^{i}\) until all \(\GuaranteeThreshold\)-windows are closed for the first time or the token reaches \(T\).

If the token reaches \(T\), then \(\Strategy[\sdpr]\) continues to follow a winning strategy for sure-\(\DirFWMP(\WindowLength, \GuaranteeThreshold)\) for the rest of the play.
Otherwise, the token reaches a vertex \(u_{j}\) with all \(\GuaranteeThreshold\)-windows closed for some \(1 \le j < i\), and \(\Strategy[\sdpr]\) switches to mimicking a winning strategy for sure-\(\DirFWMP(\WindowLength, \GuaranteeThreshold)\)-positive-\(\ReachObj_{E_{g}^j}(T)\) from \(u_j\).
Since \(n\) is finite and \(E_{g}^{j}\subsetneq E_{g}^{i}\) for \(1 \le j < i \le n\), we can repeat the same procedure from \(u_{j}\) to get that, with positive probability, the token reaches \(T\) eventually (in fact, in at most \(n \cdot \WindowLength\) steps).

Each strategy \(\Strategy[e_{i}]\) is a winning strategy for sure-\(\DirFWMP(\WindowLength, \GuaranteeThreshold)\) and requires at most \(\WindowLength\) memory~\cite[Lemma~4.1]{DGG25}.
The strategy \(\Strategy[i]\) obtained after projecting \(\Strategy[e_{i}]\) to \(\MDP\) requires at most three times the memory, that is, at most \(3 \cdot \WindowLength\).
For each \(1 \le i \le n\), there is a potentially different strategy \(\Strategy[i]\) that \(\Strategy[\sdpr]\) mimics, and thus, the memory requirement \(\Strategy[\sdpr]\) is at most \(3 \cdot n \cdot \WindowLength\), which is at most \(3 \cdot \abs{V} \cdot \WindowLength\).

\emph{Winning strategy for sure-\(\DirFWMP(\WindowLength, \GuaranteeThreshold)\)-almost-sure-\(\BuchiObj(T)\).}
For an arbitrary vertex \(v\) belonging to the winning region for sure-\(\DirFWMP(\WindowLength, \GuaranteeThreshold)\)-almost-sure-\(\BuchiObj(T)\), we define a winning strategy \(\Strategy[\sdab]\).
The strategy \(\Strategy[\sdab]\) follows the winning strategy \(\Strategy[\sdpr]\) for sure-\(\DirFWMP(\WindowLength, \GuaranteeThreshold)\)-positive-\(\ReachObj(T)\) as described above as long as there are no deviations from \(E_{g}\) and no vertex in \(T\) is reached.
If there is a deviation or a vertex in \(T\) is reached, then the strategy \(\Strategy[\sdab]\) continues to follow the strategy \(\Strategy[\sdpr]\) for at most \(\WindowLength - 1\) more steps until all open \(\GuaranteeThreshold\)-windows are closed, following which the strategy \(\Strategy[\sdab]\) resets,
that is, it forgets the history and starts following \(\Strategy[\sdpr]\) afresh.
The memory requirement of \(\Strategy[\sdab]\) is no greater than that of \(\Strategy[\sdpr]\), that is at most \(3 \cdot \abs{V} \cdot \WindowLength\).

\emph{Winning strategy for sure-\(\FWMP(\WindowLength, \GuaranteeThreshold)\)-almost-sure-\(\FWMP(\WindowLength, \AlmostSureThreshold)\).}
Recall that the winning region for sure-almost-sure satisfaction of \(\FWMPL\) has three types of regions: \(A^{i}\) (attractor region) for \(1 \le i \le k\), \(W_{\sdab}^{i}\) (sure-\(\DirFWMP(\WindowLength, \GuaranteeThreshold)\)-almost-sure-\(\BuchiObj(P^{i})\) winning region) for \(1 \le i \le k\), and \(W^{0}\) (sure-\(\FWMP(\WindowLength, \AlmostSureThreshold)\) winning region).
We describe a winning strategy \(\Strategy[\sas]\) for sure-\(\FWMP(\WindowLength, \GuaranteeThreshold)\)-almost-sure-\(\FWMP(\WindowLength, \AlmostSureThreshold)\):
\begin{itemize}
\item  If the token is in \(A^{i}\) for some \(1 \le i \le k\), then \(\Strategy[\sas]\) follows a sure-attractor strategy to reach \( W_{\sdab}^{i} \union W^{i-1} \).
\item If the token is in \(W_{\sdab}^{i}\) for some \(1 \le i \le k\), then \(\Strategy[\sas]\) follows the winning strategy \(\Strategy[\sdab]\) for sure-\(\DirFWMP(\WindowLength, \GuaranteeThreshold)\)-almost-sure-\(\BuchiObj(P^{i})\).
  This ensures that \(\DirFWMP(\WindowLength, \GuaranteeThreshold)\) is satisfied in all outcomes, and with probability \(1\), the token reaches \(W^{i-1}\).
\item If the token is in \(W^{0}\), then \(\Strategy[\sas]\) follows a winning strategy for sure-\(\FWMP(\WindowLength, \AlmostSureThreshold)\).
\end{itemize}
The memory requirement of \(\Strategy[\sas]\) is no greater than that of \(\Strategy[\sdab]\), that is at most \(3 \cdot \abs{V} \cdot \WindowLength\).

\subparagraph*{Complexity.}
We recall that \(\SureDirFWMPAlgo\) can be computed in time $\bigO(|V| \cdot \abs{E}  \cdot \WindowLength \cdot \log(\wmax))$ \cite[Lemma~5]{CDRR15} and \(\SureFWMPAlgo\) can be computed in time $\bigO(|V|^2 \cdot \abs{E}  \cdot \WindowLength \cdot \log(\wmax))$~\cite[Lemma~6]{CDRR15}, which are polynomial in the size of the input when $\WindowLength$ is given in unary.
We denote the running time of \(\SureDirFWMPAlgo\) by $\mathbb{C}$.

We show that when \(\WindowLength\) is given in unary, \Cref{alg:sure-fwmp-almostsure-fwmp} (\(\SASFWMPAlgo\)) has a polynomial running time, and thus, the sure-almost-sure problem for the \(\FWMPL\) objectives is in \(\PTime\).
The \textbf{repeat} loop in \Cref{alg:sure-fwmp-almostsure-fwmp} is dominated by the call to \Cref{alg:sure-dirfwmp-almostsure-buchi}, which in turn is a recursive algorithm that is dominated by the call to \Cref{alg:sure-dirfwmp-positive-reach}.
In \Cref{alg:sure-dirfwmp-positive-reach}, we have a \textbf{repeat} loop that runs at most \(\abs{E}^2\) times.
In each iteration of this loop, we make a call to \(\IsGoodEdgeAlgo\) which involves calling \(\SureDirFWMPAlgo\) on the augmented MDP \(\MDP_{e}\) that has at most thrice as many vertices as the original MDP \(\MDP\).
Thus, the running time of \Cref{alg:sure-dirfwmp-positive-reach} is $\bigO(\abs{E}^2 \cdot \mathbb{C})$.
Algorithm~\ref{alg:sure-dirfwmp-almostsure-buchi} has a recursion depth of at most $|V|$ since it is called on a smaller \(\MDP\) each time, and hence its time complexity is $\bigO(\abs{V} \cdot \abs{E}^2 \cdot \mathbb{C})$.
Finally, the \textbf{repeat} loop of Algorithm~\ref{alg:sure-fwmp-almostsure-fwmp} runs for at most $\abs{V}$ iterations, thus leading to a total time complexity of $\bigO(\abs{V}^2 \cdot \abs{E}^2 \cdot \mathbb{C})$, which is $\bigO(\abs{V}^3 \cdot \abs{E}^3  \cdot \WindowLength \cdot \log(\wmax))$.
This is polynomial in the size of the input when $\WindowLength$ is given in unary.

\begin{theorem} \label{thm:fwmp-sas-complexity}
  The sure-almost-sure problem for the \(\FWMP(\WindowLength)\) objective is in \(\PTime\) when \(\WindowLength\) is given in unary.
  The memory required is at least \(\Omega(\max\{\abs{V}, \WindowLength\})\) and at most \(\bigO(\abs{\Vertices} \cdot \WindowLength)\).
\end{theorem}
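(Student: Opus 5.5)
The theorem collects three facts, and I would prove each by pointing to an earlier result in this section. The only new argument needed is a check that the complexity analysis stays polynomial, including the sizes of the auxiliary MDPs built along the way.

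\emph{Membership in \(\PTime\).} By \Cref{thm:sasf-correctness}, \Cref{alg:sure-fwmp-almostsure-fwmp} returns exactly the winning region, so the decision problem reduces to checking whether \(v \in \SASFWMPAlgo(\MDP, \WindowLength, \GuaranteeThreshold, \AlmostSureThreshold)\). Its correctness rests on \Cref{thm:sdfab-correctness} for the subroutine \Cref{alg:sure-dirfwmp-almostsure-buchi}, and that in turn rests on \Cref{thm:sdfpr-correctness} and \Cref{lem:sdfpr-isgoodedge-correctness}. For the running time I would bound the loops from the inside out.
\begin{itemize}
\item Each call to \(\IsGoodEdgeAlgo\) builds \(\MDP_e\) with at most \(3\abs{V}\) vertices and \(\bigO(\abs{E})\) edges, and then calls \(\SureDirFWMPAlgo\) on it. By \cite{CDRR15} this takes time polynomial in \(\abs{V}\), \(\abs{E}\), \(\WindowLength\) and \(\log \wmax\).
\item The \textbf{while} loop of \Cref{alg:sure-dirfwmp-positive-reach} runs at most \(\abs{E}^2\) times, because \(E_g\) only grows and \(E_b\) is reset at most \(\abs{E}\) times.
\item \Cref{alg:sure-dirfwmp-almostsure-buchi} recurses on strictly smaller subMDPs, so its recursion depth is at most \(\abs{V}\).
\item The outer \textbf{repeat} loop runs at most \(\abs{V}\) times, because \(W\) grows strictly in every iteration except the last.
\end{itemize}
The attractor and \(\PosCPreAlgo\) steps are linear, and the sure-\(\SafeObj\) step is the dual attractor computation, also linear. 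Multiplying these bounds gives \(\bigO(\abs{V}^3\abs{E}^3\WindowLength\log\wmax)\), which is polynomial when \(\WindowLength\) is given in unary.

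\emph{Memory upper bound.} I would invoke the three-layer construction \(\Strategy[\sdpr] \to \Strategy[\sdab] \to \Strategy[\sas]\) given above. The main point to verify is that \(\Strategy[\sas]\) is well-defined and winning with memory \(\bigO(\abs{V}\cdot\WindowLength)\). On \(W^0\) it plays a sure-\(\FWMP(\WindowLength,\AlmostSureThreshold)\) strategy, which needs memory \(\WindowLength\). On \(A^i\) it plays an attractor strategy, which is memoryless. On \(W_{\sdab}^i\) it plays \(\Strategy[\sdab]\), which needs memory \(3\abs{V}\WindowLength\). These memory states can be shared across regions, because only one region is active at a time and the current vertex identifies which one. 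Whenever the token changes region, the strategy resets, and this is justified by prefix-independence.

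\emph{Memory lower bound.} \Cref{exa:fwmp-memory-lower-bound} gives MDPs \(\MDP_n\) with \(n+3\) vertices in which memory \(n\) is required. The sure-\(\FWMP(\WindowLength,\GuaranteeThreshold)\) lower bound of \(\WindowLength-1\) from \cite{CDRR15} carries over to this setting. To see this, take an MDP without probabilistic branching (or with \(\GuaranteeThreshold = \AlmostSureThreshold\) behaviour realised by a sink where the \(\AlmostSureThreshold\)-windows close trivially), so that sure-almost-sure winning coincides with sure winning. Combining the two families gives \(\Omega(\max\{\abs{V},\WindowLength\})\).

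I expect the main obstacle to be the lower-bound part of \Cref{exa:fwmp-memory-lower-bound}. I would argue it by cases on the sequence of choices made at \(a\) during one round trip \(v_n \to a \to \cdots \to b\). Closing each \(0\)-window within \(2\) steps forces the token from \(v_i\) to go back through \(a\) to \(v_{i-1}\), and never to a higher-indexed \(v_j\), since that would leave the window open. So a strategy with fewer than \(n\) memory states must, by pigeonhole, repeat a state at \(a\) along this chain. Once a state repeats, the strategy either cycles forever without reaching \(b\), which violates almost-sure reachability of \(c\), or leaves some \(0\)-window open, and an adversarial resolution at \(b\) makes this happen infinitely often.
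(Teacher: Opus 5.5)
Your overall route is the paper's: polynomial time from \Cref{thm:sasf-correctness} plus the inside-out count of loop iterations and recursion depth, the upper bound from the \(\Strategy[\sdpr] \to \Strategy[\sdab] \to \Strategy[\sas]\) construction, and the lower bound from \Cref{exa:fwmp-memory-lower-bound} together with the sure-\(\FWMP\) bound.

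\textbf{The transfer of the \(\WindowLength - 1\) bound does not go through as written.} Deleting all probabilistic branching turns sure-almost-sure winning into winning a one-player graph for \(\FWMP(\WindowLength, \AlmostSureThreshold)\). The bound you cite from \cite{CDRR15} is stated for two-player games, where it can exploit the opponent's choices. Deleting the branching removes those choices, and you give no one-player family that needs \(\Omega(\WindowLength)\) memory. Your parenthetical alternative with a sink is too vague to check.

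What works is to keep the branching and make the almost-sure part redundant:
\begin{itemize}
\item Turn the opponent's vertices of the game family into probabilistic vertices, and use integer payoffs.
\item Take \(\GuaranteeThreshold = -\delta\) and \(\AlmostSureThreshold = 0\) with \(0 < \delta < 1/\WindowLength\). This respects \(\GuaranteeThreshold < \AlmostSureThreshold\).
\item For \(j \le \WindowLength\), an integer window sum is at least \(-\delta j\) iff it is at least \(0\), since \(\delta j < 1\). Hence \(\GW(\WindowLength, -\delta) = \GW(\WindowLength, 0)\), and so \(\FWMP(\WindowLength, -\delta) = \FWMP(\WindowLength, 0)\).
\item Sure-almost-sure winning then collapses to sure-\(\FWMP(\WindowLength, 0)\), which is exactly winning the original game.
\end{itemize}
This is what the paper's one-line appeal amounts to, since with \(\AlmostSureThreshold \le \GuaranteeThreshold\) the problem is literally sure-\(\FWMP\).

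\textbf{In \Cref{exa:fwmp-memory-lower-bound} you have the forced moves backwards.} From \(v_i\), the \(0\)-window opened by payoff \(-i\) closes at \(a\) iff the next edge is \((a, v_j)\) with \(j \ge i - 1\) (payoff \(j+1\)), or \((a, b)\) when \(i = 1\). Moving to a higher-indexed \(v_j\) is allowed and closes the window. Dropping by two or more, or taking \(b\) from \(i \ge 2\), leaves it open.

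Your pigeonhole dichotomy still holds with the correct constraint, for a reason you should state explicitly. Between consecutive visits to \(a\) the play \(a\, v_j\, a\) is deterministic, so the memory state at the next visit to \(a\) is a function of the state at the current one. With fewer than \(n\) states, one of two things happens:
\begin{itemize}
\item A state repeats at \(a\) before \(b\) is chosen. Then the play cycles among the \(v_j\) forever, which is a positive-probability event in which \(c\) is never reached.
\item \(b\) is chosen after fewer than \(n\) visits to \(a\). This forces a drop of at least two, or taking \(b\) from some \(v_i\) with \(i \ge 2\), and hence an open \(0\)-window.
\end{itemize}
In the outcome where \(b\) always returns to \(v_n\), the second case recurs infinitely often, which violates sure-\(\FWMP(2, 0)\).
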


\subsection{Sure-almost-sure satisfaction of BWMP}
In this section, we present an algorithm to solve the sure-almost-sure satisfaction problem for \(\BWMP\), the bounded window mean-payoff objective.
Recall that a play \(\Play\) satisfies \(\BWMP(\GuaranteeThreshold)\) if there exists a window length \(\WindowLength \ge 1\) such that \(\Play\) satisfies \(\FWMP(\WindowLength, \GuaranteeThreshold)\).

The algorithm is similar to \Cref{alg:sure-fwmp-almostsure-fwmp}.
To solve the sure-almost-sure satisfaction problem for \(\BWMP\), we consider \Cref{alg:sure-fwmp-almostsure-fwmp} for the \(\FWMPL\) objectives and replace every occurrence of \(\FWMPL\) with \(\BWMP\).
Thus, the algorithm for sure-almost-sure \(\BWMP\) has the same structure as \Cref{alg:sure-fwmp-almostsure-fwmp}, that is, it has subalgorithms for sure-\(\DirBWMP(\GuaranteeThreshold)\)-almost-sure-\(\BuchiObj(T)\) and sure-\(\DirBWMP(\GuaranteeThreshold)\)-positive-\(\ReachObj(T)\)
(analogous to sure-\(\DirFWMP(\WindowLength, \GuaranteeThreshold)\)-almost-sure-\(\BuchiObj(T)\) and sure-\(\DirFWMP(\WindowLength, \GuaranteeThreshold)\)-positive-\(\ReachObj(T)\) respectively, obtained again by replacing all occurrences of \(\FWMPL\) with \(\BWMP\)).
  We state the analogous algorithms for sure-almost-sure \(\BWMP\), sure-\(\DirBWMP(\GuaranteeThreshold)\)-almost-sure-\(\BuchiObj(T)\), and sure-\(\DirBWMP(\GuaranteeThreshold)\)-positive-\(\ReachObj(T)\) (\Cref{alg:sure-bwmp-almostsure-bwmp}, \Cref{alg:sure-dirbwmp-almostsure-buchi}, and \Cref{alg:sure-dirbwmp-positive-reach}, respectively) in the appendix for completeness.
  The proof of correctness for these algorithms also follows directly from the proof of correctness of the algorithms for the \(\FWMPL\) objectives (\Cref{thm:sasf-correctness}, \Cref{thm:sdfab-correctness}, \Cref{thm:sdfpr-correctness}) by replacing \(\FWMPL\) with \(\BWMP\).
  We state in \Cref{lem:dirbwmp-suptotal-goodwin-dirfwmp-equivalence} and \Cref{lem:dirbwmp-equivalent-to-dirfwmpl} results that relate the winning region for sure-\(\DirBWMP(\GuaranteeThreshold)\) with the winning region for sure-\(\DirFWMP(\WindowLength', \GuaranteeThreshold)\) where \(\WindowLength' = \abs{V} \cdot (\abs{V} \cdot \PayoffFunction_{\max} + 1)\).
  \begin{lemma}\label{lem:dirbwmp-suptotal-goodwin-dirfwmp-equivalence}
    All vertices in \(\MDP\) are winning for sure-\(\DirBWMP(0)\) if and only if all vertices in \(\MDP\) are winning for sure-\(\DirFWMP(\WindowLength', 0)\), where \(\WindowLength' = \abs{V} \cdot (\abs{V} \cdot \PayoffFunction_{\max} + 1)\).
  \end{lemma}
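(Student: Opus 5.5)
The direction from right to left is immediate, since \(\DirFWMP(\WindowLength', 0) \subseteq \DirBWMP(0)\) for every play. For the forward direction I would view \(\MDP\) as a two-player game, since only sure satisfaction is involved. The plan has two steps. First, show that from every vertex \(v\), player \(\PlayerMain\) has a strategy \(\Strategy[v]\) that surely makes the \(0\)-window opening at \(v\) close within \(\WindowLength'\) steps, that is, sure-\(\GW(\WindowLength', 0)\). Second, glue these strategies together exactly as in the construction of \(\Strategy[\sdab]\) in the proof of \Cref{thm:sdfab-correctness}. Follow \(\Strategy[v]\) until the window opened at \(v\) closes, at some vertex \(v'\). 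By the inductive property of windows, every window opened in between is closed by then. Then reset and follow \(\Strategy[v']\). Every \(0\)-window is therefore closed within \(\WindowLength'\) steps, which gives sure-\(\DirFWMP(\WindowLength', 0)\) from every vertex.

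For the first step, I would use the hypothesis as follows. Since every vertex wins sure-\(\DirBWMP(0)\), in particular the first window closes in every outcome. So from every vertex, \(\PlayerMain\) wins the game in which the objective is ``some prefix of length at least \(1\) has total payoff \(\ge 0\)''. I would encode this objective as a reachability game on a product arena with states \((u, d)\), where \(u \in \Vertices\) and \(d\) is the current deficit, a negative integer giving the running sum since \(v\). The target is reaching \(d \ge 0\) after at least one step. If I can show that \(\PlayerMain\) can win while keeping the deficit in \(\{-\abs{V}\cdot\PayoffFunction_{\max}, \ldots, -1\}\), then the relevant product arena has at most \(\abs{V}\cdot(\abs{V}\cdot\PayoffFunction_{\max}+1)\) states. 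The attractor rank then bounds the number of steps to reach the target by \(\WindowLength'\), as required.

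The main obstacle is the deficit bound. I would argue it via memoryless strategies in the underlying one-sided total-payoff game, taking payoffs as integers after scaling. \emph{Claim:} if \(\PlayerMain\) wins from \(v\), then some memoryless strategy wins from \(v\); I would obtain this from positional determinacy of the corresponding energy/total-payoff game, or by the cycle-skipping argument of \Cref{lem:sdfab-v-l-bound}. Now fix such a memoryless strategy. Any cycle that the adversary can close in the resulting graph before the window closes must have nonnegative total weight. Otherwise the adversary could repeat it forever and keep the window open, contradicting that the strategy is winning. Hence along any outcome, before the window closes, the running sum can be decomposed into nonnegative cycles plus an acyclic part of at most \(\abs{V}-1\) edges. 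The deficit therefore never drops below \(-\abs{V}\cdot\PayoffFunction_{\max}\), so the play stays inside the finite product arena described above. If the memoryless claim is hard to justify directly, my fallback is König's lemma. The objective is open and the arena is finitely branching, so a winning strategy has a uniformly bounded horizon. I would then shorten any play longer than \(\WindowLength'\) by pigeonhole on the pairs (vertex, deficit), in the style of \Cref{lem:sdfab-v-l-bound}.
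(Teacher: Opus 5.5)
Your skeleton matches the paper's. The reverse direction is its Observation~4, your gluing step is Observation~3, and your first step (sure-\(\GW(\WindowLength', 0)\) from every vertex) is what Observations~1--2 deliver. The difference is that the paper gets the window bound by citing the known implication from sure-\(\TP(0)\) to sure-\(\GW(\WindowLength', 0)\). You instead reprove it: a positional strategy, nonnegative cycles, a deficit bounded by \(\abs{V}\cdot\PayoffFunction_{\max}\), and an attractor in the finite (vertex, deficit) product. That makes the argument self-contained, and those parts are sound.

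The step to tighten is where the memoryless strategy comes from. Positional determinacy is a known result for total-payoff games. In energy games it holds for the player keeping the energy nonnegative, which here is your adversary. It is not directly available for the reachability-type objective \emph{some prefix of length at least \(1\) has sum at least \(0\)} from a single \(v\). So route it through \(\TP(0)\), exactly as the paper's Observation~1 does:
\begin{itemize}
\item Every \(\DirBWMP(0)\) play satisfies \(\TP(0)\), so \(\PlayerMain\) wins sure-\(\TP(0)\) and has a positional winning strategy \(\Strategy\).
\item Partial sums lie in a discrete set, so \(\limsup \ge 0\) forces prefixes with sum \(\ge 0\). Hence \(\Strategy\) closes the first window.
\item Every cycle reachable in the graph induced by \(\Strategy\) is nonnegative. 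This gives your deficit bound without the \emph{before the window closes} bookkeeping.
\item A repeated (vertex, deficit) pair while the window is open is a zero cycle the adversary can repeat forever, contradicting \(\TP(0)\). So the window closes within \(\WindowLength'\) steps.
\end{itemize}

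Your fallbacks would not replace this. The cycle-skipping in \Cref{lem:sdfab-v-l-bound} shortcuts between positions with no open window, and there are none before the first window closes. König's lemma only gives a strategy-dependent horizon. Pigeonholing on (vertex, deficit) pairs yields \(\WindowLength'\) only once the deficit is already bounded by \(\abs{V}\cdot\PayoffFunction_{\max}\), and that bound is exactly what positionality supplied.

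Finally, both your argument and the paper's tacitly need integer payoffs. Rescaling rational payoffs changes \(\PayoffFunction_{\max}\), and hence \(\WindowLength'\).
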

  \begin{proof}
    We list some observations that are useful in the proof.
    \begin{enumerate}
    \item If a play \(\Play\) satisfies \(\DirBWMP(0)\), then \(\Play\) also satisfies the total-payoff objective \(\TP(0)\) in \(\MDP\).
      This is because it happens infinitely often in \(\Play\) that all \(0\)-windows are closed, and each time all \(0\)-windows are closed, it is the case that the total payoff is non-negative.
      Thus, if a vertex \(v\) is winning for sure-\(\DirBWMP(0)\) in \(\MDP\), then \(v\) is also winning for sure-\(\TP(0)\) in \(\MDP\).
    \item We recall \cite[Lemma 10]{CDRR15} which states that if a vertex \(v\) is winning for sure-\(\TP(0)\) in \(\MDP\) then \(v\) is winning for sure-\(\GW(\WindowLength', 0)\) in \(\MDP\) for \(\WindowLength' = \abs{V} \cdot (\abs{V} \cdot \PayoffFunction_{\max} + 1)\), that is, \(\PlayerMain\) can ensure for all outcomes that the \(0\)-window starting at the \(v\) closes in at most \(\WindowLength'\) steps.
    \item We recall from \cite[Lemma 5]{CDRR15} that if all vertices in \(\MDP\) are winning for sure-\(\GW(\WindowLength', 0)\), then all vertices in \(\MDP\) are winning for sure-\(\DirFWMP(\WindowLength', 0)\) since each time a \(0\)-window closes, satisfaction of sure-\(\GW(\WindowLength', 0)\) ensures that the next \(0\)-window gets closed in at most \(\WindowLength'\) steps.
    \item If a vertex \(v\) is winning for sure-\(\DirFWMP(\WindowLength', 0)\) in \(\MDP\), then \(v\) is also winning for sure-\(\DirBWMP(0)\) in \(\MDP\), since \(\WindowLength'\) is a witness window length for the satisfaction of \(\DirBWMP(0)\).
    \end{enumerate}
    Now, suppose all vertices in \(\MDP\) are winning for sure-\(\DirBWMP(0)\).
    Then, we have that all vertices in \(\MDP\) are winning for sure-\(\TP(0)\) (Observation 1), which implies that all vertices in \(\MDP\) are winning for sure-\(\GW(\WindowLength', 0)\) (Observation 2), which implies that all vertices in \(\MDP\) are winning for  sure-\(\DirFWMP(\WindowLength', 0)\) (Observation 3), which in turn implies that  all vertices are winning for sure-\(\DirBWMP(0)\) (Observation 4).
    Thus, all four of the preceding statements are equivalent, and in particular, we have that all vertices in \(\MDP\) are winning for sure-\(\DirBWMP(0)\) if and only if all vertices in \(\MDP\) are winning for sure-\(\DirFWMP(\WindowLength', 0)\).
  \end{proof}
  \Cref{lem:dirbwmp-equivalent-to-dirfwmpl} lets us use \(\DirFWMP(\WindowLength', \GuaranteeThreshold)\) as a drop-in replacement for \(\DirBWMP(\GuaranteeThreshold)\) whenever necessary.
  \begin{lemma}\label{lem:dirbwmp-equivalent-to-dirfwmpl}
    For every vertex \(v\) in an MDP \(\MDP\), the vertex \(v\) is winning for sure-\(\DirBWMP(\GuaranteeThreshold)\) if and only if \(v\) is winning for sure-\(\DirFWMP(\WindowLength', \GuaranteeThreshold)\), where \(\WindowLength' = \abs{V} \cdot (\abs{V} \cdot \PayoffFunction_{\max} + 1)\).
  \end{lemma}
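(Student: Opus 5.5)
The reverse implication is immediate: any play satisfying \(\DirFWMP(\WindowLength', \GuaranteeThreshold)\) satisfies \(\DirBWMP(\GuaranteeThreshold)\), with \(\WindowLength'\) as the witness window length. So a winning strategy for sure-\(\DirFWMP(\WindowLength', \GuaranteeThreshold)\) from \(v\) is also winning for sure-\(\DirBWMP(\GuaranteeThreshold)\) from \(v\). The work is in the forward direction, which I would reduce to \Cref{lem:dirbwmp-suptotal-goodwin-dirfwmp-equivalence}.

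First I normalize the threshold to \(0\). Subtracting \(\GuaranteeThreshold\) from every edge payoff gives an MDP \(\MDP_{\GuaranteeThreshold}\) in which a \(\GuaranteeThreshold\)-window of \(\MDP\) closes exactly when the corresponding \(0\)-window of \(\MDP_{\GuaranteeThreshold}\) closes. Hence \(\DirBWMP(\GuaranteeThreshold)\) and \(\DirFWMP(\WindowLength', \GuaranteeThreshold)\) in \(\MDP\) coincide with \(\DirBWMP(0)\) and \(\DirFWMP(\WindowLength', 0)\) in \(\MDP_{\GuaranteeThreshold}\). If \(\GuaranteeThreshold\) is rational, I additionally scale all payoffs by a common denominator so that they become integers; scaling does not change the sign of any window sum. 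I would note that \(\PayoffFunction_{\max}\) in the bound should be read as the maximum absolute payoff of this normalized integer MDP, since this is the quantity to which \cite[Lemma 10]{CDRR15} applies.

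Next I restrict to the winning region. Let \(W\) be the set of vertices winning for sure-\(\DirBWMP(0)\) in \(\MDP_{\GuaranteeThreshold}\). Since \(\DirBWMP(0)\) is closed under suffixes, \Cref{prop:winning-region-sure-objective-closed-under-suffixes-induces-submdp} says that \(W\) induces a subMDP \(\subMDP{\MDP_{\GuaranteeThreshold}}{W}\).

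The main obstacle is showing that every vertex of this subMDP is still winning for sure-\(\DirBWMP(0)\) inside it. I would argue as follows: a winning strategy from \(v \in W\) can never move the token outside \(W\). If some outcome reached a vertex \(u \notin W\) after a prefix, closure under suffixes applied to the continuation would make \(u\) winning, which is the same argument as in the proof of \Cref{prop:winning-region-sure-objective-closed-under-suffixes-induces-submdp}. Hence the winning strategies restrict to the subMDP.

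With this in hand, \Cref{lem:dirbwmp-suptotal-goodwin-dirfwmp-equivalence} applied to \(\subMDP{\MDP_{\GuaranteeThreshold}}{W}\) shows that all its vertices are winning for sure-\(\DirFWMP(\WindowLength'', 0)\), where \(\WindowLength''\) is built from \(\abs{W} \le \abs{V}\). Since \(\WindowLength'' \le \WindowLength'\) and \(\DirFWMP(\WindowLength'', 0) \subseteq \DirFWMP(\WindowLength', 0)\), these strategies also win for window length \(\WindowLength'\). They remain winning in the full MDP, because every outcome stays inside the subMDP. Undoing the shift and scaling then gives the forward direction in \(\MDP\).
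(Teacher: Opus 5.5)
Your proof is correct and follows essentially the same route as the paper: restrict to the winning region for sure-\(\DirBWMP(\GuaranteeThreshold)\), which is a subMDP by closure under suffixes; shift payoffs by \(\GuaranteeThreshold\); apply \Cref{lem:dirbwmp-suptotal-goodwin-dirfwmp-equivalence}; and lift the strategies back. Several of your additions make explicit what the paper's proof leaves implicit: the direct argument for the converse, the argument that winning strategies never leave the region, the step \(\abs{W} \le \abs{V}\) combined with monotonicity in the window length, and the caveat on which \(\PayoffFunction_{\max}\) the bound refers to (the paper applies the lemma to the shifted subMDP but states the bound with the original \(\abs{V}\) and \(\PayoffFunction_{\max}\)).
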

  \begin{proof}
    Let \(W_{\textsf{B}}\) denote the winning region for sure-\(\DirBWMP(\GuaranteeThreshold)\) in \(\MDP\), and let \(v\) be a vertex in \(W_{\textsf{B}}\).
    Since \(\DirBWMP(\GuaranteeThreshold)\) is an objective that is closed under suffixes, from \Cref{prop:winning-region-sure-objective-closed-under-suffixes-induces-submdp} we have that \(W_{\textsf{B}}\) induces a subMDP \(\subMDP{\MDP}{W_{\textsf{B}}}\) of \(\MDP\).

    Every vertex in \(\subMDP{\MDP}{W_{\textsf{B}}}\) is winning for sure-\(\DirBWMP(\GuaranteeThreshold)\).
    If we subtract \(\GuaranteeThreshold\) from the payoff of every edge in \(\subMDP{\MDP}{W_{\textsf{B}}}\), then we obtain an MDP \((\subMDP{\MDP}{W_{\textsf{B}}})_{-\GuaranteeThreshold}\)  in which every vertex is winning for sure-\(\DirBWMP(0)\).
    Using \Cref{lem:dirbwmp-suptotal-goodwin-dirfwmp-equivalence}, we get that every vertex in \((\subMDP{\MDP}{W_{\textsf{B}}})_{-\GuaranteeThreshold}\) is winning for sure-\(\DirFWMP(\WindowLength', 0)\).
    Adding back \(\GuaranteeThreshold\) to every edge payoff in \((\subMDP{\MDP}{W_{\textsf{B}}})_{-\GuaranteeThreshold}\) gives us that every vertex in \(\subMDP{\MDP}{W_{\textsf{B}}}\) is winning for sure-\(\DirFWMP(\WindowLength', \GuaranteeThreshold)\).
    Since \(\subMDP{\MDP}{W_{\textsf{B}}}\) is a subMDP of \(\MDP\), we get that every vertex in \(W_{\textsf{B}}\) is also winning for sure-\(\DirFWMP(\WindowLength', \GuaranteeThreshold)\) in the original MDP \(\MDP\).

    For the converse direction, since \(\DirFWMP(\WindowLength', \GuaranteeThreshold)\) is also an objective that is closed under suffixes, we can follow the same proof as above starting with the winning region \(W_{\textsf{F}}\) for sure-\(\DirFWMP(\WindowLength', \GuaranteeThreshold)\) in \(\MDP\) to show that every vertex in \(\MDP\) that is winning for sure-\(\DirFWMP(\WindowLength', \GuaranteeThreshold)\) is also winning for sure-\(\DirBWMP(\GuaranteeThreshold)\).
  \end{proof}

  \begin{lemma}\label{lem:sdbpr-sdfpr-equivalence}
    For every vertex \(v\) in an MDP \(\MDP\), the vertex \(v\) is winning for sure-\(\DirBWMP(\GuaranteeThreshold)\)-positive-\(\ReachObj(T)\) if and only if \(v\) is winning for sure-\(\DirFWMP(\WindowLength'', \GuaranteeThreshold)\)-positive-\(\ReachObj(T)\) for \(\WindowLength'' = 3 \cdot \abs{V} \cdot (3 \cdot \abs{V} \cdot \PayoffFunction_{\max} + 1) = 9 \cdot \WindowLength'\).
  \end{lemma}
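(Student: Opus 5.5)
The backward direction is immediate: a strategy winning for sure-\(\DirFWMP(\WindowLength'', \GuaranteeThreshold)\)-positive-\(\ReachObj(T)\) is also winning for sure-\(\DirBWMP(\GuaranteeThreshold)\)-positive-\(\ReachObj(T)\), because \(\WindowLength''\) witnesses the bounded-window condition on every outcome. So all the work is in the forward direction. There I would not reason about outcomes directly; instead I would reduce to the gadget of \Cref{alg:sure-dirfwmp-positive-reach}, where positive reachability is turned into pure sure-window satisfaction in an MDP with at most \(3\abs{V}\) vertices. The constant \(9\) in \(\WindowLength''\) is exactly \(\WindowLength'\) computed for \(3\abs{V}\) vertices, which points to this route.

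The plan is as follows.
\begin{enumerate}
\item Run the BWMP analogue of \Cref{alg:sure-dirfwmp-positive-reach}, in which \(\SureDirFWMPAlgo\) and \IsGoodEdgeAlgo{} test sure-\(\DirBWMP(\GuaranteeThreshold)\). By the BWMP analogue of \Cref{thm:sdfpr-correctness}, it returns the winning region \(R_{\textsf{B}}\) for sure-\(\DirBWMP(\GuaranteeThreshold)\)-positive-\(\ReachObj(T)\).
\item Run \Cref{alg:sure-dirfwmp-positive-reach} itself with window length \(\WindowLength''\), and compare the two executions step by step. Fix the same choice of edges in both runs.
\item Show by induction on the iterations of the \textbf{while} loop that every call to \IsGoodEdgeAlgo{} returns the same answer in both runs. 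Then the sets \(R\), \(E_{g}\) and \(E_{b}\) evolve identically, so both runs return the same set.
\end{enumerate}

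Each step of the induction has two parts.
\begin{itemize}
\item \emph{Initial restriction (Line~\ref{alg-line:sdfpr-sure-dirfwmp-winning-region}).} We need the sure-\(\DirBWMP(\GuaranteeThreshold)\) region to coincide with the sure-\(\DirFWMP(\WindowLength'', \GuaranteeThreshold)\) region. By \Cref{lem:dirbwmp-equivalent-to-dirfwmpl} it coincides with the \(\DirFWMP(\WindowLength', \GuaranteeThreshold)\) region, where \(\WindowLength' \le \WindowLength''\). Monotonicity of window length then gives
\[W_{\DirFWMP(\WindowLength')} \subseteq W_{\DirFWMP(\WindowLength'')} \subseteq W_{\DirBWMP}.\]
Both ends of this chain are equal, so all three regions are equal.
\item \emph{Each \IsGoodEdgeAlgo{} call.} Given identical \(R\) and \(E_{g}\), the constructed MDP \(\MDP_{e}\) is the same in both runs. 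It has at most \(\abs{V} + \abs{R} + 1 \le 3\abs{V}\) vertices, and its maximum absolute payoff is still \(\PayoffFunction_{\max}\). Applying \Cref{lem:dirbwmp-equivalent-to-dirfwmpl} to \(\MDP_{e}\) shows that \(\hat{s}\) is winning for sure-\(\DirBWMP(\GuaranteeThreshold)\) iff it is winning for sure-\(\DirFWMP(\WindowLength_{e}, \GuaranteeThreshold)\), where \(\WindowLength_{e} = \abs{V_{e}}(\abs{V_{e}}\PayoffFunction_{\max}+1) \le \WindowLength''\). The same sandwich argument then gives equivalence with sure-\(\DirFWMP(\WindowLength'', \GuaranteeThreshold)\).
\end{itemize}
With these two parts the executions agree at every step, and correctness of both algorithms yields the lemma.

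\textbf{Main obstacle.} The delicate point is that \Cref{lem:dirbwmp-equivalent-to-dirfwmpl} must be applied to \(\MDP_{e}\) as a fresh MDP. Its vertex count is what matters, hence the factor \(3\), so I would carefully recheck the bound \(\abs{V_{e}} \le 3\abs{V}\) after the identification of \(\widetilde{T}\) with \(T\). A second point needs checking: the BWMP correctness proofs, in particular \Cref{lem:sdfpr-isgoodedge-correctness}, silently use that windows close within a bounded number of steps. I would verify this by noting that on \(\MDP_{e}\) the winning strategies can already be taken to be \(\DirFWMP(\WindowLength_{e})\) strategies. Once that holds, the FWMP proof carries over verbatim with window length \(\WindowLength_{e}\).
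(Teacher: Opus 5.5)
Your proposal is correct and is essentially the paper's argument. The paper likewise recalls the correctness proof of \Cref{alg:sure-dirfwmp-positive-reach} with window length \(\WindowLength''\), and applies \Cref{lem:dirbwmp-equivalent-to-dirfwmpl} to \(\MDP\) and to the gadget MDPs \(\MDP_{e}\) (which have at most \(3\abs{V}\) vertices) to replace every sure-\(\DirFWMP(\WindowLength'', \GuaranteeThreshold)\) test with a sure-\(\DirBWMP(\GuaranteeThreshold)\) test. Your lockstep coupling of the two runs and your explicit inclusion chain between the sure-\(\DirFWMP(\WindowLength', \GuaranteeThreshold)\), sure-\(\DirFWMP(\WindowLength'', \GuaranteeThreshold)\) and sure-\(\DirBWMP(\GuaranteeThreshold)\) winning regions just make precise the monotonicity-in-window-length step that the paper leaves implicit.
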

  \begin{proof}
    We recall the proof of correctness of \Cref{alg:sure-dirfwmp-positive-reach} using window length \(\WindowLength'' = 3 \cdot \abs{V} \cdot (3 \cdot \abs{V} \cdot \PayoffFunction_{\max} + 1)\), giving us an algorithm for sure-\(\DirFWMP(\WindowLength'', \GuaranteeThreshold)\)-positive-\(\ReachObj(T)\).
    Observing that \Cref{alg:sure-dirfwmp-positive-reach} works with MDPs \(\MDP_{e}\) that have at most thrice as many vertices as \(\MDP\), we use \Cref{lem:dirbwmp-equivalent-to-dirfwmpl} in the proof of correctness of \Cref{alg:sure-dirfwmp-positive-reach} to replace every occurrence of
    ``winning for sure-\(\DirFWMP(\WindowLength'', \GuaranteeThreshold)\)'' with 
    ``winning for sure-\(\DirBWMP(\GuaranteeThreshold)\)''.
    This gives that the winning region for sure-\(\DirBWMP(\GuaranteeThreshold)\)-positive-\(\ReachObj(T)\) coincides with the winning region for sure-\(\DirFWMP(\WindowLength'', \GuaranteeThreshold)\)-positive-\(\ReachObj(T)\).
  \end{proof}

  We can follow the process described in the proof of \Cref{lem:sdbpr-sdfpr-equivalence} for the proofs of correctness of \Cref{alg:sure-dirfwmp-almostsure-buchi} and \Cref{alg:sure-fwmp-almostsure-fwmp} to obtain \Cref{lem:sdbab-sdfab-equivalence} and \Cref{lem:sasb-sasf-equivalence} for sure-\(\DirBWMP(\GuaranteeThreshold)\)-almost-sure-\(\BuchiObj(T)\) and sure-\(\BWMP(\GuaranteeThreshold)\)-almost-sure-\(\BWMP(\AlmostSureThreshold)\) respectively.
  In order to show the correctness of the algorithm for sure-\(\DirBWMP(\GuaranteeThreshold)\)-almost-sure-\(\BuchiObj(T)\), we need to prove an analogous variant of \Cref{lem:sdfab-v-l-bound} for \(\BWMP\), which we do in \Cref{lem:sdbab-bound}.

  \begin{lemma}\label{lem:sdbab-bound}
    For every vertex \(v\) in an MDP \(\MDP\), if \(\PlayerMain\) has a strategy to reach a target \(T \subseteq \Vertices\) from \(v\) with positive probability while simultaneously satisfying sure-\(\DirBWMP(\GuaranteeThreshold)\), then \(\PlayerMain\) also has a strategy to reach \(T\) from \(v\) with positive probability in at most \(\abs{V} \cdot \WindowLength''\) (that is \(3 \cdot \abs{\Vertices}^{2} \cdot (3 \cdot \abs{\Vertices} \cdot \PayoffFunction_{\max} + 1)\)) steps while simultaneously satisfying  sure-\(\DirBWMP(\GuaranteeThreshold)\).
  \end{lemma}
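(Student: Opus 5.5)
The plan is to reduce to \Cref{lem:sdfab-v-l-bound} via \Cref{lem:sdbpr-sdfpr-equivalence}, so that the pumping argument is not redone for an unbounded window.

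First, suppose \(v\) is winning for sure-\(\DirBWMP(\GuaranteeThreshold)\)-positive-\(\ReachObj(T)\). By \Cref{lem:sdbpr-sdfpr-equivalence}, \(v\) is then winning for sure-\(\DirFWMP(\WindowLength'', \GuaranteeThreshold)\)-positive-\(\ReachObj(T)\), with \(\WindowLength'' = 3 \cdot \abs{V} \cdot (3 \cdot \abs{V} \cdot \PayoffFunction_{\max} + 1)\). Next, apply \Cref{lem:sdfab-v-l-bound} with window length \(\WindowLength''\). This gives a strategy that satisfies sure-\(\DirFWMP(\WindowLength'', \GuaranteeThreshold)\) and, with positive probability, reaches \(T\) within \(\abs{V} \cdot \WindowLength''\) steps. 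Finally, every play in \(\DirFWMP(\WindowLength'', \GuaranteeThreshold)\) is in \(\DirBWMP(\GuaranteeThreshold)\), with \(\WindowLength''\) as the witness window length. So the same strategy is winning for sure-\(\DirBWMP(\GuaranteeThreshold)\) and reaches \(T\) with positive probability in at most \(\abs{V}\cdot \WindowLength'' = 3 \cdot \abs{V}^{2} \cdot (3 \cdot \abs{V} \cdot \PayoffFunction_{\max} + 1)\) steps, as claimed.

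The main obstacle is the first step. It relies on \Cref{lem:sdbpr-sdfpr-equivalence}, whose proof transfers the correctness argument of \Cref{alg:sure-dirfwmp-positive-reach} through \Cref{lem:dirbwmp-equivalent-to-dirfwmpl}, applied to the gadget MDPs \(\MDP_{e}\) with at most \(3\abs{V}\) vertices. The constant in \(\WindowLength''\) is chosen for exactly this: in \(\MDP_{e}\), \Cref{lem:dirbwmp-equivalent-to-dirfwmpl} needs window length \(3\abs{V}(3\abs{V}\PayoffFunction_{\max}+1)\), which is \(\WindowLength''\). I will check that the payoff bound \(\PayoffFunction_{\max}\) is unchanged in \(\MDP_{e}\); it is, since edges are copied with their payoffs.

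If relying on that transfer seems too indirect, I have a fallback. Fix a winning strategy \(\Strategy\) for sure-\(\DirBWMP(\GuaranteeThreshold)\). Restrict to the winning region \(W_{\textsf{B}}\) of sure-\(\DirBWMP(\GuaranteeThreshold)\), which induces a subMDP by \Cref{prop:winning-region-sure-objective-closed-under-suffixes-induces-submdp}. Whenever all windows are closed at a vertex, switch to a sure-\(\DirFWMP(\WindowLength', \GuaranteeThreshold)\) strategy, which exists there by \Cref{lem:dirbwmp-equivalent-to-dirfwmpl}. Then rerun the pigeonhole-and-shortcut argument of \Cref{lem:sdfab-v-l-bound} on a shortest positive-probability path to \(T\). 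The difficulty in this fallback is that the prefix before reaching \(T\) under \(\Strategy\) need not have window-closing points spaced at most \(\WindowLength''\) apart. I would handle this by first replacing \(\Strategy\) with the fixed-window strategy obtained from \Cref{lem:sdbpr-sdfpr-equivalence}, which brings the argument back to the main plan.
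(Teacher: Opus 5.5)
Your main plan is correct and is essentially the paper's argument: pass from the bounded objective to \(\DirFWMP(\WindowLength'', \GuaranteeThreshold)\), apply \Cref{lem:sdfab-v-l-bound} with window length \(\WindowLength''\), and conclude using \(\DirFWMP(\WindowLength'', \GuaranteeThreshold) \subseteq \DirBWMP(\GuaranteeThreshold)\). The paper justifies the first step by citing only \Cref{lem:dirbwmp-equivalent-to-dirfwmpl}, which covers sure-\(\DirBWMP(\GuaranteeThreshold)\) without the reachability requirement, so your appeal to \Cref{lem:sdbpr-sdfpr-equivalence} is the more accurate citation (that lemma is also where the factor \(3\) in \(\WindowLength''\) comes from), and your fallback paragraph is not needed.
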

  \begin{proof}
    From \Cref{lem:dirbwmp-equivalent-to-dirfwmpl}, we have that if \(\PlayerMain\) can ensure that there exists a window length \(\WindowLength\) such that all \(\GuaranteeThreshold\)-windows in every outcome of the MDP close in at most \(\WindowLength\) steps, then for \(\WindowLength'' = 3 \cdot \abs{V} \cdot (3 \cdot \abs{V} \cdot \PayoffFunction_{\max} + 1)\) we have that \(\PlayerMain\) has a strategy such that all \(\GuaranteeThreshold\)-windows close in at most \(\WindowLength''\) steps in every outcome of the MDP.
    Then, we use \Cref{lem:sdfab-v-l-bound} with window length \(\WindowLength''\)  to get the desired result.
  \end{proof}

  \begin{lemma}\label{lem:sdbab-sdfab-equivalence}
    For every vertex \(v\) in an MDP \(\MDP\), the vertex \(v\) is winning for sure-\(\DirBWMP(\GuaranteeThreshold)\)-almost-sure-\(\BuchiObj(T)\) if and only if \(v\) is winning for sure-\(\DirFWMP(\WindowLength'', \GuaranteeThreshold)\)-almost-sure-\(\BuchiObj(T)\) for \(\WindowLength'' = 3 \cdot \abs{V} \cdot (3 \cdot \abs{V} \cdot \PayoffFunction_{\max} + 1)\).
  \end{lemma}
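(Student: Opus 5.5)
The plan is to prove the two inclusions separately. The backward direction is immediate; the forward direction goes through \Cref{alg:sure-dirfwmp-almostsure-buchi} and its proof of correctness (\Cref{thm:sdfab-correctness}), with every appeal to ``winning for sure-\(\DirFWMP(\WindowLength, \GuaranteeThreshold)\)'' replaced by ``winning for sure-\(\DirBWMP(\GuaranteeThreshold)\)''. This is the same substitution pattern used in \Cref{lem:sdbpr-sdfpr-equivalence}.

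(\(\Leftarrow\)) Every play satisfying \(\DirFWMP(\WindowLength'', \GuaranteeThreshold)\) satisfies \(\DirBWMP(\GuaranteeThreshold)\), with \(\WindowLength''\) as the witness window length. Hence a strategy that is winning for sure-\(\DirFWMP(\WindowLength'', \GuaranteeThreshold)\)-almost-sure-\(\BuchiObj(T)\) is also winning for sure-\(\DirBWMP(\GuaranteeThreshold)\)-almost-sure-\(\BuchiObj(T)\).

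(\(\Rightarrow\)) I would study the recursion of the \(\BWMP\)-analogue of \Cref{alg:sure-dirfwmp-almostsure-buchi}. At each level it computes the winning region \(W_{\textsf{sbpr}}\) for sure-\(\DirBWMP(\GuaranteeThreshold)\)-positive-\(\ReachObj(T)\), then restricts to the sure-safe region \(S\) of that set. The steps are as follows.
\begin{enumerate}
\item Show that the converse argument of \Cref{thm:sdfab-correctness} transfers verbatim. Vertices outside \(S\) reach, with positive probability, vertices from which no sure-\(\DirBWMP(\GuaranteeThreshold)\) strategy reaches \(T\) with positive probability. Hence the region returned by the \(\BWMP\) version is exactly the winning region for sure-\(\DirBWMP(\GuaranteeThreshold)\)-almost-sure-\(\BuchiObj(T)\).
\item By \Cref{lem:sdbpr-sdfpr-equivalence}, \(W_{\textsf{sbpr}}\) coincides at every level with \(W_{\sdpr}\) computed for window length \(\WindowLength''\) in the same subMDP. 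The two recursions therefore produce identical sets \(S\) and identical subMDPs, and in particular the same final subMDP \(\MDP'\).
\item On \(\MDP'\), every vertex is winning for sure-\(\DirFWMP(\WindowLength'', \GuaranteeThreshold)\)-positive-\(\ReachObj(T)\). The \((\Rightarrow)\) construction of \Cref{thm:sdfab-correctness} then applies with window length \(\WindowLength''\): use the bounded-horizon strategies given by \Cref{lem:sdfab-v-l-bound} (equivalently \Cref{lem:sdbab-bound}), and reset them each time all windows are closed. This yields a strategy winning for sure-\(\DirFWMP(\WindowLength'', \GuaranteeThreshold)\)-almost-sure-\(\BuchiObj(T)\) from every vertex of \(\MDP'\).
\end{enumerate}
Combining the three steps, every vertex in the \(\DirBWMP\) winning region is in the \(\DirFWMP(\WindowLength'', \GuaranteeThreshold)\) winning region.

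The main obstacle is the subMDP-size bookkeeping in step 2. \Cref{lem:sdbpr-sdfpr-equivalence} fixes \(\WindowLength''\) in terms of the \(\abs{V}\) of the MDP it is applied to. The recursion only passes to subMDPs of \(\MDP\), and inside each call the auxiliary MDPs \(\MDP_{e}\) have at most \(3\abs{V}\) vertices. The window length \(\WindowLength''\) computed for \(\MDP\) therefore dominates the length required for every subMDP. This uses monotonicity, \(\DirFWMP(\WindowLength_{1},\GuaranteeThreshold) \subseteq \DirFWMP(\WindowLength_{2},\GuaranteeThreshold)\) for \(\WindowLength_{1} \le \WindowLength_{2}\), together with \Cref{lem:dirbwmp-equivalent-to-dirfwmpl} applied to each \(\MDP_{e}\). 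I would check this monotone domination explicitly before invoking \Cref{lem:sdbpr-sdfpr-equivalence} inside the recursion.
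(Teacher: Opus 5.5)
Your proposal is correct and follows essentially the paper's route. The paper obtains this lemma by rerunning the correctness proof of \Cref{alg:sure-dirfwmp-almostsure-buchi} at window length \(\WindowLength''\), replacing sure-\(\DirFWMP(\WindowLength'', \GuaranteeThreshold)\) by sure-\(\DirBWMP(\GuaranteeThreshold)\) via \Cref{lem:sdbpr-sdfpr-equivalence}, with \Cref{lem:sdbab-bound} supplying the bounded-reach step. Your version couples the two recursions, applies \Cref{lem:sdfab-v-l-bound} at \(\WindowLength''\) directly on the final subMDP, and makes the monotonicity argument for smaller subMDPs explicit; this is a more careful write-up of that same argument rather than a different proof.
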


\begin{lemma}\label{lem:sasb-sasf-equivalence}
  For every vertex \(v\) in an MDP \(\MDP\), the vertex \(v\) is winning for sure-\(\BWMP(\GuaranteeThreshold)\)-almost-sure-\(\BWMP(\AlmostSureThreshold)\) if and only if \(v\) is winning for sure-\(\FWMP(\WindowLength'', \GuaranteeThreshold)\)-almost-sure-\(\FWMP(\WindowLength'', \AlmostSureThreshold)\) for \(\WindowLength'' = 3 \cdot \abs{V} \cdot (3 \cdot \abs{V} \cdot \PayoffFunction_{\max} + 1)\).
\end{lemma}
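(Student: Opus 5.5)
The plan is to mirror the proof of \Cref{lem:sdbpr-sdfpr-equivalence}: re-run the proof of correctness of \Cref{alg:sure-fwmp-almostsure-fwmp} (\Cref{thm:sasf-correctness}) and its supporting lemmas with window length \(\WindowLength''\). At each point where the \(\FWMPL\) proof invokes a sure winning region for a window objective, we substitute the corresponding \(\BWMP\) statement using the equivalences already established. The goal is to show that the \(\BWMP\)-analogue of \Cref{alg:sure-fwmp-almostsure-fwmp} and \Cref{alg:sure-fwmp-almostsure-fwmp} instantiated with \(\WindowLength''\) compute the same sets at every stage. Since both algorithms are correct, in the sense of \Cref{thm:sasf-correctness} and its \(\BWMP\) analogue, the winning regions then coincide.

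\textbf{Step 1: the two initial sets coincide.} First I will show that \(W_{\Sure\GuaranteeThreshold}\) and \(W_{\Sure\AlmostSureThreshold}\) agree for \(\BWMP\) and for \(\FWMP(\WindowLength'',\cdot)\). For the prefix-independent objectives, a vertex is winning for sure-\(\BWMP(\GuaranteeThreshold)\) iff it is winning for sure-\(\ReachObj\) of the sure-\(\DirBWMP(\GuaranteeThreshold)\) region, with the analogous characterization for \(\FWMP\) (this is the standard structure from~\cite{CDRR15}). I would then apply \Cref{lem:dirbwmp-equivalent-to-dirfwmpl} to identify the direct regions. That lemma gives the bound \(\WindowLength'\), and since \(\WindowLength' \le \WindowLength''\) and fixed-window objectives are monotone in the window length, the \(\WindowLength''\) regions are sandwiched between the \(\WindowLength'\) regions and the \(\BWMP\) regions, so all of them are equal. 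The same argument applies in the subMDP \(\subMDP{\MDP}{W_{\Sure\GuaranteeThreshold}}\), which has at most \(\abs{V}\) vertices.

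\textbf{Step 2: the loop iterations coincide.} By induction on the iteration \(i\) of the repeat loop, I will show that \(P^{i}\), \(W^{i}_{\sdab}\) and \(W^{i}\) coincide in the two executions. \(\PosCPreAlgo\) and \(\SureAttrAlgo\) depend only on the current set \(W\) and the graph, so they agree automatically once \(W\) agrees. The \(\sdab\) sets are computed in the closure \(\subMDPClosure{\MDP}{W_{\Sure\GuaranteeThreshold}\setminus W}\), which has at most \(\abs{V}\) vertices. There, \Cref{lem:sdbab-sdfab-equivalence} gives exactly the equality of sure-\(\DirBWMP(\GuaranteeThreshold)\)-almost-sure-\(\BuchiObj(P)\) and sure-\(\DirFWMP(\WindowLength'',\GuaranteeThreshold)\)-almost-sure-\(\BuchiObj(P)\). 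Both loops therefore terminate at the same iteration and return the same set.

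\textbf{Step 3: the \(\BWMP\) algorithm is correct.} It remains to check that the \(\BWMP\) version of \Cref{alg:sure-fwmp-almostsure-fwmp} is correct, which requires redoing \Cref{lem:sasf-sure-dirfwmp-almost-sure-reach,lem:sasf-sure-fwmp-almost-sure-reach,lem:sasf-fwmp-beta-empty,lem:sasf-w-empty-fwmp-empty}. I expect this to be the main obstacle, specifically the analogues of \Cref{lem:sasf-fwmp-beta-empty} and \Cref{lem:sasf-w-empty-fwmp-empty}.
\begin{itemize}
\item \emph{Analogue of \Cref{lem:sasf-fwmp-beta-empty}.} The bounded-horizon argument there used a bound of \(\abs{V}\cdot\WindowLength\) steps. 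For \(\BWMP\), I would first replace sure-\(\DirBWMP(\AlmostSureThreshold)\) by sure-\(\DirFWMP(\WindowLength',\AlmostSureThreshold)\) via \Cref{lem:dirbwmp-equivalent-to-dirfwmpl}, which makes a finite bound available again.
\item \emph{Analogue of \Cref{lem:sasf-w-empty-fwmp-empty}.} This needs the fact that if sure-\(\BWMP(\AlmostSureThreshold)\) is empty in an MDP then almost-sure-\(\BWMP(\AlmostSureThreshold)\) is empty. I would take this from the \(\BWMP\) analysis of~\cite{BDOR20,DGG25}: the almost-sure \(\BWMP\) region is built from end components in which sure-\(\DirBWMP\) is won, so an empty sure region forces an empty almost-sure region.
\end{itemize}
With these substitutions, the remaining arguments carry over verbatim, and Steps 1 and 2 then give the equivalence of the two winning regions.
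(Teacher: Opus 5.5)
Your overall route is the paper's. Its proof of this lemma only remarks that one reruns the correctness proof of \Cref{alg:sure-fwmp-almostsure-fwmp} with window length \(\WindowLength''\), substituting sure-\(\DirBWMP\) regions for sure-\(\DirFWMP(\WindowLength'',\cdot)\) regions as in \Cref{lem:sdbpr-sdfpr-equivalence}, so that both algorithms compute the same sets and both are correct. Your Steps~2 and~3 carry this out. Step~3 is more explicit than the paper, which never spells out the \(\BWMP\) analogues of \Cref{lem:sasf-fwmp-beta-empty,lem:sasf-w-empty-fwmp-empty}, and your handling of both is sound.

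The step that fails is the characterization you invoke in Step~1. It is not true that a vertex is winning for sure-\(\BWMP(\GuaranteeThreshold)\) (or sure-\(\FWMP(\WindowLength,\GuaranteeThreshold)\)) iff it lies in the sure-attractor of the sure-\(\DirBWMP(\GuaranteeThreshold)\) (resp.\ sure-\(\DirFWMP\)) region.

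Counterexample: take \(\GuaranteeThreshold = 0\), a probabilistic vertex \(a\) with a self-loop of payoff \(1\) and an edge of payoff \(-10\) to a vertex \(b\), where \(b\)'s only edge is a self-loop of payoff \(0\). The sure-\(\DirBWMP(0)\) region is \(\{b\}\), since in \(a b b \cdots\) the window opened at \(a\) never closes. The vertex \(a\) is not in \(\SureAttr{}{\{b\}}\), because the adversary may loop at \(a\) forever. Yet every outcome from \(a\) satisfies \(\FWMP(1,0)\), hence \(\BWMP(0)\).

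What \cite{CDRR15} gives is an iterated fixed point, not a single attractor. Start from \(W=\emptyset\). Compute the sure-direct region \(X\) of the residual game on \(\Vertices\setminus W\), in which the adversary's edges into \(W\) are removed. Replace \(W\) by \(\SureAttr{}{W\cup X}\), and repeat until \(X=\emptyset\).

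The repair is to run this iteration for both \(\DirBWMP(\GuaranteeThreshold)\) and \(\DirFWMP(\WindowLength'',\GuaranteeThreshold)\). Each residual game has at most \(\abs{\Vertices}\) vertices, so your sandwich argument with \Cref{lem:dirbwmp-equivalent-to-dirfwmpl} makes the two iterations coincide stage by stage.

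You must also check that the \(\DirBWMP\) version of the iteration computes the sure-\(\BWMP\) region, since \cite{CDRR15} solve \(\BWMP\) by a different algorithm. Suppose no vertex of the final residual is winning for sure-\(\DirBWMP(\GuaranteeThreshold)\). Then for every \(\WindowLength\), no vertex there is winning for sure-\(\DirFWMP(\WindowLength,\GuaranteeThreshold)\). The adversary, who can keep the token in the residual, can therefore force a window open for more than \(1\) step, then more than \(2\) steps, and so on, which violates \(\BWMP(\GuaranteeThreshold)\). With Step~1 repaired this way, the rest of your argument goes through.
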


\paragraph*{Memory requirement and complexity for sure-almost-sure BWMP}

We conclude this section by discussing bounds on the memory requirement and the complexity for the sure-almost-sure satisfaction of \(\BWMP\) objectives.

\subparagraph*{Memory requirement: lower bound.}
It is known that memoryless strategies suffice for satisfying sure-\(\BWMP(\GuaranteeThreshold)\)~\cite[Theorem~19]{CDRR15} and for satisfying almost-sure-\(\BWMP(\GuaranteeThreshold)\)~\cite[Theorem~6.2(b)]{BDOR20}.
In \Cref{exa:bwmp-memory-lower-bound}, we show that memory of size at least  $\Omega(\abs{V} \cdot \PayoffFunction_{\max})$ may be necessary for the sure-almost-sure satisfaction of \(\BWMP\).

\begin{example} \label{exa:bwmp-memory-lower-bound}
  \begin{figure}[t]
    \centering
    \begin{tikzpicture}[node distance=1.5cm]
      \node[state, initial] (v1) {\(v_{1}\)};
      \node[state, right of=v1] (a) {\(a\)};
      \node[draw, random, right of=a] (b) {\(b\)};
      \node[state, right of=b, xshift=3mm] (c) {\(c\)};

      \draw 
      (v1) edge node[above]{\small \(\EdgeValues{-10}{}\)} (a)

      (a) edge[loop above] node[auto]{\small \(\EdgeValues{1}{}\)} (a)
      (a) edge node[above]{\small \(\EdgeValues{-10}{}\)} (b)

      (b) edge node[above]{\small \(\EdgeValues{-10}{.1}\)} (c)
      (b) edge[bend left=35] node[below right, pos=0.4]{\small \(\EdgeValues{-10}{.9}\)} (v1)

      (c) edge[loop right] node[auto]{\small \(\EdgeValues{5}{}\)} (c)
      ;
    \end{tikzpicture}
    \caption{%
      There exist memoryless winning strategies for sure-\(\BWMP(0)\) and for almost-sure-\(\BWMP(5)\) from all vertices in the MDP, but memory of size at least 31 is required for a strategy that is winning for sure-\(\BWMP(0)\)-almost-sure-\(\BWMP(5)\) from \(v_{1}\).
    }
    \label{fig:sdbpr-memory-lower-bound}
  \end{figure}
  
  Consider the MDP in Figure~\ref{fig:sdbpr-memory-lower-bound}.
  Starting from vertex $v_1$, a memoryless strategy that always chooses the self-loop on vertex $a$ satisfies sure-\(\BWMP(0)\) while a memoryless strategy that always chooses vertex $b$ from $a$ satisfies almost-sure-\(\BWMP(5)\).
  However, neither of these strategies are winning for sure-\(\BWMP(0)\)-almost-sure-\(\BWMP(5)\).
  
  In order to satisfy almost-sure-\(\BWMP(5)\) from $v_1$, the token needs to repeatedly reach vertex $b$ until it reaches vertex \(c\).
  However, each time the token moves from \(b\) to \(v_{1}\), the token has to go through the edges \((b, v_{1})\), \((v_{1}, a)\), and \((a, b)\) before it can reach \(b\) again.
  This accrues a payoff of \(-30\) and is thus an open \(0\)-window.
  In order to also ensure satisfaction of sure-\(\BWMP(0)\), player \(\PlayerMain\) needs to eventually always be closing \(0\)-windows.
  If the token takes the self-loop on \(a\) fewer than \(30\) times each time it reaches \(a\) before moving on to \(b\), then this results in an outcome that contains infinitely many open \(0\)-windows that never close.
  On the other hand, if each time the token reaches \(a\) the token takes the self-loop on $a$ at least $30$ times before forwarding the token to \(b\), then as long as the token does not reach \(c\) all \(0\)-windows get closed in at most \(33\) steps, and once the token reaches \(c\), all \(5\)-windows get closed in \(1\) step.
  We see that this memoryful strategy which uses a counter of size \(31\) is winning for sure-\(\BWMP(0)\)-almost-sure-\(\BWMP(5)\).

  This example can be generalized to an MDP \(\MDP_{n,\wmax}\) with $n + 3$ vertices \(\{v_n, v_{n-1}, \dots\), \(v_1, a, b, c\}\) and edges $(v_i, v_{i-1})$ for $2 \leq i \leq n$,  \((v_{1}, a)\),  $(a,b)$, $(b, v_n)$, and $(b,c)$ each with payoff $-\PayoffFunction_{\max}$, a self-loop on \(a\) with payoff $1$, and a self-loop on $c$ with payoff $5$.
  A winning strategy for sure-\(\BWMP(0)\)-almost-sure-\(\BWMP(5)\) from $v_n$ in \(\MDP_{n,\wmax}\) requires memory of size at least $(n + 2) \cdot \PayoffFunction_{\max} + 1$, which is \(\Omega(\abs{V} \cdot \wmax)\).
  \lipicsEnd
\end{example}

\subparagraph*{Memory requirement: upper bound.}

By \Cref{lem:sasb-sasf-equivalence}, if there exists a winning strategy \(\Strategy[\textsf{B}]\) from a vertex \(v\) for sure-almost-sure \(\BWMP\), then there also exists a winning strategy \(\Strategy[\textsf{F}]\) from \(v\) for sure-almost-sure \(\FWMP(\WindowLength'')\) for \(\WindowLength'' = 3 \cdot \abs{V} \cdot (3 \cdot \abs{V} \cdot \PayoffFunction_{\max} + 1)\), and moreover, this strategy \(\Strategy[\textsf{F}]\) is also winning strategy for sure-almost-sure \(\BWMP\) from \(v\).
From \Cref{thm:sasf-correctness}, we have that the memory of size \(\abs{V} \cdot \WindowLength''\) (which is \(\bigO(\abs{\Vertices}^{3} \cdot \abs{\PayoffFunction_{\max}} )\)) is sufficient for sure-almost-sure \(\FWMP(\WindowLength'')\) and thus is also sufficient for sure-almost-sure \(\BWMP\).

\subparagraph*{Complexity.}
From \cite[Theorem~19]{CDRR15}, we have that satisfaction of sure-\(\BWMP(\GuaranteeThreshold)\) and of sure-\(\DirBWMP(\GuaranteeThreshold)\) are both in \(\NP \intersection \coNP\).
Similar to the case of the sure-almost-sure satisfaction of \(\FWMPL\), the algorithm for the sure-almost-sure satisfaction of \(\BWMP\) makes a polynomial number of calls to algorithms for satisfaction of sure-\(\BWMP(\GuaranteeThreshold)\), sure-\(\BWMP(\AlmostSureThreshold)\),
and sure-\(\DirBWMP(\GuaranteeThreshold)\).
Thus, we have that sure-almost-sure satisfaction of \(\BWMP\) is in \(\PTime^{\NP \intersection \coNP}\), which is the same as \(\NP \intersection \coNP\)~\cite{Sch83}.

\begin{theorem}\label{thm:bwmp-sas-complexity}
  The sure-almost-sure problem for the \(\BWMP\) objective is in \(\NP \intersection \coNP\).
  The memory required is at least \(\Omega(\abs{V} \cdot \PayoffFunction_{\max})\) and at most \(\bigO(\abs{\Vertices}^{3} \cdot \abs{\PayoffFunction_{\max}} )\).
\end{theorem}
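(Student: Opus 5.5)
The theorem has three parts: membership in \(\NP \intersection \coNP\), the upper bound on memory, and the lower bound on memory. I would prove them in that order, building on \Cref{lem:sasb-sasf-equivalence} and the algorithm obtained from \Cref{alg:sure-fwmp-almostsure-fwmp} by replacing \(\FWMPL\) with \(\BWMP\) throughout.

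\emph{Complexity.} I would run the \(\BWMP\)-version of \Cref{alg:sure-fwmp-almostsure-fwmp}. Its correctness is inherited from \Cref{thm:sasf-correctness}, \Cref{thm:sdfab-correctness}, and \Cref{thm:sdfpr-correctness}, via the equivalences \Cref{lem:dirbwmp-equivalent-to-dirfwmpl}, \Cref{lem:sdbpr-sdfpr-equivalence}, and \Cref{lem:sdbab-sdfab-equivalence}, together with the bound \Cref{lem:sdbab-bound}. The non-oracle steps are polynomial:
\begin{itemize}
\item the \(\PosCPreAlgo\) computation,
\item sure-attractors and sure-safety,
\item the construction of each \(\MDP_{e}\), which has at most \(3\abs{V}\) vertices.
\end{itemize}
I would count the number of oracle calls exactly as in the complexity paragraph for \(\FWMPL\). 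The outer repeat loop runs at most \(\abs{V}\) times. The recursion depth of the B\"uchi subroutine is at most \(\abs{V}\). The while loop of the positive-reachability subroutine runs at most \(\abs{E}^{2}\) times. This gives polynomially many calls, each to one of the oracles sure-\(\BWMP(\GuaranteeThreshold)\), sure-\(\BWMP(\AlmostSureThreshold)\), or sure-\(\DirBWMP(\GuaranteeThreshold)\). By \cite[Theorem~19]{CDRR15} each oracle is in \(\NP \intersection \coNP\), so the whole procedure lies in \(\PTime^{\NP\intersection\coNP} = \NP\intersection\coNP\) \cite{Sch83}.

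The subtle point is that every oracle call is on an MDP whose size is polynomial in the input. The payoffs are only shifted by \(\GuaranteeThreshold\), so they stay polynomially bounded. I would also use \Cref{lem:dirbwmp-equivalent-to-dirfwmpl} only as a proof device and never instantiate the pseudo-polynomial window \(\WindowLength''\) in the algorithm itself.

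\emph{Upper bound on memory.} By \Cref{lem:sasb-sasf-equivalence}, every vertex winning for sure-\(\BWMP(\GuaranteeThreshold)\)-almost-sure-\(\BWMP(\AlmostSureThreshold)\) is also winning for the \(\FWMP(\WindowLength'')\) version. The construction of \(\Strategy[\sas]\) preceding \Cref{thm:fwmp-sas-complexity} then yields a winning strategy with memory \(3\abs{V}\WindowLength''\). Every outcome of this strategy satisfies \(\FWMP(\WindowLength'',\GuaranteeThreshold)\), which is contained in \(\BWMP(\GuaranteeThreshold)\). With probability \(1\) an outcome satisfies \(\FWMP(\WindowLength'',\AlmostSureThreshold)\), which is contained in \(\BWMP(\AlmostSureThreshold)\). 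So this strategy is winning for the \(\BWMP\) problem. Since \(\WindowLength''=\bigO(\abs{V}^{2}\PayoffFunction_{\max})\), the memory is \(\bigO(\abs{V}^{3}\PayoffFunction_{\max})\).

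\emph{Lower bound on memory.} I would use the family \(\MDP_{n,\wmax}\) from \Cref{exa:bwmp-memory-lower-bound}, and this is where I expect the main effort. The argument must show that any winning strategy with fewer than roughly \((n+2)\PayoffFunction_{\max}\) memory states fails. I would argue as follows:
\begin{enumerate}
\item Almost-sure \(\BWMP(5)\) forces the token to take \((a,b)\) from \(a\) on every visit to \(a\) before \(c\) is reached, since the run must eventually leave toward \(c\) with probability \(1\).
\item Each cycle through \(b,v_n,\dots,v_1,a,b\) carries payoff \(-(n+2)\PayoffFunction_{\max}\) apart from the self-loops on \(a\).
\item With a finite-memory strategy of \(m\) states, consider the outcome that repeatedly returns from \(b\) to \(v_n\). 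Along it the strategy's behaviour at \(a\) is eventually periodic, and the number of \(a\)-loops per visit is at most \(m\).
\item If \(m<(n+2)\PayoffFunction_{\max}\), each cycle has negative total payoff. The \(0\)-windows then never close, so this single outcome violates sure-\(\BWMP(0)\).
\end{enumerate}
The delicate part is step 3: I must rule out the possibility that memory states reached at \(a\) in different cycles combine into effectively longer counters, using a pigeonhole argument over the memory states at successive visits to \(a\). Since the family has \(\abs{V}=n+3\), this gives the lower bound \(\Omega(\abs{V}\cdot\PayoffFunction_{\max})\).
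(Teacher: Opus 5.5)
Your proposal is correct and follows the paper's own argument. The complexity bound comes from polynomially many calls to \(\NP \intersection \coNP\) oracles for sure-\(\BWMP\) and sure-\(\DirBWMP\), giving \(\PTime^{\NP \intersection \coNP} = \NP \intersection \coNP\); the upper bound on memory uses \Cref{lem:sasb-sasf-equivalence} with the \(\FWMP(\WindowLength'')\) strategy; and the lower bound uses the family \(\MDP_{n,\wmax}\) of \Cref{exa:bwmp-memory-lower-bound}. Your step~3 is easier than you fear: apply the pigeonhole within a single stay at \(a\), where the Mealy machine reads the constant input \(a\). More than \(m\) consecutive loops would then repeat a memory state and force looping at \(a\) forever after a positive-probability prefix, which violates almost-sure-\(\BWMP(\AlmostSureThreshold)\), and nothing carried over from other cycles can lengthen this count.
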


\section{Sure-limit-sure for FWMP and BWMP} \label{sec:sls}
In this section, we see an algorithm to solve the sure-limit-sure satisfaction of \(\FWMPL\) and \(\BWMP\) objectives.
\Cref{alg:sure-fwmp-limitsure-fwmp} (\(\SLSFWMPAlgo\)) returns the set of all vertices \(v\) that are winning for sure-\(\FWMP(\WindowLength, \GuaranteeThreshold)\)-limit-sure-\(\FWMP(\WindowLength, \LimitSureThreshold)\) in an MDP \(\MDP\).
An analogous algorithm to compute the winning region for sure-limit-sure satisfaction of \(\BWMP\) objectives is given in the appendix.
It is obtained by simply replacing \(\SureFWMPAlgo\) and \(\AlmostSureFWMPAlgo\) with \(\SureBWMPAlgo\) and \(\AlmostSureBWMPAlgo\) respectively.

\begin{algorithm}[t]
  \caption{\(\SLSFWMPAlgo(\MDP, \WindowLength, \GuaranteeThreshold, \AlmostSureThreshold)\)}%
  \label{alg:sure-fwmp-limitsure-fwmp}
  \begin{algorithmic}[1]
    \Require An MDP \(\MDP = ((\Vertices, \Edges), (\VerticesMain, \VerticesRandom), \ProbabilityFunction, \PayoffFunction)\), window length \(\WindowLength\), sure threshold \(\GuaranteeThreshold\), and limit-sure threshold \(\LimitSureThreshold\)
    \Ensure The winning region for sure-\(\FWMP(\WindowLength, \GuaranteeThreshold)\)-limit-sure-\(\FWMP(\WindowLength, \AlmostSureThreshold)\) in \(\MDP\)
    \State \(W_{\Sure\GuaranteeThreshold} \assign \SureFWMPAlgo(\MDP, \WindowLength, \GuaranteeThreshold)\)
    \State \(W_{\AlmostSure\AlmostSureThreshold} \assign \AlmostSureFWMPAlgo(\subMDP{\MDP}{ W_{\Sure\GuaranteeThreshold}}, \WindowLength, \LimitSureThreshold)\)
    \State \Return \(W_{\AlmostSure\AlmostSureThreshold}\)
  \end{algorithmic}
\end{algorithm}

\subparagraph{Description of \Cref{alg:sure-fwmp-limitsure-fwmp}.}
We first compute the set \(W_{\Sure\GuaranteeThreshold}\) of vertices that are winning for sure-\(\FWMP(\WindowLength, \GuaranteeThreshold)\) in \(\MDP\) using \cite[Algorithm~1]{CDRR15}.
Then, in the subMDP \(\subMDP{\MDP}{W_{\Sure\GuaranteeThreshold}}\) induced by \(W_{\Sure\GuaranteeThreshold}\), we compute using \cite[Algorithm~1]{BDOR20} the set \(W_{\AlmostSure\AlmostSureThreshold}\) of vertices that are winning for almost-sure-\(\FWMP(\WindowLength, \LimitSureThreshold)\), and we return this set \(W_{\AlmostSure\AlmostSureThreshold}\).
We briefly describe \cite[Algorithm~1]{BDOR20}: it first computes the set \(W_{\Sure\AlmostSureThreshold}\) of all vertices in \(\subMDP{\MDP}{ W_{\Sure\GuaranteeThreshold}}\) that are winning for sure-\(\FWMP(\WindowLength, \AlmostSureThreshold)\) in \(\subMDP{\MDP}{ W_{\Sure\GuaranteeThreshold}}\) (once again using \cite[Algorithm~1]{CDRR15}).
Then, the algorithm performs the MEC decomposition of \(\subMDP{\MDP}{ W_{\Sure\GuaranteeThreshold}}\), and labels all the MECs that have a non-empty intersection with \(W_{\Sure\AlmostSureThreshold}\) as \emph{good MECs}.
The algorithm finally returns all vertices in \(\subMDP{\MDP}{ W_{\Sure\GuaranteeThreshold}}\) from which \(\PlayerMain\) can reach the set of good MECs with probability~\(1\).

\subparagraph{Correctness of \Cref{alg:sure-fwmp-limitsure-fwmp}.}
It is useful to note that since \(\FWMP(\WindowLength, \LimitSureThreshold)\) is prefix-independent, from \Cref{prop:winning-region-almost-sure-objective-closed-under-suffixes-induces-submdp}, the set \(W_{\AlmostSure\AlmostSureThreshold}\) that is the winning region for almost-sure-\(\FWMP(\WindowLength, \LimitSureThreshold)\) in \(\MDP\) induces a subMDP of \(\MDP\).
Thus, if the token reaches \(W_{\AlmostSure\AlmostSureThreshold}\), then \(\PlayerMain\) has a strategy that keeps the token in \(W_{\AlmostSure\AlmostSureThreshold}\).
We also recall that we have \(\GuaranteeThreshold < \LimitSureThreshold\).

\begin{theorem}\label{thm:sls-algorithm-correctness}
  \Cref{alg:sure-fwmp-limitsure-fwmp} computes the winning region for sure-\(\FWMP(\WindowLength, \GuaranteeThreshold)\)-limit-sure-\(\FWMP(\WindowLength, \LimitSureThreshold)\) in \(\MDP\).
\end{theorem}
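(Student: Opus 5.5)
We prove the two inclusions separately. The soundness direction builds, for each \(\epsilon > 0\), a strategy that gives up on the \(\LimitSureThreshold\)-objective after a bounded number of attempts. The completeness direction uses that any strategy that is sure-\(\FWMP(\WindowLength, \GuaranteeThreshold)\) must stay inside \(W_{\Sure\GuaranteeThreshold}\).

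(\(\Rightarrow\)) Let \(v \in W_{\AlmostSure\AlmostSureThreshold}\) and fix \(\epsilon > 0\).
\begin{itemize}
\item By \cite{BDOR20}, in \(\subMDP{\MDP}{W_{\Sure\GuaranteeThreshold}}\) there is a finite-memory strategy \(\Strategy[\AlmostSure]\) that is winning for almost-sure-\(\FWMP(\WindowLength, \LimitSureThreshold)\) from \(v\). It almost surely reaches \(W_{\Sure\AlmostSureThreshold}\) inside a good MEC, and once there it switches to a sure-\(\FWMP(\WindowLength, \LimitSureThreshold)\) strategy. Since \(\GuaranteeThreshold < \LimitSureThreshold\), that strategy is also sure-\(\FWMP(\WindowLength, \GuaranteeThreshold)\).
\item Because reaching \(W_{\Sure\AlmostSureThreshold}\) happens with probability \(1\), there is \(N\) such that the token reaches \(W_{\Sure\AlmostSureThreshold}\) within \(N\) steps with probability at least \(1 - \epsilon\). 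For a finite-memory strategy in a finite MDP, \(N\) can be bounded using \(\ProbabilityFunction_{\min}^{\abs{V}}\) per block of \(\abs{V}\) steps, which gives the memory bound in \Cref{table:contributions-summary}.
\item Define \(\Strategy[\epsilon]\): follow \(\Strategy[\AlmostSure]\) for at most \(N\) steps. If \(W_{\Sure\AlmostSureThreshold}\) is reached, play a sure-\(\FWMP(\WindowLength, \LimitSureThreshold)\) strategy forever. Otherwise, after step \(N\), switch forever to a winning strategy for sure-\(\FWMP(\WindowLength, \GuaranteeThreshold)\) from the current vertex.
\item The switch is possible because \(\Strategy[\AlmostSure]\) keeps the token in \(W_{\Sure\GuaranteeThreshold}\), which induces a subMDP by \Cref{prop:winning-region-sure-objective-closed-under-suffixes-induces-submdp}. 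Every vertex there is winning for sure-\(\FWMP(\WindowLength, \GuaranteeThreshold)\).
\item Every outcome has a suffix following a sure-\(\FWMP(\WindowLength, \GuaranteeThreshold)\) strategy, so prefix independence gives sure satisfaction of \(\FWMP(\WindowLength, \GuaranteeThreshold)\). With probability at least \(1 - \epsilon\) the play lands in \(W_{\Sure\AlmostSureThreshold}\), giving \(\FWMP(\WindowLength, \LimitSureThreshold)\).
\end{itemize}

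(\(\Leftarrow\)) Let \(v\) be winning for sure-\(\FWMP(\WindowLength, \GuaranteeThreshold)\)-limit-sure-\(\FWMP(\WindowLength, \LimitSureThreshold)\). Then \(v \in W_{\Sure\GuaranteeThreshold}\). For each \(\epsilon\), the witness \(\Strategy[\epsilon]\) never moves the token outside \(W_{\Sure\GuaranteeThreshold}\) along any outcome. Suppose an outcome prefix reached \(u \notin W_{\Sure\GuaranteeThreshold}\). The continuation of \(\Strategy[\epsilon]\) after that prefix would be, by prefix independence, a sure-\(\FWMP(\WindowLength, \GuaranteeThreshold)\) strategy from \(u\), which is a contradiction. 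At probabilistic vertices this uses that every successor appears in some outcome. Hence each \(\Strategy[\epsilon]\) is a strategy of \(\subMDP{\MDP}{W_{\Sure\GuaranteeThreshold}}\), and \(v\) is winning for limit-sure-\(\FWMP(\WindowLength, \LimitSureThreshold)\) in that subMDP.

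By \cite{CDH04}, limit-sure and almost-sure winning coincide in finite MDPs for such prefix-independent (\(\omega\)-regular-like) objectives. This gives \(v \in W_{\AlmostSure\AlmostSureThreshold}\).

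The main obstacle is this last step: justifying the limit-sure/almost-sure equivalence for \(\FWMP\). I would appeal to \cite{BDOR20}: the almost-sure winning region is the almost-sure attractor of the good MECs. Outside it, under any strategy some positive probability, bounded below uniformly over strategies, is lost, so limit-sure fails as well.
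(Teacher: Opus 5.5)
Your proposal is correct and follows essentially the same route as the paper. For soundness you use the same two-phase strategy \(\Strategy[\epsilon]\) (almost-sure attractor to \(W_{\Sure\AlmostSureThreshold}\) for \(N\) steps, then commit to a sure-\(\FWMP(\WindowLength, \LimitSureThreshold)\) or sure-\(\FWMP(\WindowLength, \GuaranteeThreshold)\) strategy); for completeness you confine every witness strategy to \(\subMDP{\MDP}{W_{\Sure\GuaranteeThreshold}}\) and invoke the coincidence of limit-sure and almost-sure winning there, which the paper also relies on, implicitly, when it asserts the existence of \(\epsilon'\).
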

\begin{proof}
  \((\Rightarrow)\)
  We show that if a vertex \(v\) belongs to the set \(W_{\AlmostSure\AlmostSureThreshold}\) returned by \Cref{alg:sure-fwmp-limitsure-fwmp}, then \(v\) is winning for sure-\(\FWMP(\WindowLength, \GuaranteeThreshold)\)-limit-sure-\(\FWMP(\WindowLength, \LimitSureThreshold)\) in \(\MDP\).
  That is, we show that for all \(\epsilon > 0\), player \(\PlayerMain\) has a strategy \(\Strategy[\epsilon]\) from \(v\) that satisfies both sure-\(\FWMP(\WindowLength, \GuaranteeThreshold)\) and probability(\(1 - \epsilon\))-\(\FWMP(\WindowLength, \LimitSureThreshold)\).
  
  Recall that the set \(W_{\AlmostSure\AlmostSureThreshold}\) is the winning region for almost-sure-\(\FWMP(\WindowLength, \LimitSureThreshold)\) in \(\subMDP{\MDP}{ W_{\Sure\GuaranteeThreshold}}\).
  Before we describe \(\Strategy[\epsilon]\), we describe a strategy \(\Strategy[\AlmostSure]\) that is winning for almost-sure-\(\FWMP(\WindowLength, \LimitSureThreshold)\) in \(\subMDP{\MDP}{ W_{\Sure\GuaranteeThreshold}}\) from all vertices in \(W_{\AlmostSure\AlmostSureThreshold}\).
  We recall the following from \cite[Lemma 6.1]{BDOR20}:
  if the winning region for almost-sure-\(\FWMP(\WindowLength, \LimitSureThreshold)\) in an MDP is non-empty, then the winning region for sure-\(\FWMP(\WindowLength, \LimitSureThreshold)\) in the MDP is also non-empty.
  Since we started with \(W_{\AlmostSure\AlmostSureThreshold}\)  non-empty, the winning region for sure-\(\FWMP(\WindowLength, \LimitSureThreshold)\) in \(\subMDP{\MDP}{ W_{\Sure\GuaranteeThreshold}}\) is also non-empty (we call this region \(W_{\Sure\AlmostSureThreshold}\)).
  Moreover, the set \(W_{\AlmostSure\AlmostSureThreshold}\) is precisely the set of all vertices in \(\subMDP{\MDP}{ W_{\Sure\GuaranteeThreshold}}\) from where \(\PlayerMain\) has a strategy to reach \(W_{\Sure\AlmostSureThreshold}\) with probability~\(1\).
  The strategy \(\Strategy[\AlmostSure]\) is defined as follows:
  if the token is in \(W_{\AlmostSure\AlmostSureThreshold} \setminus W_{\Sure\AlmostSureThreshold}\), then follow an almost-sure attractor strategy to \(W_{\Sure\AlmostSureThreshold}\), and if the token is in \(W_{\Sure\AlmostSureThreshold}\), then follow a winning strategy for sure-\(\FWMP(\WindowLength, \LimitSureThreshold)\) in \(\subMDP{\MDP}{W_{\Sure\AlmostSureThreshold}}\).

  If \(\PlayerMain\) follows this strategy \(\Strategy[\AlmostSure]\), then with probability \(1\), the token eventually reaches \(W_{\Sure\AlmostSureThreshold}\) and stays there forever, and eventually all \(\LimitSureThreshold\)-windows close in at most \(\WindowLength\) steps.
  Thus, this strategy is winning for almost-sure-\(\FWMP(\WindowLength, \LimitSureThreshold)\), and since \(\GuaranteeThreshold < \LimitSureThreshold\), we have that this strategy is also winning for almost-sure-\(\FWMP(\WindowLength, \GuaranteeThreshold)\).
  However, this strategy may not be winning for sure-\(\FWMP(\WindowLength, \GuaranteeThreshold)\) as there may exist an outcome of \(\Strategy[\AlmostSure]\) where the token gets stuck in \(W_{\AlmostSure\AlmostSureThreshold} \setminus W_{\Sure\AlmostSureThreshold}\) forever and infinitely many open \(\GuaranteeThreshold\)-windows of length \(\WindowLength\) occur in the outcome.
  We modify the strategy \(\Strategy[\AlmostSure]\) to construct a strategy \(\Strategy[\epsilon]\) that is winning for both sure-\(\FWMP(\WindowLength, \GuaranteeThreshold)\) and probability(\(1 - \epsilon\))-\(\FWMP(\WindowLength, \LimitSureThreshold)\). 
  
  Let \(N\) be a sufficiently large integer that depends on \(\abs{W_{\AlmostSure\AlmostSureThreshold}}\) (the number of vertices in \(W_{\AlmostSure\AlmostSureThreshold}\)), \(\ProbabilityFunction_{\min}\) (the minimum edge probability in \(W_{\AlmostSure\AlmostSureThreshold}\)), and \(\epsilon\).
  The strategy \(\Strategy[\epsilon]\) is as follows:
  for the first \(N\) steps of the play, the strategy \(\Strategy[\epsilon]\) follows \(\Strategy[\AlmostSure]\), that is, if the token is in \(W_{\AlmostSure\AlmostSureThreshold} \setminus W_{\Sure\AlmostSureThreshold}\), then \(\Strategy[\epsilon]\) follows an almost-sure-attractor strategy to \(W_{\Sure\AlmostSureThreshold}\), and if the token is in \(W_{\Sure\AlmostSureThreshold}\), then \(\Strategy[\epsilon]\) follows a winning strategy for sure-\(\FWMP(\WindowLength, \LimitSureThreshold)\) in \(\subMDP{\MDP}{W_{\Sure\AlmostSureThreshold}}\).
  After \(N\) steps of the play have elapsed, if the token is in \(W_{\Sure\AlmostSureThreshold}\), then \(\Strategy[\epsilon]\) continues to follow the winning strategy for sure-\(\FWMP(\WindowLength, \LimitSureThreshold)\) in the \(\subMDP{\MDP}{W_{\Sure\AlmostSureThreshold}}\) for the rest of the play, and otherwise the token is in \(W_{\AlmostSure\AlmostSureThreshold} \setminus W_{\Sure\AlmostSureThreshold}\), in which case \(\Strategy[\epsilon]\) switches to a strategy that is winning for sure-\(\FWMP(\WindowLength, \GuaranteeThreshold)\) in \(\subMDP{\MDP}{ W_{\Sure\GuaranteeThreshold}}\).
  Such a winning strategy exists from every vertex in \(W_{\Sure\GuaranteeThreshold}\), and thus such a winning strategy also exists from all vertices in \(W_{\AlmostSure\AlmostSureThreshold} \setminus W_{\Sure\AlmostSureThreshold}\).
  
  If a large enough \(N\) is chosen, then with probability at least \(1 - \epsilon\), the token reaches \(W_{\Sure\AlmostSureThreshold}\) in the first \(N\) steps following which satisfaction of \(\FWMP(\WindowLength, \LimitSureThreshold)\) is guaranteed, and even if the token does not reach \(W_{\Sure\AlmostSureThreshold}\), all outcomes of this strategy satisfy \(\FWMP(\WindowLength, \GuaranteeThreshold)\).
  Thus, we have that this strategy \(\Strategy[\epsilon]\) is winning for sure-\(\FWMP(\WindowLength, \GuaranteeThreshold)\) and probability(\(1 - \epsilon\))-\(\FWMP(\WindowLength, \LimitSureThreshold)\).
  Since we can construct such a strategy \(\Strategy[\epsilon]\) for every \(\epsilon > 0\), we have that \(v\) is winning for sure-\(\FWMP(\WindowLength, \GuaranteeThreshold)\)-limit-sure-\(\FWMP(\WindowLength, \LimitSureThreshold)\) in \(\MDP\). 

  \((\Leftarrow)\)
  We show that if \(v\) does not belong to the set \(W_{\AlmostSure\AlmostSureThreshold}\) returned by \Cref{alg:sure-fwmp-limitsure-fwmp}, then \(v\) is not winning for sure-\(\FWMP(\WindowLength, \GuaranteeThreshold)\)-limit-sure-\(\FWMP(\WindowLength, \LimitSureThreshold)\) in \(\MDP\).
  If \(v \not\in W_{\AlmostSure\AlmostSureThreshold}\), then exactly one of the following hold: either \(v \in V \setminus W_{\Sure\GuaranteeThreshold}\) or \(v \in W_{\Sure\GuaranteeThreshold} \setminus W_{\AlmostSure\AlmostSureThreshold}\).
  \begin{itemize}
  \item Suppose \(v \in V \setminus W_{\Sure\GuaranteeThreshold}\).
    Then, the vertex \(v\) is not winning for sure-\(\FWMP(\WindowLength, \GuaranteeThreshold)\) in \(\MDP\), and hence it is also not winning for     sure-\(\FWMP(\WindowLength, \GuaranteeThreshold)\)-limit-sure-\(\FWMP(\WindowLength, \LimitSureThreshold)\) in \(\MDP\).
  \item Suppose \(v \in W_{\Sure\GuaranteeThreshold} \setminus W_{\AlmostSure\AlmostSureThreshold}\).
    Starting from \(v\), if \(\PlayerMain\) ever moves the token out of \(W_{\Sure\GuaranteeThreshold}\) into \(V \setminus W_{\Sure\GuaranteeThreshold}\), then \(\PlayerMain\) cannot ensure satisfaction of sure-\(\FWMP(\WindowLength, \GuaranteeThreshold)\).
    On the other hand, if \(\PlayerMain\) never moves the token out of \(W_{\Sure\GuaranteeThreshold}\), then since \(v\) is not in the winning region \(W_{\AlmostSure\AlmostSureThreshold}\) for almost-sure-\(\FWMP(\WindowLength, \LimitSureThreshold)\) in \(\subMDP{\MDP}{ W_{\Sure\GuaranteeThreshold}}\), there exists some \(0 < \epsilon' \le 1\) such that \(\PlayerMain\) cannot satisfy probability(\(1 - \epsilon'\))-\(\FWMP(\WindowLength, \LimitSureThreshold)\) from \(v\).
    For all \(\epsilon < \epsilon'\), player \(\PlayerMain\) does not have a strategy to keep the token in \(W_{\Sure\GuaranteeThreshold}\) and also satisfy probability(\(1 - \epsilon\))-\(\FWMP(\WindowLength, \LimitSureThreshold)\), in which case the limit-sure-\(\FWMP(\WindowLength, \LimitSureThreshold)\) is not satisfied from \(v\).
    Thus, we have that \(v\) is not winning for sure-\(\FWMP(\WindowLength, \GuaranteeThreshold)\)-limit-sure-\(\FWMP(\WindowLength, \LimitSureThreshold)\) in \(\MDP\).
  \end{itemize}
  This shows the correctness of \Cref{alg:sure-fwmp-limitsure-fwmp}, and also shows how to construct strategies for the satisfaction of sure-\(\FWMP(\WindowLength, \GuaranteeThreshold)\)-limit-sure-\(\FWMP(\WindowLength, \LimitSureThreshold)\) for \(\PlayerMain\) from vertices in \(W_{\AlmostSure\AlmostSureThreshold}\).
\end{proof}

\paragraph*{Memory requirement and complexity for sure-limit-sure satisfaction}
In the rest of this section, we discuss the memory requirement and complexity for the sure-limit-sure satisfaction problems for the \(\FWMPL\) and \(\BWMP\) objectives.

\subparagraph*{Memory requirement of SLS-\(\FWMPL\).}
For every vertex \(v\) in the set \(W_{\AlmostSure\LimitSureThreshold}\) returned by \Cref{alg:sure-fwmp-limitsure-fwmp}, for all \(\epsilon > 0\), we consider the winning strategy \(\Strategy[\epsilon]\) for sure-\(\FWMP(\WindowLength, \GuaranteeThreshold)\)-probability(\(1 - \epsilon\))-\(\FWMP(\WindowLength, \LimitSureThreshold)\) from \(v\) as described in the proof of \Cref{thm:sls-algorithm-correctness}.
The strategy \(\Strategy[\epsilon]\) is defined in two phases:
in the first phase \(\Strategy[\epsilon]\) follows a memoryless almost-sure attractor strategy to \(W_{\Sure\AlmostSureThreshold}\) until either the token reaches \(W_{\Sure\AlmostSureThreshold}\) or \(N\) steps of the play elapse (whichever occurs first), after which the second phase begins where \(\Strategy[\epsilon]\) follows a winning strategy for either sure-\(\FWMP(\WindowLength, \LimitSureThreshold)\) or sure-\(\FWMP(\WindowLength, \GuaranteeThreshold)\) (depending on whether the token is in \(W_{\Sure\AlmostSureThreshold}\) or \(W_{\AlmostSure\AlmostSureThreshold} \setminus W_{\Sure\AlmostSureThreshold}\) respectively at the end of the first phase).
A counter of size \(N\) counts the length of the first phase of the strategy, and memory of size \(\WindowLength\) is sufficient for a winning strategy for sure-\(\FWMP(\WindowLength, \LimitSureThreshold)\) or sure-\(\FWMP(\WindowLength, \GuaranteeThreshold)\) \cite[Theorem 4.4]{DGG25} in the second phase.
Thus, memory size of \(N + \WindowLength \) suffices for \(\Strategy[\epsilon]\).

We show lower and upper bounds for \(N\), respectively 
\(\Omega\left(\frac{1 - \epsilon}{(\ProbabilityFunction_{\min})^{\abs{V}}}\right)\) and \(\bigO\left(\frac{\abs{V} \cdot \log(1/\epsilon)}{(\ProbabilityFunction_{\min})^{\abs{V}}}\right)\) for the minimum number of steps that the almost-sure attractor strategy to \(W_{\Sure\AlmostSureThreshold}\) must be followed for to ensure that \(W_{\Sure\AlmostSureThreshold}\) is reached with probability at least \(1 - \epsilon\).
We show the bounds for a family of conservative MDPs, i.e.,
given \(\abs{V}\) and \(\ProbabilityFunction_{\min}\), the family consists of the MDP in which the probability \(x_{N}\) of reaching \(W_{\Sure\AlmostSureThreshold}\) from the initial vertex in at most \(N\) visits to probabilistic vertices is the minimum amongst all MDPs with \(\abs{V}\) vertices and minimum probability \(\ProbabilityFunction_{\min}\).
If the bounds hold for this family of MDPs, then they hold for all MDPs.
An MDP \(\MDP_{m,p}\) of this family is shown in \Cref{fig:N-epsilon-bound} with \(\abs{V} = 2m + 1\) vertices and \(\ProbabilityFunction_{\min} = p\).
The MDP \(\MDP_{m,p}\) consists of vertices \(u_{1}, u_{2}, \ldots, u_{m+1}\) belonging to \(\PlayerMain\) and vertices \(v_{1}, v_{2}, \ldots, v_{m}\) that are probabilistic.
For all \(1 \le i \le m\), we have that vertex \(u_{i}\) has an out-neighbour of \(v_{i}\), and vertex \(v_{i}\) has out-neighbours \(u_{1}\) and \(u_{i+1}\) with probabilities \(1 - p\) and \(p\) respectively.
Vertices \(u_{1}\) and \(u_{m+1}\) have self-loops with payoffs \(\GuaranteeThreshold\) and \(\AlmostSureThreshold\) respectively.
All other edges in \(\MDP_{m,p}\) have payoff \(\GuaranteeThreshold - 1\).
\begin{figure}[t]
  \centering
  \scalebox{0.8}{
    \begin{tikzpicture}[node distance=1.7cm]
      \node[state] (u0) {\(u_{1}\)};
      \node[random, draw, right of=u0, xshift=1mm] (v0) {\(v_{1}\)};
      
      \node[state, right of=v0] (u1) {\(u_{2}\)};
      \node[random, draw, right of=u1, xshift=-3mm] (v1) {\(v_{2}\)};

      \node[state, right of=v1] (u2) {\(u_{3}\)};
      \node[random, draw, right of=u2, xshift=-3mm] (v2) {\(v_{3}\)};
      
      \node[right of=v2, xshift=-3mm] (dots) {\(\cdots\)};

      \node[random, draw, right of=dots, xshift=-3mm] (vk) {\(v_{m}\)};
      \node[state, right of=vk] (uV) {\(u_{m+1}\)};
      
      \draw 
      (u0) edge (v0)
      (v0) edge[bend right=20] node[above, pos=0.2, yshift=-0.7mm, xshift=-0.1mm]{\small \(\EdgeProbability{1 - p}\)} (u0)
      (v0) edge node[above, pos=0.3]{\small\(\EdgeProbability{p}\)} (u1)
      
      (u1) edge (v1)
      (v1) edge[bend right=35] node[above, pos=0.3]{\small\(\EdgeProbability{1 - p}\)} (u0)
      (v1) edge node[above, pos=0.3]{\small\(\EdgeProbability{p}\)} (u2)
      
      (u2) edge (v2)
      (v2) edge[bend right=35] node[above, pos=0.3]{\small\(\EdgeProbability{1 - p}\)} (u0)
      (v2) edge node[above, pos=0.3]{\small\(\EdgeProbability{p}\)} (dots)
      
      (dots) edge (vk)
      
      (vk) edge node[above, pos=0.3]{\small\(\EdgeProbability{p}\)} (uV)
      (vk) edge[bend right=35] node[above, pos=0.3]{\small \(\EdgeProbability{1 - p}\)} (u0)
      
      (uV) edge[loop right] node[above, pos=0.3]{\(\EdgeValues{\AlmostSureThreshold}{}\)} (uV)
      
      (u0) edge[loop left]node[below, pos=0.3]{\(\EdgeValues{\GuaranteeThreshold}{}\)} (u0)
      ;
    \end{tikzpicture}
  }
  \caption{%
    The self-loop on \(u_{1}\) has payoff \(\GuaranteeThreshold\) and the self-loop on edge \(u_{m+1}\) has payoff \(\AlmostSureThreshold\).
    Every other edge has payoff \(\GuaranteeThreshold - 1\).}
  \label{fig:N-epsilon-bound}
\end{figure}

For all \(\WindowLength \ge 1\), the set \(W_{\Sure\AlmostSureThreshold}\) consists of only \(u_{m+1}\).
In order to reach \(u_{m+1}\) starting from \(u_{1}\), the ``right'' edge \((v_{i}, v_{i+1})\) must be seen at \(m\) consecutive visits to the probabilistic vertices \(v_{1}, v_{2}, \ldots, v_{m}\).
If at any probabilistic vertex along the way the token moves to \(u_{1}\), then the token loses all progress towards reaching \(u_{m+1}\) and has to start over again.
We want to find bounds on the number of times \(N\) the MDP \(\MDP_{m,p}\) visits probabilistic vertices for to ensure that \(u_{m+1}\) is reached with probability at least \(1 - \epsilon\).
Formally, we would like to find \(N\) in terms of \(\abs{V}\), \(\ProbabilityFunction_{\min}\), and \(\epsilon\) such that \(x_{N} \ge 1 - \epsilon\) holds.
It can be seen that the probability \(x_{N}\) of reaching \(u_{m+1}\) starting from \(u_{1}\) in \(\MDP_{m,p}\) after visiting probabilistic vertices at most \(N\) times is equal to the following:
given a coin that lands heads with probability \(p\) and tails with probability \(1 - p\), the probability that a sequence of \(N\) tosses of this coin contains a streak of at least \(m\) consecutive heads.
We can compute \(x_{N}\) using the recurrence relation
\begin{align*}
  x_{N} &= p^{m} + \sum_{i=1}^{m} p^{i-1} \cdot (1 - p) \cdot  x_{N-i}\\
        &= p^{m} + (1 - p) \cdot x_{N-1}
          + p \cdot (1 - p) \cdot  x_{N-2}
          + \cdots +
          p^{m-1} \cdot (1 - p) \cdot  x_{N-m}
\end{align*}
for \(N \ge m\), and with initial values \(x_{0} = x_{1} = \cdots = x_{m - 1} = 0\).
Intuitively, a sequence of \(N\) coin tosses that contains a streak of \(m\) heads either does not contains a tail in the first \(m\) tosses (which happens with probability \(p^{m}\)), or the first tail appears at the \(i^{\text{th}}\) position (this happens with probability \(p^{i-1} \cdot (1-p)\), following which there is a streak of \(m\) heads in the remaining \(N - i\) tosses with probability \(x_{N-i}\)) for \(1 \le i \le m\).
This recurrence relation, however, does not have a nice closed form, so we find approximate lower and upper bounds for \(N\) such that \(x_{N} \ge 1 - \epsilon\).
\begin{itemize}
\item 
  \emph{Upper bound:}
  We divide the sequence of \(N\) coin tosses into blocks of length \(m\) each (we assume that \(m\) divides \(N\)).
  We define \(q\) to be probability that at least one of the \(N/m\) blocks consists of all heads.
  If one of the blocks is all heads, then this is a witness for the streak of at least \(m\) consecutive heads in the sequence of \(N\) coin tosses, and thus, we have that \(x_{N} \ge q\).
  For each block, the probability that the block contains at least one tail is \((1 - p^{m})\), independent of the other blocks.
  The probability \(q\) that at least one of the blocks is all heads is \(1 - (1 - p^{m})^{N / m}\).
  We get that \(x_{N} \ge 1 - (1 - p^{m})^{N / m}\).
  Further, we have the following equivalences for \(0 < \epsilon < 1\):
  {
    \allowdisplaybreaks
    \begin{align*}
      &&  1 - (1 - p^{m})^{N / m} &\ge 1 - \epsilon \\
      \iff &&    (1 - p^{m})^{N / m} &\le \epsilon \\
      \iff &&    \frac{1}{(1 - p^{m})^{N / m}} &\ge \frac{1}{\epsilon} \\
      \iff &&   \frac{N}{m} \log(1/(1 - p^{m})) &\ge \log(1/\epsilon) \\
      \iff &&    N &\ge \frac{m \cdot \log(1/\epsilon)}{\log(1/(1 - p^{m}))}
    \end{align*}
  }%
  Moreover, for values of \(p\) not close to \(1\) (say \(0 < p \le 0.5\)), we have that \(\frac{1}{\log(1 / (1 - p^m))}\) is closely bounded above by \(\frac{2.4}{p^{m}}\).
  
    Suppose \(N\) is large enough (i.e. \(N \ge \frac{2.4 \cdot m \cdot \log(1/\epsilon)}{p^{m}}\)).
    Then we have that \(N \ge \frac{m \cdot \log(1/\epsilon)}{\log(1/(1 - p^{m}))}\), and from the chain of equivalences, we have that \(1 - (1 - p^{m})^{N / m} \ge 1 - \epsilon\).
    Since we also have that \(x_{N} \ge 1 - (1 - p^{m})^{N / m} \), it follows that the desired inequality \(x_{N} \ge 1 - \epsilon\) holds.
  Since \(m = \frac{\abs{V} - 1}{2}\) and \(p = \ProbabilityFunction_{\min}\) for \(\MDP_{m,p}\), we substitute these values in \(\frac{2.4 \cdot m \cdot \log(1/\epsilon)}{p^{m}}\) and get that for all MDPs with \(\abs{V}\) vertices and minimum probability \(\ProbabilityFunction_{\min}\), choosing a value of \(N\) that is at least \(\bigO\left(\frac{\abs{V} \cdot \log(1/\epsilon)}{(\ProbabilityFunction_{\min})^{\abs{V}}}\right)\) is sufficient to ensure that \(W_{\Sure\AlmostSureThreshold}\) is reached with probability at least \(1 - \epsilon\) after \(N\) visits to probabilistic vertices, and this gives us an upper bound for \(N\).
\item
  \emph{Lower bound:}
  If there is a streak of at least \(m\) consecutive heads in the sequence of \(N\) coin tosses, then the streak must begin at the \(i^{\text{th}}\) coin toss for some \(1 \le i \le N - m + 1 \).
  Moreover, the probability of starting a streak of \(m\) heads at the \(i^{\text{th}}\) coin toss is \(p^m\), independent of \(i\). 
  Using the union bound, we get that \(x_{N}\) is bounded above by \((N - m + 1) \cdot p^{m}\), which is bounded above by \(N \cdot p^{m}\).
  That is, we have \(N \cdot p^{m} \ge x_{N}\).
  Thus, in order to satisfy \(x_{N} \ge 1 - \epsilon\), it is necessary that \(N \cdot p^m  \ge 1 - \epsilon\) holds, that is     \(N \ge \frac{1 - \epsilon}{p^{m}}\).
  Substituting \(m = \frac{\abs{V} - 1}{2}\) and \(p = \ProbabilityFunction_{\min}\) gives us a lower bound \(\Omega\left(\frac{1 - \epsilon}{(\ProbabilityFunction_{\min})^{\abs{V}}}\right)\) for \(N\).
\end{itemize}

\subparagraph*{Complexity of SLS-\(\FWMPL\).}
Since sure satisfaction and almost-sure satisfaction of the \(\FWMPL\) objectives can both be checked in polynomial time~\cite{CDRR15, BDOR20}, we have that  \Cref{alg:sure-fwmp-limitsure-fwmp} runs in polynomial time, and we have the following result.

\begin{theorem} \label{thm:fwmp-sls-complexity}
  Sure-limit-sure satisfaction of \(\FWMPL\) objectives is in \(\PTime\).
  The memory required is at least \(\Omega\left(\frac{1 - \epsilon}{(\ProbabilityFunction_{\min})^{\abs{V}}} + \WindowLength\right)\) and at most \(\bigO\left(\frac{\abs{V} \cdot \log(1/\epsilon)}{(\ProbabilityFunction_{\min})^{\abs{V}}} + \WindowLength\right)\).
\end{theorem}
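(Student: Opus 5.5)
The statement has three parts, and I will treat them separately: membership in \(\PTime\), the upper bound on memory, and the lower bound on memory.

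\emph{Membership in \(\PTime\).} By \Cref{thm:sls-algorithm-correctness}, \Cref{alg:sure-fwmp-limitsure-fwmp} computes exactly the winning region. The algorithm makes two calls.
\begin{itemize}
\item The first is \(\SureFWMPAlgo\), which runs in time \(\bigO(\abs{V}^2 \cdot \abs{E} \cdot \WindowLength \cdot \log(\wmax))\) by \cite[Lemma~6]{CDRR15}.
\item The second is \(\AlmostSureFWMPAlgo\) on the subMDP \(\subMDP{\MDP}{W_{\Sure\GuaranteeThreshold}}\), which is polynomial by \cite{BDOR20}. It consists of one more sure-\(\FWMP\) computation, a MEC decomposition, and an almost-sure reachability computation, each polynomial.
\end{itemize}
Both bounds are polynomial when \(\WindowLength\) is given in unary, so the whole algorithm is polynomial.

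\emph{Memory upper bound.} I use the two-phase strategy \(\Strategy[\epsilon]\) from the proof of \Cref{thm:sls-algorithm-correctness}.
\begin{itemize}
\item In the first phase it plays a memoryless almost-sure attractor strategy towards \(W_{\Sure\LimitSureThreshold}\) for at most \(N\) steps, tracked by a counter.
\item In the second phase it plays a sure-\(\FWMP\) strategy with memory \(\WindowLength\) \cite[Theorem~4.4]{DGG25}.
\end{itemize}
This gives memory \(N + \WindowLength\), so it remains to bound \(N\). The counter only needs to count visits to probabilistic vertices, not all steps. From a vertex of \(W_{\AlmostSure\LimitSureThreshold}\) off the target, the attractor strategy reaches \(W_{\Sure\LimitSureThreshold}\) within at most \(\abs{V}\) steps with probability at least \((\ProbabilityFunction_{\min})^{\abs{V}}\). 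I then apply the block argument from the upper-bound bullet above: split time into independent blocks, each of which succeeds with probability at least \(p^m\). This yields \(1-(1-p^m)^{N/m} \ge 1-\epsilon\) once \(N = \bigO\!\left(\frac{\abs{V}\log(1/\epsilon)}{(\ProbabilityFunction_{\min})^{\abs{V}}}\right)\). Adding \(\WindowLength\) gives the stated upper bound.

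\emph{Memory lower bound.} The two summands come from two separate examples.
\begin{itemize}
\item The \(\WindowLength\) term: a known example \cite{CDRR15,DGG25} requires memory \(\WindowLength-1\) already for sure-\(\FWMP(\WindowLength,\GuaranteeThreshold)\). I embed it so that it is the only way to satisfy the sure part, for instance as a vertex reached after \(W_{\Sure\LimitSureThreshold}\) or as a disjoint component.
\item The \(\frac{1-\epsilon}{(\ProbabilityFunction_{\min})^{\abs V}}\) term: I use \(\MDP_{m,p}\) from \Cref{fig:N-epsilon-bound}. All non-loop edges have payoff \(\GuaranteeThreshold-1\). So any outcome that keeps cycling between \(u_1\) and the \(v_i\) forever violates \(\FWMP(\WindowLength,\GuaranteeThreshold)\), and a sure-winning strategy must on every branch eventually settle on the self-loop at \(u_1\). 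A finite-memory strategy with \(M\) states therefore cannot leave \(u_1\) more than about \(M\) times along the all-failure branch. I will argue this by pumping: if the strategy revisits the same memory state at \(u_1\) after an excursion, an adversarial repetition of that excursion gives an outcome that fails the sure objective.
\end{itemize}

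\emph{Main obstacle.} The hard step is this memory-versus-attempts argument, because failures can occur at different depths \(i\) of the chain. Once it is established, the union bound \(x_N \le N p^m\) forces \(M \ge \frac{1-\epsilon}{p^m}\) for probability at least \(1-\epsilon\). With \(m = (\abs{V}-1)/2\), this gives \(\Omega\!\left(\frac{1-\epsilon}{(\ProbabilityFunction_{\min})^{\abs{V}}}\right)\), where constants in the exponent are absorbed as in the paper's convention. Combining the two examples yields the sum.
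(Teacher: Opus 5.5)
Your proposal is correct and has the same skeleton as the paper's argument: \(\PTime\) from the two polynomial subroutine calls, the two-phase strategy with an \(N\)-counter plus \(\WindowLength\) memory, and block and union-bound estimates on \(\MDP_{m,p}\). You justify two steps differently, and in both cases more soundly.

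\emph{Upper bound on \(N\).} The paper asserts that \(\MDP_{m,p}\) is a worst case among all MDPs with given \(\abs{V}\) and \(\ProbabilityFunction_{\min}\). It does not prove this, and the claim in fact fails. Keep \(u_1\) and the target, and let \(\abs{V}-2\) probabilistic vertices each move forward with probability \(p\) and back to \(u_1\) otherwise; this chain has a smaller per-attempt success probability. Your argument needs no such extremality claim. Under the memoryless almost-sure attractor strategy every vertex has a path of length at most \(\abs{V}\) to the target. So, by the Markov property, every block of \(\abs{V}\) steps succeeds with probability at least \((\ProbabilityFunction_{\min})^{\abs{V}}\) whatever the history.

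\emph{Lower bound.} The paper only bounds how many probabilistic visits are needed and asserts that this ``requires a counter of the given size''. Your pumping argument actually bounds the number of Mealy-machine states, and the obstacle you flag disappears.
\begin{itemize}
\item On any single outcome, suppose the memory state at \(u_1\) is the same at two excursion starts. The adversary can then replay exactly the probabilistic choices made in between, at whatever depths the failures occurred.
\item By determinism of the Mealy machine, the resulting play is again an outcome. It makes infinitely many excursions without reaching \(u_{m+1}\).
\item Every edge this play uses has payoff at most \(\GuaranteeThreshold\), so the \(\GuaranteeThreshold\)-windows opened on its payoff-\((\GuaranteeThreshold-1)\) edges never close. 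This violates sure-\(\FWMP(\WindowLength,\GuaranteeThreshold)\).
\item Hence every outcome contains at most \(M\) attempts, each of which succeeds with probability exactly \(p^{m}\) once started.
\end{itemize}
Summing over attempts gives \(1-\epsilon \le M p^{m}\). Sum over attempts rather than over coin tosses; the latter would cost a factor \(m\).

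One caveat, which you inherit from the paper. The bound \((1-\epsilon)/(\ProbabilityFunction_{\min})^{(\abs{V}-1)/2}\) is not \(\Omega\bigl((1-\epsilon)/(\ProbabilityFunction_{\min})^{\abs{V}}\bigr)\), because the ratio \((\ProbabilityFunction_{\min})^{-(\abs{V}+1)/2}\) is unbounded. Constants in an exponent cannot be absorbed. Running your lower-bound argument on the chain gadget above improves the exponent to \(\abs{V}-2\). Even then the bound with exponent \(\abs{V}\) does not follow literally, so state the lower bound with the exponent the construction actually yields.
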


\subparagraph*{Memory requirement of SLS-\(\BWMP\).}
A winning strategy for sure-\(\BWMP(\GuaranteeThreshold)\)-probability(\(1 - \epsilon\))-\(\BWMP(\LimitSureThreshold)\) follows a similar two-phase structure as that for \(\FWMPL\).
The first phase requires a counter of size \(N\) (with the same bounds as above), and memoryless strategies exist for the satisfaction of sure-\(\BWMP(\GuaranteeThreshold)\) or sure-\(\BWMP(\LimitSureThreshold)\)~\cite[Theorem 19]{CDRR15} in the second phase.
Thus, we have that memory of size \(N + 1\) suffices for the sure-limit-sure satisfaction of \(\BWMP\).

\subparagraph*{Complexity of SLS-\(\BWMP\).}
Since sure satisfaction and almost-sure satisfaction of the \(\BWMP\) objectives are in \(\NP \intersection \coNP\)~\cite{CDRR15, BDOR20}, and the algorithm for sure-limit-sure satisfaction for \(\BWMP\) makes one call each to the two algorithms, we have that sure-limit-sure satisfaction of \(\BWMP\) objectives is in \(\PTime^{\NP \intersection \coNP}\), which is the same as \(\NP \intersection \coNP\)~\cite{Sch83}.

\begin{theorem}\label{thm:bwmp-sls-complexity}
  Sure-limit-sure satisfaction of \(\BWMP\) objectives is in \(\NP \intersection \coNP\).
  The memory required is at least \(\Omega\left(\frac{1 - \epsilon}{(\ProbabilityFunction_{\min})^{\abs{V}}}\right)\) and at most 
  \(\bigO\left(\frac{\abs{V} \cdot \log(1/\epsilon)}{(\ProbabilityFunction_{\min})^{\abs{V}}}\right)\).
\end{theorem}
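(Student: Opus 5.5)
The plan is to assemble \Cref{thm:bwmp-sls-complexity} from three ingredients already in place: the correctness of the BWMP analogue of \Cref{alg:sure-fwmp-limitsure-fwmp}, the complexity of its two subroutine calls, and the two-phase strategy together with the bounds on \(N\) derived for the family \(\MDP_{m,p}\).

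\emph{Correctness.} The first step is to justify that the BWMP analogue, namely \(\SureBWMPAlgo\) followed by \(\AlmostSureBWMPAlgo\) on the induced subMDP, is correct. I would redo the proof of \Cref{thm:sls-algorithm-correctness} with \(\FWMPL\) replaced by \(\BWMP\), checking the facts it uses. \(\BWMP\) is prefix-independent, so \Cref{prop:winning-region-sure-objective-closed-under-suffixes-induces-submdp} and \Cref{prop:winning-region-almost-sure-objective-closed-under-suffixes-induces-submdp} still give induced subMDPs. The almost-sure winning region is the almost-sure attractor of the sure-\(\BWMP(\LimitSureThreshold)\) region, by the BWMP counterpart of \cite[Lemma 6.1]{BDOR20} and \cite[Algorithm~2]{BDOR20}. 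Finally, \(\BWMP(\LimitSureThreshold)\subseteq\BWMP(\GuaranteeThreshold)\) because \(\GuaranteeThreshold<\LimitSureThreshold\).

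With these facts, the two-phase strategy \(\Strategy[\epsilon]\) works as follows.
\begin{itemize}
\item In the first phase it follows the almost-sure attractor to \(W_{\Sure\LimitSureThreshold}\) for \(N\) steps.
\item In the second phase it switches to a memoryless sure-\(\BWMP(\LimitSureThreshold)\) or sure-\(\BWMP(\GuaranteeThreshold)\) strategy \cite[Theorem 19]{CDRR15}, according to where the token is.
\end{itemize}
Every outcome then satisfies \(\BWMP(\GuaranteeThreshold)\), since the \(N\)-step prefix is irrelevant by prefix-independence. With probability at least \(1-\epsilon\), the outcome reaches \(W_{\Sure\LimitSureThreshold}\) and satisfies \(\BWMP(\LimitSureThreshold)\). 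The converse direction is word-for-word the same as before.

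\emph{Complexity.} The algorithm makes one oracle call to sure-\(\BWMP\) and one to almost-sure-\(\BWMP\), both in \(\NP\cap\coNP\) by \cite{CDRR15,BDOR20}. All the remaining work is polynomial: building the subMDP, the MEC decomposition, and the attractors. The problem is therefore in \(\PTime^{\NP\cap\coNP}=\NP\cap\coNP\) by \cite{Sch83}.

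\emph{Memory upper bound.} The strategy uses a counter of size \(N\) plus one state for the memoryless second phase. Taking \(N=\bigO\bigl(\abs{V}\log(1/\epsilon)/(\ProbabilityFunction_{\min})^{\abs{V}}\bigr)\) from the upper-bound calculation for \(\MDP_{m,p}\) gives the claimed bound.

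\emph{Memory lower bound.} This is the delicate step. I would again use \(\MDP_{m,p}\), now with the \(\BWMP\) thresholds. The only sure-\(\BWMP(\LimitSureThreshold)\) vertex is \(u_{m+1}\). Staying at \(u_1\) forever is the only way to avoid infinitely many \(\GuaranteeThreshold\)-windows that are open for unboundedly long, since every non-self-loop edge has payoff \(\GuaranteeThreshold-1\).

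The argument would then proceed in three steps.
\begin{enumerate}
\item A strategy that is sure-winning must, on the adversarial branch that always returns to \(u_1\), eventually stop leaving \(u_1\). Otherwise the open windows recur and their lengths are unbounded, since the adversary can interleave arbitrarily many failed attempts.
\item Hence there is a finite number \(K\) of attempts the strategy makes along that branch.
\item For probability at least \(1-\epsilon\), the union-bound argument forces \(K\cdot m\ge N\ge(1-\epsilon)/p^{m}\). A strategy with fewer memory states cannot count that many attempts before stabilizing, because a finite-memory strategy on the cycle through \(u_1\) either loops forever or stops within its memory size.
\end{enumerate}

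The main obstacle is making step 1 rigorous for \(\BWMP\), because window lengths are not fixed. I would handle it with a pumping argument on the Mealy machine. If the strategy attempts infinitely often on the all-failure branch, then since it has finitely many states it attempts periodically. The adversary's failures then yield a play in which the \(\GuaranteeThreshold\)-windows opened at \(u_1\) are closed only by the \(u_1\) self-loop visits, which occur at bounded frequency. I would verify that the negative excursions accumulate, so that the windows have unbounded length, using \(\GuaranteeThreshold-1\) payoffs against at most \(\GuaranteeThreshold\) on the loop.
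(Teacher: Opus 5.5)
Your correctness argument, the $\PTime^{\NP\cap\coNP}=\NP\cap\coNP$ bound, and the memory upper bound are exactly the paper's. The upper bound is a counter of size $N$ plus one state, with memoryless sure-$\BWMP$ strategies in the second phase. For the lower bound you also use the paper's family $\MDP_{m,p}$. For the upper bound you could avoid the paper's unproved claim that $\MDP_{m,p}$ is the worst case: from every vertex of the almost-sure region, the memoryless almost-sure attractor reaches the sure-$\BWMP(\LimitSureThreshold)$ region within $\abs{V}$ steps with probability at least $(\ProbabilityFunction_{\min})^{\abs{V}}$, which bounds $N$ directly. The lower-bound sketch, however, does not yield the stated bound as written, and you have placed the difficulty in the wrong step.

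Step~1 is immediate. The loop on $u_1$ pays exactly $\GuaranteeThreshold$ and every other edge pays $\GuaranteeThreshold-1$. So on an outcome that never reaches $u_{m+1}$, the payoff sum measured relative to $\GuaranteeThreshold$ never increases, and the $\GuaranteeThreshold$-window opened at the start of each excursion $u_1v_1\cdots$ never closes. Contrary to your sketch, the $u_1$ loops do not close these windows, and no pumping is needed.

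In step~3, the chain $K\cdot m\ge N\ge(1-\epsilon)/p^m$ only gives $K\ge(1-\epsilon)/(m\,p^m)$, a factor $m$ too weak. Apply the union bound to attempts instead: each excursion from $u_1$ reaches $u_{m+1}$ with probability exactly $p^m$, so at most $K$ attempts per outcome bounds the success probability by $K p^m$.

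More importantly, that union bound needs a bound on attempts valid on \emph{every} outcome. Your memory argument only looks at a single deterministic failing branch, yet failures can occur at any $v_i$ and the Mealy machine observes where. The missing argument is this: along any outcome, the memory states in which the strategy takes $(u_1,v_1)$ are pairwise distinct. If one recurred, the environment could repeat the intervening failures forever, producing an outcome with infinitely many excursions and violating $\BWMP(\GuaranteeThreshold)$. Hence $M$ memory states allow at most $M$ attempts on every outcome, and $M\ge(1-\epsilon)/p^m$.

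Finally, with $\abs{V}=2m+1$ this gives $(1-\epsilon)/(\ProbabilityFunction_{\min})^{(\abs{V}-1)/2}$, which is not $\Omega\bigl((1-\epsilon)/(\ProbabilityFunction_{\min})^{\abs{V}}\bigr)$. The paper makes the same substitution, so you inherit this rather than introduce it. You can raise the exponent to $\abs{V}-2$ by linking probabilistic vertices $v_1,\dots,v_m$ directly, with $v_i\to v_{i+1}$ with probability $p$ and $v_i\to u_1$ otherwise, so that $\abs{V}=m+2$.
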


\section{Discussion} \label{sec:conc}

In the context of reactive synthesis, beyond worst-case synthesis has received considerable attention since the last decade.
In this paper, we continued this study for window mean-payoff objectives which strengthen the classical mean-payoff objective by removing the drawback that there may exist arbitrarily long windows with suboptimal mean payoff.
Our algorithms use techniques that are significantly different from existing studies on beyond worst-case synthesis and our results are positive in the sense that the complexities of solving these problems are no more than that of solving the corresponding sure satisfaction problem and the almost-sure satisfaction problem separately.

\subparagraph{Novelty.}
The results presented in this paper are novel in the following ways:
\begin{itemize}
\item In \cite{CDRR15}, it is shown that the winning region for sure-\(\FWMP(\WindowLength, \GuaranteeThreshold)\) can be computed by repeatedly finding winning regions for sure-\(\DirFWMP(\WindowLength, \GuaranteeThreshold)\) and sure-attractors to sets.
  We carefully extend this approach in \Cref{alg:sure-fwmp-almostsure-fwmp}, where we find the winning region for sure-\(\FWMP(\WindowLength, \GuaranteeThreshold)\)-almost-sure-\(\FWMP(\WindowLength, \AlmostSureThreshold)\) by computing the winning regions for sure-\(\DirFWMP(\WindowLength, \AlmostSureThreshold)\), for sure-\(\DirFWMP(\WindowLength, \GuaranteeThreshold)\)-almost-sure-\(\BuchiObj(P)\), and sure-attractors to various sets.
\item In \Cref{alg:sure-dirfwmp-almostsure-buchi}, we show that the techniques that allow us to reduce satisfaction of almost-sure-\(\BuchiObj(T)\) to satisfaction of positive-\(\ReachObj(T)\) can be lifted to give a reduction of
sure-\(\DirFWMP(\WindowLength, \GuaranteeThreshold)\)-almost-sure-\(\BuchiObj(T)\) to sure-\(\DirFWMP(\WindowLength, \GuaranteeThreshold)\)-positive-\(\ReachObj(T)\).
\item \Cref{alg:sure-dirfwmp-positive-reach} uses the \(\IsGoodEdgeAlgo\) procedure to reduce the problem of finding the winning region for sure-\(\DirFWMP(\WindowLength, \GuaranteeThreshold)\)-positive-\(\ReachObj(T)\) in an MDP \(\MDP\) to finding the winning region for sure-\(\DirFWMP(\WindowLength, \GuaranteeThreshold)\) in MDPs \(\MDP_{e}\) intricately created from \(\MDP\) for edges \(e\) in \(\MDP\).
  The MDPs \(\MDP_{e}\) are designed such that \(\PlayerMain\) is forced to choose edges that with positive probability take the token closer to the target \(T\), allowing us to drop the reachability requirement.
  This technique has not appeared elsewhere in the literature to the best of our knowledge. 
\end{itemize}

\subparagraph{Randomized strategies.}
We have only considered deterministic strategies for \(\PlayerMain\) in this work since for both sure-almost-sure satisfaction and sure-limit-sure satisfaction, randomized strategies neither give \(\PlayerMain\) any additional power nor do they allow for winning strategies of smaller memory size.

For the sure-almost-sure satisfaction problem, \Cref{exa:fwmp-memory-lower-bound,exa:bwmp-memory-lower-bound} show that, in general, the memory requirement does not decrease even if we allow randomization for \(\FWMPL\) and \(\BWMP\) objectives respectively.
This is because the token must eventually always take the correct sequence of \(\abs{V}\) vertices for sure-almost-sure satisfaction of \(\FWMPL\) and the correct sequence of \(\abs{V} \cdot \wmax\) edges for the sure-almost-sure satisfaction of \(\BWMP\).

Randomization does not help in the case of sure-limit-sure satisfaction either.
The MDP in \Cref{fig:N-epsilon-bound} shows that the same lower bounds on the memory also continue to hold for randomized strategies. 
A winning strategy of \(\PlayerMain\) must satisfy sure-\(\FWMP(\WindowLength, \GuaranteeThreshold)\) by either reaching and looping at \(u_{m+1}\) or by looping at \(u_{1}\), and must reach \(u_{m+1}\) (where \(\FWMP(\WindowLength, \LimitSureThreshold)\) is satisfied) with probability at least \(1-\epsilon\). 
Such a strategy must surely eventually switch from the memoryless almost-sure attractor strategy to \(u_{m+1}\) to a strategy that loops at \(u_{1}\) if \(u_{m+1}\) has not been reached after several steps. 
The winning strategy must follow the almost-sure attractor strategy to \(u_{m+1}\) for sufficiently many steps to ensure that \(\FWMP(\WindowLength, \LimitSureThreshold)\) is satisfied with a high probability, and this requires a counter of the given size.

\subparagraph{Future work.}
Some interesting problems to study include the combination of satisfying a window mean payoff $\alpha$ surely (resp. almost surely) and a window mean payoff $\beta$ with some threshold probability $p$, or a combination of sure, almost sure, and threshold probability satisfaction, or even a combination of satisfaction with different threshold probabilities, that is, $\GuaranteeThreshold_1$ with probability $p_1$ and $\GuaranteeThreshold_2$ with probability $p_2$.
Another avenue to extend this work is study combinations of winning conditions with different window lengths, for instance, satisfaction of sure-\(\FWMP(\WindowLength_{1}, \GuaranteeThreshold)\)-almost-sure-\(\FWMP(\WindowLength_{2}, \AlmostSureThreshold)\) for \(\WindowLength_{1} \ne \WindowLength_{2}\).
It would also be interesting to see if some of these techniques can be used for the study of other finitary objectives like finitary parity or Streett conditions.
Finally, various combinations of sure, almost-sure, and threshold probability satisfaction can also be studied for classical parity (some of which exist in~\cite{BRR17} and \cite{BGR20}) and classical mean-payoff objectives.

\bibliography{mybib}

\appendix

\clearpage
\section{Appendix: Additional algorithms and results}

\begin{algorithm}
  \caption{\(\SureAttrAlgo(\MDP, T)\)}%
  \label{alg:sure-attr}
  \begin{algorithmic}[1]
    \Require An MDP \(\MDP = ((\Vertices, \Edges), (\VerticesMain, \VerticesRandom), \ProbabilityFunction, \PayoffFunction)\) and a target set \(T \subseteq \Vertices\)
    \Ensure \(\SureAttr{}{T}\), the sure-attractor of \(T\)
    \State \(X \assign T\)
    \ForAll{\(v \in \VerticesMain\)}
    \If{\(\OutNeighbours{v} \intersection T \ne \emptyset\)}
    \Comment{at least one out-neighbour in \(T\)}
    \State \(X \assign X \union \{v\}\)
    \EndIf
    \EndFor
    \ForAll{\(v \in  \VerticesRandom\)}
    \If{\(\OutNeighbours{v} \subseteq T\)}
    \Comment{all out-neighbours in \(T\)}
    \State \(X \assign X \union \{v\}\)
    \EndIf
    \EndFor
    \If{\(X \setminus T = \emptyset\)}
    \State \Return \(T\)
    \Else
    \State \Return \(\SureAttrAlgo(\MDP, X)\)
    \EndIf
  \end{algorithmic}
\end{algorithm}

\begin{algorithm}
  \caption{\(\SureSafeAlgo(\MDP, T)\)}%
  \label{alg:sure-safety}
  \begin{algorithmic}[1]
    \Require An MDP \(\MDP = ((\Vertices, \Edges), (\VerticesMain, \VerticesRandom), \ProbabilityFunction, \PayoffFunction)\) and a safe set \(T \subseteq \Vertices\)
    \Ensure The winning region for sure-\(\SafeObj(T)\) in \(\MDP\)
    \State \(X \assign \emptyset\)
    \ForAll{\(v \in \VerticesRandom\)}
    \If{\(\OutNeighbours{v} \intersection (V \setminus T) \ne \emptyset\)}
    \Comment{at least one out-neighbour in \(\Vertices \setminus T\)}
    \State \(X \assign X \union \{v\}\)
    \EndIf
    \EndFor
    \ForAll{\(v \in \VerticesMain\)}
    \If{\(\OutNeighbours{v} \subseteq \Vertices \setminus T\)}
    \Comment{all out-neighbours in \(\Vertices \setminus T\)}
    \State \(X \assign X \union \{v\}\)
    \EndIf
    \EndFor
    \If{\(X = \emptyset\)}
    \State \Return \(T\)
    \Else
    \State \Return \(\SureSafeAlgo(\MDP, T \setminus X)\)
    \EndIf
  \end{algorithmic}
\end{algorithm}

\begin{algorithm}
  \caption{\(\SASBWMPAlgo(\MDP, \GuaranteeThreshold, \AlmostSureThreshold)\)}%
  \label{alg:sure-bwmp-almostsure-bwmp}
  \begin{algorithmic}[1]
    \Require An MDP \(\MDP = ((\Vertices, \Edges), (\VerticesMain, \VerticesRandom), \ProbabilityFunction, \PayoffFunction)\), sure threshold \(\GuaranteeThreshold\), and almost-sure threshold \(\AlmostSureThreshold\)
    \Ensure The winning region for sure-\(\BWMP(\GuaranteeThreshold)\)-almost-sure-\(\BWMP(\AlmostSureThreshold)\) in \(\MDP\)
    \State \(W_{\Sure\GuaranteeThreshold} \assign \SureBWMPAlgo(\MDP, \GuaranteeThreshold)\)\label{alg-line:sasb-sure-alpha-win}
    \State \(W_{\Sure\AlmostSureThreshold} \assign \SureBWMPAlgo(\subMDP{\MDP}{ W_{\Sure\GuaranteeThreshold}}, \AlmostSureThreshold)\)\label{alg-line:sasb-sure-beta-win}

    \State \(W \assign W_{\Sure\AlmostSureThreshold}\)\label{alg-line:sasb-initialize-w-to-wbeta}
    \Repeat \label{alg-line:sasb-repeat}
    \State \(P \assign \PosCPreAlgo(\subMDP{\MDP}{ W_{\Sure\GuaranteeThreshold}}, W)\)\label{alg-line:sasb-poscpre}
    \State \(W_{\sdab} \assign \SureDirBWMPASBuchiAlgo(\subMDPClosure{\MDP}{W_{\Sure\GuaranteeThreshold} \setminus W}, \GuaranteeThreshold, P)\)\label{alg-line:sasb-sadb}
    \State \(W \assign W \union W_{\sdab}\)\label{alg-line:sasb-include-wsdab}
    \State \(W \assign \SureAttrAlgo(\subMDP{\MDP}{ W_{\Sure\GuaranteeThreshold}}, W)\)\label{alg-line:sasb-sure-attr}

    \Until{\(W_{\sdab} = \emptyset\)}\label{alg-line:sasb-repeat-block-end}
    \State \Return \(W\)\label{alg-line:sasb-return-w}
  \end{algorithmic}
\end{algorithm}

\begin{algorithm}[t]
  \caption{\(\SureDirBWMPASBuchiAlgo(\MDP, \GuaranteeThreshold, T)\)}%
  \label{alg:sure-dirbwmp-almostsure-buchi}
  \begin{algorithmic}[1]
    \Require An MDP \(\MDP = ((\Vertices, \Edges), (\VerticesMain, \VerticesRandom), \ProbabilityFunction, \PayoffFunction)\), threshold \(\GuaranteeThreshold\), and target set \(T\)
    \Ensure The winning region for sure-\(\DirBWMP(\GuaranteeThreshold)\)-almost-sure-\(\BuchiObj(T)\) in \(\MDP\)
    \State \(W_{\sdpr} \assign \SureDirFWMPPosReachAlgo(\MDP, \GuaranteeThreshold, T)\)\label{alg-line:sdbab-sdpr}
    \If{\(V = W_{\sdpr}\)}\label{alg-line:sdbab-until}
    \State \Return \(V\)\label{alg-line:sdbab-return-winning-region}
    \Else
    \State \(S \assign \SureSafeAlgo(\MDP, W_{\sdpr})\)\label{alg-line:sdbab-almost-sure-attr}
    \State \Return \(\SureDirBWMPASBuchiAlgo(\subMDP{\MDP}{S}, \GuaranteeThreshold, T \intersection S)\)\label{alg-line:sdbab-restrict-to-almost-sure-attr}
    \EndIf
  \end{algorithmic}
\end{algorithm}

\begin{algorithm}[t]
  \caption{\(\SureDirBWMPPosReachAlgo(\MDP, \GuaranteeThreshold, T)\)}%
  \label{alg:sure-dirbwmp-positive-reach}
  \begin{algorithmic}[1]
    \Require An MDP \(\MDP = ((\Vertices, \Edges), (\VerticesMain, \VerticesRandom), \ProbabilityFunction, \PayoffFunction)\), threshold \(\GuaranteeThreshold\), and target set \(T\)
    \Ensure The winning region for sure-\(\DirBWMP(\GuaranteeThreshold)\)-positive-\(\ReachObj(T)\) in \(\MDP\)
    \State \(W_{\sd} \assign \SureDirBWMPAlgo(\MDP, \GuaranteeThreshold)\)\label{alg-line:sdbpr-sure-dirbwmp-winning-region}
    \State \(\MDP \assign \subMDP{\MDP}{W_{\sd}}, \quad T \assign T \intersection
    W_{\sd}\)\label{alg-line:sdbpr-restrict-sure-dirbwmp}
    \State \(R \assign T, \quad E_g,\, E_b \assign \emptyset\)\label{alg-line:sdbpr-initialize-sets}
    \While{\(\{e = (s, t) \in \Edges \suchthat s \in \Vertices \setminus T, \, t \in R, \, \text{and } e \in E \setminus (E_g \union E_b)\} \neq \emptyset\)}\label{alg-line:sdbpr-while-loop}
      \State Choose such an edge \(e\) arbitrarily.
      \If{\Call{\(\IsGoodEdgeAlgo\)}{\(e, E_g\)}}\label{alg-line:sdbpr-isgoodedge-procedure-call}
        \State \(E_g \assign E_g \union \{e\}\)\label{alg-line:sdbpr-add-good-edge}, \quad \(R \assign R \union \{s\}\)\label{alg-line:sdbpr-add-good-start-vertex}, \quad \(E_b \assign \emptyset\)\label{alg-line:sdbpr-reset-bad-edges}
      \Else
        \State \(E_b \assign E_{b} \union \{e\}\)\label{alg-line:sdbpr-add-bad-edge}
      \EndIf
    \EndWhile
    \State \Return \(R\)\label{alg-line:sdbpr-return-winning-region}
    \Statex
    \Procedure{\(\IsGoodEdgeAlgo\)}{\(e = (s, t), E_g\)} \label{alg-line:sdbpr-isgoodedge-procedure-define}
      \State Construct MDP \(\MDP_{e}'\) in the following way:
      \State \quad \(V \assign\) the set of vertices in \(\MDP\) 
      \State \quad \(\widetilde{R} \assign \{\tilde{v} \suchthat v \in R \}\), where \(\tilde{v}\) belongs to \(\PlayerMain\) if and only if \(v\) belongs to \(\PlayerMain\).
      \State \quad The set of vertices of \(\MDP_{e}'\) is \(V \union \widetilde{R} \union \{\hat{s}\}\).
      \Statex \quad The vertex \(\hat{s}\) belongs to \(\PlayerMain\) if and only if \(s\) belongs to \(\PlayerMain\).
      \Statex
      \State \quad \(E \assign \) the set of edges in \(\MDP\)
      \State \quad\(\widetilde{E}_g \assign 
      \{(\tilde{u}, \tilde{v}) \suchthat u,v \in R, \, (u, v) \in E_g)\}\)
      \State \quad \(\widetilde{E}_{\Random} \assign \{(\tilde{u}, v) \suchthat u \in (R \setminus T) \intersection \VerticesRandom, \, v \in V, \, (u, v) \in E \setminus E_g)\} \) 
      \State \quad \(\widehat{E}_{\Random} \assign \textbf{if } s \in \VerticesMain \textbf{ then } \emptyset \textbf{ else } \{(\hat{s}, v) \suchthat (s, v) \in \Edges \ \text{and } v \ne t\} \) 
      \State \quad The set of edges of \(\MDP_{e}'\) is \(E \union \widetilde{E}_g \union \widetilde{E}_{\Random} \union \{(\hat{s}, \tilde{t})\} \union \widehat{E}_{\Random}\).
      \Statex \quad The probability and payoff of each edge is preserved from \(\MDP\).
      \Statex
      \State \(\MDP_{e} \assign \) For all \(v \in T\), identify \(\tilde{v}\) with \(v\)  in \(\MDP_{e}'\)
      \State \Return true iff \(\hat{s}\) is winning for sure-\(\DirBWMP(\GuaranteeThreshold)\) in \(\MDP_{e}\) \label{alg-line:sdbpr-isgoodedge-return}
    \EndProcedure
  \end{algorithmic}
\end{algorithm}

\begin{algorithm}[t]
  \caption{\(\SLSBWMPAlgo(\MDP, \GuaranteeThreshold, \AlmostSureThreshold)\)}%
  \label{alg:sure-bwmp-limitsure-bwmp}
  \begin{algorithmic}[1]
    \Require An MDP \(\MDP = ((\Vertices, \Edges), (\VerticesMain, \VerticesRandom), \ProbabilityFunction, \PayoffFunction)\), sure threshold \(\GuaranteeThreshold\), and limit-sure threshold \(\LimitSureThreshold\)
    \Ensure The winning region for sure-\(\BWMP(\GuaranteeThreshold)\)-limit-sure-\(\BWMP(\AlmostSureThreshold)\) in \(\MDP\)
    \State \(W_{\Sure\GuaranteeThreshold} \assign \SureBWMPAlgo(\MDP, \GuaranteeThreshold)\)
    \State \(W_{\AlmostSure\AlmostSureThreshold} \assign \AlmostSureBWMPAlgo(\subMDP{\MDP}{ W_{\Sure\GuaranteeThreshold}}, \LimitSureThreshold)\)
    \State \Return \(W_{\AlmostSure\AlmostSureThreshold}\)
  \end{algorithmic}
\end{algorithm}

\begin{theorem}\label{thm:sasb-correctness}
  \Cref{alg:sure-bwmp-almostsure-bwmp} computes the winning region for sure-\(\BWMP(\GuaranteeThreshold)\)-almost-sure-\(\BWMP(\AlmostSureThreshold)\) in \(\MDP\).
\end{theorem}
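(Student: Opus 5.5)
The plan is to transfer the proof of \Cref{thm:sasf-correctness} to the bounded setting, replacing \(\FWMPL\) by \(\BWMP\) throughout, and to reuse the equivalence lemmas whenever a step depends on a concrete window length. Fix \(\WindowLength'' = 3 \cdot \abs{V} \cdot (3 \cdot \abs{V} \cdot \PayoffFunction_{\max} + 1)\). By \Cref{lem:dirbwmp-equivalent-to-dirfwmpl}, \Cref{lem:sdbpr-sdfpr-equivalence} and \Cref{lem:sdbab-sdfab-equivalence}, the subroutine calls \(\SureDirBWMPAlgo\), \(\SureDirBWMPPosReachAlgo\) and \(\SureDirBWMPASBuchiAlgo\) in \Cref{alg:sure-bwmp-almostsure-bwmp} return the same sets as their \(\DirFWMP(\WindowLength'', \cdot)\) counterparts on every MDP they are called on. Each such MDP has at most \(3\abs{V}\) vertices, which is why the factor \(3\) appears in \(\WindowLength''\). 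Correctness of \(\SureDirBWMPASBuchiAlgo\) follows as in \Cref{thm:sdfab-correctness}, using \Cref{lem:sdbab-bound} instead of \Cref{lem:sdfab-v-l-bound}.

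For the direction (\(\Rightarrow\)), I follow the strategy \(\Strategy[\sas]\) from the \(\FWMPL\) section. On \(W^{0}\) the strategy plays a winning strategy for sure-\(\BWMP(\AlmostSureThreshold)\), which is memoryless by \cite{CDRR15}. On each \(A^{i}\) it plays a sure-attractor strategy. On each \(W_{\sdab}^{i}\) it plays the reset strategy \(\Strategy[\sdab]\) for sure-\(\DirFWMP(\WindowLength'', \GuaranteeThreshold)\)-almost-sure-\(\BuchiObj(P^{i})\). With probability~\(1\) the token descends to \(W^{0}\). Every outcome either ends in \(W^{0}\), and so satisfies \(\BWMP(\AlmostSureThreshold) \subseteq \BWMP(\GuaranteeThreshold)\) because \(\GuaranteeThreshold < \AlmostSureThreshold\), or stays forever in some \(W_{\sdab}^{i}\), where it satisfies \(\DirFWMP(\WindowLength'', \GuaranteeThreshold) \subseteq \BWMP(\GuaranteeThreshold)\). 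Prefix-independence of \(\BWMP\) handles the finitely many switches between regions.

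For the direction (\(\Leftarrow\)), I redo \Cref{lem:sasf-sure-dirfwmp-almost-sure-reach}--\Cref{lem:sasf-w-empty-fwmp-empty} for \(\BWMP\). The analogue of \Cref{lem:sasf-sure-dirfwmp-almost-sure-reach} is purely graph-theoretic, apart from using that \(W_{\sdab}^{k}\) is empty. The analogue of \Cref{lem:sasf-sure-fwmp-almost-sure-reach} needs more care, because the window bound is no longer fixed. If every strategy that is almost-surely Büchi for \(P^{k}\) has an outcome with an unclosed \(\GuaranteeThreshold\)-window, I still need an outcome in which such windows recur with unbounded length. I get this by stitching, as in the original: from every vertex of the region there is an outcome whose open window exceeds any given bound.

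For the analogue of \Cref{lem:sasf-fwmp-beta-empty}, a nonempty sure-\(\BWMP(\AlmostSureThreshold)\) region gives a nonempty sure-\(\DirBWMP(\AlmostSureThreshold)\) region by \cite{CDRR15}. By \Cref{lem:dirbwmp-equivalent-to-dirfwmpl}, this region is winning for sure-\(\DirFWMP(\WindowLength', \AlmostSureThreshold)\), so the same two-case argument applies with the reach bound \(\abs{V} \cdot \WindowLength'\). Finally, I need the \(\BWMP\) version of \cite[Lemma~6.1]{DGG25}: if the sure-\(\BWMP(\AlmostSureThreshold)\) region of an MDP is empty, then so is the almost-sure one. \cite[Lemma~6.1]{BDOR20} gives this for the bounded objective as well.

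The main obstacle is the step that turns ``some outcome has an open \(\GuaranteeThreshold\)-window'' into ``some outcome violates \(\BWMP(\GuaranteeThreshold)\)''. I plan to route it entirely through fixed windows: by \Cref{lem:sasb-sasf-equivalence}-style reasoning, any sure-\(\BWMP(\GuaranteeThreshold)\) strategy on the relevant subMDP can be replaced by a sure-\(\DirFWMP(\WindowLength'', \GuaranteeThreshold)\) one after a finite prefix. I then apply the fixed-length argument with \(\WindowLength''\), which gives that the set returned by \Cref{alg:sure-bwmp-almostsure-bwmp} coincides with the output of \Cref{alg:sure-fwmp-almostsure-fwmp} for \(\WindowLength''\).
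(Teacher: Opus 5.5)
Your first four paragraphs follow the paper's route. The paper proves this theorem only by noting that the proofs of \Cref{thm:sasf-correctness}, \Cref{thm:sdfab-correctness} and \Cref{thm:sdfpr-correctness} go through with \(\FWMP\) replaced by \(\BWMP\), invoking \Cref{lem:dirbwmp-equivalent-to-dirfwmpl}, \Cref{lem:sdbpr-sdfpr-equivalence}, \Cref{lem:sdbab-bound} and \Cref{lem:sdbab-sdfab-equivalence} wherever a concrete window length is needed. Your stitching argument is what the analogue of \Cref{lem:sasf-sure-fwmp-almost-sure-reach} needs once the window length is not fixed. After any positive-probability prefix, the continuation is still almost-surely B\"uchi for \(P^{k}\) in the closure, so you can force open \(\GuaranteeThreshold\)-windows of lengths \(1, 2, 3, \ldots\) in successive segments.

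The gap is your last paragraph, where you propose to handle this same step by routing it through fixed windows instead. In the converse direction you are handed an arbitrary strategy \(\sigma\) that is sure-\(\BWMP(\GuaranteeThreshold)\) and almost-surely reaches \(W^{k}\). You need an outcome of that same \(\sigma\) that violates \(\BWMP(\GuaranteeThreshold)\).

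\begin{itemize}
\item Replacing \(\sigma\) after a finite prefix by a sure-\(\DirFWMP(\WindowLength'', \GuaranteeThreshold)\) strategy throws away the almost-sure property the contradiction rests on. The fixed-length argument applied to the replacement therefore says nothing about \(\sigma\).
\item You cannot apply the fixed-length argument to \(\sigma\) directly either. The window length witnessing \(\BWMP(\GuaranteeThreshold)\) can grow without bound across the outcomes of \(\sigma\).
\item Appealing to \Cref{lem:sasb-sasf-equivalence} for a replacement that keeps both guarantees is circular. The paper obtains that lemma by carrying out exactly this transfer of the proof of \Cref{thm:sasf-correctness}, so it is the hard direction of the theorem, not an input to it.
\item The claim that \Cref{alg:sure-bwmp-almostsure-bwmp} and \Cref{alg:sure-fwmp-almostsure-fwmp} with \(\WindowLength''\) return the same set needs one more fact. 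The sure-\(\BWMP\) and sure-\(\FWMP(\WindowLength'', \cdot)\) regions computed in their first two lines must coincide. That is a statement about the prefix-independent objectives, and the cited equivalence lemmas do not provide it.
\end{itemize}

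Drop the rerouting and rely on the stitching argument, which already closes the step.
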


\begin{theorem}\label{thm:sdbab-correctness}
  \Cref{alg:sure-dirbwmp-almostsure-buchi} computes the winning region for sure-\(\DirBWMP(\GuaranteeThreshold)\)-almost-sure-\(\BuchiObj(T)\).
\end{theorem}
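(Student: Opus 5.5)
The statement to prove is \Cref{thm:sdbab-correctness}, the correctness of \Cref{alg:sure-dirbwmp-almostsure-buchi}. I would reduce it to the fixed-window case, mirroring the proof of \Cref{thm:sdfab-correctness}. The central tool is \Cref{lem:dirbwmp-equivalent-to-dirfwmpl}, used together with \Cref{lem:sdbpr-sdfpr-equivalence}. Fix \(\WindowLength'' = 3 \cdot \abs{V} \cdot (3 \cdot \abs{V} \cdot \PayoffFunction_{\max} + 1)\). In every recursive call, the set \(W_{\sdpr}\) computed on Line~\ref{alg-line:sdbab-sdpr} is the winning region for sure-\(\DirBWMP(\GuaranteeThreshold)\)-positive-\(\ReachObj(T)\). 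By \Cref{lem:sdbpr-sdfpr-equivalence}, this set equals the winning region for sure-\(\DirFWMP(\WindowLength'', \GuaranteeThreshold)\)-positive-\(\ReachObj(T)\). The bound \(\WindowLength''\) depends only on \(\abs{V}\) and \(\PayoffFunction_{\max}\), and both only shrink in the subMDPs \(\subMDP{\MDP}{S}\) used by the recursion. So the recursion of \Cref{alg:sure-dirbwmp-almostsure-buchi} computes exactly the same sets as \Cref{alg:sure-dirfwmp-almostsure-buchi} run with window length \(\WindowLength''\).

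\emph{Soundness.} Consider the final recursive call on a subMDP \(\MDP'\), where every vertex is winning for sure-\(\DirBWMP(\GuaranteeThreshold)\)-positive-\(\ReachObj(T)\). By \Cref{lem:sdbab-bound}, from each vertex \(v\) there is a strategy \(\Strategy[\sdpr]^{v}\) that satisfies sure-\(\DirFWMP(\WindowLength'', \GuaranteeThreshold)\) and reaches \(T\) with positive probability within \(\abs{V}\cdot\WindowLength''\) steps. I would then reuse the reset construction of \(\Strategy[\sdab]\) from the proof of \Cref{thm:sdfab-correctness}: follow \(\Strategy[\sdpr]^{v}\) for \(\abs{V}\cdot\WindowLength''\) steps, wait (at most \(\WindowLength''\) more steps) until all \(\GuaranteeThreshold\)-windows are closed, and reset. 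Each phase hits \(T\) with probability at least \((\ProbabilityFunction_{\min})^{\abs{V}\cdot\WindowLength''}\), so \(T\) is visited infinitely often with probability~\(1\). Resets happen only when all windows are closed, so every window closes within \(\WindowLength''\) steps in every outcome. Hence every outcome satisfies \(\DirBWMP(\GuaranteeThreshold)\), with witness \(\WindowLength''\).

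\emph{Completeness.} Suppose \(v\) is dropped in some call because it lies outside \(S\), the winning region for sure-\(\SafeObj(W_{\sdpr})\). Then for every strategy, with positive probability the token enters a vertex outside \(W_{\sdpr}\). From such a vertex, any strategy that surely satisfies \(\DirBWMP(\GuaranteeThreshold)\) (which is closed under suffixes) reaches \(T\) with probability zero. Therefore \(\BuchiObj(T)\) fails with positive probability. Finally, I would argue by induction on the recursion depth that restricting to \(\subMDP{\MDP}{S}\) loses no winning vertices. This holds because a winning strategy must keep the token in \(S\) surely.

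The main obstacle is the subtle uniformity issue. Sure-\(\DirBWMP\) only requires that each outcome have \emph{some} window bound, so a priori the bound may vary across outcomes and the resets could be unbounded. I would resolve this by working throughout with \(\DirFWMP(\WindowLength'',\GuaranteeThreshold)\), via \Cref{lem:dirbwmp-equivalent-to-dirfwmpl} and \Cref{lem:sdbab-bound}, which gives a single uniform bound. I would also check that the factor~\(3\) in \(\WindowLength''\) is enough to cover the MDPs \(\MDP_{e}\) constructed inside \(\IsGoodEdgeAlgo\).
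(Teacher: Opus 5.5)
Your proposal is correct and matches the paper's route: the paper obtains this theorem by rerunning the proof of \Cref{thm:sdfab-correctness} with \(\DirBWMP\) in place of \(\DirFWMP\), using \Cref{lem:sdbab-bound} to get a uniform window bound for the reset strategy and the same sure-safety argument for completeness. Your explicit induction over the recursive calls makes precise a step the paper leaves implicit. One minor point: \Cref{lem:sdbab-bound} as stated only concludes sure-\(\DirBWMP(\GuaranteeThreshold)\), so the uniform sure-\(\DirFWMP(\WindowLength'', \GuaranteeThreshold)\) guarantee you need for bounded phases really comes from chaining \Cref{lem:sdbpr-sdfpr-equivalence} with \Cref{lem:sdfab-v-l-bound}, which is what that lemma's proof does; the factor-\(3\) check you mention is already absorbed in \Cref{lem:sdbpr-sdfpr-equivalence}.
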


\begin{theorem}
  \Cref{alg:sure-dirbwmp-positive-reach} computes the winning region for sure-\(\DirBWMP(\GuaranteeThreshold)\)-positive-\(\ReachObj(T)\) in \(\MDP\).
\end{theorem}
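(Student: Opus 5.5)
The statement to prove is the last one, that \Cref{alg:sure-dirbwmp-positive-reach} computes the winning region for sure-\(\DirBWMP(\GuaranteeThreshold)\)-positive-\(\ReachObj(T)\). The plan is not to redo the proof of \Cref{thm:sdfpr-correctness}. Instead, I would reduce the statement to it through the window-length collapse of \Cref{lem:dirbwmp-equivalent-to-dirfwmpl}.

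\textbf{Step 1: every sure-\(\DirBWMP\) query is a sure-\(\DirFWMP(\WindowLength'')\) query.} The algorithm makes two kinds of sure-\(\DirBWMP(\GuaranteeThreshold)\) queries. The first is on \(\MDP\) itself (Line~\ref{alg-line:sdbpr-sure-dirbwmp-winning-region}). The others are inside \(\IsGoodEdgeAlgo\), on the augmented MDPs \(\MDP_{e}\). Each \(\MDP_{e}\) has at most \(3\abs{V}\) vertices and the same \(\PayoffFunction_{\max}\) as \(\MDP\). Applying \Cref{lem:dirbwmp-equivalent-to-dirfwmpl} to \(\MDP_{e}\), a vertex is winning for sure-\(\DirBWMP(\GuaranteeThreshold)\) there if and only if it is winning for sure-\(\DirFWMP(3\abs{V}(3\abs{V}\PayoffFunction_{\max}+1),\GuaranteeThreshold) = \DirFWMP(\WindowLength'',\GuaranteeThreshold)\). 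One detail needs checking: the window length given by the lemma for a smaller MDP is at most \(\WindowLength''\), and \(\DirFWMP\) is monotone in the window length, so the bound \(\WindowLength''\) works in both directions. The same argument applies to the initial query on \(\MDP\).

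\textbf{Step 2: the two algorithms coincide.} By Step~1, every call in \Cref{alg:sure-dirbwmp-positive-reach} returns the same answer as the corresponding call in \Cref{alg:sure-dirfwmp-positive-reach} run with \(\WindowLength''\). Fix the same deterministic edge-selection order in both. Then the two runs produce identical sequences of \(R\), \(E_{g}\) and \(E_{b}\), and they return the same set. By \Cref{thm:sdfpr-correctness}, that set is the winning region for sure-\(\DirFWMP(\WindowLength'',\GuaranteeThreshold)\)-positive-\(\ReachObj(T)\).

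\textbf{Step 3: identify that region with the \(\DirBWMP\) region.} \Cref{lem:sdbpr-sdfpr-equivalence} states exactly that the two winning regions coincide, so the returned set is the winning region for sure-\(\DirBWMP(\GuaranteeThreshold)\)-positive-\(\ReachObj(T)\).

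The main obstacle is that \Cref{lem:sdbpr-sdfpr-equivalence} is itself proved only by sketching this substitution, so I would verify it directly. The easy inclusion is that \(\DirFWMP(\WindowLength'')\) implies \(\DirBWMP\). For the converse, take a strategy \(\Strategy\) winning for sure-\(\DirBWMP(\GuaranteeThreshold)\)-positive-\(\ReachObj(T)\). Play it until \(T\) is reached, and thereafter switch, at a point where all windows are closed, to a sure-\(\DirFWMP(\WindowLength'')\) strategy, which exists by \Cref{lem:dirbwmp-equivalent-to-dirfwmpl}. The difficulty is bounding the windows before that switch: \(\Strategy\) alone gives only some finite window bound per outcome, not a uniform one. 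I would not work with \(\Strategy\) directly. Instead I would rerun the proof of \Cref{lem:sdfpr-bad-edge-is-losing}, which needs only the existence of positive-probability paths from good edges, together with \Cref{lem:sdfpr-isgoodedge-correctness} applied to \(\MDP_{e}\) with \(\DirBWMP\). This shows that the \(\DirBWMP\)-region equals the set \(R\) returned by the algorithm. By Step~2, that \(R\) is also the \(\WindowLength''\)-region, which closes the argument.
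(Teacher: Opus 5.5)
Your Steps~1 and~2 are a correct reduction. Each \(\MDP_{e}\) has at most \(3\abs{V}\) vertices and the same payoffs. So by \Cref{lem:dirbwmp-equivalent-to-dirfwmpl} and monotonicity in the window length, every query of \Cref{alg:sure-dirbwmp-positive-reach} gets the same answer as in \Cref{alg:sure-dirfwmp-positive-reach} run with \(\WindowLength''\), and the two runs coincide. Step~3 is circular, as you noticed. Your fallback, re-running \Cref{lem:sdfpr-isgoodedge-correctness,lem:sdfpr-bad-edge-is-losing} with \(\DirBWMP\), is exactly the paper's argument: the paper just substitutes \(\DirBWMP\) for \(\DirFWMP\) in the proof of \Cref{thm:sdfpr-correctness}.

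\textbf{The gap.} This argument cannot be completed, because the \((\Leftarrow)\) direction of \Cref{lem:sdfpr-isgoodedge-correctness} is false, for \(\DirFWMP\) and \(\DirBWMP\) alike. With it fail the completeness half of \Cref{thm:sdfpr-correctness} (which your Step~2 invokes) and the statement itself. In \(\MDP_{e}\), a \(\PlayerMain\) copy \(\tilde u\) has only \(E_{g}\)-edges, so every branch that has followed \(E_{g}\) so far is forced to keep doing so. But positive-\(\ReachObj_{\{e\} \union E_{g}}(T)\) constrains only one positive-probability branch. On the other branches a winning strategy may have to leave \(E_{g}\) to close a window opened before.

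\textbf{Counterexample.} Let \(\GuaranteeThreshold = 0\) and \(T = \{y\}\). Let \(s, a, b, y, z\) belong to \(\PlayerMain\), and let \(p\) be probabilistic with successors \(a, b\). The edges and payoffs are \((s,p)\colon -5\), \((p,a)\colon 0\), \((p,b)\colon 0\), \((a,p)\colon 10\), \((b,y)\colon 0\), \((b,z)\colon 10\), and self-loops on \(y\) and \(z\) with payoff \(0\).
\begin{itemize}
\item Every vertex is winning for sure-\(\DirFWMP(3,0)\) (always move from \(b\) to \(z\)), so Lines~1--2 remove nothing.
\item The vertex \(s\) is winning for sure-\(\DirBWMP(0)\)-positive-\(\ReachObj(T)\). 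From \(b\), move to \(z\) if the history is \(s\,p\,b\), and to \(y\) otherwise. In every outcome the window at \(s\) closes at step~\(3\) and all later windows close in one step. The outcome \(s\,p\,a\,p\,b\,y\) has probability \(1/4\).
\item In the algorithm, \(z\) never enters \(R\), so \((b,z)\) never enters \(E_{g}\). Also, \(p\) can enter \(R\) only through \((p,b)\), since \(a\) can enter \(R\) only through \((a,p)\).
\item Hence whenever \((s,p)\) is tested, \(E_{g} \supseteq \{(b,y),(p,b)\}\), and the copy \(\tilde b\) has \(y\) as its only successor in \(\MDP_{(s,p)}\). The forced outcome \(\hat s\,\tilde p\,\tilde b\,y\,y\cdots\), with payoffs \(-5,0,0,\ldots\), never closes the window at \(\hat s\).
\item So \(\IsGoodEdgeAlgo((s,p),E_{g})\) is false for every \(E_{g}\) that arises. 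Since \((s,p)\) is the only out-edge of \(s\), the returned set misses \(s\) under every edge-selection order.
\end{itemize}
The same example with \(\DirFWMP(\WindowLength,0)\), for any \(\WindowLength \ge 3\), refutes \Cref{thm:sdfpr-correctness}. Moreover, \((a,p)\) and \((p,a)\) eventually enter \(E_{g}\). The path \(s\,p\,a\,p\,b\,y\) then makes \((s,p)\) good in the paper's sense, which is exactly where \Cref{lem:sdfpr-isgoodedge-correctness} breaks.

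\textbf{What survives.} Your plan does establish soundness. The \((\Rightarrow)\) half of \Cref{lem:sdfpr-isgoodedge-correctness} transfers to \(\DirBWMP\), since the strategy has finitely many phases, each switching at a point where all windows are closed. So the returned \(R\) lies inside the winning region. Completeness would need a different gadget, in which \(\PlayerMain\) is held to \(E_{g}\) only along the one branch carrying the positive probability.
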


\end{document}